\documentclass[a4paper,USenglish,cleveref,autoref,thm-restate]{lipics-v2021}
\bibliographystyle{plainurl}% the mandatory bibstyle

%\title{Improved Algorithms for Point Separation and Connected Obstacle Removal}
\title{Point Separation and Obstacle Removal by Finding and Hitting Odd Cycles}

\titlerunning{Algorithms for Point Separation and Obstacle Removal}

\author{Neeraj Kumar}{Department of Computer Science, University of California, Santa Barbara, USA} {neeraj@cs.ucsb.edu}{}{}
\author{Daniel Lokshtanov}{Department of Computer Science, University of California, Santa Barbara, USA} {daniello@ucsb.edu}{}{}
\author{Saket Saurabh}{IMSc, Chennai, India and  University of Bergen, Norway} {saket@imsc.res.in}{}{}
\author{Subhash Suri}{Department of Computer Science, University of California, Santa Barbara, USA} {suri@cs.ucsb.edu}{}{}
\author{Jie Xue}{New York University Shanghai, China} {jiexue@nyu.edu}{}{}
\authorrunning{N. Kumar et al.}
\Copyright{Neeraj Kumar, Daniel Lokshtanov, Saket Saurabh, Subhash Suri, and Jie Xue}
\ccsdesc[300]{Theory of computation~Design and analysis of algorithms}
\keywords{points-separation, min color path, constraint removal, barrier resillience}

\nolinenumbers %uncomment to disable line numbering

%Editor-only macros:: begin (do not touch as author)%%%%%%%%%%%%%%%%%%%%%%%%%%%%%%%%%%
\EventEditors{John Q. Open and Joan R. Access}
\EventNoEds{2}
\EventLongTitle{42nd Conference on Very Important Topics (CVIT 2016)}
\EventShortTitle{CVIT 2016}
\EventAcronym{CVIT}
\EventYear{2016}
\EventDate{December 24--27, 2016}
\EventLocation{Little Whinging, United Kingdom}
\EventLogo{}
\SeriesVolume{42}
\ArticleNo{23}
%%%%%%%%%%%%%%%%%%%%%%%%%%%%%%%%%%%%%%%%%%%%%%%%%%%%%%
%\usepackage{algorithm}
\usepackage{xspace}
\usepackage{cite}

\def\mathify#1{\ifmmode{#1}\else\mbox{$#1$}\fi} % guarantee math mode

\long\def\cut#1{{}}

\newtheorem{Conjecture}[lemma]{Conjecture}

\def\cb{\mathcal{B}}
\def\cs{\mathcal{S}}

\def\calr{\mathcal{R}}

\def\opt{\textit{OPT}}

\def\bbox{\mathcal{R}}
\def\col{\textsf{col}}
\def\cm{\mathcal{M}}
\def\Arr{\mathsf{Arr}}
\def\tone{\text{Type-1}\xspace}
\def\ttwo{\text{Type-2}\xspace}
\def\pointsep{\textsc{Points-separation}\xspace}
\def\obstacleremoval{\textsc{Obstacle-removal}\xspace}
\def\oddcycletrans{\textsc{Odd Cycle Transversal}\xspace}
\def\mid{\textit{mid}}
\def\refp{\textsf{ref}}
\def\lab{\textsf{lab}}
\def\dvrconj{\textsc{Dense vs Random}\xspace}
\def\gptsep{\textsc{Generalized Points-separation}\xspace}

\def\eps{\epsilon}

\newtheorem{fact}[lemma]{Fact}

\begin{document}
\maketitle

%TODO mandatory: add short abstract of the document
\begin{abstract}
    %In this paper we give a surprisingly tight connection between separating points in the plane by obstacles and identifying odd cycles in an appropriately defined auxiliary graph. 
    
    %Specifically, s
    Suppose we are given a pair of points $s, t$ and a set $\cs$ of $n$ geometric objects in the plane, called obstacles.
    We show that in polynomial time one can construct an auxiliary (multi-)graph $G$ with vertex set $\cs$ and every edge labeled from $\{0, 1\}$, such that a set $\cs_d \subseteq \cs$ of obstacles separates $s$ from $t$ if and only if $G[\cs_d]$ contains a cycle whose sum of labels is odd. 
    Using this structural characterization of separating sets of obstacles we obtain the following algorithmic results.

%\begin{itemize}
    %\item 
    In the \obstacleremoval{} problem the task is to find a curve in the plane connecting $s$ to $t$ intersecting at most $q$ obstacles. 
    %as few of the obstacles in $\cs$ as possible. 
     % 
    %delete a subset $\cs_d \subseteq \cs$ of obstacles of minimum size $q$ from $\cs$, such that
    %
    %the input so that $s$ and $t$ are connected.
    %
    We give a $2.3146^q n^{O(1)}$ algorithm for \obstacleremoval{}, significantly improving upon the previously best known $q^{O(q^3)}  n^{O(1)}$ algorithm of Eiben and Lokshtanov (SoCG'20). %Here $q$ is the number of obstacles intersected by the path.
    %
    %\item 
    We also obtain an alternative proof of a constant factor approximation algorithm for \obstacleremoval{}, substantially simplifying the arguments of Kumar et al. (SODA'21).
    
    %\item 
    In the \gptsep{} problem input consists of the set $\cs{}$ of obstacles, a point set $A$ of $k$ points and $p$ pairs $(s_1, t_1), \ldots (s_p, t_p)$ of points from $A$.  The task is to find a minimum subset $\cs_r \subseteq \cs$ such that for every $i$, every curve from $s_i$ to $t_i$ intersects at least one obstacle in $\cs_r$.
    We obtain $2^{O(p)} n^{O(k)}$-time algorithm for \gptsep{}. This resolves an open problem of Cabello and Giannopoulos (SoCG'13), who asked about the existence of such an algorithm for the special case where $(s_1, t_1), \ldots (s_p, t_p)$ contains all the pairs of points in $A$.
    %
    %whether for every fixed $k$ there exist a polynomial time algorithm for the problem of finding a minimum size subset $\cs_r{} \subseteq \cs{}$ that separates all points in a 
    %
    %
 %   
    Finally, we improve the running time of our algorithm to $f(p,k) \cdot n^{O(\sqrt{k})}$ when the obstacles are unit disks, where $f(p,k) = 2^{O(p)} k^{O(k)}$, and show that, assuming the Exponential Time Hypothesis (ETH),  the running time dependence on $k$ of our algorithms is essentially optimal.
\end{abstract}

\section{Introduction}
\label{sec:intro}
Suppose we are given a set $\cs$ of geometric objects in the plane, and we want to modify $\cs$ in order to achieve certain guarantees on coverage of paths between a given set $A$ of points.
Such problems have received  significant interest in sensor networks~\cite{balister2009trap, KumarLA05, BeregK09, ChanK12}, robotics~\cite{lavalle, eiben2018improved} and 
computational geometry~\cite{Eiben2017, mcp-fpt, bksvCGTA}. 
There have been two closely related lines of work on this topic:
\textbf{(i)} \emph{remove} a smallest number of obstacles from $\cs$ to 
satisfy {\em reachability} requirements for points in $A$, and 
%
%$so that the points $A$
%are no longer pairwise separated, and 
%
\textbf{(ii)} \emph{retain} a smallest number of obstacles to satisfy {\em separation} requirements for points in $A$.
%so that the points $A$ still remain pairwise separated. 

In the most basic version of these problems the set $A$ consists of just two points $s$ and $t$. Specifically, in \obstacleremoval{} the task is to find a smallest possible set $\cs_d \subseteq \cs$ such that there is a curve from $s$ to $t$ in the plane avoiding all obstacles in $\cs \setminus \cs_d$.
In $2$-\pointsep{} the task is to find a smallest set $\cs_r \subseteq \cs$ such that every curve from $s$ to $t$ in the plane intersects at least one obstacle in $\cs_r$.
It is quite natural to require the obstacles in the set $\cs{}$ to be connected. Indeed, removing the connectivity requirements results in problems that are computationally intractable~\cite{Eiben2017,EibenK20,mcp2021}. 
%\todo{pointsep disconnected NP hard}

When the obstacles are required to be connected \obstacleremoval{} remains \textsf{NP}-hard, but becomes more tractable from the perspective of approximation algorithms and parameterized algorithms.
For approximation algorithms, Bereg and Kirkpatrick~\cite{BeregK09} designed a constant factor approximation for unit disk obstacles. Chan and Kirkpatrick~\cite{ChanK12,ChanK14} improved the approximation factor for unit disk obstacles. Korman et al.~\cite{KormanLSS18} obtained a $(1 + \epsilon)$-approximation algorithm for the case when obstacles are fat, similarly sized, and no point in the plane is contained in more than a constant number of obstacles. 
Whether a constant factor approximation exists for general obstacles was posed repeatedly as an open problem~\cite{bksvCGTA,ChanK12,ChanK14} before it was resolved in the affirmative by a subset of the authors of this article~\cite{mcp2021}.

For parameterized algorithms, Korman et al.~\cite{KormanLSS18} designed an algorithm for \obstacleremoval{} with running time $f(q)n^{O(1)}$ for determining whether there exists a solution $\cs_d$ of size at most $q$, when obstacles are fat, similarly sized, and no point in the plane is contained in more than a constant number of obstacles. Eiben and Kanj~\cite{Eiben2017,EibenK20} generalized the result of Korman et al.~\cite{KormanLSS18}, and posed as an open problem the existence of a $f(q)n^{O(1)}$ time algorithm for \obstacleremoval{} with general connected obtacles. Eiben and Lokshtanov~\cite{mcp-fpt} resolved this problem in the affirmative, providing an algorithm with running time $q^{O(q^3)}n^{O(1)}$. 

Like \obstacleremoval{}, the $2$-\pointsep{} problem becomes more tractable when the obstacles are connected. Cabello and Giannopoulos~\cite{cabello2016complexity} showed that $2$-\pointsep{} with connected obstacles is polynomial time solvable. They show that the more general \pointsep{} problem where we are given a point set $A$ and asked to find a minimum size set $\cs{}_r \subseteq \cs$ that separates every pair of points in $A$, is \textsf{NP}-complete, even when all obstacles are unit disks. They leave as an open problem to determine the existence of $f(k)n^{O(1)}$ and $f(k)n^{g(k)}$ time algorithms for \pointsep{}, where $k = |A|$.

\paragraph*{Our Results and Techniques}
Our main result is a structural characterization of separating sets of obstacles in terms of odd cycles in an auxiliary graph. 

\begin{theorem}\label{thm:mainStructure}
There exists a polynomial time algorithm that takes as input a set $\cs$ of obstacles in the plane, two points $s$ and $t$, and outputs a (multi-)graph $G$ with vertex set $\cs$ and every edge labeled from $\{0, 1\}$, such that a set $\cs_d \subseteq \cs$ of obstacles separates $s$ from $t$ if and only if $G[\cs_d]$ contains a cycle whose sum of labels is odd. 
\end{theorem}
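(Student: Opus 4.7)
The plan is to reduce the separation property to a topological crossing-parity condition and then encode that condition as odd cycles in a carefully designed multigraph. First I would fix a generic simple curve $\pi$ from $s$ to $t$, transverse to all obstacle boundaries, and establish the following intermediate claim: a subset $\cs_d\subseteq\cs$ separates $s$ from $t$ if and only if some closed curve $\Gamma \subseteq \bigcup_{O\in\cs_d}O$ meets $\pi$ in an odd number of points. The ``only if'' direction uses $\Gamma$ sitting inside the topological boundary of the component of $\mathbb{R}^2\setminus\bigcup_{O\in\cs_d}O$ containing $s$, since any Jordan loop enclosing $s$ but not $t$ meets $\pi$ an odd number of times. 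The converse rests on the fact that in $\mathbb{R}^2$ the mod-$2$ crossing number of a closed curve with any fixed $s$-to-$t$ curve is a homotopy invariant of the latter: if $\cs_d$ did not separate, then $\pi$ could be deformed to a curve disjoint from $\bigcup_{O\in\cs_d}O$, forcing the parity to be even.

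Next I would construct $G$ so that its cycles correspond to such closed curves. I would pick a reference point $r_O\in O$ for each obstacle and a transition point $p_C\in C$ in each connected component $C$ of each pairwise intersection $O_i\cap O_j$ (for $O_i\ne O_j$), then add one edge $e_C$ between $O_i$ and $O_j$. I would also add a self-loop at $O$ for each Jordan component of $\partial O$ that on its own separates $s$ from $t$, to account for obstacles (such as annuli) that already separate by themselves. The label of $e_C$ would be the parity of $|\pi\cap\alpha_C|$, where $\alpha_C$ is a canonical arc from $r_{O_i}$ through $p_C$ to $r_{O_j}$, routed first inside $O_i$ and then inside $O_j$; the separating self-loops would be labeled $1$. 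Any cycle in $G[\cs_d]$ then glues into a closed curve in $\bigcup_{O\in\cs_d}O$ whose crossing parity with $\pi$ equals the cycle's label sum, so the ``if'' direction of the theorem follows immediately from the intermediate claim.

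For the ``only if'' direction, given a separating $\cs_d$ I would take a Jordan curve $\Gamma\subseteq\bigcup_{O\in\cs_d}O$ separating $s$ from $t$, read off the cyclic sequence of obstacles it visits together with the intersection components it uses for transitions, and argue that this yields a closed walk in $G[\cs_d]$ of odd label sum, from which a simple odd cycle is then extracted by standard graph theory. The main obstacle I anticipate lies precisely in this accounting step: within each obstacle $O$, the actual stretch of $\Gamma$ need not be homotopic (relative to endpoints) to the canonical arc used when defining the labels, so the two can differ in $\pi$-crossing parity by a nontrivial loop in $O$. Absorbing this ``holonomy'' discrepancy is exactly the purpose of the self-loops (and any additional edge multiplicities I would need to introduce): the difference between $\Gamma$ and the glued canonical closed curve decomposes into nontrivial loops in individual obstacles, and each such loop's parity must be exactly captured either by a self-loop or by an extra edge in $G$. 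Making this decomposition canonical enough to give the claimed bijection, while keeping the multigraph of polynomial size and polynomial-time constructible, is the technical heart of the argument.
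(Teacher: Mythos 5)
Your outline captures the right high-level idea -- encode separation via crossing parity with a reference curve, and translate closed curves in $\bigcup_{O\in\cs_d}O$ to cycles in an auxiliary graph -- and your intermediate claim is essentially the paper's Fact~\ref{fact-separate} applied to the boundary of the component of $\mathbb{R}^2\setminus\bigcup O$ containing $s$. But your proposed construction and your forward direction have a genuine gap, and it is exactly the one you flag at the end.

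The issue is your choice to fix \emph{one canonical arc per intersection component} and label the edge by that arc's crossing parity. This forces you to compare the actual stretch of $\Gamma$ between consecutive obstacles to your canonical arcs and then ``absorb the holonomy,'' which you acknowledge you have not done. The paper avoids this entirely by defining edges \emph{existentially}: between two intersecting obstacles $S,S'$ it puts an edge with label $0$ if \emph{some} curve in $S\cup S'$ from $\refp(S)$ to $\refp(S')$ crosses $\pi$ evenly, and an edge with label $1$ if some such curve crosses $\pi$ oddly (so up to two parallel edges, one per achievable parity). With this definition there is no discrepancy to absorb: whatever parity the actual stretch of the curve realizes, the corresponding edge exists. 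The second ingredient you are missing is the \emph{detour trick} that makes ``the actual stretch'' well-defined. The paper modifies the separating boundary curve $\gamma'$ by inserting, on each boundary arc belonging to obstacle $S_i$, a back-and-forth excursion to $\refp(S_i)$ staying inside $S_i$; each excursion is traversed twice, so the total crossing parity with $\pi$ is unchanged, and now the curve decomposes into pieces $\refp(S_i)\to\refp(S_{i+1})$ each lying inside $S_i\cup S_{i+1}$. Each piece matches an edge of the declared parity, giving an odd closed walk, from which an inclusion-minimal odd subwalk yields an odd cycle. Without the existential edges and the detour device, your holonomy-correction step does not have a concrete mechanism, so the forward direction of your argument is incomplete. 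Your self-loop rule (add a loop at $O$ when a boundary component of $O$ alone separates $s$ from $t$) also differs from the paper's (loop when $s\in O$ or $t\in O$); for simply connected obstacles these coincide, but you should make that assumption explicit or reconcile the two conditions.
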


The proof of Theorem~\ref{thm:mainStructure} is an application of the well known fact that a closed curve separates $s$ from $t$ if and only if it crosses a curve from $s$ to $t$ an odd number of times. Theorem~\ref{thm:mainStructure} allows us to re-prove, improve, and generalize a number of results for \obstacleremoval{}, $2$-\pointsep{} and \pointsep{} in a remarkably simple way. More concretely, we obtain the following results. 

\begin{itemize}
%\begin{theorem}\label{thm:twoPointSep}
\item {\em There exists a polynomial time algorithm for $2$-\pointsep{}.} 
%\end{theorem}
\end{itemize}

Here is the proof: construct the graph $G$ from Theorem~\ref{thm:mainStructure} and find the shortest odd cycle, which is easy to do in polynomial time. This re-proves the main result of Cabello and Giannopoulos~\cite{cabello2016complexity}. Next we turn to  \obstacleremoval{}, and obtain an improved parameterized algorithm and simplified approximation algorithms. 

\begin{itemize}
%\begin{theorem}\label{thm:parameterized}
\item {\em There exists an algorithm for \obstacleremoval{} that determines whether there exists a solution size set $\cs{}$ of size at most $q$ in time $2.3146^qn^{O(1)}$.}
\end{itemize}

Here is a proof sketch: construct the graph $G$ from Theorem~\ref{thm:mainStructure} and determine whether there exists a subset $\cs_d$ of ${\cal S}$ of size at most $q$ such that $G - \cs_d$ does not have any odd label cycle. This can be done in time $2.3146^qn^{O(1)}$ using the algorithm of Lokshtanov et al.~\cite{oct-fpt} for {\sc Odd Cycle Transversal}.\footnote{
The only reason this is a proof sketch rather than a proof is that the algorithm of Lokshtanov et al.~\cite{oct-fpt} works for unlabeled graphs, while $G$ has edges with labels $0$ or $1$. This difference can be worked out using a well-known and simple trick of subdividing every edge with label $0$ (see Section~\ref{sec:obs-removal}).}
This parameterized algorithm improves over the previously best known parameterized algorithm for \obstacleremoval{} of Eiben and Lokshtanov~\cite{mcp-fpt} with running time $q^{O(q^3)}n^{O(1)}$. 

If we run an approximation algorithm for {\sc Odd Cycle Transversal} on $G$ instead of a parameterized algorithm, we immediately obtain an approximation algorithm for \obstacleremoval{} with the same ratio. Thus, the $O(\sqrt{\log n})$-approximation algorithm for {\sc Odd Cycle Transversal}~\cite{AgarwalCMM05,kratsch2020representative} implies a $O(\sqrt{\log n})$-approximation algorithm for \obstacleremoval{} as well.
Going a little deeper we observe that the structure of $G$ implies that the standard Linear Programming relaxation of {\sc Odd Cycle Transversal} on $G$ only has a constant integrality gap. This yields a constant factor approximation for \obstacleremoval{}, substantially simplifying the approximation algorithm of Kumar et al~\cite{mcp2021}.

\begin{itemize}
%\begin{theorem}\label{thm:twoPointSep}
\item {\em There exists a a constant factor approximation for \obstacleremoval{}.} 
%\end{theorem}
\end{itemize}

Finally we turn our attention back to a generalization of \pointsep{}, called \gptsep{}. Here, instead of separating all $k$ points in $A$ from each other, we are only required to separate $p$ specific pairs $(s_1, t_1), \ldots, (s_p, t_p)$ of points in $A$ (which are specified in the input).
We apply Theorem~\ref{thm:mainStructure} several times, each time with the same obstacle set $\cs_{}$, but with a different pair $(s_i, t_i)$. Let $G_i$ be the graph resulting from the construction with the pair $(s_i, t_i)$.
Finding a minimum size set $\cs_r$ of obstacles that separates $s_i$ from $t_i$ for every $i$ now amounts to finding a minimum size set $\cs_r$ such that $G_i[\cs_r]$ contains an odd label cycle for every $i$.
The graph in the construction of Theorem~\ref{thm:mainStructure} does not depend on the points $(s_i, t_i)$ - only the labels of the edges do. Thus $G_1, \ldots, G_p$ are copies of the same graph $G$, but with $p$ different edge labelings. 
Our task now is to find a subgraph of $G$ on the minimum number of vertices, such that the subgraph contains an odd labeled cycle with respect to each one of the $p$ labels. 
We show that such a subgraph has at most $O(p)$ vertices of degree at least $3$ and use this to obtain a $2^{O(p^2)}n^{O(p)}$ time algorithm for \gptsep{}.
This implies a $2^{O(k^4)}n^{O(k^2)}$ time algorithm for \pointsep{}, resolving the open problem of Cabello and Giannopoulos~\cite{cabello2016complexity}. With additional technical effort we are able to bring down the running time of our algorithm for \gptsep{} to $2^{O(p)}n^{O(k)}$. This turns out to be close to the best one can do. On the other hand, for {\em pseudo-disk} obstacles we can get a faster algorithm.
%, assuming plausible complexity assumptions.

%\begin{itemize}
%\item {\em There exists a $2^{O(p)}n^{O(k)}$ time algorithm for \gptsep{}. On the other hand, a $f(k,p)n^{o(k/\log k)}$ time algorithm for \pointsep{} would violate the Exponential Time Hypothesis.}
%\item {\em There exists a $2^{O(p)} k^{O(k)} n^{O(\sqrt{k})}$ time algorithm for \gptsep{} with pseudo-disk obstacles. On the other hand a $f(k,p)n^{o(\sqrt{k})}$ time algorithm for \pointsep{} would violate the Exponential Time Hypothesis.}
%\end{itemize}

\begin{itemize}
\item {\em There exists a $2^{O(p)}n^{O(k)}$ time algorithm for \gptsep{}, and a $n^{O(\sqrt{k})}$ time algorithm for \gptsep{} with pseudo-disk obstacles. }
\item {\em A $f(k)n^{o(k/\log k)}$ time algorithm for \pointsep{}, or a $f(k)n^{o(\sqrt{k})}$ time algorithm for \pointsep{} with pseudo-disk obstacles would violate the ETH~\cite{ImpagliazzoPZ01}.}
\end{itemize}

\section{Preliminaries}
\label{sec:prelims}
%\todo{definition of arrangement, size of arrangement}

We begin by reviewing some relevant background and definitions.

\subparagraph*{Graphs and Arrangements} All graphs used in this paper are undirected. It will also be more convenient to sometimes consider multi-graphs, in which self-loops and parallel edges are allowed. The \emph{degree} of a vertex is the number of adjacent edges.

The \emph{arrangement} $\Arr(\cs)$ of a set of obstacles $\cs$ is a subdivision of the plane induced by the boundaries of the obstacles in $\cs$. The faces of $\Arr(\cs)$ are connected regions and edges are parts of obstacle boundaries. The \emph{arrangement graph} $G_\Arr = (V, E)$ is the dual graph of the arrangement whose vertices are faces of $\Arr(\cs)$ and edges connect neighboring faces. The complexity of the arrangement is the size of its arrangement graph which we denote by
$|\Arr(\cs)|$. We assume that the size of the arrangement is polynomial in the number of obstacles, that is $|\Arr(\cs)| = |G_\Arr| = n^{O(1)}$. This is indeed true for most reasonable obstacle models such as polygons or low-degree splines.

\subparagraph*{\obstacleremoval{} and \pointsep on Colored Graphs} Traditionally, \obstacleremoval{} problems have been defined in terms of graph problems on the
arrangement graph $G_\Arr$. In particular, we can define a \emph{coloring function} 
$\textsf{col} : V \rightarrow  2^{\cs}$ which assigns every vertex of $G_\Arr$ to the set of obstacles containing it. That is, obstacles correspond to colors in the colored graph $(G_\Arr, \col)$. It is easy to see that a curve connecting $s$ and $t$  in the plane that intersects $q$ obstacles corresponds to a path $\pi$ in the graph that uses $|\bigcup_{v \in \pi} \col(v)| = q$ colors in $(G_\Arr, \col)$ and vice versa.

We can also define 2-\pointsep as the problem of computing a \emph{min-color separator} of the graph $(G_\Arr, \col)$.
Let $V(\cs_r) \subseteq V$ be the set of vertices of $G_\Arr$ that contain at least one color from $\cs_r$.
A set of colors $\cs_r \subseteq \cs$ is a \emph{color separator} if $s$ and $t$ are disconnected in $G_\Arr - V(\cs_r)$. 
That is, every $s$--$t$ path must intersect at least one color in $\cs_r$. Therefore, a color separator of
minimum cardinality is a solution of 2-\pointsep, that is the minimum set of obstacles separating
$s$ from $t$.

The previous work~\cite{mcp2021} used structural properties of the colored graph $(G_\Arr, \col)$ to obtain a polytime algorithm for 2-\pointsep{} and a constant approximation for \obstacleremoval{}. One key difference in our approach is that instead of working on the colored graph $(G_\Arr, \col)$, we found it
more convenient to work with a so-called \emph{labeled intersection graph} $(G_\cs, \lab)$ of obstacles which we will formally construct in the next section.
Roughly speaking, given a set of obstacles $\cs$ and a \emph{reference curve} $\pi$ in the plane connecting $s$ and $t$, 
we build a multi-graph where vertices are obstacles in $\cs$ and edges connect a pair of intersecting obstacles. 
Every edge $e \in E$ is assigned a \emph{parity} label $\mathsf{lab}(e) \in \{0, 1\}$  based on the reference curve $\pi$. 
We say that a walk is labeled \emph{odd} (or \emph{even}) if the sum of labels of its edges is odd (or even) respectively.

Once this graph is constructed, we can forget about obstacles and formulate our problems using just the parity labels $\lab(e)$ on the edges of $G_\cs$. Since the parity function is much simpler to work with compared to the color function, this allows us to
significantly simplify the results from~\cite{mcp2021} and obtain new results.
In the next section, we describe the construction of graph $G_\cs$ and prove a key structural result that allow us to 
cast 2-\pointsep{} as finding shortest odd labeled cycle in $G_\cs$ and \obstacleremoval{} as the smallest \oddcycletrans of $G_\cs$. Recall that in \oddcycletrans problem, we want to find a set of vertices that ``hits'' (has non-empty intersection) with every odd-cycle of the graph. We will also need the following important property of plane curves.

\subparagraph{Plane curves and Crossings} A \textit{plane curve} (or simply \textit{curve}) is specified by a continuous function $\pi:[0,1] \rightarrow \mathbb{R}^2$, 
where the points $\pi(0)$ and $\pi(1)$ are called the \textit{endpoints} (for convenience, we also use the notation $\pi$ to denote the image of the path function $\pi$).
A curve is \textit{simple} if it is injective, and is \textit{closed} if its two endpoints are the same.
We say a curve $\pi$ \textit{separates} a pair $(a,b)$ of two points in $\mathbb{R}^2$ if $a$ and $b$ belong to different connected components of $\mathbb{R}^2 \backslash \pi$.

A \emph{crossing} of $\pi$ with $\pi'$ is an element of the set $\{t \in [0, 1] ~|~ \pi(t) \in \pi'\}$. 
%\todo{old definition $\{t \in [0, 1] ~|~ \pi(t) = \pi'(t)\}$ was wrong}
%
We will often be concerned with the {\em number} of times $\pi$ crosses $\pi'$. This is defined as $|\{t \in [0, 1] ~|~ \pi(t) \in \pi'\}|$. Whenever we count the number of times a curve $\pi$ crosses another curve $\pi'$ we shall assume that (and ensure that) $|\{t \in [0, 1] ~|~ \pi(t) \in \pi'\}|$ is finite and that $\pi$ and $\pi'$ are {\em transverse}. That is for every $t \in [0, 1]$ such that $\pi(t) \in \pi'$ there exists an $\eps > 0$ such that the intersection of $\pi \cup \pi'$ with an $\eps$ radius ball around $\pi(t)$ is homotopic with two orthogonal lines. We will make frequent use of the following basic topological fact.

%We assume that crossings of a pair of distinct curves is \emph{transverse}, that is, for a small enough $\eps$, the sub-curves $\pi(t-\eps, t+\eps)$ and $\pi'(t-\eps, t+\eps)$ are homeomorphic\todo{DL: should this not be  homotopic?} to two orthogonal lines. Therefore, the set of crossings is finite. \todo{for reasonable curves. It can be at the very least countably infinite for things like $y_1=0$ and $y_2 = sin(1/x)$.}
%For two curves $\pi$ and $\gamma$ the {\em number of times $\pi$ crosses $\gamma$ } is the number of crossings of $\pi$ and $\gamma$.

\begin{fact} \label{fact-separate}
Let $\pi$ be a curve with endpoints $a,b \in \mathbb{R}^2$. We have that
\begin{itemize}
    \item A \emph{simple closed curve} $\gamma$ separates $(a,b)$ iff $\pi$ crosses $\gamma$ an odd number of times.
    \item If $\pi$ crosses a \emph{closed curve} $\gamma$ an odd number of times, then $\gamma$ separates $(a,b)$.
\end{itemize}
\end{fact}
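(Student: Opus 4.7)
The plan is to derive both parts from the Jordan Curve Theorem combined with a parity argument at transverse crossings.

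For the first part, the Jordan Curve Theorem decomposes $\mathbb{R}^2 \setminus \gamma$ into exactly two connected components, an ``inside'' and an ``outside''. As we traverse $\pi$ from $a$ to $b$, the transversality of $\pi$ with $\gamma$ guarantees that $\pi(t)$ switches between these two components at each of the finitely many crossings and stays in a fixed component on each intermediate sub-interval. Hence the component containing $\pi(1)=b$ differs from the one containing $\pi(0)=a$ exactly when the number of crossings is odd, giving the stated equivalence.

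For the second part, I would reduce to the first by decomposing $\gamma$ into a disjoint union of simple closed curves $\gamma_1,\dots,\gamma_k$ via a smoothing procedure. After a small perturbation placing $\gamma$ in general position, it has only finitely many transverse self-intersections; at each such double point, replace the local ``$\times$'' picture with a non-crossing ``$)($'' inside a tiny ball. Choosing each smoothing ball small enough to avoid $\pi$, no crossings with $\pi$ are gained or lost, so $\sum_i |\{t : \pi(t)\in\gamma_i\}| = |\{t : \pi(t)\in\gamma\}|$ is odd, whence some $\gamma_i$ crosses $\pi$ an odd number of times and, by the first part, separates $a$ from $b$. To transfer this separation back to $\gamma$, suppose for contradiction that $a$ and $b$ lie in the same component of $\mathbb{R}^2\setminus\gamma$ and fix a connecting path $\pi_{ab}$ there; since $\pi_{ab}$ is compact and disjoint from $\gamma$, the smoothing balls can be further shrunk so as to miss $\pi_{ab}$ as well, which keeps $\pi_{ab}$ disjoint from every $\gamma_i$ and contradicts $\gamma_i$ separating $a$ from $b$.

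The main subtlety will be justifying the smoothing step for a completely general closed curve: one must show that $\gamma$ can be put into general position with only finitely many transverse double points and that the local re-routing is well-defined. For the piecewise smooth curves arising in our applications (boundaries of obstacles and their sub-arcs) this is routine. A cleaner alternative for full generality is to work with the $\mathbb{Z}/2$ winding number $w(\gamma,\cdot)$, which is locally constant on $\mathbb{R}^2\setminus\gamma$ and flips parity at each transverse crossing of $\pi$ with $\gamma$; an odd total number of crossings then forces $w(\gamma,a)\not\equiv w(\gamma,b)\pmod 2$, so $a$ and $b$ cannot lie in the same component of $\mathbb{R}^2\setminus\gamma$.
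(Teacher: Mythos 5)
The paper states Fact~\ref{fact-separate} without proof, invoking it as a standard topological fact, so there is no proof in the paper to compare against. Your argument is correct. The first part via the Jordan Curve Theorem and parity of transverse crossings is exactly the textbook argument, and both of your routes for the second part (resolution of double points into disjoint simple closed curves, or the $\mathbb{Z}/2$ winding number) are standard and sound; the winding-number route is the cleaner one because it sidesteps any general-position concerns for arbitrary continuous closed curves.

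One small ordering quibble in the smoothing argument: you first fix a smoothing to obtain $\gamma_1,\dots,\gamma_k$ and identify the separating $\gamma_i$, and only afterwards introduce $\pi_{ab}$ and ``further shrink'' the balls. Shrinking the balls produces a new decomposition, so the previously chosen $\gamma_i$ is not literally unchanged. The fix is cosmetic: assume for contradiction that $a,b$ lie in the same component of $\mathbb{R}^2\setminus\gamma$, fix $\pi_{ab}$ there, and \emph{then} choose the smoothing balls small enough to miss both $\pi$ and $\pi_{ab}$; the resulting $\gamma_1,\dots,\gamma_k$ are all disjoint from $\pi_{ab}$, yet one of them separates $a$ from $b$ by the first part and the parity count, giving the contradiction directly. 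Alternatively, note that shrinking the balls changes each $\gamma_i$ only by an ambient isotopy supported away from $a,b,\pi$, so the separation property persists. Either patch closes the gap.
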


\subparagraph*{Partitions.} A \textit{partition} of a set $X$ is a collection $\varPhi$ of nonempty disjoint subsets (called \textit{parts}) of $X$ whose union is $X$.
%For two partitions $\varPhi$ and $\varPhi'$ of $I$, we define $\varPhi \odot \varPhi'$ be a collection of subsets of $I$.
For two partitions $\varPhi$ and $\varPhi'$ of $X$, we say $\varPhi$ is \textit{finer} than $\varPhi'$, denoted by $\varPhi \preceq \varPhi'$ or $\varPhi' \succeq \varPhi$, if for any $Y \in \varPhi$ there exists $Y' \in \varPhi'$ such that $Y \subseteq Y'$.
There is a one-to-one correspondence between partitions of $X$ and equivalence relations on $X$.
For any equivalence relation on a $X$, the set of its equivalence classes is a partition of $X$.
Conversely, any partition of $X$ induces a equivalence relation $\sim$ on $X$ where $x \sim y$ if $x$ and $y$ belong to the same part of the partition.
For two partitions $\varPhi$ and $\varPhi'$ of $X$, we define $\varPhi \odot \varPhi'$ as another partition of $I$ as follows.
Let $\sim_\varPhi$ and $\sim_{\varPhi'}$ be the equivalence relations on $X$ induced by $\varPhi$ and $\varPhi'$, respectively.
Define $\sim$ as the equivalence relation on $X$ where $x \sim y$ if $x \sim_\varPhi y$ and $x \sim_{\varPhi'} y$.
Then $\varPhi \odot \varPhi'$ is defined as the partition corresponding to the equivalence relation $\sim$.
Clearly, $\odot$ is a commutative and associative binary operation.
Thus, for a collection $\mathsf{Par}$ of partitions on $X$, we can define $\bigodot_{\varPhi \in \mathsf{Par}} \varPhi$ as the partition on $X$ obtained by ``adding'' the elements in $\mathsf{Par}$ using the operation $\odot$; note that $\bigodot_{\varPhi \in \mathsf{Par}} \varPhi$ is well-defined even if $\mathsf{Par}$ is infinite.

\begin{fact} \label{fact-represent}
Let $X$ be a set of size $k$ and $\varPhi_1,\dots,\varPhi_r$ be partitions of $X$.
Then there exists $T \subseteq [r]$ with $|T| < k$ such that $\bigodot_{t=1}^r \varPhi_t = \bigodot_{t \in T} \varPhi_t$.
\end{fact}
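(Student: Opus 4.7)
The plan is to prove Fact~\ref{fact-represent} by a simple greedy/potential argument, where the potential is the number of parts of the current partition. The key observation is that $\odot$ is a \emph{meet} operation on the lattice of partitions of $X$: for any partitions $\psi,\varPhi$ of $X$, the combined partition $\psi \odot \varPhi$ is finer than $\psi$, and moreover $\psi \odot \varPhi = \psi$ holds if and only if $\psi \preceq \varPhi$. So adding a new partition to a running $\odot$ either changes nothing (when the running partition is already finer than the new one) or strictly refines it, which strictly increases the number of parts by at least $1$.

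Concretely, I would build $T$ greedily in one pass. Set $\psi_0$ to be the trivial partition $\{X\}$ (one part) and $T_0 = \emptyset$. For $i = 1, \dots, r$, compute $\psi_{i-1} \odot \varPhi_i$: if this equals $\psi_{i-1}$, leave $T_i = T_{i-1}$ and $\psi_i = \psi_{i-1}$; otherwise put $T_i = T_{i-1} \cup \{i\}$ and $\psi_i = \psi_{i-1} \odot \varPhi_i$. A routine induction (using commutativity/associativity of $\odot$ and the fact above) shows the two invariants
\[
    \psi_i \;=\; \bigodot_{t=1}^{i} \varPhi_t \;=\; \bigodot_{t \in T_i} \varPhi_t.
\]
Setting $T := T_r$ therefore gives $\bigodot_{t=1}^{r} \varPhi_t = \bigodot_{t \in T} \varPhi_t$, which is the equality we want.

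It remains to bound $|T|$. Every index $i$ added to $T$ corresponds to a \emph{strict} refinement $\psi_i \prec \psi_{i-1}$, so the number of parts $|\psi_i|$ strictly increases. Since $|\psi_0| = 1$ and $|\psi_r| \le k$ (any partition of a $k$-element set has at most $k$ parts), at most $k-1$ strict refinements can occur along the sequence, giving $|T| \le k-1 < k$ as required.

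I do not expect a real obstacle here: everything boils down to the two basic lattice facts that $\odot$ is the meet of partitions and that $|\psi \odot \varPhi| \ge |\psi|$ with equality iff $\psi \preceq \varPhi$. If anything, the only mildly delicate point is checking that $\psi_i = \bigodot_{t=1}^{i} \varPhi_t$ is maintained even in the ``skip'' step, but that is immediate from $\psi_{i-1} \odot \varPhi_i = \psi_{i-1}$ whenever $\psi_{i-1} \preceq \varPhi_i$.
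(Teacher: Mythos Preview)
Your proof is correct and uses essentially the same potential argument as the paper: both rely on the fact that the number of parts of the running $\odot$ can only strictly increase (by at least one) when a genuinely new partition is incorporated, and this can happen at most $k-1$ times. The only difference is presentational: the paper takes a minimal $T$ with the desired property and derives a contradiction if $|T| \geq k$, whereas you build $T$ greedily in one left-to-right pass; your version is slightly more constructive but otherwise the arguments coincide.
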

\begin{proof}
Let $T \subseteq [r]$ be a minimal subset satisfying $\bigodot_{t=1}^r \varPhi_t = \bigodot_{t \in T} \varPhi_t$.
We show $|T| < k$ by contradiction.
Assume $T = \{t_1,\dots,t_m\}$ where $m \geq k$.
Define $\varPsi_s = \bigodot_{i=1}^s \varPhi_{t_i}$ for $s \in [m]$.
Then we have $\varPsi_1 \succeq \cdots \succeq \varPsi_m$, which implies $1 \leq |\varPsi_1| \geq \cdots \geq |\varPsi_m| \leq k$.
It is impossible that $1 \leq |\varPsi_1| < \cdots < |\varPsi_m| \leq k$, because $m \geq k$.
Therefore, $\varPsi_s = \varPsi_{s+1}$ for some $s \in [m-1]$.
It follows that
\begin{equation*}
    \bigodot_{t \in T} \varPhi_t = \varPsi_{s+1} \odot \left(\bigodot_{i=s+2}^m \varPhi_{t_i}\right) = \varPsi_s \odot \left(\bigodot_{i=s+2}^m \varPhi_{t_i}\right) = \bigodot_{t \in T \backslash \{t_{s+1}\}} \varPhi_t,
\end{equation*}
which contradicts the minimality of $T$.
\end{proof}

\begin{fact} \label{fact-coarse}
Let $\varPhi$ be a partition of $X$ and suppose $|\varPhi| = z$.
For an integer $0 \leq d < z$, the number of partitions $\varPhi'$ satisfying $|\varPhi'| = z-d$ and $\varPhi' \succeq \varPhi$ is bounded by $z^{O(d)}$.
Furthermore, these partitions can be computed in $z^{O(d)}$ time given $\varPhi$.
\end{fact}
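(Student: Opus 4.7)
The plan is to identify $\varPhi'$ with a partition of the set of $z$ blocks of $\varPhi$, since the condition $\varPhi' \succeq \varPhi$ is precisely that each block of $\varPhi$ lies entirely inside some block of $\varPhi'$. Labeling the blocks of $\varPhi$ by $1,\dots,z$, counting the partitions $\varPhi'$ with $|\varPhi'|=z-d$ reduces to counting partitions of $[z]$ into exactly $z-d$ parts.

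To bound this count, I would encode each such partition by its \emph{leaders}: declare the minimum element of each part to be its leader, and record, for every non-leader, which leader it sits with. A partition of $[z]$ into $z-d$ parts has exactly $z-d$ leaders and exactly $d$ non-leaders, and each non-leader $j$ is determined by the pair $(j,\ell(j))$, where $\ell(j)<j$ is the leader of $j$'s part. There are at most $\binom{z}{d}$ choices for the set of non-leaders, and once that set is fixed each non-leader has at most $z$ candidate leaders, so the number of encodings is at most $\binom{z}{d}\cdot z^{d} = z^{O(d)}$. Since the encoding is clearly injective, this bounds the number of admissible $\varPhi'$ by $z^{O(d)}$.

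For the algorithmic statement I would simply enumerate all encodings: iterate over all $d$-subsets $N\subseteq[z]$ and all functions $f\colon N \to [z]\setminus N$, and for each pair $(N,f)$ reconstruct the candidate partition (two elements are equivalent iff they can be joined through a chain of $f$-arrows ending at a common leader), verifying in $O(z)$ time that $f(j)<j$ for every $j\in N$, that every $f$-image is itself a leader, and that the resulting partition has exactly $z-d$ parts. Each valid candidate corresponds to a unique $\varPhi'$, so the procedure outputs precisely the desired list of partitions in $z^{O(d)}$ time.

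The main obstacle, such as it is, is only to set up the leader encoding carefully enough that (i) it is injective and (ii) validity of an encoding can be checked locally; the bound and the enumeration then follow from straightforward counting, without any deeper combinatorics.
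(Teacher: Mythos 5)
Your argument is correct, but it takes a different route from the paper's. The paper proves the bound by an iterative coarsening procedure: starting from $\varPhi$ it performs $d$ merge steps, each step picking two of the remaining parts and uniting them, so there are at most $O(z^2)$ choices per step and $z^{O(d)}$ possible runs; since every admissible $\varPhi'$ is reachable by some run, the count and the algorithm both follow. You instead reduce to counting set partitions of the $z$-element block set into exactly $z-d$ parts and bound this by a leader encoding (each non-leader points to the minimum of its block), giving $\binom{z}{d}\,z^{d}=z^{O(d)}$ via an injective code rather than via an overcounting of merge sequences. Both approaches are elementary and yield the same $z^{O(d)}$ bound (and essentially the same exponent constant, $2d$), but yours produces each partition exactly once modulo the final validity filter, whereas the paper's merge tree reaches each partition through many distinct sequences of merges; on the other hand the paper's procedure needs no filtering step and makes the ``every $\varPhi'$ is reachable'' claim obvious. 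Your verification checklist contains one redundant item — since you define $f\colon N\to[z]\setminus N$, every $f$-image is automatically a leader, so only the condition $f(j)<j$ and the part count $z-d$ (which is automatic given $|N|=d$) actually need checking — but this is a stylistic wrinkle, not a gap.
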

\begin{proof}
Consider the following procedure for generating a ``coarser'' partition from $\varPhi$.
We begin from the partition $\varPhi$.
At each step, we pick two elements $Y,Y'$ in the current partition and then replace them with their union $Y \cup Y'$ to obtain a new partition.
After $d$ steps, we obtain a partition $\varPhi'$ satisfying $|\varPhi'| = z-d$ and $\varPhi' \succeq \varPhi$.
Note that every partition $\varPhi'$ where $|\varPhi'| = z-d$ and $\varPhi' \succeq \varPhi$ can be constructed in this way.
Furthermore, the number of different choices at the $i$-th step is $\binom{z+1-i}{2} = O(z^2)$.
Therefore, the number of possible outcomes of the procedure, i.e., the number of partitions $\varPhi'$ satisfying $|\varPhi'| = z-d$ and $\varPhi' \succeq \varPhi$, is bounded by $z^{O(d)}$.
These partitions can be directly computed in $z^{O(d)}$ time via the procedure.
\end{proof}

\subparagraph*{Pseudo-disks.} A set $\mathcal{S}$ of geometric objects in $\mathbb{R}^2$ is a set of \textit{pseudo-disks}, if each object $S \in \mathcal{S}$ is topologically homeomorphic to a disk (and hence its boundary is a simple cycle in the plane) and the boundaries of any two objects $S,S' \in \mathcal{S}$ intersect at most twice.
Let $U$ be the union of a set $\mathcal{S}$ of pseudo-disks.
The boundary of $U$ consists of \textit{arcs} (each of which is a portion of the boundary of an object in $\mathcal{S}$) and \textit{break points} (each of which is an intersection point of the boundaries of two objects in $\mathcal{S}$).
We say two objects $S,S' \in \mathcal{S}$ \textit{contribute} to $U$ if an intersection point of the boundaries of $S$ and $S'$ is a break point on the boundary of $U$.
We shall use the following well-known property of pseudo-disks~\cite{kedem1986union}.
\begin{fact} \label{fact-planar}
Let $\mathcal{S}$ be a set of pseudo-disks, and $U$ be the union of the objects in $S$.
Then the graph $G = (\mathcal{S}, E)$ where $E = \{(S,S'): S,S' \in \mathcal{S} \textnormal{ contribute to } U\}$ is planar.
\end{fact}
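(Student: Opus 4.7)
The plan is to construct an explicit planar embedding of $G$ by routing each edge through a break point on $\partial U$. For each pseudo-disk $S \in \mathcal{S}$ that contributes to $\partial U$, I fix a point $p_S$ in the interior of $S$; pseudo-disks not contributing to $\partial U$ are isolated vertices of $G$ and can be placed anywhere in a region disjoint from the rest of the drawing. By the definition of $E$, for each edge $(S, S') \in E$ there is at least one break point $q_{S,S'}$ on $\partial U$ lying in $\partial S \cap \partial S'$, and I will use $q_{S,S'}$ as the pivot through which this edge is routed.

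Each edge $(S, S')$ will be drawn as a simple curve from $p_S$ to $p_{S'}$ decomposed into a curve from $p_S$ to $q_{S,S'}$ lying in the interior of $S$ and a curve from $q_{S,S'}$ to $p_{S'}$ lying in the interior of $S'$. In a sufficiently small neighborhood of $q_{S,S'}$, the boundaries $\partial S$ and $\partial S'$ cross transversally and partition the neighborhood into four quadrants; because $q_{S,S'} \in \partial U$, the quadrant outside $S \cup S'$ is locally disjoint from $U$, and the other three quadrants are subsets of $S \setminus S'$, $S' \setminus S$, and $S \cap S'$. I route the first half so that it approaches $q_{S,S'}$ through the $S \setminus S'$ quadrant and the second half so that it leaves $q_{S,S'}$ through the $S' \setminus S$ quadrant, fixing the local behavior of the drawing at every break point in a uniform way.

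The main obstacle is arguing that these curves can be chosen pairwise non-crossing. The plan is to push each half-curve against $\partial U$, so that the first half runs inside $S$ in a thin tubular neighborhood of an arc of $\partial S \cap \partial U$ terminating at $q_{S,S'}$, and the second half does the analogous thing in $S'$. Since any two pseudo-disk boundaries intersect at most twice, two half-curves routed in this way along $\partial U$ cannot cross transversally without forcing an additional intersection between the corresponding pseudo-disk boundaries, contradicting the pseudo-disk property. This construction, which is the essence of the Kedem--Livne--Pach--Sharir argument~\cite{kedem1986union}, yields the desired planar embedding of $G$, and as a well-known corollary the combinatorial complexity of $U$ is bounded by $O(|\mathcal{S}|)$.
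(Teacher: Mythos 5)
The paper does not actually prove Fact~\ref{fact-planar}; it cites it as a known consequence of the Kedem--Livne--Pach--Sharir theory~\cite{kedem1986union} and moves on, so there is no in-paper proof to compare against. Your high-level strategy --- route each edge of $G$ through a break point $q_{S,S'}\in\partial U$, with one half inside $S$ and the other inside $S'$ --- is indeed the natural way to try to certify planarity directly, and the local behavior you fix at each break point (approach through the $S\setminus S'$ quadrant, leave through the $S'\setminus S$ quadrant) is sensible. The difficulty is entirely in the non-crossing step, and that is exactly where your argument has a gap.

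The issue is that the half-curve from the fixed interior point $p_S$ to $q_{S,S'}$ cannot, in general, live inside a single thin tube around the one arc of $\partial S\cap\partial U$ that terminates at $q_{S,S'}$: $p_S$ is one fixed point, but the break points you must reach for the various edges incident to $S$ lie at the ends of \emph{different} arcs of $\partial S\cap\partial U$, which are separated along $\partial S$ by ``gap'' stretches where $\partial S$ is covered by $U$, i.e.\ runs through the interior of some other pseudo-disk $S''$. To reach those break points you either route through the interior of $S$ away from $\partial U$ (in which case ``push against $\partial U$'' no longer constrains the curve and it may freely cross half-curves of other disks inside $S\cap S''$), or you hug $\partial S$ through the gap. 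In the latter case, consider an intersection point $x$ of $\partial S$ and $\partial S''$ that is \emph{not} a break point of $\partial U$: locally the perturbed copy of $\partial S$ just inside $S$ and the perturbed copy of $\partial S''$ just inside $S''$ both enter the quadrant $S\cap S''$ near $x$ and cross transversally there. So two gap-hugging half-curves for $S$ and $S''$ generically \emph{do} cross, and the ``at most two boundary intersections'' property does not obviously forbid this single crossing --- your appeal to it is not wired up to a contradiction. Concretely, the claim ``two half-curves routed this way cannot cross without forcing an additional intersection between the boundaries'' is left unsupported; the crossing occurs precisely at (a perturbation of) an \emph{existing} boundary intersection, so no new boundary intersection is created.

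To repair this you would need a genuinely new idea: either show how to split the vertex $p_S$ into several ``ports'' along $\partial U$ and connect them with a non-crossing tree inside $S$ chosen compatibly across all disks simultaneously, or abandon the direct embedding and argue planarity combinatorially (e.g.\ via Euler's formula applied to the planar graph formed by $\partial U$ itself). Either route is more delicate than the sketch suggests, which is presumably why the paper simply cites~\cite{kedem1986union} rather than reproving the statement.
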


\noindent
We remark that the above fact immediately implies another well-known property of pseudo-disks: the complexity of the union of a set of $n$ pseudo-disks is $O(n)$~\cite{kedem1986union}.
But this property will not be used in this paper.

\section{Labeled Intersection Graph of Obstacles}
\label{sec:lig}

We begin by describing the construction of the labeled intersection graph $G_\cs=(\cs, X)$ of the obstacles $\cs$. For the ease of exposition, we will use $S$ to refer to the obstacle $S \in \cs$ as well as the vertex for $S$ in $G_\cs$ interchangeably.

\subparagraph*{\boldmath Constructing the graph $G_\cs$} For every obstacle $S \in \cs$ we first select an arbitrary point $\refp(S) \in S$ and designate it to be the \emph{reference point} of the obstacle. 
Next, we select the \emph{reference curve} $\pi$ to be a simple curve in the plane connecting $s$ and $t$ such that
including it to the arrangement $\Arr(\cs)$ does not significantly increase its complexity.
That is, we want to ensure that $|\Arr(\cs \cup \pi)| = O(|\Arr(\cs)|)$.
Additionally, the reference curve $\pi$ is chosen such that there exists an $\epsilon > 0$ and $\pi$ is disjoint from an $\epsilon$ ball around every intersection point of two obstacles in $\Arr({\cal S})$ and from an $\epsilon$ ball around every reference point $\refp(S)$ for $S \in {\cal S}$.

As long as the intersection of every pair of obstacles is finite and their arrangement has
bounded size,
%\todo{and obstacles are not ridiculous} 
a suitable choice for $\pi$ always exists (and can be efficiently computed). For example one can choose $\pi$ to be the plane curve corresponding to an $s$--$t$ path in $G_\Arr$.

We will now add edges to $G_\cs$ as follows. (See also Figure~\ref{fig:odd-cycle}(c) for an example.)

\begin{itemize}
    \item For every obstacle $S \in \cs$ that contains $s$ or $t$, add a self-loop $e = (S, S)$ with $\lab(e) = 1$.
    \item For every pair of obstacles $S, S' \in \cs$ that intersect, we add edges to $G$ as follows.
    \begin{itemize}
        \item Add an edge $e_0 = (S, S')$ with $\lab(e_0) = 0$ if there exists a curve connecting $\refp(S)$ and $\refp(S')$ contained in the region $S \cup S'$ that crosses $\pi$ an \emph{even} number of times.
        \item Add an edge $e_1 = (S, S')$ with $\lab(e_1) = 1$ if there exists a curve connecting $\refp(S)$ and $\refp(S')$ contained in the region $S \cup S'$ that crosses $\pi$ an \emph{odd} number of times.
    \end{itemize}
\end{itemize}

Checking whether there exists a curve contained in the region $S \cup S'$ with endpoints $\refp(S)$ and $\refp(S')$ 
that crosses $\pi$ an odd (resp. even) number of times can be done in time linear in the size of arrangement 
$\Arr' = \Arr(S \cup S' \cup \pi)$. Specifically, we build the arrangement graph $G_{\Arr'}$ and only retain edges
$(f_i, f_j)$ such that the faces $f_i, f_j \in S \cup S'$. If the common boundary of faces $f_i, f_j$ is a portion of
$\pi$, we assign a label $1$ to the edge $(f_i, f_j)$, otherwise we assign it a label $0$. 
An odd (resp. even) labeled walk in $G_{\Arr'}$ connecting the faces containing $\refp(S)$ and $\refp(S')$ gives us the desired plane curve $\pi_{ij}$. Since edges of $G_{\Arr'}$ connect adjacent faces of $\Arr'$, we can ensure that the intersections between curve $\pi_{ij}$ and
the edges of arrangement (including parts of reference curve $\pi$) are all transverse.
%\todo{this needs cleaning walks/paths and rephrasing to ensure that curves are transverse to reference curve}

We are now ready to prove the following important structural property of the graph $G_\cs$.

\begin{lemma}
    \label{lem:odd-cycle}
    A set of obstacles $\cs' \subseteq S$ in the graph $G_\cs$ separates the points $s$ and $t$ if and only if 
    the induced graph $H = G_\cs[\cs']$ contains an odd labeled cycle.
\end{lemma}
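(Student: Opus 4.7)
My plan is to use Fact~\ref{fact-separate} to translate between odd labeled cycles in $H = G_\cs[\cs']$ and closed curves in $U := \bigcup_{S \in \cs'} S$ crossing $\pi$ an odd number of times; the two directions then fall out almost symmetrically.

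For ($\Leftarrow$), given an odd labeled cycle of $H$: if it is a self-loop at some $S \in \cs'$, then by construction $s \in S$ or $t \in S$, so every $s$--$t$ curve trivially intersects $S$. Otherwise the cycle is $S_1 - \cdots - S_k - S_1$ with edges $e_1,\dots,e_k$, and the defining property of $G_\cs$ supplies for each $e_i$ a curve $\mu_i \subseteq S_i \cup S_{i+1}$ from $\refp(S_i)$ to $\refp(S_{i+1})$ with $|\mu_i \cap \pi| \equiv \lab(e_i) \pmod{2}$. The concatenation $\gamma := \mu_1 \ast \cdots \ast \mu_k \subseteq U$ satisfies $|\gamma \cap \pi| \equiv \sum_i \lab(e_i) \equiv 1 \pmod{2}$, so Fact~\ref{fact-separate} forces $\gamma$, and therefore $\cs'$, to separate $s$ from $t$.

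For ($\Rightarrow$), suppose $\cs'$ separates $s,t$. If some $S \in \cs'$ contains $s$ or $t$ we are done via its self-loop, so assume $s,t \notin U$. Standard planar topology applied to the finite arrangement of (simply-connected) obstacles produces a simple closed curve $\gamma \subseteq U$ separating $s,t$, and by Fact~\ref{fact-separate} $\gamma$ crosses $\pi$ an odd number of times. I would then partition $\gamma$ into arcs $\gamma_1,\dots,\gamma_k$ with cyclic endpoints $p_0 = p_k, p_1,\dots,p_{k-1}$ such that each $\gamma_i$ lies in some obstacle $S_{j_i}\in\cs'$ and $p_i \in S_{j_i} \cap S_{j_{i+1}}$. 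Inside each $S_{j_i}$ I pick sub-curves $\alpha_i$ from $p_{i-1}$ to $\refp(S_{j_i})$ and $\beta_i$ from $\refp(S_{j_i})$ to $p_i$; writing $\eta_i := \alpha_i \ast \beta_i$ (an alternative arc from $p_{i-1}$ to $p_i$ inside $S_{j_i}$) and $\mu_i := \beta_i \ast \alpha_{i+1}$ (a curve in $S_{j_i} \cup S_{j_{i+1}}$ between the two reference points), each $\mu_i$ certifies an edge of $H$ with label $|\mu_i \cap \pi| \bmod 2$. The total label of the closed walk $S_{j_1}\to\cdots\to S_{j_k}\to S_{j_1}$ along these edges works out to $\sum_i |\eta_i \cap \pi| \pmod{2}$.

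The crux is the identity $\sum_i |\eta_i \cap \pi| \equiv |\gamma \cap \pi| \equiv 1 \pmod{2}$: for each $i$ the loop $\gamma_i \ast \eta_i^{-1}$ is a closed curve contained in the simply-connected obstacle $S_{j_i}$, so $s,t$ lie in the same (unbounded) component of $\mathbb{R}^2 \setminus S_{j_i}$ and this loop does not separate them; Fact~\ref{fact-separate} then forces $|\gamma_i \cap \pi| \equiv |\eta_i \cap \pi| \pmod{2}$, and summing over $i$ gives the claimed odd total label. An odd labeled closed walk always contains an odd labeled cycle by the standard argument (a shortest odd closed walk has no repeated internal vertex, else splitting at the repeat produces a shorter odd closed walk), which delivers the required odd cycle in $H[\cs']$. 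The main technical point I anticipate is justifying the separating simple closed curve $\gamma \subseteq U$ in the first place; for a finite arrangement of simply-connected obstacles this follows from tracing a suitable boundary component of the complement of $U$, but the planar topology deserves care.
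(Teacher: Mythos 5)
Your proof follows the paper's high-level plan — extract a simple closed boundary curve of the region enclosing one of $s,t$, use Fact~\ref{fact-separate} to get odd crossing parity with $\pi$, translate the curve to a closed walk in $G_\cs$ via the reference points, and shorten to an inclusion-minimal odd cycle; the $(\Leftarrow)$ direction is identical. The interesting divergence is in the translation step. The paper keeps the boundary arcs intact and merely inserts \emph{there-and-back} detours $\delta_i = q_i \to \refp(S_i) \to q_i$ at a single point of each arc; since each detour is the same curve traversed twice, its contribution to the crossing count with $\pi$ is automatically even, so the total parity is preserved with no topological argument at all. You instead \emph{replace} each boundary arc $\gamma_i$ with a through-the-interior arc $\eta_i = \alpha_i \ast \beta_i$ via the reference point, and then must prove $|\gamma_i\cap\pi|\equiv|\eta_i\cap\pi|\pmod 2$ term-by-term, which you do by arguing the loop $\gamma_i\ast\eta_i^{-1}\subseteq S_{j_i}$ cannot separate $s$ from $t$. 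That step is correct, but it relies on $\mathbb{R}^2\setminus S_{j_i}$ being connected — i.e.\ on the obstacle being simply connected (as you note) — whereas the paper's detour trick only uses connectedness of each obstacle. So your route works but requires a slightly stronger hypothesis at that step and one extra topological lemma; the paper's trick is designed precisely to sidestep it. (A smaller point you elide and the paper handles explicitly: the rerouted arcs must be taken transverse to $\pi$ so that crossing counts are finite and well defined; the paper does this via walks in the arrangement graph.)
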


\begin{proof}
    $(\Rightarrow)$ For the forward direction, suppose we are given a set of obstacles $\cs'$ that separate $s$ from $t$.
    If $s$ or $t$ are contained in some obstacle, then we must have an odd self-loop in $G_\cs$ and we will be done.
    Otherwise, assume that $s, t$ lie in the exterior of all obstacles, so we have $s, t \not \in \calr(\cs')$ where 
    $\calr(\cs') = \bigcup_{S \in \cs'} S$ is the region bounded by obstacles in $\cs'$.
    Observe that $s, t$ must lie in different connected regions $R_s, R_t$ of $\mathbb{R}^2 \setminus \calr(\cs')$ or else 
    the set $\cs'$ would not separate them.
    At least one of $R_s$ or $R_t$ must be bounded, wlog assume it is $R_s$. Let $\gamma'$ be the simple closed 
    curve that is the common boundary of $\calr(\cs')$ and $R_s$. We have that $\gamma'$ encloses $s$ but not $t$
    and therefore separates $s$ from $t$. Using first statement of Fact~\ref{fact-separate}, 
    we obtain that $\gamma'$ crosses the reference curve $\pi$ an odd number of times. 
    Observe that the curve $\gamma'$ consists of multiple \emph{sections} $\alpha_1' \rightarrow \alpha_2' \dots \rightarrow \alpha_r'$ where each curve $\alpha_i'$ is part of the boundary of some obstacle $S_i$. 
    For each of these curves $\alpha_i'$, we add a \emph{detour} to and back from the reference point $\refp(S_i)$ of the obstacle it belongs. Specifically, let $q_i$ be an arbitrary point on the curve $\alpha_i'$ and let $\alpha_{i\ell}', \alpha_{ir}'$ be the portion of $\alpha_i'$ before and after $q_i$ respectively. We add the \emph{detour curve} $\delta_i = q_i \rightarrow \refp(S_i) \rightarrow q_i$ ensuring that it always stays within the obstacle $S_i$ which is possible because the obstacles are connected. (Same as before the curve $\delta_i$ can be 
    chosen to be transverse with $\pi$ by considering the corresponding walk in graph of $\Arr(S_i \cup \pi)$.)
    %\todo{something about being transverse with $\pi$?}. 
    Let $\alpha_i = \alpha_{i\ell}' \rightarrow \delta_i \rightarrow \alpha_{ir}'$ be the curve obtained by adding detour $\delta_i$ to $\alpha_i'$.
    Let $\gamma = \alpha_1 \rightarrow \alpha_2 \dots \rightarrow \alpha_r$ be the closed curve obtained by adding these detours to $\gamma'$. Note that $\gamma$ is not necessarily simple as the detour curves may intersect each other. Every detour $\delta_i$ consists of identical copies of two curves, so it crosses the reference curve $\pi$ an even number of times. Since $\gamma'$ crosses $\pi$ an odd number of times, the curve $\gamma$ also crosses $\pi$ an odd number of times. (See also Figure~\ref{fig:odd-cycle}.) Observe that $\gamma$ and $\gamma'$ are transverse with $\pi$ because intersections of $\pi$ and obstacle boundaries are transverse and the detour curves $\delta_i$ are chosen to be transverse with $\pi$.
    %\todo{$\gamma$ and $\gamma'$ are transverse with $\pi$ because...}
   
    \begin{figure}[htb!]
	\centering
	\includegraphics{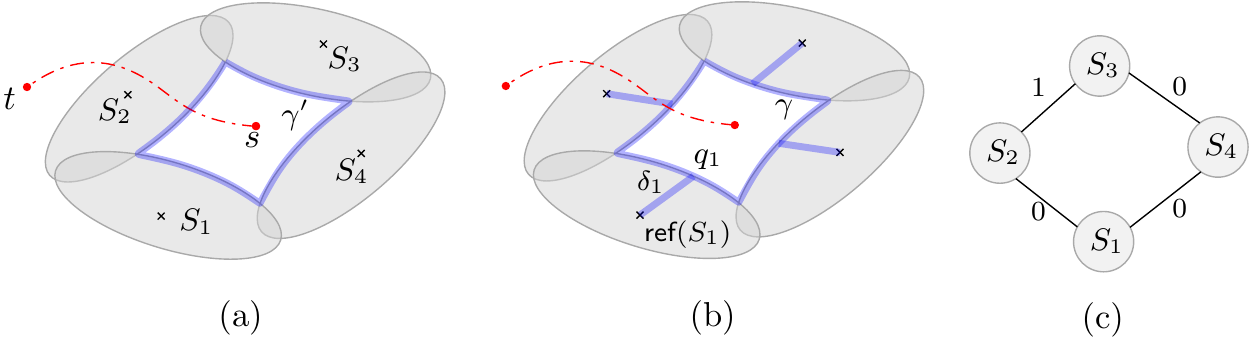}
	\caption{(a) The curve $\gamma'$ shown shaded in blue is the common boundary of $\calr(\cs')$ and region $R_s$ ~(b) Adding detours $\delta_i$ to obtain curve $\gamma$
    (c) Labeled Intersection graph $G_\cs$ ob obstacles}
	\label{fig:odd-cycle}
	\end{figure}
	
    We will now translate the curve $\gamma$ to a \emph{walk} in the labeled intersection graph $G_\cs$.
    Specifically, consider the section of $\gamma$ between two consecutive detours: $\gamma_{i,i+1} = \refp(S_i) \rightarrow q_i \rightarrow q_{i+1} \rightarrow \refp(S_{i+1})$.
    Therefore the obstacles $S_i, S_{i+1}$ must intersect and we have a curve $\gamma_{i,i+1}$
    connecting their reference points contained in the region $S_i \cup S_{i+1}$ that also intersects the reference curve $\pi$ an odd (resp. even) number of times. By construction, $G_\cs$ must contain an edge $e_{i, i+1}$ with label $1$ (resp. $0$).  By replacing all these sections of $\gamma$ with the corresponding edges of $G_\cs$, we obtain an odd-labeled closed walk $W$ in $G_\cs$. Of all the odd-labeled closed sub-walks of $W$, we select one that is inclusion minimal. This gives a simple odd-labeled cycle in $G_\cs[\cs']$.
    %\todo{induced by $S'$ ??}

    $(\Leftarrow)$ The reverse direction is relatively simpler. Given an odd-labeled cycle in $G_\cs[\cs']$, we obtain a closed curve $\gamma$ in the plane contained in region $\calr(\cs')$ as follows. For every edge $e_i = (S, S')$ of the cycle with label $\lab(e_i)$, we consider the curve $\gamma_i$ that connects the reference points $\refp(S)$ and $\refp(S')$ contained in $S \cup S'$ and crosses the reference curve 
    $\pi$ consistent with $\lab(e_i)$. Moreover $\gamma_i$ needs to be transverse with $\pi$.
    %\todo{needs to be transverse with $\pi$} 
    Such a curve exists by construction of $G_\cs$ .
    Combining these curves $\gamma_i$ in order gives us a closed curve $\gamma$ in the plane that crosses $\pi$ an odd number of times. Although this curve may be self intersecting, from second statement of Fact~\ref{fact-separate}, we have that $\gamma$ separates $s$ and $t$.
\end{proof}

The construction of the graph $G_\cs$, together with Lemma~\ref{lem:odd-cycle} prove Theorem~\ref{thm:mainStructure}.

\subparagraph{\boldmath 2-\pointsep{} as Shortest Odd Cycle in $G_\cs$} From Lemma~\ref{lem:odd-cycle}, it follows that a minimum set of obstacles that separates $s$ from $t$
corresponds to an odd-labeled cycle in $G_\cs$ with fewest vertices. This readily gives a polytime algorithm for 2-\pointsep{}. In particular, for a fixed starting vertex, we can compute the shortest odd cycle in $G_\cs$ in $O(|\cs|^2)$ time by the following well-known technique.
Consider an unlabeled auxiliary graph $G'$ with vertex set is $\cs \times \{0, 1\}$. For every edge $e = (S, S')$ of $G_\cs$, we add edges $\{(S, 0), (S', 0)\}$ and $\{(S, 1), (S',1)\}$ if $\lab(e) = 0$. Otherwise, we add the edges $\{(S, 0), (S', 1)\}$ and $\{(S, 1), (S',0)\}$. The shortest odd cycle containing a fixed vertex $S$ is the shortest path in $G'$ between vertices $(S,0)$ and $(S,1)$. Repeating over all starting vertices gives the shortest odd cycle in $G_\cs$. This can be easily extended for the node-weighted case which gives us
the following useful lemma that also yields a polynomial time algorithm for 2-\pointsep{}, reproving a result of Cabello and Giannopoulos~\cite{cabello2016complexity}.
\begin{lemma}
    \label{lem:separation-oracle}
    There exists a polynomial time algorithm for computing a minimum weight labeled odd cycle in the graph $G_\cs$.
\end{lemma}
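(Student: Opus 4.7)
The plan is to lift the unweighted construction sketched immediately before the lemma to the node-weighted setting by replacing breadth-first search with a shortest-path computation. I would build the parity double cover $G'$ with vertex set $\cs \times \{0,1\}$ exactly as above: for each edge $e=(S,S')$ of $G_\cs$, add the pair $\{(S,0),(S',0)\}, \{(S,1),(S',1)\}$ if $\lab(e)=0$, and the pair $\{(S,0),(S',1)\}, \{(S,1),(S',0)\}$ if $\lab(e)=1$. Since the weights live on the obstacles, I copy the weight of each $S \in \cs$ onto both of its copies $(S,0)$ and $(S,1)$ in $G'$; the weights are non-negative, so Dijkstra's algorithm applies.

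For every $S \in \cs$, I would compute the minimum-weight $(S,0)$-to-$(S,1)$ path in $G'$ via Dijkstra. By construction, any such path projects to a closed walk at $S$ in $G_\cs$ whose label sum is odd, and conversely every odd closed walk at $S$ lifts to such a path. Taking the minimum over all starting vertices $S$, together with a separate check of self-loops of label $1$ (those arising from obstacles containing $s$ or $t$), therefore gives the minimum weight over all odd closed walks in $G_\cs$ in polynomial time.

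The main subtlety is arguing that this minimum is actually attained by a simple cycle, not merely by an odd closed walk. Among all minimum-weight odd closed walks across all starting vertices, pick one $W$ with the fewest edges. If $W$ revisited some vertex, it would decompose into a proper closed sub-walk $W_2$ and the complementary closed walk $W \setminus W_2$; by parity, at least one of the two has odd label sum, and by non-negativity of the weights its weight is at most that of $W$. Since both pieces are strictly shorter than $W$, this contradicts our choice of $W$. Hence $W$ is a simple odd cycle, and that is exactly what the algorithm returns, completing the proof.
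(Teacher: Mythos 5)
Your proposal is correct and follows essentially the same approach as the paper: the paragraph immediately preceding the lemma sketches precisely the parity double cover $G'$ on $\cs \times \{0,1\}$, finds shortest $(S,0)$--$(S,1)$ paths, and remarks that extending to node weights is easy. You merely flesh out the details the paper leaves implicit (replacing BFS by Dijkstra, and the decomposition argument showing that the minimum-weight odd closed walk can be taken to be a simple cycle).
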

%\todo{not our result, well known ... citation?}

Next we prove one more structural property of labeled intersection graph $G_\cs$ that will be useful later. We define a (labeled) \emph{spanning tree} $T$ of a connected labeled multi-graph $G_\cs$ to be a subgraph of $G_\cs$ that is a tree and connects all vertices in $\cs$. An edge $e = (u, v) \in G_\cs$ is a \emph{tree edge} if $(u, v) \in T$, otherwise it is called a \emph{non-tree} edge. 
\begin{lemma}
    \label{lem:tree-edge-walk}
    Let $G_\cs$ be a connected labeled intersection graph and $T$ be a spanning tree of $G_\cs$. If $G_\cs$ contains an odd labeled cycle, then it also contains an odd labeled cycle with exactly one non-tree edge.
\end{lemma}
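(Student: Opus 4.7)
The plan is to invoke the standard fundamental-cycle basis of the cycle space of $G_\cs$ with respect to $T$. For each non-tree edge $e$, let $F(e)$ denote the unique fundamental cycle it creates with $T$, consisting of $e$ together with the tree path between its endpoints (and for a self-loop $e$, set $F(e) = \{e\}$, since self-loops are never part of any spanning tree). Define the \emph{label parity} $p(F(e)) \in \{0,1\}$ as the sum of labels of edges in $F(e)$ taken modulo $2$.

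First I would recall the classical fact that over $\mathbb{F}_2$ the cycle space of $G_\cs$ is spanned by the set $\{F(e) : e \text{ is a non-tree edge}\}$. In particular any cycle $C$ admits a decomposition $C = F(e_1) \oplus \cdots \oplus F(e_k)$ as a symmetric difference of fundamental cycles, where $e_1,\dots,e_k$ are precisely the non-tree edges appearing on $C$. Next I would observe that the label-parity function is linear with respect to $\oplus$: each edge of $C$ appears in an odd number of the $F(e_i)$'s while each edge outside $C$ appears in an even number, so summing labels modulo $2$ yields
\[
\mathrm{parity}(C) \;\equiv\; \sum_{i=1}^{k} p(F(e_i)) \pmod 2.
\]

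Given an odd-labeled cycle $C$ by hypothesis of the lemma, the displayed identity forces $p(F(e_i)) = 1$ for at least one index $i$. That $F(e_i)$ is then an odd-labeled cycle whose unique non-tree edge is $e_i$, which is exactly what the lemma asks for.

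The one point that requires care is the multigraph aspect, in particular self-loops and parallel edges, since $G_\cs$ is allowed to have both. This is accommodated automatically by the cycle-space viewpoint: self-loops are never in $T$, so any odd-labeled self-loop is itself a fundamental cycle fulfilling the conclusion, and for parallel edges one copy lies in $T$ while the other becomes a non-tree edge whose fundamental cycle is a length-$2$ cycle; the argument above applies verbatim. I expect this bookkeeping around the multigraph structure to be the only real subtlety in writing the proof cleanly.
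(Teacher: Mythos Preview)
Your argument is correct, and it is a cleaner route than the paper's. The paper argues by extremality: take an odd cycle $C$ with the fewest non-tree edges, pick a non-tree edge $e=(u,v)$ on it, and compare $\lab(e)$ with the label of the tree path $\pi_{uv}$; either $\pi_{uv}\cup\{e\}$ is already an odd fundamental cycle, or replacing $e$ by $\pi_{uv}$ in $C$ yields an odd closed walk with strictly fewer non-tree edges, contradicting minimality. Your proof skips this case analysis by invoking the fundamental-cycle basis and the $\mathbb{F}_2$-linearity of the label-parity map directly: since $C=\bigoplus_i F(e_i)$ and parity is additive over $\oplus$, some $F(e_i)$ must be odd. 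The two arguments are equivalent in content (the paper's edge-swap is exactly one step of your symmetric-difference decomposition), but your formulation is more conceptual and immediately generalizes to the $k$-bit-label setting the paper treats later in Lemma~\ref{lem-spanningtree}, whereas the paper re-runs essentially the same extremal argument there. Your handling of self-loops and parallel edges is also fine and matches how the paper uses them.
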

\begin{proof}
    Let $C$ be an odd cycle in $G_\cs$ that contains fewest non-tree edges. If $C$ consists of exactly one non-tree edge, we are done. Otherwise, $C$ contains more than one non-tree edge. Let $e = (u, v) \in C$ be a non-tree edge and $C' \subset C$ be the remainder of $C$ without the edge $e$. Since $C$ is odd labeled, we must have $\lab(C') \neq \lab(e)$.
    
    Let $\pi_{uv}$ be the unique path connecting $u, v$ in $T$. This gives us a path $\pi_{uv}$ with label $\lab(\pi_{uv})$. Recall that $\lab(C') \neq \lab(e)$. We have two cases.
    %\begin{itemize}
     {\em (i)}  If $\lab(\pi_{uv}) \neq \lab(e)$, then we obtain an odd labeled cycle $\pi_{uv} \oplus e$ that has one non-tree edge, namely $e$, and we are done.
        {\em (ii)}  Otherwise, $\lab(\pi_{uv}) = \lab(e) \neq \lab(C')$. This gives us
        an odd labeled closed walk $W^* = \pi_{uv} \oplus C'$ which contains one less non-tree edge than $C$. Let $C^* \subseteq W^*$ be an odd-labeled inclusion minimal closed sub-walk of $W^*$ (one such $C^*$ always exists). Therefore, $C^*$ is an odd-labeled cycle in $G_\cs$ that has fewer non-tree edges than $C$. But $C$ was chosen to be an odd labeled cycle with fewest non-tree edges, a contradiction.
    %\end{itemize}
\end{proof}
The above lemma also gives a simple $O(S^2)$ algorithm to \emph{detect} whether there exists an odd label cycle in $G_\cs$. Specifically, consider an arbitrary spanning tree of $T$ of $G_\cs$ and for
each edge not in $T$, compare its label with the label of the path connecting its endpoints in $T$.
%We have the following lemma.
\begin{lemma}
Given a labeled graph $G_\cs$, there exists an $O(S^2)$ time algorithm to detect whether $G_\cs$
contains an odd labeled cycle.
\end{lemma}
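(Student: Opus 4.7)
The plan is to turn Lemma~\ref{lem:tree-edge-walk} into a direct algorithm: rather than search for an odd cycle explicitly, it suffices to look only for odd cycles that use exactly one non-tree edge. First, I would scan every self-loop of $G_\cs$; any self-loop with label $1$ is itself an odd labeled cycle, so in that case we immediately return YES. Since $G_\cs$ need not be connected, I would next compute its connected components (in $O(|\cs|^2)$ time) and process each one separately.

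For a connected component, I would fix a spanning tree $T$ rooted at an arbitrary vertex $r$ and, via a single DFS of $T$, compute for every vertex $v$ a value $p(v) \in \{0,1\}$ equal to the parity of the sum of labels along the unique $r$-to-$v$ path in $T$. With $p$ in hand, every non-tree edge $e=(u,v)$ can be tested in $O(1)$ time: the simple cycle formed by $e$ together with the tree path from $u$ to $v$ has label parity $p(u) \oplus p(v) \oplus \lab(e)$, and we declare YES as soon as some non-tree edge yields parity $1$. If no non-tree edge (and no self-loop) triggers a YES, we return NO.

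Correctness follows in both directions. If the algorithm ever declares YES, the witness is a genuine simple odd labeled cycle in $G_\cs$, since a tree path is simple and adding a single additional edge produces a simple cycle. Conversely, if $G_\cs$ contains any odd labeled cycle, that cycle lies within a single connected component; applying Lemma~\ref{lem:tree-edge-walk} to that component and the spanning tree $T$ guarantees the existence of an odd labeled cycle that uses exactly one non-tree edge $e=(u,v)$. But such a cycle is precisely the tree path from $u$ to $v$ together with $e$, so our test at $e$ will succeed.

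For the running time, the spanning trees and the labels $p(v)$ can be computed in time linear in the size of $G_\cs$, and $G_\cs$ has at most two edges per pair of vertices plus $|\cs|$ self-loops, for a total of $O(|\cs|^2)$ edges. Each non-tree edge is then processed in $O(1)$ time, giving the claimed $O(|\cs|^2)$ bound. There is no real obstacle here beyond invoking Lemma~\ref{lem:tree-edge-walk} to justify restricting attention to single-non-tree-edge cycles; the only delicate points are remembering to handle odd self-loops and disconnected graphs explicitly before running the main loop.
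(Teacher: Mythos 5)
Your proof is correct and takes essentially the same approach as the paper, which likewise invokes Lemma~\ref{lem:tree-edge-walk} and checks, for an arbitrary spanning tree, whether any non-tree edge's label disagrees with the parity of the tree path between its endpoints. Your version is a bit more careful in spelling out the $O(1)$ per-edge test via the rooted parities $p(v)$ and in explicitly handling self-loops and disconnected components (which the paper's one-line sketch glosses over), but the algorithm and its justification are the same.
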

%\todo{Lemma~\ref{lem:tree-edge-walk} also gives a simple $O(S^2)$ algorithm to detect whether there exists an odd label cycle, maybe put here? State as lemma and say this reproves Cabello Gionnopolous.} 

\section{Application to \obstacleremoval{}}
\label{sec:obs-removal}
We will show how to cast \obstacleremoval{} as a Labeled \oddcycletrans problem on the graph $G_\cs$.
Recall that in \obstacleremoval{} problem, we want to remove a set $\cs_d \subseteq \cs$
of obstacles from the input so that $s$ and $t$ are connected in $\cs \setminus \cs_d$. Equivalently, we want to select a subset $\cs_d$ of obstacles such that the complement set $\cs \setminus \cs_d$ \emph{does not
separate} $s$ and $t$. From Lemma~\ref{lem:odd-cycle}, it follows that the obstacles $\cs \setminus \cs_d$ do not separate $s$ and $t$ if and only if $G_\cs[\cs \setminus \cs_d]$ does not contain
an odd labeled cycle. 
This gives us the following important lemma.
\begin{lemma}
    \label{lemma:obstacle-removal}
    A set of obstacles $\cs_d \subseteq \cs$ is a solution to \obstacleremoval{} if and only if the set of vertices $\cs_d$ is a solution to \oddcycletrans of $G_\cs$. %\todo{try not to confuse problem name with object name}
\end{lemma}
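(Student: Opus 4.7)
The plan is to derive Lemma~\ref{lemma:obstacle-removal} as a direct, almost immediate, consequence of Lemma~\ref{lem:odd-cycle} by taking its contrapositive with $\cs' := \cs \setminus \cs_d$. All of the real topological and combinatorial work has already been done in establishing the labeled intersection graph and Lemma~\ref{lem:odd-cycle}; this lemma is purely a translation between two equivalent reformulations of the same condition.

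First I would unpack both sides of the ``iff''. By definition, $\cs_d \subseteq \cs$ is a solution to \obstacleremoval{} iff there is a curve in the plane from $s$ to $t$ that avoids every obstacle in $\cs \setminus \cs_d$; equivalently, iff the set $\cs \setminus \cs_d$ does \emph{not} separate $s$ from $t$. On the graph side, $\cs_d$ is a solution to \oddcycletrans{} on $G_\cs$ iff deleting the vertex set $\cs_d$ leaves a graph with no odd labeled cycle, that is, iff $G_\cs[\cs \setminus \cs_d]$ has no odd labeled cycle.

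Next I would apply Lemma~\ref{lem:odd-cycle} with $\cs'$ chosen to be $\cs \setminus \cs_d$. That lemma states that $\cs'$ separates $s$ from $t$ iff $G_\cs[\cs']$ contains an odd labeled cycle. Negating both sides yields: $\cs \setminus \cs_d$ does not separate $s$ from $t$ iff $G_\cs[\cs \setminus \cs_d]$ has no odd labeled cycle. Combining this with the two reformulations from the previous paragraph gives exactly the claimed equivalence between \obstacleremoval{} solutions and \oddcycletrans{} solutions on $G_\cs$.

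There is essentially no obstacle in this proof; the only small point to watch is the degenerate situation in which $s$ or $t$ lies inside some obstacle of $\cs \setminus \cs_d$, so that no curve from $s$ to $t$ can avoid that obstacle at all. This case is already handled by the construction of $G_\cs$, since any obstacle containing $s$ or $t$ contributes a self-loop of label $1$ and hence an odd labeled ``cycle'' of length one in $G_\cs[\cs \setminus \cs_d]$, consistently matching the fact that $\cs \setminus \cs_d$ trivially ``separates'' $s$ from $t$ on the geometric side. With this corner case absorbed into the labeled graph, the equivalence is a one-line consequence of Lemma~\ref{lem:odd-cycle}, so the proof in the paper should just record this contrapositive argument and invoke Lemma~\ref{lem:odd-cycle} explicitly.
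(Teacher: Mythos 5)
Your proposal is correct and matches the paper's own argument: the paper also observes that $\cs_d$ solves \obstacleremoval{} iff $\cs\setminus\cs_d$ fails to separate $s$ from $t$, and then applies Lemma~\ref{lem:odd-cycle} with $\cs'=\cs\setminus\cs_d$ to conclude. Your extra remark about the self-loop handling the degenerate case is a valid (and slightly more careful) aside, but the overall route is the same.
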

This allows us to apply the set of existing results for \oddcycletrans{} to obstacle removal problems. In particular, this readily gives an improved algorithm for \obstacleremoval when parameterized by the solution size (number of removed obstacles).
Let $G_\cs^+$ denote the graph $G_\cs$ where every edge $e$ with $\lab(e) = 0$ is subdivided. Clearly an odd-labeled cycle in $G_\cs$ has
odd length in $G_\cs^+$ and vice versa.
Applying the FPT algorithm for \oddcycletrans from~\cite{oct-fpt} on the graph $G_\cs^+$ gives us the following result.
\begin{theorem}
    There exists a $2.3146^k n^{O(1)}$ algorithm for \obstacleremoval{} parameterized by $k$, the number of removed obstacles.
\end{theorem}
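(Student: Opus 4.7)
The plan is to reduce \obstacleremoval{} to \oddcycletrans{} via Lemma~\ref{lemma:obstacle-removal}, and then invoke the $2.3146^k n^{O(1)}$ parameterized algorithm of Lokshtanov et al.~\cite{oct-fpt}. The subtlety, flagged in the footnote on page~3, is that the algorithm of~\cite{oct-fpt} is stated for unlabeled simple graphs, whereas $G_\cs$ is a labeled multigraph and ``odd cycle'' here means the sum of edge labels is odd rather than the number of edges. I would bridge this gap with the subdivision trick: form $G_\cs^+$ by replacing every label-$0$ edge with a length-$2$ path through a fresh degree-$2$ auxiliary vertex, and leaving every label-$1$ edge (including self-loops) intact. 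This keeps the size polynomial, $|V(G_\cs^+)| = n^{O(1)}$.

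The first step is the cycle correspondence. Since each label-$0$ edge contributes $2$ to the length of its image in $G_\cs^+$ and each label-$1$ edge contributes $1$, a closed walk $C$ in $G_\cs$ has odd label sum if and only if the corresponding closed walk in $G_\cs^+$ has odd length. Hence a subset $\cs_d \subseteq \cs$ hits every odd-labeled cycle of $G_\cs$ if and only if $\cs_d$, viewed inside $V(G_\cs^+)$, hits every odd-length cycle of $G_\cs^+$. Combined with Lemma~\ref{lemma:obstacle-removal}, this tells us that \obstacleremoval{} solutions of size $k$ correspond exactly to OCT solutions of $G_\cs^+$ of size $k$ that consist only of obstacle vertices.

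The second step is to show that an optimum OCT of $G_\cs^+$ can always be chosen to avoid subdivision vertices, so that the algorithm of~\cite{oct-fpt} can be used as a black box. If some OCT solution $X$ contains an auxiliary vertex $v$ subdividing an edge with endpoints $S,S' \in \cs$, consider $X' := (X \setminus \{v\}) \cup \{S\}$. Then $|X'| \leq |X|$, and $X'$ is still an OCT: any odd cycle $C$ in $G_\cs^+ - X'$ must avoid $S$, but if $v \in V(C)$ then $C$ visits both neighbors of $v$, one of which is $S$, a contradiction. Iterating yields an OCT entirely inside $\cs$ of no larger size. Label-$1$ self-loops, which correspond to obstacles containing $s$ or $t$, are odd cycles of length $1$ and cannot be handled by vanilla OCT on simple graphs; we preprocess these by adding every such obstacle to $\cs_d$ and decrementing the budget $k$ accordingly before handing the residual simple graph to the black box. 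The main (and essentially only) obstacle is this transfer of OCT from simple unlabeled graphs to our labeled multigraph, and the two arguments above dispatch it cleanly; the overall running time is $2.3146^k n^{O(1)}$ since $G_\cs^+$ has size polynomial in $n$.
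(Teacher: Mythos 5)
Your argument follows exactly the paper's route: cast \obstacleremoval{} as labeled \oddcycletrans{} on $G_\cs$ via Lemma~\ref{lemma:obstacle-removal}, subdivide every label-$0$ edge to obtain $G_\cs^+$ so that odd label sums become odd cycle lengths, and hand $G_\cs^+$ to the $2.3146^k n^{O(1)}$ algorithm of~\cite{oct-fpt}. The paper presents the subdivision step as a known one-liner (``well-known and simple trick''); you additionally supply the two bookkeeping facts it leaves implicit — that an optimal transversal can be pushed off the degree-two subdivision vertices onto obstacle vertices, and that label-$1$ self-loops must be preprocessed into the solution — which is precisely the rigor the proof sketch glosses over, without changing the approach.
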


This also immediately gives us an $O(\sqrt{\log n})$ approximation for \obstacleremoval{} by using the best known $O(\sqrt{\log n})$-approximation~\cite{agarwal2005log} for %\oddcycletrans 
on the graph $G_\cs^+$. Observe that instances of obstacle removal are special cases of odd cycle transversal, specifically where the graph $G_\cs$ is an intersection graph of obstacles.
By applying known results on \emph{small diameter decomposition} of \emph{region intersection graphs}, Kumar et al.~\cite{mcp2021} obtained a constant factor approximation for \obstacleremoval. In the next section we present an alternative constant factor approximation algorithm. Although our algorithm follows a similar high level approach of using small diameter decomposition of $G_\cs$, we give an alternative proof of the approximation bound which significantly simplifies the arguments of~\cite{mcp2021}.

\subsubsection*{Constant Approximation for \obstacleremoval{}}
Our algorithm is based on formulating and rounding a standard LP for
labeled odd cycle transversal on labeled intersection graph $G_\cs$.
Let $0 \leq x_i \leq 1$ be an indicator variable that denotes whether obstacle $S_i$ is included to the solution or not. The LP formulation which will be referred as \textsc{Hit-odd-cycles-LP} can be written as follows:
\begin{align*}
    \min &\sum_{S_i \in \cs} x_i \\
    \text{subject to:} &~\\
    &\sum_{S_j \in C} x_j ~\geq~ 1 &&\text{for all odd-labeled cycles $C \in G_\cs$}
\end{align*}

Although this LP has exponentially many constraints, it can be solved in polynomial
time using ellipsoid method with the polynomial time algorithm for minimum weight odd cycle
in $G_\cs$ (Lemma~\ref{lem:separation-oracle}) as separation oracle. 
%\todo{here it would be good to have a lemma we can reference for the min weight odd cycle.}
The next step is to round the fractional solution $\hat{x} = {x_1, x_2, \dots, x_n}$ obtained from solving the \textsc{Hit-odd-cycles-LP}. We will need some background on
small diameter decomposition of graphs.
\subparagraph*{Small Diameter Decomposition} Given a graph $G = (V, E)$ and a distance function $d : V \rightarrow \mathbb{R^+}$ associated with each vertex, we
can define the distance of each edge as $d(e) = d(v) + d(w)$ for
every edge $e= (v,w) \in E$. We can then extend the distance function to
any pair of vertices $d(u, v)$ as the shortest path distance between $u$ and $v$ in the edge-weighted graph with distance values of edges as edge weights.
We use the following result of Lee~\cite{lee2016separators} for the special case of \emph{region intersection graph} over planar graphs. 
\begin{lemma}
    \label{lemma:rig}
    Let $G = (V, E)$ be a node-weighted intersection graph of connected
    regions in the plane, then there exists a set $X \subseteq V$ 
    of $ |X| = O(1/\Delta) \cdot \sum d(v)$ vertices such that the diameter
    of $G-X$ is at most $\Delta$ in the metric $d$. Moreover, such a set $X$
    can be computed in polynomial time.
\end{lemma}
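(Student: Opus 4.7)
The plan is to adapt the classical Klein--Plotkin--Rao ball-growing decomposition to the setting of region intersection graphs over planar host graphs. Since the input $G$ is the intersection graph of connected regions in the plane, I would first realize $G$ as a region intersection graph over a planar host graph $H$: refine the arrangement of the regions into a planar graph $H$, and associate to each vertex $v \in V$ the connected subgraph $R_v$ of $H$ lying inside the corresponding region, so that $uv \in E(G)$ exactly when $R_u \cap R_v \neq \emptyset$.

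With this model in hand, I would run a randomized ball-growing step directly in the node-weighted metric $d$ on $G$. Pick a root $r$ in a current component, sample a radius $\rho$ uniformly from $[0,\Delta/2]$, and let $B(\rho) = \{v : d(r,v) \leq \rho\}$. Put into the separator the boundary shell
\[
X_r = \bigl\{v : d(r,v) \leq \rho \leq d(r,v)+d(v)\bigr\},
\]
i.e.\ the vertices whose $d$-interval straddles the chosen radius. Any vertex $v$ lands in $X_r$ with probability at most $d(v)/(\Delta/2) = 2d(v)/\Delta$, so the expected charge from this step is $O(1/\Delta)\cdot \sum_{v \in B(\Delta/2)} d(v)$. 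Recursing on the components of $G - (B(\rho)\cup X_r)$ and summing expected charges over all recursive calls yields a separator $X$ with $\mathbb{E}[|X|] = O(1/\Delta)\cdot\sum_v d(v)$; by construction every surviving component lies inside a ball of radius $\Delta/2$ and therefore has diameter at most $\Delta$.

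The main obstacle is to justify that the boundary shell really is small: on a generic graph one could have $|X_r| = \Omega(n)$ for any radius. This is exactly where the planarity of the host $H$ and Lee's extension of planar separators to region intersection graphs come in. Because each $R_v$ is \emph{connected} in $H$, the ``topological boundary'' of $B(\rho)$ corresponds to a cycle in $H$ whose combinatorial length controls $|X_r|$, and averaging that length over $\rho \in [0,\Delta/2]$ gives the desired $O(1/\Delta)$ factor. This is the step that fails for arbitrary intersection graphs and fundamentally relies on the planar embedding underlying the obstacle arrangement.

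Finally, to get a deterministic polynomial-time algorithm I would derandomize by enumeration: the expectation is taken over a continuous radius, but the shell $X_r$ only changes at the $O(n)$ critical values of $\rho$ where some vertex enters or leaves the interval, so trying all such breakpoints and keeping the best one attains the expected bound. Iterating the ball-growing step until every component has diameter at most $\Delta$ terminates in polynomial time, since each iteration either removes a ball of positive volume or is the last step inside its component, and produces the claimed set $X$.
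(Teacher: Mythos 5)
The paper does not actually prove this lemma: it is imported verbatim as a theorem of Lee~\cite{lee2016separators} on separators in region intersection graphs over planar host graphs. So you are attempting a from-scratch re-derivation of a nontrivial external result, and the attempt has a real gap. Your per-step probability bound is fine in isolation, but the aggregate bound $\mathbb{E}[|X|] = O(1/\Delta)\sum_v d(v)$ does not follow from ``summing expected charges over all recursive calls.'' A vertex $v$ that is \emph{not} swallowed by a ball (because $\rho < d(r,v) - d(v)$) survives into the next recursive call and can again sit within distance $\Delta/2$ of the new center; it is therefore exposed to the $O(d(v)/\Delta)$ cut probability repeatedly. For general graphs this accumulation is unavoidable and is precisely why low-diameter decompositions with small vertex boundaries incur a $\Theta(\log n)$ loss, which is what this lemma is claiming you can beat for region intersection graphs over planar graphs. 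Your argument as written would yield the lemma for arbitrary graphs, which is false, so the gap is not a detail but the crux.

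You correctly sense that planarity must enter somewhere, but you locate it in the wrong place. The per-step shell size $|X_r|$ is \emph{already} controlled in expectation by the charging argument you give; planarity is not needed there, and your heuristic that the boundary of $B(\rho)$ ``corresponds to a cycle in $H$ whose combinatorial length controls $|X_r|$'' is neither precise nor the mechanism used in Lee's proof. What planarity (more precisely, the region-intersection-over-planar structure together with connectedness of the regions) must control is the \emph{repeated exposure} across recursive calls, and handling that requires the KPR-style or shifted-shortest-path-cover machinery that Lee develops, not a single round of naive ball growing. Two smaller issues: your shell $X_r = \{v : d(r,v) \le \rho \le d(r,v)+d(v)\}$ does not actually separate $B(\rho)\setminus X_r$ from the exterior --- you need the two-sided window $\{v : d(r,v)-d(v) \le \rho \le d(r,v)+d(v)\}$ --- and the clean derandomization-by-breakpoints you propose only derandomizes one round, not the adaptive recursion, so it would at best recover the expected bound up to additional factors. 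As the paper cites the lemma without proof, the right move in this context is to do the same rather than re-derive it.
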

For the sake of convenience, we assume that $G_\cs$ does not contain an obstacle $S_i$ with a self-loop, because if so, we must always include $S_i$ to the solution.
Let $G_\cs^*$ be the underlying unlabeled graph obtained by removing labels
and multi-edges from $G_\cs$. Since  $G_\cs^*$ is simply the intersection graph of
connected regions in the plane, it is easy to show that $G_\cs^*$ 
is a region intersection graph over a planar graph (See also Lemma~4.1~\cite{mcp2021} for more details.) 

\subparagraph*{(Algorithm: \textsc{Hit-Odd-Cycles})} With small diameter decomposition for $G_\cs^*$ in place, the rounding algorithm is really simple. 
\begin{itemize}
    \item Assign distance values to remaining vertices of $G_\cs^* = (\cs \setminus \cs_0, E)$ as $d(S_i) = x_i$, where $x_i$ is the fractional solution obtained from solving
    \textsc{Hit-Odd-Cycle-LP}.
    \item Apply Lemma~\ref{lemma:rig} on graph $G_\cs^*$ with diameter $\Delta = 1/2$. Return the set of vertices $X$ obtained from applying the lemma as solution.
\end{itemize}

It remains to show that the set $X \subseteq \cs$ returned above indeed hits all the odd labeled cycles in $G_\cs$. Define a ball $\cb(c, R) = \{ v \in V : d(c, v) < R - d(v)/2 \}$ with center $c$, radius $R$ and distance metric $d$ defined before. Intuitively, $\cb(c, R)$ consists of the vertices that lie strictly inside the radius $R$ ball drawn
with $c$ as center.
\begin{lemma}
    The set $X$ returned by algorithm \textsc{Hit-Odd-Cycles} hits all odd labeled cycles in $G_\cs$. 
\end{lemma}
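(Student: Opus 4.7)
I will argue by contradiction: suppose some odd labeled cycle $C$ survives in $G_\cs[\cs\setminus X]$, and show that this violates a constraint of \textsc{Hit-odd-cycles-LP}. Let $K$ be the connected component of $G_\cs^*-X$ containing $V(C)$. Unpacking the guarantee of Lemma~\ref{lemma:rig} applied with $\Delta=1/2$, I will use that $K$ sits inside a ball $\cb(c,R)$ of radius $R\le 1/2$ centered at some vertex $c$, so that every $w\in K$ satisfies $d(c,w)+d(w)/2<R$.

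The next step is to extract from $C$ a structurally simple odd cycle whose LP-weight is easy to control. Fix a shortest-path tree $T^*$ of $G_\cs^*[K]$ rooted at $c$, and lift it to a labeled spanning tree $T$ of $G_\cs[K]$ by choosing, for each edge of $T^*$, an arbitrary one of the (at most two) parallel labeled edges that represent it in $G_\cs$. Since $G_\cs[K]$ contains the odd labeled cycle $C$, Lemma~\ref{lem:tree-edge-walk} furnishes an odd labeled cycle $C'\subseteq G_\cs[K]$ consisting of a tree path $\pi_{uv}$ in $T$ together with a single non-tree edge $e=(u,v)$.

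Finally, I will bound $\sum_{w\in V(C')} x_w$ strictly below $1$. Because the edge weights of $G_\cs^*$ take the form $d(e)=d(a)+d(b)$ for $e=(a,b)$, a standard telescoping along $\pi_{uv}$ yields
\begin{equation*}
\sum_{w\in V(C')} d(w) \;=\; \tfrac12\bigl(d_T(u,v)+d(u)+d(v)\bigr).
\end{equation*}
Since $T^*$ is rooted at $c$, the path $\pi_{uv}$ passes through the lowest common ancestor of $u$ and $v$ in $T$, so $d_T(u,v)\le d(c,u)+d(c,v)$. Combining this with the ball inequalities $d(c,u)+d(u)/2<R$ and $d(c,v)+d(v)/2<R$ gives $d_T(u,v)+d(u)+d(v) < 2R + (d(u)+d(v))/2$, and ball containment also forces $d(w)<2R\le 1$ for every $w\in K$, so $(d(u)+d(v))/2<1$. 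Hence $d_T(u,v)+d(u)+d(v)<2R+1\le 2$ and $\sum_{w\in V(C')} d(w)<1$, contradicting the \textsc{Hit-odd-cycles-LP} constraint $\sum_{w\in V(C')} x_w\ge 1$ imposed by the odd cycle $C'$.

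The main obstacle is the final chain of inequalities: one must tie together the LP's node-weight sum, the tree metric's edge-weight sum $d_T(u,v)$, and the strict ball guarantee of Lemma~\ref{lemma:rig} so that all constants line up and the bound is strict. The use of Lemma~\ref{lem:tree-edge-walk} to replace $C$ with a cycle consisting of one tree path plus one non-tree edge is what keeps the node-weight bookkeeping manageable; trying to bound the weight of an arbitrary odd cycle in $K$ directly (or rooting the tree at an arbitrary vertex of $K$ rather than at the ball center $c$) would give a factor that is too weak.
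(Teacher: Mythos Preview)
Your proposal is correct and follows essentially the same route as the paper: argue by contradiction, pass to the small-diameter component containing the surviving odd cycle, build a shortest-path tree there, invoke Lemma~\ref{lem:tree-edge-walk} to obtain a cycle that is a tree path plus one non-tree edge, and bound its LP node-weight below $1$. Your telescoping identity $\sum_{w\in V(C')} d(w)=\tfrac12(d_T(u,v)+d(u)+d(v))$ and the subsequent chain of inequalities are exactly the arithmetic the paper suppresses behind the one-line ``$\textsf{cost}(C)<1/2+1/2=1$''.

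One small remark on your closing commentary: the paper in fact roots the tree at an \emph{arbitrary} vertex $v_1$ of $C$, not at a distinguished ball center produced by the decomposition. This works because Lemma~\ref{lemma:rig} bounds the diameter of each component of $G_\cs^*-X$, so any vertex of $K$ serves equally well as the center of a radius-$\tfrac12$ ball containing $K$. Your assertion that ``rooting the tree at an arbitrary vertex of $K$ rather than at the ball center $c$ would give a factor that is too weak'' is therefore not quite right; the choice of root within $K$ is immaterial.
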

\begin{proof}
 The proof is by contradiction. Let $C$ be an odd labeled cycle such that $C \cap X = \emptyset$. Then $C$ must be contained in a single connected $\kappa$ component of $G_\cs - X$. Let $v_1$ be an arbitrary vertex of $C$ and consider a ball $B = \cb(v_1, 1/2)$ of radius $1/2$ centered at $v_1$. We have $\kappa \subseteq B$ due to the choice of diameter $\Delta$.
 Consider the shortest path tree $T$ of ball $B$ rooted at $v_1$ using the distance function $d(e)$ in the unlabeled graph $G_\cs^*$. For every edge $(u, v) \in T$ assign the label $\lab(e)$ of $ e = (u, v) \in G_\cs$. If  multiple labeled edges exist between $u$ and $v$,
 choose one arbitrarily.
 
 Now consider the induced subgraph $G_\cs' = G_\cs[B]$ which is a connected labeled intersection graph of obstacles in the ball $B$. Moreover, $T$ is a spanning tree of
 $G_\cs'$, and $G_\cs'$ contains an odd-labeled cycle because $\kappa \subseteq G_\cs'$.
 Applying Lemma~\ref{lem:tree-edge-walk} gives us an odd-labeled cycle $C \in G_\cs'$
 that contains exactly one edge $e \not\in T$. The cost of this cycle is $\textsf{cost}(C) < 1/2 + 1/2 = 1$. This contradicts the constraint of
 \textsc{Hit-Odd-Cycle-LP} corresponding to $C$.
\end{proof}

We conclude with the main result for this section.
\begin{theorem}
    There exists a polynomial time constant factor approximation algorithm for \obstacleremoval.
\end{theorem}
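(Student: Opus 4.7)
The plan is to combine the three ingredients already developed in this section: Lemma~\ref{lemma:obstacle-removal} reduces \obstacleremoval{} to labeled \oddcycletrans{} on $G_\cs$; the LP relaxation \textsc{Hit-odd-cycles-LP} can be solved in polynomial time; and the rounding procedure \textsc{Hit-Odd-Cycles} returns a feasible integral solution whose cost is within a constant factor of the LP optimum. What remains to be justified is (i) that the LP can be solved efficiently despite having exponentially many constraints, and (ii) that the size of the rounded solution is $O(\opt)$.

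For (i), I would observe that the separation problem for \textsc{Hit-Odd-Cycles-LP} is exactly: given $\hat{x}$, decide whether there is an odd labeled cycle $C$ in $G_\cs$ with $\sum_{S_j \in C} \hat{x}_j < 1$. This is precisely the minimum weight labeled odd cycle problem on $G_\cs$ with vertex weights $\hat{x}_j$, which Lemma~\ref{lem:separation-oracle} solves in polynomial time. Plugging this oracle into the ellipsoid method solves \textsc{Hit-odd-cycles-LP} in polynomial time.

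For (ii), first place every obstacle $S_i$ carrying a self-loop into the solution — such obstacles contain $s$ or $t$ and therefore appear in every feasible solution, so they contribute at most $\opt$ and can be safely removed from the instance. On the remaining graph $G_\cs^*$ (the unlabeled simple region intersection graph), assign distances $d(S_i) = \hat{x}_i$ and apply Lemma~\ref{lemma:rig} with diameter parameter $\Delta = 1/2$ to obtain a hitting set $X$ with $|X| = O(1/\Delta) \cdot \sum_i \hat{x}_i = O(\sum_i \hat{x}_i)$. Since the LP value $\sum_i \hat{x}_i$ is a valid lower bound on $\opt$, this yields $|X| = O(\opt)$, and the feasibility of $X$ (that it hits every odd labeled cycle) has already been established in the preceding lemma, whose proof leverages Lemma~\ref{lem:tree-edge-walk} on the shortest path tree inside each diameter-$\tfrac{1}{2}$ component.

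The main conceptual point is not any single step — each one is short once the structural pieces are in place — but rather that the integrality gap of \textsc{Hit-Odd-Cycles-LP} on $G_\cs$ is $O(1)$. This follows only because $G_\cs$ is an intersection graph of connected plane regions: small-diameter decomposition (Lemma~\ref{lemma:rig}) contracts every component to radius $1/2$ under the fractional distances, and Lemma~\ref{lem:tree-edge-walk} then forces any surviving odd cycle to close through at most one non-tree edge, contradicting feasibility of $\hat{x}$. If there is any subtle step to watch, it is ensuring that the self-loop pre-processing and the passage from $G_\cs$ to $G_\cs^*$ do not break these two geometric/combinatorial properties, but both are immediate from the definitions.
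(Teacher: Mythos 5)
Your proposal is correct and follows essentially the same route as the paper: reduce to labeled odd cycle transversal via Lemma~\ref{lemma:obstacle-removal}, solve \textsc{Hit-Odd-Cycles-LP} by ellipsoid with the minimum-weight odd cycle oracle of Lemma~\ref{lem:separation-oracle}, preprocess away self-loop obstacles, and round via the small-diameter decomposition of Lemma~\ref{lemma:rig} with $\Delta = 1/2$, with feasibility of the hitting set following from Lemma~\ref{lem:tree-edge-walk}. You correctly identify both the separation oracle and the role of the region-intersection-graph structure in bounding the integrality gap, so there is nothing substantive to add.
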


\section{\boldmath A Simple Algorithm for Generalized \pointsep{}}
\label{sec:pointsep-overview}
So far, we have focused on separating a pair of points $s, t$ in the plane. In this section, we consider the more general problem where we are given a set $\cs$ of $n$ obstacles, 
a set of points $A$ and a set and $P = \{(s_1, t_1), \dots, (s_p, t_p)\}$ of $p$ pairs of points in $A$
which we want to separate. First we show how to extend the labeled intersecting graph $G_\cs$ to $p$ source-destination pairs and that the optimal solution subgraph $G_\cs[\cs_\opt]$ exhibits a `nice' structure. Then we exploit this structure to obtain an $2^{O(p^2)} n^{O(p)}$ exact algorithm for \gptsep. Since $p = O(k^2)$, this algorithm runs in polynomial time for any fixed $k$, resolving an open question of~\cite{cabello2016complexity}. Using a more sophisticated approach, we later show how to improve the running time to $2^{O(p)} n^{O(k)}$. 

Recall the construction of the labeled intersection graph $G_\cs$ for a single point pair
$(s, t)$ from Section~\ref{sec:lig}. The \emph{label} $\lab(e) \in \{0, 1\}$ of each edge $e \in G_\cs$ denotes the \emph{parity} of edge $e$ with respect to \emph{reference curve} $\pi$ connecting $s$ and $t$. As we generalize the graph $G_\cs = (\cs, E)$ to $p$ point pairs, we extend the label function $\lab : E \rightarrow \{0, 1\}^p$ as a $p$-bit binary string that denotes the parity with 
respect to reference curve $\pi_i$ connecting $s_i$ and $t_i$ for all $i \in [p]$.
We will use $\lab_i(e)$ to denote the $i$-th bit of $\lab(e)$.

\subparagraph*{Generalized Label Intersection Graph:} 
    \begin{itemize}
    \item For each $(s_i, t_i) \in P$ and each $S \in \cs$ that contains at least one of $s_i$ or $t_i$, we add a self loop $e$ on $S$ with $\lab_i(e) = 1$ and $\lab_j(e) = 0$ for all $j \neq i$.
    \item For every pair of intersecting obstacles $S, S'$ and a $p$-bit string $\ell \in \{0, 1\}^p$:
    \begin{itemize}
        \item Let $\Pi = \{\pi_i ~|~ s_i, t_i \not\in S \cup S'\}$ be the set of reference curves that do not have endpoints in $S \cup S'$.
        \item We add an edge $e = (S, S')$ with $\lab(e) = \ell$ if there exists a plane curve connecting $\refp(S)$ and $\refp(S')$ contained in $S \cup S'$ that crosses all reference curves $\pi_i \in \Pi$ with parity consistent with label $\ell$.
        That is, the curve crosses $\pi_i$ and odd (resp. even) number of times if $i$-th bit of $\ell$ is $1$ (resp. zero).
    \end{itemize}
 \end{itemize}

%\todo[inline]{The construction of $G_{\cal S}$ is not clear, in particular the overloading of $\lab$ is confusing -- i think what you mean is that every edge $e_i$ in $G_i$ has a curve in the plane associated with it, and then you look at the labeling of that curve  with respect to all of the reference curves? This is not clear from what you wrote...}
Similar to the one pair case, we can build an unlabeled graph $G'$ with vertex set $\cs \times \{0,1\}^p$ and edges between them based on the arrangement $\Arr(S \cup S' \cup \bigcup \pi_i)$. Using this graph, we can obtain the following lemma. The proof is the
same as that of Lemma~\ref{lem:graph-construction}, with $p$ bit labels instead of $k$ bit labels.
\begin{lemma}
 The generalized labeled graph $G_\cs$ with $p$-bit labels can be constructed in $2^{O(p)} n^{O(1)}$ time.
\end{lemma}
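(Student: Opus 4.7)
The approach is a direct generalization of the single-pair construction described in Section~\ref{sec:lig}, replacing a single parity bit with a $p$-bit vector that tracks the parities of crossings with all $p$ reference curves simultaneously. First I would fix a pair $S, S' \in \cs$ of intersecting obstacles and consider the arrangement $\Arr' = \Arr(S \cup S' \cup \bigcup_{i=1}^p \pi_i)$; since the reference curves were chosen so that including them blows up the arrangement by only a constant factor, $|\Arr'| = n^{O(1)}$. I would then build an auxiliary unlabeled graph $G'$ whose vertex set is $F \times \{0,1\}^p$, where $F$ is the set of faces of $\Arr'$ contained in $S \cup S'$, so $|V(G')| = 2^p \cdot n^{O(1)}$.

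Next I would add an edge in $G'$ between $(f, \sigma)$ and $(f', \sigma')$ whenever $f, f'$ share a boundary edge inside $S \cup S'$ and $\sigma' = \sigma \oplus \chi$, where $\chi \in \{0,1\}^p$ is the characteristic vector of the set of indices $i$ such that the shared boundary edge lies on the reference curve $\pi_i$. (For an edge in the interior of $S \cup S'$ that is not part of any $\pi_i$, we take $\chi = 0^p$.) A walk in $G'$ then corresponds to a curve in $S \cup S'$, and its $\{0,1\}^p$-coordinate flips precisely according to which reference curves it transversely crosses, so walks from $(f_S, 0^p)$ to $(f_{S'}, \ell)$ are in one-to-one correspondence (up to crossing parity with the $\pi_i$) with curves from $\refp(S)$ to $\refp(S')$ lying in $S \cup S'$ whose parity signature is $\ell$.

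Having built $G'$, I would do a single BFS from $(f_S, 0^p)$ (where $f_S$ is the face containing $\refp(S)$), and for each $\ell \in \{0,1\}^p$ check whether $(f_{S'}, \ell)$ is reachable; if yes I add an edge $(S, S')$ with label $\ell$ to $G_\cs$. One subtlety is the restriction in the construction that only the reference curves $\pi_i \in \Pi = \{\pi_i : s_i, t_i \notin S \cup S'\}$ determine the label for a non-loop edge: for indices $i$ with $s_i$ or $t_i$ inside $S \cup S'$ these bits are handled by the self-loops, so when listing which labels to attach to the edge $(S,S')$ I simply ignore the coordinates outside $\Pi$ and add one edge per distinct restriction. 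Self-loops are handled separately as in the single-pair case, checking for each $(s_i, t_i) \in P$ and each $S \in \cs$ whether $S$ contains $s_i$ or $t_i$.

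Finally, for the running time: there are $O(n^2)$ pairs of obstacles; for each the auxiliary graph has $2^{O(p)} n^{O(1)}$ vertices and edges; BFS on each runs in $2^{O(p)} n^{O(1)}$ time, and the labeling step inspects $2^p$ targets. Summing gives the total bound $2^{O(p)} n^{O(1)}$. The only step requiring real care is the definition of the edges of $G'$ across shared boundary arcs: I would choose the reference curves $\pi_i$ to be pairwise transverse (and transverse to obstacle boundaries) as already ensured in Section~\ref{sec:lig}, so that each boundary edge of $\Arr'$ lies on at most one of the curves and the flip vector $\chi$ is well-defined; this is the main place where the parity semantics must line up with Fact~\ref{fact-separate}, and it is exactly the same technicality handled in the single-pair construction.
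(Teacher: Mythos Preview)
Your proposal is correct and follows essentially the same approach as the paper: the paper defers the proof to that of Lemma~\ref{lem:graph-construction} (the $k$-labeled case), which builds exactly the auxiliary graph on $\mathcal{F} \times \{0,1\}^p$ with XOR-flip edges and tests reachability from $(f_S,\mathbf{0})$, just as you describe. The only cosmetic difference is that the paper restricts the arrangement to the curves $\pi_i$ with $i \in \Pi$ up front, whereas you include all $p$ curves and project out the irrelevant coordinates afterward; both yield the same edges and the same $2^{O(p)} n^{O(1)}$ bound.
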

Suppose we define $G_\cs(i)$ to be the image of $G_\cs$ induced by the labeling $\lab_i : E \rightarrow \{0, 1\}$. Specifically, we obtain $G_\cs(i)$ from $G_\cs$ by replacing label of each edge by the $i$-th bit $\lab_i(e)$, followed by removing parallel edges that have the same label. Observe that $G_\cs(i)$ is precisely the graph obtained by applying algorithm from Section~\ref{sec:lig} with reference curve $\pi_i$.

We say that a subgraph $G'_\cs \subseteq G_\cs$ is \emph{well-behaved} if $G_\cs'(i)$
contains an odd labeled cycle for all $i \in [p]$. We have the following lemma that can be obtained by applying Lemma~\ref{lem:odd-cycle} for every pair $(s_i, t_i) \in P$.

\begin{lemma}
\label{lem:well-behaved}
A set of obstacles $\cs' \subseteq \cs$ separate all point pairs in $P$ iff $G_\cs[\cs']$ is well-behaved.
\end{lemma}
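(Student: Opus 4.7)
The plan is to reduce Lemma~\ref{lem:well-behaved} to $p$ independent applications of Lemma~\ref{lem:odd-cycle}, one for each pair $(s_i,t_i) \in P$. First I would note that $\cs'$ separates every pair in $P$ if and only if $\cs'$ separates $(s_i,t_i)$ for every $i \in [p]$. For a single pair $(s_i,t_i)$, Lemma~\ref{lem:odd-cycle} translates ``$\cs'$ separates $s_i$ from $t_i$'' into ``$H^{(i)}[\cs']$ contains an odd labeled cycle,'' where $H^{(i)}$ denotes the single-pair labeled intersection graph produced by the construction of Section~\ref{sec:lig} applied with reference curve $\pi_i$. Hence it suffices to prove, for every $i$, that $G_\cs(i)[\cs']$ contains an odd labeled cycle if and only if $H^{(i)}[\cs']$ does; the lemma then drops out immediately.

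For this comparison I would go edge-type by edge-type. The self-loop rules line up directly: an $i$-odd self-loop on an obstacle $S$ appears in either graph exactly when $S$ contains $s_i$ or $t_i$, so these contributions are in bijection. For a candidate edge between distinct intersecting obstacles $S, S'$ I would split into the ``generic'' case $s_i, t_i \notin S \cup S'$ (so $\pi_i \in \Pi$) and the ``degenerate'' case that $s_i$ or $t_i$ lies in $S \cup S'$. In the generic case the generalized construction tests exactly the same parity-of-crossing-$\pi_i$ condition used by $H^{(i)}$, so the set of $i$-labels realized between $S$ and $S'$ in $G_\cs(i)$ coincides with the set realized in $H^{(i)}$; consequently any odd cycle using only generic edges lifts verbatim between the two graphs in either direction.

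The step I expect to require the most care is the degenerate case, because the generalized construction leaves the bit $\ell_i$ unconstrained when $\pi_i \notin \Pi$, so $G_\cs(i)$ may carry both $\ell_i = 0$ and $\ell_i = 1$ edges between $S$ and $S'$ even when $H^{(i)}$ would not. The saving observation is that in this case the obstacle containing the relevant endpoint $s_i$ or $t_i$ must be $S$ or $S'$, and therefore lies in $\cs'$ (it is an endpoint of an edge of the cycle); moreover that obstacle carries an $i$-odd self-loop in \emph{both} graphs, by the self-loop matching above. Consequently, whenever an odd cycle in one of the two graphs uses a degenerate inter-obstacle edge, it can be replaced by the $i$-odd self-loop on the corresponding obstacle to produce an odd cycle in the other graph. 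Putting together the self-loop bijection, the generic-case matching, the degenerate-case replacement, and the per-pair reduction yields the desired equivalence and hence the lemma.
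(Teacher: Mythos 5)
Your proof takes the same high-level strategy as the paper --- reduce to $p$ independent applications of Lemma~\ref{lem:odd-cycle} --- but you also verify a step that the paper asserts without proof, namely that the projection $G_\cs(i)$ detects exactly the same separating sets as the single-pair graph $H^{(i)}$ built by the Section~\ref{sec:lig} construction with reference curve $\pi_i$. The paper simply says ``Observe that $G_\cs(i)$ is precisely the graph obtained by applying algorithm from Section~\ref{sec:lig} with reference curve $\pi_i$'' and moves on. As you point out, the two graphs need not coincide literally: when $s_i$ or $t_i$ lies in $S \cup S'$ (so $\pi_i \notin \Pi$), the generalized construction puts no constraint on bit $i$, hence $G_\cs(i)$ carries both $i$-labels on the $(S,S')$ edge while $H^{(i)}$ might carry only one; $G_\cs(i)$ also inherits label-$0$ self-loops from other pairs that $H^{(i)}$ lacks. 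Your resolution --- that in the degenerate case one of $S,S'$ must contain $s_i$ or $t_i$ and therefore carries an $i$-odd self-loop in both graphs, and that $S,S' \in \cs'$ whenever the degenerate edge appears in a cycle of $G_\cs(i)[\cs']$, so the offending edge can always be traded for that self-loop --- is exactly the right fix; together with the generic-case matching and the self-loop bijection it makes the per-pair equivalence of odd-cycle existence airtight. So your proof is correct, follows the same route as the paper, and supplies the detail that the paper's one-line ``apply Lemma~\ref{lem:odd-cycle} for every pair'' elides.
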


We will prove the following important property of well-behaved subgraphs of $G_\cs$.

\begin{lemma}
    \label{lem:bounded-connectors}
    Let $G \subseteq G_\cs$ be an inclusion minimal well-behaved subgraph of $G_\cs$.
    Then there exists a set $V_c \subseteq V(G)$ of \emph{connector vertices} such that $G$ consists of the vertex set $V_c$ and a set of $K$ \emph{chains} (path of degree 2 vertices) with endpoints in $V_c$. Moreover, $|V_c| \leq 4p$ and $|K| \leq 5p$.
\end{lemma}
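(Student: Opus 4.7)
The plan is to use inclusion minimality to present $G$ as the edge union of at most $p$ odd ``witness'' cycles, and then bound the branching structure of such a union by its cyclomatic number. For each $i \in [p]$, well-behavedness together with Lemma~\ref{lem:well-behaved} yields an odd $\lab_i$-labeled cycle $C_i$ in $G$. The subgraph $H := \bigcup_{i=1}^p C_i$ is itself well-behaved, since $C_i \subseteq H$ certifies an odd cycle in $H(i)$ for every $i$, so inclusion minimality forces $G = H$. In particular $E(G) = \bigcup_i E(C_i)$, which bounds the cyclomatic number $\mu(G) = |E(G)| - |V(G)| + c(G)$ by $p$.

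Next I extract connectors and chains. Define $V_c \subseteq V(G)$ to be the set of vertices that either have degree different from $2$ in $G$ or are incident to a self-loop; then $V(G) \setminus V_c$ consists of ordinary degree-$2$ vertices without self-loops and decomposes into $K$ simple paths (the chains) whose endpoints lie in $V_c$. Contracting each chain to a single edge yields a multigraph $G^\star$ with vertex set $V_c$, $|E(G^\star)| = K$, and the same cyclomatic number as $G$. The vertices of $V_c$ split into those of $G^\star$-degree at least $3$ and those of $G^\star$-degree $2$; the latter must carry a self-loop (otherwise they would have been suppressed), so each such vertex is the base of a length-$1$ witness cycle, and there are at most $p$ of them. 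A handshake on the high-degree part together with the cyclomatic bound $K - |V_c| + c(G^\star) \leq p$ then gives linear-in-$p$ bounds that fit comfortably inside $|V_c| \leq 4p$ and $K \leq 5p$.

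The step I expect to need the most care is the bookkeeping around the multigraph features inherited from $G_\cs$: self-loops at obstacles that contain some $s_i$ or $t_i$, and parallel edges carrying different label vectors between the same pair of obstacles. These must be folded cleanly into the definition of $V_c$ and into the suppression operation so that the handshake and cyclomatic counts go through without off-by-one issues; once that is settled, the remainder of the argument is routine graph theory.
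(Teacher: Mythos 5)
Your key step---``$E(G)=\bigcup_{i=1}^p E(C_i)$, which bounds the cyclomatic number $\mu(G)=|E(G)|-|V(G)|+c(G)$ by $p$''---is false as stated. Being the edge-union of $p$ cycles does \emph{not} bound $\mu$ by $p$. Concretely, take $C_1 = 1\text{--}2\text{--}3\text{--}4\text{--}1$ and $C_2 = 1\text{--}3\text{--}5\text{--}1$: their union has $7$ edges, $5$ vertices, $1$ component, hence $\mu = 3 > 2$. The chord $13$ creates independent cycles not among the $C_i$. The cycles $C_1,\dots,C_p$ do lie in the cycle space of $G$, but nothing forces them to \emph{span} it, so they can leave extra non-tree edges unaccounted for. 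Since everything downstream in your argument (the handshake, $K - |V_c| + c(G^\star) \le p$, and the final arithmetic) rests on $\mu(G)\le p$, this hole is load-bearing.

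The fix is exactly the ingredient the paper reaches for and you do not mention: Lemma~\ref{lem:tree-edge-walk}. Fix a spanning forest $T$ of $G$ and, for each $i$, invoke that lemma to replace the arbitrary odd cycle in $G(i)$ by one using \emph{exactly one} non-tree edge $e_i$. Then the set $\{e_1,\dots,e_p\}$ of at most $p$ non-tree edges, together with all tree edges, already carries a well-behaved subgraph, so by inclusion-minimality there are no other non-tree edges at all, giving $\mu(G)=|E_0|\le p$ directly. That is the paper's argument; once you insert it, your remaining bookkeeping is a reasonable (and slightly tighter) variant of the paper's. Two smaller points you should also patch: \textbf{(a)}~a connected component of $G$ that is a simple cycle with no self-loops has all vertices of degree $2$, so your $V_c$ misses it entirely and the ``decomposes into chains with endpoints in $V_c$'' claim breaks---the paper designates an ad-hoc connector on each such component (the endpoints of its unique non-tree edge); \textbf{(b)}~you implicitly use that $G$ has no vertices of degree $\le 1$; this does follow from inclusion-minimality (such vertices lie on no cycle), but it should be said, because it is what guarantees every leaf of $T$ is incident to a non-tree edge, which is the other half of the $|V_c|\le 4p$ count.
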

\begin{proof}
    Since $G$ is inclusion minimal well-behaved subgraph, it does not contain
    a proper subgraph that is also well-behaved. Therefore, $G$ does not
    contain a vertex of degree 
    %\todo{degree is number of edges, needs to go to preliminaries}
    at most $1$ because such vertices and edges adjacent
    to them cannot be part of any cycle. Suppose $G$ has $r$ connected components $C_1,\dots,C_r$. We fix a spanning tree $T_j$ of $C_j$ for each $j \in [r]$.
    We construct the set $V_c$ by including every vertex of degree three or more
    to $V_c$. The components $C_j$ that do not contain a vertex of degree three
    must be a simple cycle because $G$ does not have degree-1 vertices. For every
    such $C_j$, we include vertices adjacent to the only non-tree edge of $C_j$.
    It is easy to verify that $G$ consists of $K$ chains connecting vertices
    in $V_c$.
    
    Let $E_0$ be the set of non-tree edges, that are edges not in $T_j$ for some $j \in [r]$. We claim that $|E_0| \leq p$. Since $G$ is well-behaved, $G(i)$ consists an odd-labeled cycle for all $i \in [p]$. Using Lemma~\ref{lem:tree-edge-walk}, and the spanning tree $T_j$ of the component containing that odd labeled cycle, we can transform into an odd-labeled cycle that uses at most one non-tree edge. Repeating this for all pairs, we can use at most $p$ edges from $E_0$. If $|E_0| > p$, then we would have a proper subgraph of $G$ with at most $p$ edges that is also well-behaved, which is not possible because $G$ was chosen to be inclusion minimal. Therefore $|E_0| \leq p$.

    The graph $G$ only contains vertices of degree $2$ or higher, hence each leaf node of the trees $T_1, \dots, T_r$ must be adjacent to some edge in $E_0$. Therefore, the number of leaf nodes is at most $2p$, and so the number of nodes of degree three or above in $T_1, \dots, T_r$ is
    also at most $2p$. Observe that the vertices in $V_c$ are either adjacent to some edge in $E_0$ or have degree three or more in some tree $T_j$. The number of both these type of vertices is at most $2p$, which gives us $|V_c| \leq 4p$.
    Finally, we bound $|K|$, the number of chains. Note that each edge of $G$ belongs to exactly one chain in $K$. Therefore, the number of chains containing at least one edge in $E_0$ is at most $p$, because $|E_0| \leq p$. All the other chains that do not have any edge in $E_0$, are contained in the trees $T_1,\dots,T_r$. It follows that these chains do not form any cycle, and thus their number is less than $|V_c|$. This gives us $|K| \leq 5p$.
\end{proof}

It is easy to see that if $\cs' \subseteq \cs$ is an optimal set of obstacles separating all pairs in $P$, then there exists an inclusion minimal well-behaved subgraph $G$ of $G_\cs[\cs']$ that
satisfies the property of Lemma~\ref{lem:bounded-connectors}.
Observe that the $K$ chains of graph $G$ are vertex disjoint, so for every chain $K_t$ connecting
vertices $S_i, S_j \in V_c$ that has $\lab(K_t) = \ell$, an optimal solution will always choose the walk in $G_\cs$ that has label $\ell$ and has fewest vertices.
To that end, we will need the following simple lemma which is a generalization of algorithm to compute shortest odd cycle in $G_\cs$ with 1-bit labels.
\begin{lemma}
    \label{lem:shortest-labeled-path}
    Given a labeled graph $G_\cs =(\cs, E)$ with labeling $\lab: E \rightarrow \{0,1\}^p$, 
    the shortest walk between any pair of vertices $S_i, S_j$ with a fixed label $\ell \in \{0,1\}^p$ can be computed in $2^{O(p)} n^{O(1)}$ time.
\end{lemma}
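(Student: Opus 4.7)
The plan is to reduce the labeled shortest walk problem to an unlabeled shortest path problem on an auxiliary product graph, in the same spirit as the auxiliary graph $G'$ used after Lemma~\ref{lem:odd-cycle} to compute shortest odd cycles in $G_\cs$ with $1$-bit labels. The only new ingredient is that we now track a $p$-bit parity vector rather than a single parity bit.

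Concretely, I would construct an unlabeled graph $G^\dagger$ with vertex set $\cs \times \{0,1\}^p$, so that $|V(G^\dagger)| = 2^p \cdot n$. For each edge $e = (S, S')$ of $G_\cs$ with $\lab(e) = \ell' \in \{0,1\}^p$ and for every $\ell \in \{0,1\}^p$, I would add an edge between $(S, \ell)$ and $(S', \ell \oplus \ell')$, where $\oplus$ denotes bitwise XOR. Self-loops in $G_\cs$ are handled the same way, producing an edge from $(S, \ell)$ to $(S, \ell \oplus \ell')$. The total number of edges in $G^\dagger$ is $2^p \cdot |E(G_\cs)| = 2^{O(p)} n^{O(1)}$, and $G^\dagger$ can be written down in that much time.

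The key invariant is that every walk $S_i = U_0, U_1, \dots, U_m = S_j$ in $G_\cs$ with edge labels $\ell^{(1)}, \dots, \ell^{(m)}$ lifts to a walk in $G^\dagger$ starting at $(S_i, \mathbf{0})$ and ending at $(S_j, \ell^{(1)} \oplus \cdots \oplus \ell^{(m)})$, and conversely every walk in $G^\dagger$ from $(S_i, \mathbf{0})$ projects to a walk in $G_\cs$ from $S_i$ to $S_j$ whose total XOR-label equals the second coordinate of its endpoint. Since bitwise XOR coincides with coordinatewise addition modulo $2$, the total XOR-label of a walk is exactly its $\lab$ (each coordinate is the parity of the number of crossings with the corresponding reference curve $\pi_i$, which is additive mod $2$ along concatenation of curves). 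Hence a walk in $G_\cs$ from $S_i$ to $S_j$ with $\lab = \ell$ exists if and only if $(S_i, \mathbf{0})$ reaches $(S_j, \ell)$ in $G^\dagger$, and shortest such walks correspond under the projection.

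The algorithm is then to run a single BFS (or Dijkstra, in the weighted version) in $G^\dagger$ from $(S_i, \mathbf{0})$ and read off the distance to $(S_j, \ell)$; the minimum walk is recovered by back-tracking. The running time is linear in $|V(G^\dagger)| + |E(G^\dagger)| = 2^{O(p)} n^{O(1)}$, which matches the claimed bound. There is no real obstacle here; the only thing to be mildly careful about is that we are computing shortest walks rather than shortest simple paths, which is exactly what BFS in $G^\dagger$ produces, and that the XOR-parity semantics line up with the additive-mod-$2$ definition of $\lab$ along concatenated curves, as noted above.
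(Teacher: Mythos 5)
Your product-graph construction on $\cs \times \{0,1\}^p$ with XOR-lifted edges and a BFS from $(S_i,\mathbf{0})$ is exactly the generalization of the $1$-bit auxiliary graph $G'$ that the paper explicitly points to (the paper itself gives no further details, and uses the same device again in the proof of Lemma~\ref{lem:graph-construction} and to compute $\pi_{u,v,l}$ in Section~\ref{sec:algorithm}). Your argument is correct and matches the intended approach.
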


\subparagraph*{Algorithm: \textsc{Separate-Point-Pairs}}
\begin{enumerate}
    \item For every pair of vertices $S_i, S_j \in \cs$ and every label $\ell \in \{0, 1\}^p$,
    precompute the shortest walk connecting $S_i, S_j$ with label $\ell$ in $G_\cs$ using Lemma~\ref{lem:shortest-labeled-path}.
    \item For all possible sets $V_c \subseteq \cs$ and ways of connecting $V_c$ by $K$ chains:
    \begin{itemize}
        \item For all $(2^p)^{5p} = 2^{O(p^2)}$ possible labeling of $K$ chains:
        \begin{enumerate}
            \item Let $G \subseteq G_\cs$ be the labeled graph consisting of vertices $V_c$ and chains $K_t \in K$ replaced by shortest walk between endpoints of $K_t$ with label $\lab(K_t)$, already computed in Step 1.
            \item Check if the graph $G$ is well-behaved. If so, add its vertices as one candidate solution.
        \end{enumerate}
    \end{itemize}
    \item Return the candidate vertex set with smallest size as solution.
\end{enumerate}

Precomputing labeled shortest walks in Step 1 takes at most $2^{O(p)} n^{O(p)}$ time.
The total number of candidate graphs $G$ is $n^{O(p)} \cdot p^{O(p)} \cdot 2^{O(p^2)}$, and checking if it is well behaved can be done in $n^{O(1)}$ time. We have the following result.

\begin{theorem}
\textnormal{\gptsep} for connected obstacles in the plane can be solved in $2^{O(p^2)} n^{O(p)}$ time, where $n$ is the number of obstacle and $p$ is the number of point-pairs to be separated.
\end{theorem}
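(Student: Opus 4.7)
The plan is to analyze Algorithm \textsc{Separate-Point-Pairs} already laid out above by combining three ingredients: Lemma~\ref{lem:well-behaved}, which identifies separating sets with well-behaved induced subgraphs; Lemma~\ref{lem:bounded-connectors}, which restricts the combinatorial type of a minimal witness; and Lemma~\ref{lem:shortest-labeled-path}, which lets us fill in each chain optimally.

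First I would argue correctness. Fix an optimal solution $\cs_\opt \subseteq \cs$; by Lemma~\ref{lem:well-behaved}, $G_\cs[\cs_\opt]$ is well-behaved and hence contains an inclusion-minimal well-behaved subgraph $G^\star$. Lemma~\ref{lem:bounded-connectors} decomposes $G^\star$ into a connector set $V_c^\star \subseteq \cs_\opt$ of size at most $4p$ together with at most $5p$ internally vertex-disjoint chains, each chain $K_t^\star$ carrying a label $\ell_t \in \{0,1\}^p$. Substituting each chain by a shortest $G_\cs$-walk between its endpoints with the same label $\ell_t$ yields a subgraph whose vertex set has size at most $|\cs_\opt|$ (since each substituted walk uses at most as many vertices as the corresponding $K_t^\star$) and, by Lemma~\ref{lem:odd-cycle} applied to each coordinate, is still well-behaved: the closed walk obtained by concatenating chains in the order prescribed by the skeleton has the same parity in each coordinate as the original closed walk inside $G^\star$. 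Therefore the guess matching the combinatorial type of $G^\star$ produces a candidate with at most $|\cs_\opt|$ obstacles.

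Second I would account for running time. There are at most $n^{O(p)}$ ways to choose $V_c \subseteq \cs$ of size at most $4p$; for each such $V_c$, at most $p^{O(p)}$ combinatorial skeletons (multigraphs on $V_c$ with at most $5p$ edges representing the chains); and for each skeleton, $(2^p)^{5p} = 2^{O(p^2)}$ label assignments in $\{0,1\}^p$ to the chains. Before entering the loop, Lemma~\ref{lem:shortest-labeled-path} builds in $2^{O(p)} n^{O(1)}$ time a table of shortest same-label walks for every vertex pair and every label; during the loop, each substitution and each well-behavedness test (odd-cycle detection on the $p$ projections $G(i)$) takes $n^{O(1)}$ time. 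Multiplying enumeration by per-iteration work yields the claimed $2^{O(p^2)} n^{O(p)}$ bound.

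The main obstacle I expect is making the correctness argument airtight. The chains in $G^\star$ are simple paths in $G_\cs$, but the substituted same-label shortest walks need not be simple and may share vertices both with one another and with $V_c^\star$. I would handle this by noting that (i) such sharing can only decrease the total count of distinct vertices relative to $|V(G^\star)| \leq |\cs_\opt|$, because the substitution replaces each chain by a walk with no more vertices; and (ii) in each coordinate $i$ the concatenation of the substituted walks is a closed walk in $G_\cs$ whose $i$-th parity equals the sum $\bigoplus_t \ell_{t,i}$ --- exactly the quantity that certifies well-behavedness of $G^\star(i)$. Once this substitution-preserves-parity point is secured, the inclusion-minimal odd-cycle extraction already used inside the proof of Lemma~\ref{lem:odd-cycle} produces an odd-labeled cycle in each $G(i)$ of the candidate, completing the proof.
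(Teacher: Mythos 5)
Your proposal is correct and follows the paper's approach exactly: it analyzes \textsc{Separate-Point-Pairs} via the same three ingredients (Lemma~\ref{lem:well-behaved}, Lemma~\ref{lem:bounded-connectors}, and Lemma~\ref{lem:shortest-labeled-path}) and the same $n^{O(p)} \cdot p^{O(p)} \cdot 2^{O(p^2)}$ enumeration count. In fact you supply more explicit correctness justification than the paper does — the observation that substituting each chain by a shortest same-label walk preserves the per-coordinate parity of every closed walk (hence well-behavedness) and cannot increase the distinct-vertex count — steps the paper leaves implicit.
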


\begin{corollary}
\textnormal{\sc Point-Separation} for connected obstacles in the plane can be solved in $2^{O(k^4)} n^{O(k^2)}$ time, where $n$ is the number of obstacles and $k$ is the number of points. This is polynomial in $n$ for every fixed $k$.
\end{corollary}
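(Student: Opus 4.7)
The plan is to derive this corollary as an immediate consequence of the preceding theorem for \gptsep{}. The key observation is that \pointsep{} is precisely the special case of \gptsep{} in which the pair set $P$ is forced to be \emph{all} unordered pairs of points of $A$: a set $\cs_r \subseteq \cs$ separates every pair in $A$ if and only if it separates $(a,a')$ for each of the $\binom{k}{2}$ pairs $(a,a') \in A \times A$. So the reduction from \pointsep{} to \gptsep{} is just the trivial one of feeding in every such pair.

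Given an instance of \pointsep{} with $|A| = k$, I would first form the pair list $P = \{(a,a') : a,a' \in A,\ a \neq a'\}$, which has size $p := \binom{k}{2} = O(k^2)$, and then run the algorithm of the previous theorem on $(\cs, A, P)$. The theorem outputs a minimum set $\cs_r \subseteq \cs$ hitting every pair in $P$, which is by the above equivalence a minimum solution to \pointsep{}. Substituting $p = O(k^2)$ into the running time $2^{O(p^2)} n^{O(p)}$ gives $2^{O(k^4)} n^{O(k^2)}$, matching the claimed bound.

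For the ``polynomial in $n$ for every fixed $k$'' claim, one simply notes that when $k$ is constant both $2^{O(k^4)}$ and the exponent $O(k^2)$ of $n$ are constants, so the overall running time is $n^{O(1)}$. There is no real obstacle in the argument; the only thing worth sanity-checking is that the $\binom{k}{2}$ pair list is indeed a valid \gptsep{} instance, which follows directly from the problem definition since \gptsep{} makes no assumption on how the pairs in $P$ are chosen from $A$.
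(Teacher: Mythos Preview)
Your proposal is correct and is exactly the approach intended by the paper: the corollary is stated without proof and follows immediately from the preceding theorem by taking $P$ to be all $\binom{k}{2}=O(k^2)$ pairs of points in $A$ and substituting $p=O(k^2)$ into the $2^{O(p^2)}n^{O(p)}$ bound.
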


\section{\boldmath A Faster Algorithm for Generalized \pointsep{}}
%\label{sec:algorithm}
\label{sec:algorithm}
\noindent
Recall that the labeled graph $G_\cs$ constructed in the previous section consisted of labels that are $p$-bit binary strings. As a result, the running time has a dependence of $n^{O(p)}$ which in worst case could be $n^{O(k^2)}$, for example, in the case of \pointsep when $P$ consists of all point pairs.
In this section, we describe an alternative approach that builds a labeled intersection graph whose labels are $k$-bit strings. Using this graph and the notion of \emph{parity partitions}, we obtain an $2^{O(p)} n^{O(k)}$ algorithm for \gptsep which gets rid
of the $n^{O(k^2)}$ dependence for \pointsep. 
The construction of graph $G_\cs$ is almost the same as before, except that now we choose the {reference curves} $\pi_i$ differently. In particular, let $A = \{a_1, a_2, \dots, a_k\}$ be the set of points and $P$ be a set of pairs $(a_i, a_j)$ of points we want to separate. We pick an arbitrary point $o$ in the plane, and for each $i \in [k]$, we fix a plane curve with endpoints $a_i$ and $o$ as the reference curve $\pi_i$. For an edge $e$, the parity of crossing with respect to $\pi_i$ defines the $i$-th bit of $\lab(e)$. The graph $G_\cs$ constructed in this fashion has $k$-bit labels and will be referred as $\emph{$k$-labeled graph}$.
\begin{definition}[labeled graphs]
For an integer $k \geq 1$, a \textbf{$k$-labeled graph} is a multi-graph $G = (V,E)$ and where each edge $e \in E$ has a label $\mathsf{lab}(e) \in \{0,1\}^k$ which is a $k$-bit binary string; we use $\mathsf{lab}_i(e)$ to denote the $i$-th bit of $\mathsf{lab}(e)$ for $i \in [k]$.
\end{definition}

A $P$-\textit{separator} refers to a subset $\cs' \subseteq \cs$ that separates all point-pairs $(a_i,a_j)$ for $(i,j) \in P$.
Our goal is to find a $P$-separator with the minimum size.
To this end, we first introduce the notion of \textit{labeled graphs} and some related concepts.

Let $G$ be a $k$-labeled graph.
For a cycle (or a path) $\gamma$ in $G$ with edge sequence $(e_1,\dots,e_r)$, we define $\mathsf{parity}(\gamma) = \bigoplus_{t=1}^r \mathsf{lab}(e_t)$ and denote by $\mathsf{parity}_i(\gamma)$ the $i$-th bit of $\mathsf{parity}(\gamma)$ for $i \in [k]$.
Here the notation ``$\oplus$'' denotes the bitwise XOR operation for binary strings.
Also, we define $\varPhi(\gamma)$ as the partition of $[k]$ consisting of two parts $I_0 = \{i: \mathsf{parity}_i(\gamma) = 0\}$ and $I_1 = \{i: \mathsf{parity}_i(\gamma) = 1\}$.
Next, we define an important notion called \textit{parity partition}.

\begin{definition}[parity partition]
Let $G$ be a $k$-labeled graph.
%Consider the equivalence relation $\sim_G$ on $[k]$ defined as $i \sim_G j$ if $\mathsf{parity}_i(\gamma) = \mathsf{parity}_j(\gamma)$ for any cycle $\gamma$ in $G$.
The \textbf{parity partition} induced by $G$, denoted by $\varPhi_G$, is the partition of $[k]$ defined as $\varPhi_G = \bigodot_{\gamma \in \varGamma_G} \varPhi(\gamma)$.
In other words, $i,j \in [k]$ belong to the same part of $\varPhi_G$ iff $\mathsf{parity}_i(\gamma) = \mathsf{parity}_j(\gamma)$ for every cycle $\gamma$ in $G$.
\end{definition}

The following two lemmas state some basic properties of the parity partition.

\begin{lemma} \label{lem-components}
Let $G$ be a $k$-labeled graph, and $C_1,\dots,C_r$ be the connected components of $G$ each of which is also regarded as a $k$-labeled graph.
Then $\varPhi_G = \bigodot_{t=1}^r \varPhi_{C_t}$.
\end{lemma}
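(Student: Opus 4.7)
\medskip

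\noindent\textbf{Proof proposal for Lemma \ref{lem-components}.} The plan is to unravel the definition of $\varPhi_G$ as an $\odot$-product over all cycles of $G$ and then regroup those cycles according to which connected component they live in. The only non-definitional ingredient is the elementary graph-theoretic fact that every cycle of $G$ is contained in exactly one connected component.

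First I would record the decomposition of the cycle set. Let $\varGamma_G$ denote the set of cycles of $G$ and let $\varGamma_{C_t}$ denote the set of cycles of the subgraph $C_t$, each viewed as a $k$-labeled graph with the inherited labels. Since a cycle is connected, every $\gamma \in \varGamma_G$ lies in a unique $C_t$; conversely, every cycle of $C_t$ is a cycle of $G$. Hence $\varGamma_G = \bigsqcup_{t=1}^r \varGamma_{C_t}$ as a disjoint union, and the $\varPhi(\gamma)$ values that appear on each side agree.

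Next I would combine this with the definition $\varPhi_G = \bigodot_{\gamma \in \varGamma_G} \varPhi(\gamma)$ and use that $\odot$ is commutative and associative (as stated in the preliminaries, and noted to be well-defined even over infinite collections). Grouping the product by component then yields
\[
\varPhi_G \;=\; \bigodot_{\gamma \in \varGamma_G} \varPhi(\gamma) \;=\; \bigodot_{t=1}^{r} \left( \bigodot_{\gamma \in \varGamma_{C_t}} \varPhi(\gamma) \right) \;=\; \bigodot_{t=1}^{r} \varPhi_{C_t},
\]
where the last equality is the definition of $\varPhi_{C_t}$ applied to each component.

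There is essentially no hard step here: the only thing to be careful about is the case where some $C_t$ has no cycles at all (e.g.\ it is a tree). In that case $\varGamma_{C_t} = \varnothing$ and the empty $\odot$-product should be read as the trivial partition $\{[k]\}$, which is the identity for $\odot$; this convention makes the regrouping above valid and does not alter $\varPhi_G$. If a cleaner presentation is preferred, one can equivalently argue in the ``iff'' form: $i \sim_{\varPhi_G} j$ iff $\mathsf{parity}_i(\gamma) = \mathsf{parity}_j(\gamma)$ for every cycle $\gamma$ of $G$, which by the component decomposition of cycles is the same as $i \sim_{\varPhi_{C_t}} j$ for all $t \in [r]$, i.e.\ $i \sim j$ in $\bigodot_{t=1}^r \varPhi_{C_t}$.
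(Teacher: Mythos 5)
Your proposal is correct and follows essentially the same approach as the paper: decompose $\varGamma_G$ as the disjoint union of the cycle sets of the components, then regroup the $\odot$-product accordingly. The only addition beyond the paper's proof is your explicit handling of the empty-product convention for acyclic components, which is a reasonable clarification but not a different argument.
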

\begin{proof}
%The equality $\varPhi_G = \bigodot_{t=1}^r \varPhi_{C_t}$ follows directly from the definition of $\varPhi_G$
Note that a cycle in $G$ must be contained in some connected component $C_t$ for $t \in [r]$, i.e., $\varGamma_G = \bigcup_{t=1}^r \varGamma_{C_t}$.
Thus, $\varPhi_G = \bigodot_{\gamma \in \varGamma_G} \varPhi(\gamma) = \bigodot_{t=1}^r (\bigodot_{\gamma \in \varGamma_{C_t}} \varPhi(\gamma)) = \bigodot_{t=1}^r \varPhi_{C_t}$.
\end{proof}

\begin{lemma} \label{lem-spanningtree}
Let $G$ be a connected $k$-labeled graph, and $T$ be a spanning tree of $G$.
Let $E_0$ be the edges of $G$ that are not in $T$.
Then $\varPhi_G = \bigodot_{e \in E_0} \varPhi(\gamma_e)$, where $\gamma_e$ is the cycle in $G$ consists of the edge $e$ and the (unique) simple path between the two endpoints of $e$ in $T$.
\end{lemma}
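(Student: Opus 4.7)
The plan is to establish the equality by proving the two containments $\varPhi_G \preceq \bigodot_{e \in E_0} \varPhi(\gamma_e)$ and $\bigodot_{e \in E_0} \varPhi(\gamma_e) \preceq \varPhi_G$ separately. The first containment will be immediate from the definition of $\varPhi_G$: each $\gamma_e$ lies in $\varGamma_G$, so $\{\varPhi(\gamma_e)\}_{e \in E_0}$ is a sub-collection of $\{\varPhi(\gamma)\}_{\gamma \in \varGamma_G}$, and restricting the $\odot$-meet to a sub-collection can only yield a coarser partition, which is precisely what the $\preceq$ relation expresses in this direction.

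The reverse containment is the substantive part of the argument, and for it I would invoke the classical fact from algebraic graph theory that the fundamental cycles $\{\gamma_e : e \in E_0\}$ with respect to a spanning tree $T$ form a basis of the cycle space of $G$ over $\mathbb{F}_2$. Concretely, for any cycle $\gamma \in \varGamma_G$ I would take $S = E(\gamma) \cap E_0$ and verify that the edge set of $\gamma$ equals the symmetric difference of the edge sets of the fundamental cycles $\{\gamma_e : e \in S\}$ (each non-tree edge is covered by exactly one $\gamma_e$, and the tree edges fall out from the path-decomposition of the walk around $\gamma$ in $T$). Combined with the observation that $\mathsf{parity}$ is $\mathbb{F}_2$-linear in the edge multiset --- an edge occurring an even number of times contributes nothing to the XOR of labels --- this yields the identity $\mathsf{parity}(\gamma) = \bigoplus_{e \in S} \mathsf{parity}(\gamma_e)$.

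Finishing the argument will then be a short unpacking. If $i, j \in [k]$ lie in the same part of $\bigodot_{e \in E_0} \varPhi(\gamma_e)$, then by definition $\mathsf{parity}_i(\gamma_e) = \mathsf{parity}_j(\gamma_e)$ for every $e \in E_0$. Projecting the XOR identity above onto the $i$-th and $j$-th coordinates gives $\mathsf{parity}_i(\gamma) = \bigoplus_{e \in S} \mathsf{parity}_i(\gamma_e) = \bigoplus_{e \in S} \mathsf{parity}_j(\gamma_e) = \mathsf{parity}_j(\gamma)$ for every $\gamma \in \varGamma_G$, so $i$ and $j$ lie in the same part of $\varPhi(\gamma)$ for every cycle and hence in the same part of $\varPhi_G$.

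The main obstacle I foresee is a bookkeeping issue rather than a conceptual one: $G$ is a multigraph that may contain self-loops and parallel edges, so I would have to check that the cycle-space decomposition still makes sense there. This turns out to be standard --- a self-loop $e \in E_0$ is itself a single-edge fundamental cycle with $\gamma_e = \{e\}$, and parallel non-tree edges give length-two fundamental cycles --- and the linearity of $\mathsf{parity}$ under the edge-XOR is exactly what insulates the bit-by-bit argument from these multigraph complications.
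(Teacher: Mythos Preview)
Your proof is correct, and it takes a genuinely different route from the paper's.

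For the nontrivial direction $\bigodot_{e \in E_0} \varPhi(\gamma_e) \preceq \varPhi_G$, the paper argues by a minimal-counterexample edge-swap: assuming some $i,j$ are separated by $\varPhi_G$ but not by the fundamental-cycle meet, it picks a cycle $\gamma^*$ witnessing $\mathsf{parity}_i(\gamma^*) \neq \mathsf{parity}_j(\gamma^*)$ with the fewest non-tree edges, then replaces one non-tree edge of $\gamma^*$ by its tree path to produce a new cycle $\gamma'$ with strictly fewer non-tree edges and the same parity discrepancy, contradicting minimality. Your argument instead invokes the standard fact that the fundamental cycles $\{\gamma_e : e \in E_0\}$ form an $\mathbb{F}_2$-basis of the cycle space, so every cycle $\gamma$ decomposes (at the level of mod-$2$ edge multisets) as the symmetric difference of the $\gamma_e$ over the non-tree edges it uses, and then reads off $\mathsf{parity}(\gamma) = \bigoplus_{e \in S} \mathsf{parity}(\gamma_e)$ by linearity. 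Both arguments are short; yours is more conceptual and reusable (it immediately explains \emph{why} the fundamental cycles suffice, via the basis property), while the paper's is fully self-contained and avoids any appeal to cycle-space theory. Your remark that self-loops and parallel edges cause no trouble is well-placed, since the paper's graph $G_\cs$ is explicitly a multigraph with both.
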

\begin{proof} The proof is similar to and more general form of Lemma~\ref{lem:tree-edge-walk}.
It is clear that $\varPhi_G \preceq \bigodot_{e \in E_0} \varPhi(\gamma_e)$ because $\gamma_e \in \varGamma_G$ for all $e \in E_0$.
To show $\varPhi_G \succeq \bigodot_{e \in E_0} \varPhi(\gamma_e)$, we use contradiction.
Assume $\varPhi_G \nsucceq \bigodot_{e \in E_0} \varPhi(\gamma_e)$.
Then there exist $i,j \in [k]$ which belong to different parts in $\varPhi_G$ but belong to the same part in $\bigodot_{e \in E_0} \varPhi(\gamma_e)$, i.e., $\mathsf{parity}_i(\gamma_e) = \mathsf{parity}_j(\gamma_e)$ for all $e \in E_0$.
Since $i$ and $j$ belong to different parts in $\varPhi_G$, we have $\mathsf{parity}_i(\gamma) \neq \mathsf{parity}_j(\gamma)$ for some $\gamma \in \varGamma_G$.
Let $\gamma^* \in \varGamma_G$ be the cycle satisfying $\mathsf{parity}_i(\gamma^*) \neq \mathsf{parity}_j(\gamma^*)$ that contains the smallest number of edges in $E_0$.
Note that $\gamma^*$ contains at least one edge in $E_0$, for otherwise $\gamma^*$ is a cycle in the tree $T$ and hence $\mathsf{parity}_i(\gamma^*) = \mathsf{parity}_j(\gamma^*) = 0$ (simply because a cycle in a tree goes through each edge even number of times).
Let $e = (u,v)$ be an edge of $\gamma^*$ that is in $E_0$.
We create a new cycle $\gamma'$ from $\gamma^*$ by replacing the edge $e$ in $\gamma^*$ with the (unique) simple path $\pi_{uv}$ between $u$ and $v$ in $T$.
Recall that $\mathsf{parity}_i(\gamma_e) = \mathsf{parity}_j(\gamma_e)$.
Since $\mathsf{parity}_i(\gamma_e) = \mathsf{lab}_i(e) \odot \mathsf{parity}_i(\pi_{uv})$ and $\mathsf{parity}_j(\gamma_e) = \mathsf{lab}_i(e) \odot \mathsf{parity}_j(\pi_{uv})$, we have $\mathsf{lab}_i(e) \odot \mathsf{parity}_i(\pi_{uv}) = \mathsf{lab}_j(e) \odot \mathsf{parity}_j(\pi_{uv})$.
Because $\mathsf{parity}_i(\gamma^*) \neq \mathsf{parity}_j(\gamma^*)$, we further have
\begin{equation*}
    \begin{aligned}
        \mathsf{parity}_i(\gamma') &= \mathsf{parity}_i(\gamma^*) \odot (\mathsf{lab}_i(e) \odot \mathsf{parity}_i(\pi_{uv})) \\
        &=\mathsf{parity}_i(\gamma^*) \odot (\mathsf{lab}_j(e) \odot \mathsf{parity}_j(\pi_{uv})) \\
        &\neq \mathsf{parity}_j(\gamma^*) \odot (\mathsf{lab}_j(e) \odot \mathsf{parity}_j(\pi_{uv})) = \mathsf{parity}_j(\gamma').
    \end{aligned}
\end{equation*}
However, this is impossible because $\gamma'$ has fewer edges in $E_0$ than $\gamma^*$ and $\gamma^*$ is the cycle satisfying $\mathsf{parity}_i(\gamma^*) \neq \mathsf{parity}_j(\gamma^*)$ that contains the smallest number of edges in $E_0$.
Therefore, $\varPhi_G \succeq \bigodot_{e \in E_0} \varPhi(\gamma_e)$ and hence $\varPhi_G = \bigodot_{e \in E_0} \varPhi(\gamma_e)$.
\end{proof}

Now we are ready to describe our algorithm.
The first step of our algorithm is to build a $k$-labeled graph $G_\mathcal{S}$ for the obstacle set $\mathcal{S}$.
The vertices of $G_\mathcal{S}$ are the obstacles in $\mathcal{S}$, and the labeled edges of $G_\mathcal{S}$ ``encode'' enough information for determining whether a subset of $\mathcal{S}$ is a $P$-separator.
Once we obtain $G_\mathcal{S}$, we can totally forget the input obstacles and points, and the rest of our algorithm will work on $G_\mathcal{S}$ only.

We build $G_\mathcal{S}$ as follows.
%Next, we define a $k$-labeled graph $G_\mathcal{S}$ for the obstacle set $\mathcal{S}$.
For each $S \in \cs$, we pick a reference point $\mathsf{ref}(S)$ inside the obstacle $S$.
Let $\mathsf{Arr}(\mathcal{S})$ denote the arrangement induced by the boundaries of the obstacles in $\mathcal{S}$, and $|\mathsf{Arr}(\mathcal{S})|$ be the complexity of $\mathsf{Arr}(\mathcal{S})$.
By assumption, $|\mathsf{Arr}(\mathcal{S})| = n^{O(1)}$.
We pick an arbitrary point $o$ in the plane, and for each $i \in [k]$, we fix a plane curve $\pi_i$ with endpoints $a_i$ and $o$.
We choose the curves $\pi_1,\dots,\pi_k$ carefully such that including them does not increase the complexity of the arrangement $\mathsf{Arr}(\mathcal{S})$ significantly.
Specifically, we require the complexity of the arrangement induced by the boundaries of the obstacles in $\mathcal{S}$ and these curves to be bounded by $k^{O(1)} \cdot |\mathsf{Arr}(\mathcal{S})|$, which is clearly possible.
As mentioned before, the vertices of $G_\mathcal{S}$ are the obstacles in $\mathcal{S}$.
The edge set $E_{G_\mathcal{S}}$ of $G_\mathcal{S}$ is defined as follows.
For each $i \in [k]$ and each $S \in \mathcal{S}$ such that $a_i \in S$, we include in $E_{G_\mathcal{S}}$ a self-loop $e$ on $S$ with $\mathsf{lab}_i(e) = 1$ and $\mathsf{lab}_{i'}(e) = 0$ for all $i' \in [k] \backslash \{i\}$.
For each pair $(S,S')$ of obstacles in $\cs$ and each $l \in \{0,1\}^k$, we include in $E_{G_\mathcal{S}}$ an edge $e = (S,S')$ with $\mathsf{lab}(e) = l$ if there exists a plane curve inside $S \cup S'$ with endpoints $\mathsf{ref}(S)$ and $\mathsf{ref}(S')$ which crosses $\pi_i$ an odd (resp., even) number of times for all $i \in [k]$ such that $a_i \notin S \cup S'$ and the $i$-th bit of $l$ is equal to 1 (resp., 0).
The next lemma shows $G_\mathcal{S}$ can be constructed in $2^{O(k)} n^{O(1)}$ time, as $|\mathsf{Arr}(\mathcal{S})| = n^{O(1)}$.

\begin{lemma}
\label{lem:graph-construction}
The $k$-labeled graph $G_\mathcal{S}$ can be constructed in $2^{O(k)} n^{O(1)} \cdot |\mathsf{Arr}(\mathcal{S})|$ time.
\end{lemma}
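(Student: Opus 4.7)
The plan is to reduce, for each ordered pair of obstacles, the edge-generation task to reachability in a product graph that simultaneously tracks the parity of crossings with all relevant reference curves; this extends the single-bit product construction from Section~\ref{sec:lig} to $k$-bit labels. The first step is to build the combined arrangement $\mathcal{A} = \mathsf{Arr}(\mathcal{S} \cup \{\pi_1,\dots,\pi_k\})$, which by the chosen reference curves has complexity $k^{O(1)} \cdot |\mathsf{Arr}(\mathcal{S})|$ and can be produced together with face-adjacency data in that much time. Point location in $\mathcal{A}$ identifies the face $f_S$ containing $\mathsf{ref}(S)$ for each $S \in \mathcal{S}$ and the face containing each $a_i$, so all self-loops can be added in $O(nk)$ additional time by checking whether $a_i \in S$.

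For each ordered pair $(S,S') \in \mathcal{S}^2$, let $I_{SS'} = \{i \in [k] : a_i \notin S \cup S'\}$ and restrict $\mathcal{A}$ to the faces lying inside $S \cup S'$. Build an unlabeled auxiliary graph $H_{SS'}$ whose vertex set is these faces crossed with $\{0,1\}^{I_{SS'}}$, and connect $(f,v)$ to $(f',v')$ whenever $f, f'$ share an arc of $\mathcal{A}$, setting $v' = v \oplus \mathbf{e}_i$ if that arc lies on $\pi_i$ for some $i \in I_{SS'}$ and $v' = v$ otherwise (arcs on $\pi_j$ with $j \notin I_{SS'}$ do not change the state, because the corresponding label bit is a don't-care). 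A BFS from $(f_S, 0)$ returns exactly the set $V \subseteq \{0,1\}^{I_{SS'}}$ of realizable parity vectors. For every $v \in V$, emit all $2^{k - |I_{SS'}|}$ edges whose labels extend $v$ arbitrarily on the free coordinates.

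Correctness reduces to the standard observation that a plane curve inside $S \cup S'$ from $\mathsf{ref}(S)$ to $\mathsf{ref}(S')$, made transverse to $\mathcal{A}$ by a generic perturbation, is homotopic rel endpoints to a walk in the face-adjacency graph, and the parity of its crossings with each $\pi_i$ is the xor of the per-arc contributions, which is exactly the accumulator the BFS maintains. For the running time, $|V(H_{SS'})| \leq 2^k \cdot |\mathcal{A}| = 2^{O(k)} \cdot |\mathsf{Arr}(\mathcal{S})|$, BFS is linear, there are $O(n^2)$ pairs, and each pair emits at most $2^k$ edges, summing to the claimed $2^{O(k)} n^{O(1)} \cdot |\mathsf{Arr}(\mathcal{S})|$.

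The only point that requires real care, though not a deep obstacle, is transversality: one must ensure that each arc of $\mathcal{A}$ lies on at most one $\pi_i$ and that no reference point $\mathsf{ref}(S)$ or $a_i$ coincides with a break point, so that the BFS enumerates homotopy-classes-modulo-parity faithfully rather than double-counting or losing crossings at vertices. This is handled exactly as in Section~\ref{sec:lig} by a generic perturbation of the reference points and reference curves $\pi_1,\dots,\pi_k$, a perturbation that also preserves the $k^{O(1)} \cdot |\mathsf{Arr}(\mathcal{S})|$ bound on the complexity of $\mathcal{A}$.
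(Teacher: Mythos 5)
Your proof is correct and follows essentially the same route as the paper's: reduce edge generation for each pair $(S,S')$ to a reachability/connectivity query in a product graph whose states are (face, parity vector) pairs, exactly the $k$-bit generalization of the single-pair construction from Section~\ref{sec:lig}. The only cosmetic difference is that you build one global arrangement $\mathsf{Arr}(\mathcal{S} \cup \{\pi_1,\dots,\pi_k\})$ and restrict it per pair, whereas the paper builds a smaller per-pair arrangement $\mathsf{Arr}(S,S')$ induced only by the boundary of $S \cup S'$ and the relevant reference curves; both give the same $2^{O(k)} n^{O(1)} \cdot |\mathsf{Arr}(\mathcal{S})|$ bound, and your handling of the don't-care bits (emitting all $2^{k-|I_{SS'}|}$ extensions) matches the paper's edge definition.
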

\begin{proof}
The self-loops of $G_\mathcal{S}$ can be constructed in $O(kn)$ time by checking for $i \in [k]$ and $S \in \mathcal{S}$ whether $a_i \in S$.
For each pair $(S,S')$ of obstacles in $\cs$, we show how to compute the edges in $G_\mathcal{S}$ between $S$ and $S'$ in $2^{O(k)} \cdot |\mathsf{Arr}(\mathcal{S})|$ time.
Let $K = \{i \in [k]: a_i \notin S \cup S'\}$; without loss of generality, assume $K = \{a_1,\dots,a_j\}$.
Denote by $\mathsf{Arr}(S,S')$ the arrangement induced by the boundary of $S \cup S'$ and the curves $\pi_1,\dots,\pi_j$, and define $\mathcal{F}$ as the set of faces of $\mathsf{Arr}(S,S')$ that are contained in $S \cup S'$.
See Figure~\ref{fig-arrange} for an illustration of the arrangement $\mathsf{Arr}(S,S')$.
We say two faces $F,F' \in \mathcal{F}$ are \textit{adjacent} if they share a common edge $\sigma(F,F')$ of $\mathsf{Arr}(S,S')$.
For two adjacent faces $F,F' \in \mathcal{F}$, we define $\theta(F,F') \in \{0,1\}^j$ by setting the $i$-th bit of $\theta(F,F')$ to be 1 for all $i \in [j]$ such that $\sigma(F,F')$ is a portion of $\pi_i$ and setting the other bits to be 0.
%Let $F \in \mathcal{F}$ and $F' \in \mathcal{F}$ be the faces containing the reference points $\mathsf{ref}(S)$ and $\mathsf{ref}(S')$, respectively.
We construct a (unlabeled and undirected) graph $G$ with vertex set $\mathcal{F} \times \{0,1\}^j$ as follows.
For any two vertices $(F,l)$ and $(F',l')$ such that $F$ and $F'$ are adjacent and $l \oplus l' = \theta(F,F')$, we connect them by an edge in $G$.

\begin{figure}[htbp]
    \centering
    \includegraphics[height=5cm]{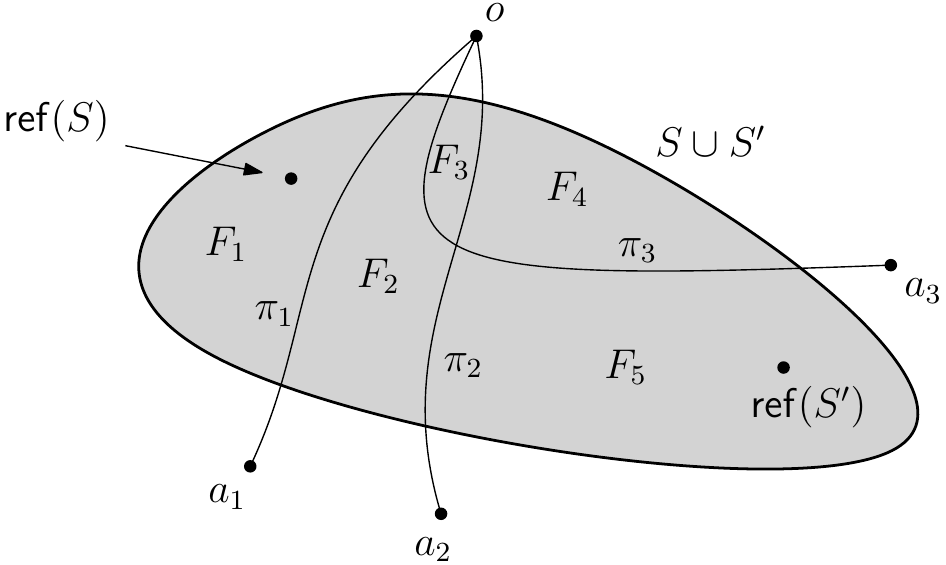}
    \caption{An illustration of the arrangement $\mathsf{Arr}(S,S')$. The grey area is $S \cup S'$. The set $\mathcal{F}$ consists of five faces $F_1,\dots,F_5$.}
    \label{fig-arrange}
\end{figure}

Let $F \in \mathcal{F}$ and $F' \in \mathcal{F}$ be the faces containing the reference points $\mathsf{ref}(S)$ and $\mathsf{ref}(S')$, respectively, and denote by $\mathbf{0} \in \{0,1\}^j$ the element with all bits 0.
We claim that there is an edge $(S,S')$ in $G_\cs$ with label $l$ iff the vertices $(F,\mathbf{0})$ and $(F',l)$ are in the same connected component of $G$.
To prove the claim, we first make a simple observation about the graph $G$ we constructed.
Let $(F_1,l_1)\dots,(F_m,l_m)$ be a path in $G$.
From the construction of $G$, it is easy to see (by a simple induction on $m$) that any plane curve from a point in $F_1$ to a point in $F_m$ that visits the faces $F_1,\dots,F_m$ in order crosses $\pi_i$ an odd (resp., even) number of times for all $i \in [j]$ such that the $i$-th bit of $l_1 \oplus l_m$ is equal to 1 (resp., 0).
Therefore, if there is a path in $G$ from $(F,\mathbf{0})$ to $(F',l)$, then there exists a plane curve from $\mathsf{ref}(S)$ to $\mathsf{ref}(S')$ that crosses $\pi_i$ an odd (resp., even) number of times for all $i \in [j]$ such that the $i$-th bit of $l$ is equal to 1 (resp., 0), which implies that there is an edge $(S,S')$ in $G_\cs$ with label $l$.
This proves the ``if'' part of the claim.
To see the ``only if'' part, assume there is an edge $(S,S')$ in $G_\cs$ with label $l$.
Then there exists a plane curve $\pi$ from $\mathsf{ref}(S)$ to $\mathsf{ref}(S')$ that crosses $\pi_i$ an odd (resp., even) number of times for all $i \in [j]$ such that the $i$-th bit of $l$ is equal to 1 (resp., 0).
Let $F_1,\dots,F_m$ be the sequence of faces visited by $\pi$ in order, where $F_1 = F$ and $F_m = F'$.
Then there is a path $(F_1,l_1),\dots,(F_m,l_m)$ in $G$ where $l_1 = \mathbf{0}$ and $l_t = l_{t-1} \odot \theta(F_{t-1},F_t)$ for $t \in [m] \backslash \{1\}$.
By our above observation, we have $l_1 \odot l_m = l$, which implies $l_m = l$.
It follows that $(F,\mathbf{0})$ and $(F',l)$ are in the same connected component of $G$.

By the above discussion, to compute the edges in $G_\mathcal{S}$ between $S$ and $S'$, it suffices to compute the connected component $C$ of $G$ that contains the vertex $(F,\mathbf{0})$: we have an edge $(S,S')$ in $G_\mathcal{S}$ with label $l \in \{0,1\}^k$ iff $(F',l') \in C$ where $l' \in \{0,1\}^j$ consists of the first $j$-bits of $l$.
The number of vertices and edges of $G$ is $2^O(k) \cdot |\mathsf{Arr}(\mathcal{S})|$, by our assumption that the complexity of the arrangement induced by the boundaries of the obstacles in $\mathcal{S}$ and the curves $\pi_1,\dots,\pi_k$ is bounded by $k^{O(1)} \cdot |\mathsf{Arr}(\mathcal{S})|$.
Therefore, $C$ can be computed in $2^O(k) \cdot |\mathsf{Arr}(\mathcal{S})|$ time.
As a result, $G_\mathcal{S}$ can be constructed in $2^{O(k)} n^{O(1)} \cdot |\mathsf{Arr}(\mathcal{S})|$ time.
\end{proof}

We say a $k$-labeled graph $G$ is \textit{$P$-good} if for all $(i,j) \in P$, $i$ and $j$ belong to different parts in $\varPhi_G$.
Note that if a subgraph of $G$ is $P$-good, then so is $G$.
The following key lemma establishes a characterization of $P$-separators using $P$-goodness. Note that the notion of $P$-goodness is almost the same as that
of well-behaved subgraphs from Lemma~\ref{lem:well-behaved}, except that it
is defined using parity partitions.

\begin{lemma} \label{lem-criterion}
A subset $\cs' \subseteq \cs$ is a $P$-separator iff the induced subgraph $G_\cs[\cs']$ is $P$-good.
\end{lemma}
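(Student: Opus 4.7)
The plan is to lift the parity structure of cycles in $G_\cs$ to the topological separation criterion of Fact~\ref{fact-separate}, applied to the composite reference curve $\tau_{ij}=\pi_i \cup \pi_j$ running from $a_i$ through $o$ to $a_j$. The crucial bridge, which I would establish first, is that for any cycle $\gamma$ in $G_\cs[\cs']$ with edge sequence $(e_1,\dots,e_r)$, the construction of $G_\cs$ supplies a closed plane curve $\hat{\gamma}$ contained in $\bigcup_{S \in \cs'} S$, by concatenating, for each edge $e_t=(S_t,S_{t+1})$, a curve between $\refp(S_t)$ and $\refp(S_{t+1})$ inside $S_t \cup S_{t+1}$ that matches $\mathsf{lab}(e_t)$ in parity of crossings with each $\pi_i$ (self-loops are handled as degenerate detours). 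A short case-check, absorbing the few indices $i$ with $a_i \in S_t \cup S_{t+1}$ into the self-loop labels, shows that modulo $2$ the total crossing count of $\hat{\gamma}$ with $\pi_i$ is exactly $\mathsf{parity}_i(\gamma)$. Consequently $\hat{\gamma}$ crosses $\tau_{ij}$ an odd number of times iff $\mathsf{parity}_i(\gamma)\neq \mathsf{parity}_j(\gamma)$, i.e., iff $i$ and $j$ lie in different parts of $\varPhi(\gamma)$.

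With that bridge in hand, the ``if'' direction is immediate: assuming $G_\cs[\cs']$ is $P$-good, fix $(i,j)\in P$, take a cycle $\gamma$ in $G_\cs[\cs']$ with $\mathsf{parity}_i(\gamma)\neq \mathsf{parity}_j(\gamma)$, and note that the associated $\hat{\gamma}$ crosses $\tau_{ij}$ an odd number of times. The second bullet of Fact~\ref{fact-separate} then tells us $\hat{\gamma}$ separates $a_i$ from $a_j$, and since $\hat{\gamma}\subseteq \bigcup_{S \in \cs'} S$, so do the obstacles in $\cs'$.

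For the ``only if'' direction, I would replay the forward argument of Lemma~\ref{lem:odd-cycle} on a per-pair basis. If some $S \in \cs'$ contains $a_i$ (resp.\ $a_j$), then the corresponding self-loop, which has $\mathsf{parity}_i=1$ and $\mathsf{parity}_{i'}=0$ for $i'\neq i$, already witnesses that $i$ and $j$ are in different parts of $\varPhi_{G_\cs[\cs']}$. Otherwise both $a_i,a_j$ lie outside the obstacle region, so they sit in distinct components of $\mathbb{R}^2 \setminus \calr(\cs')$, one of which is bounded; its boundary $\gamma'$ is a simple closed curve that by the first bullet of Fact~\ref{fact-separate} crosses $\tau_{ij}$ an odd number of times. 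Applying the detour-to-reference-points construction of Lemma~\ref{lem:odd-cycle} to $\gamma'$ yields a closed walk $W$ in $G_\cs[\cs']$ with $\mathsf{parity}_i(W)\oplus\mathsf{parity}_j(W)=1$ (each detour traverses every $\pi_i$ an even number of times). The main obstacle I anticipate is extracting a single cycle from $W$ with the desired parity discrepancy: for this I would invoke the standard fact that any closed walk in a multigraph decomposes into edge-disjoint simple cycles $\gamma_1,\dots,\gamma_s$ with $\bigoplus_{t=1}^s \mathsf{parity}(\gamma_t)=\mathsf{parity}(W)$ bitwise, so at least one $\gamma_t$ satisfies $\mathsf{parity}_i(\gamma_t)\neq \mathsf{parity}_j(\gamma_t)$, certifying that $i$ and $j$ belong to different parts of $\varPhi_{G_\cs[\cs']}$.
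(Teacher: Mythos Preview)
Your approach is essentially the paper's: concatenate representative curves to lift cycles in $G_\cs[\cs']$ to closed plane curves and apply Fact~\ref{fact-separate} to the composite curve $\pi_i\cup\pi_j$; for the converse, extract a simple closed curve from the boundary of the separating region and convert it to a closed walk via the detour construction of Lemma~\ref{lem:odd-cycle}.

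The one wobble is your bridge claim. The curve witnessing an edge $e=(S,S')$ is only guaranteed to match $\mathsf{lab}_i(e)$ in crossing parity with $\pi_i$ when $a_i\notin S\cup S'$; for indices with $a_i\in S\cup S'$ the $i$-th bit of the label is unconstrained by the curve, so ``absorbing into the self-loop labels'' does not rescue the identity between the crossing count of $\hat{\gamma}$ with $\pi_i$ and $\mathsf{parity}_i(\gamma)$. The paper (and you should too) simply disposes first of the case where $a_i$ or $a_j$ lies in $\bigcup_{S\in\cs'}S$, since then separation is immediate; once both points are outside, the bridge holds cleanly for the two indices $i,j$ you actually need. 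With that adjustment your argument goes through, and your explicit cycle-extraction step in the converse direction is a nice touch the paper leaves implicit.
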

\begin{proof}
We first introduce some notations.
For $(i,j) \in P$, denote by $\pi_{i,j}$ the plane curve with endpoints $a_i$ and $a_j$ obtained by concatenating the curves $\pi_i$ and $\pi_j$.
For each edge $e = (S,S')$ of $G_\cs$ with $S \neq S'$, we fix a \textit{representative curve} $\mathsf{rep}(e)$ of $e$, which is a plane curve contained in $S \cup S'$ with endpoints $\mathsf{ref}(S)$ and $\mathsf{ref}(S')$ that crosses $\pi_i$ an odd (resp., even) number of times for all $i \in [k]$ such that $\mathsf{lab}_i(e) = 1$ (resp., $\mathsf{lab}_i(e) = 0$); such a curve exists by our construction of $G_\cs$.

To prove the ``if'' part, assume $G_\cs[\cs']$ is $P$-good.
Let $(i,j) \in P$ be a pair and we want to show that $(a_i,a_j)$ is separated by $\cs'$.
If $a_i \in \bigcup_{S \in \cs'} S$ or $a_j \in \bigcup_{S \in \cs'} S$, we are done.
So assume $a_i \notin \bigcup_{S \in \cs'} S$ and $a_j \notin \bigcup_{S \in \cs'} S$.
Since $G_\cs[\cs']$ is $P$-good, there exists a cycle $\gamma$ in $G_\cs[\cs']$ such that $\mathsf{parity}_i(\gamma) \neq \mathsf{parity}_j(\gamma)$.
Without loss of generality, we assume $\mathsf{parity}_i(\gamma) = 0$ and $\mathsf{parity}_j(\gamma) = 1$.
Also, we can assume that $\gamma$ does not contain any self-loop edges; indeed, removing any self-loop edges from $\gamma$ does not change $\mathsf{parity}_i(\gamma)$ and $\mathsf{parity}_j(\gamma)$ because $a_i \notin \bigcup_{S \in \cs'} S$ and $a_j \notin \bigcup_{S \in \cs'} S$ (hence the $i$-th and $j$-th bits of the label of any self-loop on a vertex $S \in \mathcal{S}'$ are equal to 0).
%If $\gamma$ is a self-loop on some $S \in \cs'$, then $a_j \in S$ and we are done.
%So assume $\gamma$ is not a self-loop.
Suppose the vertex sequence of $\gamma$ is $(S_0,\dots,S_r)$ where $S_0 = S_r$ and the edge sequence of $\gamma$ is $(e_1,\dots,e_r)$ where $e_t = (S_{t-1},S_t)$ for $t \in [r]$.
We concatenate the representative curves $\mathsf{rep}(e_1),\dots,\mathsf{rep}(e_r)$ to obtain a closed curve $\hat{\gamma}$ in the plane.
Because $\mathsf{parity}_i(\gamma) = 0$ and $\mathsf{parity}_j(\gamma) = 1$, $\pi_i$ crosses $\hat{\gamma}$ an even number of times and $\pi_j$ crosses $\hat{\gamma}$ an odd number of times.
It follows that $\pi_{i,j}$ crosses $\hat{\gamma}$ an odd number of times.
By the second statement of Fact~\ref{fact-separate}, $\hat{\gamma}$ separates $(a_i,a_j)$.
Since $\mathsf{rep}(e_t) \subseteq S_{t-1} \cup S_t$, we have $\hat{\gamma} \subseteq \bigcup_{t=1}^r S_t \subseteq \bigcup_{S \in \cs'} S$.
Therefore, $\cs'$ separates $(a_i,a_j)$.

\begin{figure}[htbp]
    \centering
    \includegraphics[height=5cm]{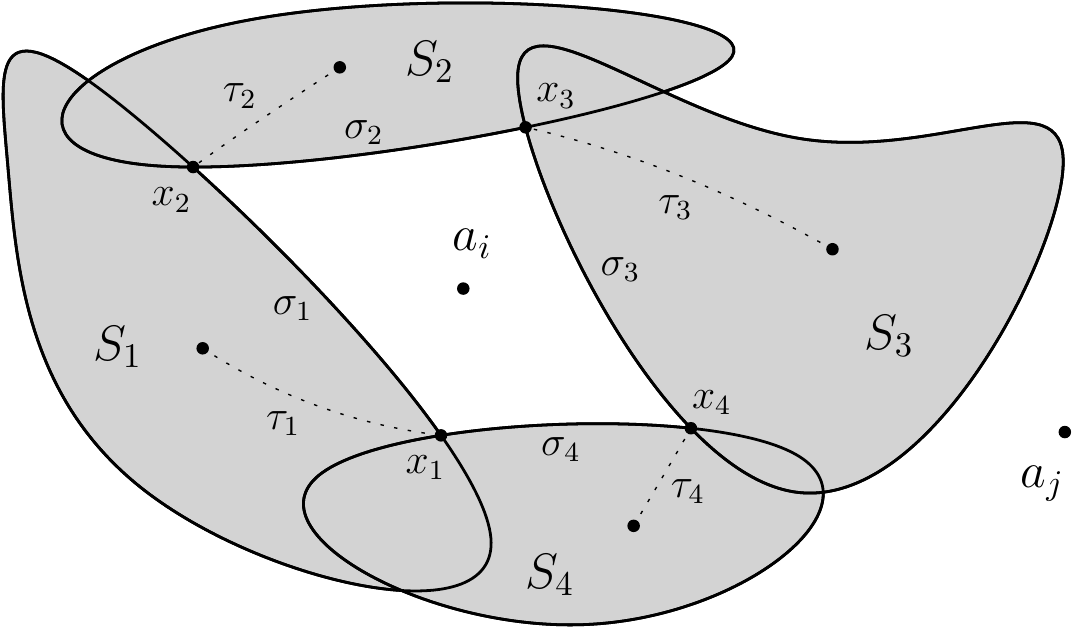}
    \caption{An illustration of the arcs $\sigma_1,\dots,\sigma_r$, the points $x_1,\dots,x_r$, and the curves $\tau_1,\dots,\tau_r$ (the points inside the obstacles are the reference points).}
    \label{fig-sepcycle}
\end{figure}

To prove the ``only if'' part, assume $\cs' \subseteq \cs$ is a $P$-separator, i.e., $\cs'$ separates all point-pairs $(a_i,a_j)$ for $(i,j) \in P$.
We want to show that $i$ and $j$ belong to different parts in $\varPhi_{G_\cs[\cs']}$ for all $(i,j) \in P$, or equivalently, for each $(i,j) \in P$ there exists a cycle $\gamma$ in $G_\cs[\cs']$ such that $\mathsf{parity}_i(\gamma) \neq \mathsf{parity}_j(\gamma)$.
Let $U = \bigcup_{S \in \cs'} S$.
We distinguish two cases: $\{a_i,a_j\} \cap U \neq \emptyset$ and $\{a_i,a_j\} \cap U = \emptyset$.
In the case $\{a_i,a_j\} \cap U \neq \emptyset$, we may assume $a_i \in U$ without loss of generality.
Then $a_i \in S$ for some $S \in \cs'$.
Therefore, by our construction of the graph $G_\cs$, there is a self-loop edge $e = (S,S)$ with $\mathsf{lab}_i(e) = 1$ and $\mathsf{lab}_{i'}(e) = 0$ for all $i' \in [k] \backslash \{i\}$.
The cycle $\gamma$ consists of this single edge is a cycle in $G_\cs[\cs']$ satisfying $\mathsf{parity}_i(\gamma) = 1 \neq 0 = \mathsf{parity}_j(\gamma)$.
Now it suffices to consider the case $\{a_i,a_j\} \cap U = \emptyset$.
The boundary $\partial U$ of $U$ consists of \textit{arcs} (each of which is a portion of the boundary of an obstacle in $\mathcal{S}'$) and \textit{break points} (each of which is an intersection point of the boundaries of two obstacles in $\mathcal{S}'$).
We can view $\partial U$ as a planar graph $G$ embedded in the plane, where the break points are vertices and the arcs are edges.
Each face of (the embedding of) $G$ is a connected component of $\mathbb{R}^2 \backslash \partial U$, which is either contained in $U$ (called \textit{in-faces}) or outside $U$ (called \textit{out-faces}).
Let $F_i$ and $F_j$ be the faces containing $a_i$ and $a_j$, respectively.
Since $\{a_i,a_j\} \cap U = \emptyset$, $F_i$ and $F_j$ are both out-faces.
Furthermore, we have $F_i \neq F_j$, for otherwise $a_i,a_j \in F_i$ and there exists a plane curve inside the out-face $F_i$ connecting $a_i$ and $a_j$, which contradicts the fact that $\cs'$ separates $(a_i,a_j)$.
Thus, there exists a simple cycle $\hat{\gamma}$ in $G$ (which corresponds to a simple closed curve in the plane) such that one of $F_i$ and $F_j$ is inside $\hat{\gamma}$ and the other one is outside $\hat{\gamma}$ (it is well-known that in a planar graph embedded in the plane, for any two distinct faces there exists a simple cycle in the graph such that one face is inside the cycle and the other is outside).
Because $a_i \in F_i$ and $a_j \in F_j$, we know that $\hat{\gamma}$ separates $(a_i,a_j)$ and hence $\pi_{i,j}$ crosses $\hat{\gamma}$ an odd number of times by the first statement of Fact~\ref{fact-separate}.
Let $\sigma_1,\dots,\sigma_r$ be the arcs of $\hat{\gamma}$ given in the order along $\hat{\gamma}$, and suppose they are contributed by the obstacles $S_1,\dots,S_r \in \cs'$, respectively (note that here $S_1,\dots,S_r$ need not be distinct).
For convenience, we write $\sigma_0 = \sigma_r$ and $S_0 = S_r$.
Let $x_t$ be the connection point of the arcs $\sigma_{t-1}$ and $\sigma_t$ for $t \in [r]$, then $x_t \in S_{t-1} \cap S_t$.
For each $t \in [r]$, we fix a plane curve $\tau_t$ inside the obstacle $S_t$ with endpoints $\mathsf{ref}(S_t)$ and $x_t$ (such a curve exists because $S_t$ is connected).
Again, we write $\tau_0 = \tau_r$.
See Figure~\ref{fig-sepcycle} for an illustration of the arcs $\sigma_1,\dots,\sigma_r$, the points $x_1,\dots,x_r$, and the curves $\tau_1,\dots,\tau_r$.
Now let $\tau_t'$ be the plane curve with endpoints $\mathsf{ref}(S_{t-1})$ and $\mathsf{ref}(S_t)$ obtained by concatenating $\tau_{t-1}$, $\sigma_{t-1}$, and $\tau_t$, and let $l_t \in \{0,1\}^k$ be the label whose $i'$-th bit is $0$ (resp., $1$) if $\pi_{i'}$ crosses $\tau_t'$ an even (resp., odd) number of times, for $t \in [r]$.
Note that $\tau_t' \subseteq S_{t-1} \cup S_t$.
Therefore, by our construction of $G_\cs$, there should be an edge $e_t = (S_{t-1},S_t)$ with $\mathsf{lab}(e) = l_t$, for each $t \in [r]$.
Consider the cycle $\gamma$ in $G_\cs[\cs']$ with vertex sequence $(S_0,\dots,S_r)$ and edge sequence $(e_1,\dots,e_t)$.
We claim that $\mathsf{parity}_i(\gamma) \neq \mathsf{parity}_j(\gamma)$.
Let $\gamma'$ be the closed plane curve obtained by concatenating the curves $\tau_1',\dots,\tau_r'$.
Observe that $\gamma'$ consists of $\hat{\gamma}$ and two copies of $\tau_1,\dots,\tau_r$.
It follows that for any plane curve $\pi$, the parity of the number of times that $\pi$ crosses $\gamma'$ is equal to the parity of the number of times that $\pi$ crosses $\hat{\gamma}$.
In particular, $\pi_{i,j}$ crosses $\gamma'$ an odd number of times.
Without loss of generality, we may assume that $\pi_i$ crosses $\gamma'$ an odd number of times and $\pi_j$ crosses $\gamma'$ an even number of times.
Since $\gamma'$ is the concatenation of $\tau_1',\dots,\tau_r'$ and the parity of the number of times that $\pi_i$ (resp., $\pi_j$) crosses $\tau_t'$ is indicated by the $i$-th (resp., $j$-th) bit of $l_t$, the $i$-th (resp., $j$-th) bit of $\bigodot_{t=1}^r l_t$ is $1$ (resp., $0$).
Because $\mathsf{parity}(\gamma) = \bigodot_{t=1}^r l_t$, we have $\mathsf{parity}_i(\gamma) \neq \mathsf{parity}_j(\gamma)$.
\end{proof}

\begin{definition}
Let $G = (V_G,E_G)$ and $H = (V_H,E_H)$ be two $k$-labeled graphs.
A \textbf{parity-preserving mapping (PPM)} from $H$ to $G$ is a pair $f = (f_V,f_E)$ consisting of two functions $f_V: V_H \rightarrow V_G$ and $f_E: E_H \rightarrow \varPi_G$ such that for each edge $e = (u,v) \in E_H$, $f_E(e)$ is a path between $f(u)$ and $f(v)$ in $G$ satisfying $\mathsf{parity}(f_E(e)) = \mathsf{lab}(e)$.
The \textbf{cost} of the PPM $f$ is defined as $\mathsf{cost}(f) = |V_H| - |E_H| + \sum_{e \in E_H} |f_E(e)|$.
The \textbf{image} of $f$, denoted by $\mathit{Im}(f)$, is the subgraph of $G$ consisting of the vertices $f_V(v)$ for $v \in V_H$ and the vertices on the paths $f_E(e)$ for $e \in E_H$, and the edges on the paths $f_E(e)$ for $e \in E_H$.
\end{definition}

\begin{fact} \label{fact-cost}
For any PPM $f$, the number of vertices of $\mathit{Im}(f)$ is at most $\mathsf{cost}(f)$.
\end{fact}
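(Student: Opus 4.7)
The plan is to establish Fact~\ref{fact-cost} by a direct counting argument that decomposes the vertex set of $\mathit{Im}(f)$ into two pieces: (i) the image $\{f_V(v) : v \in V_H\}$ of the vertex map, and (ii) the vertices lying on the paths $f_E(e)$ for $e \in E_H$. The key observation is that for each edge $e = (u,v) \in E_H$, the path $f_E(e)$ has endpoints exactly $f_V(u)$ and $f_V(v)$, so its two endpoints are already captured by (i) and must not be counted again when contributions from (ii) are added.

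First I would note that $|\{f_V(v) : v \in V_H\}| \leq |V_H|$, with strict inequality whenever $f_V$ is not injective. Then, interpreting $|f_E(e)|$ as the length of the path $f_E(e)$ (i.e.\ its number of edges, so the path contains $|f_E(e)| + 1$ vertices in total), each path $f_E(e)$ contributes at most $|f_E(e)| - 1$ new internal vertices to $\mathit{Im}(f)$ beyond those already in (i). Summing over $e \in E_H$ yields
\begin{equation*}
|V(\mathit{Im}(f))| \;\leq\; |V_H| + \sum_{e \in E_H}\bigl(|f_E(e)| - 1\bigr) \;=\; |V_H| - |E_H| + \sum_{e \in E_H}|f_E(e)| \;=\; \mathsf{cost}(f),
\end{equation*}
where the inequality also absorbs any coincidences between internal vertices of distinct paths $f_E(e)$ and $f_E(e')$, as well as between internal vertices of a path and images $f_V(v)$ of other vertices.

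There is no serious obstacle here; the fact essentially rephrases the definition of $\mathsf{cost}(f)$ once one accounts properly for the shared endpoints between path endpoints and $f_V$-images. The upper bound (rather than equality) is exactly what allows a PPM to reuse vertices of $G$ across multiple edges of $H$ or across paths and vertex-images, which is precisely the flexibility we will want to exploit when searching for low-cost PPMs in the algorithm to follow.
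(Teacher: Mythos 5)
Your proof is correct and is essentially identical to the paper's: both decompose $V(\mathit{Im}(f))$ into the images $f_V(v)$ and the internal vertices of the paths $f_E(e)$, count each piece by $|V_H|$ and $\sum_{e \in E_H}(|f_E(e)|-1)$ respectively, and let the inequality absorb overlaps. No differences to report.
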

\begin{proof}
Let $f = (f_V,f_E)$ be a PPM from $H = (V_H,E_H)$ to $G$.
The number of vertices $f_V(v)$ for $v \in V_H$ is at most $|V_H|$.
The number of \textit{internal} vertices on each path $f_E(e)$ for $e \in E_H$ is at most $|f_E(e)| - 1$.
Note that a vertex of $\mathit{Im}(f)$ is either $f_V(v)$ for some $v \in V_H$ or an internal vertex on the path $f_E(e)$ for some $e \in E_H$.
Thus, the total number of vertices of $\mathit{Im}(f)$ is at most $|V_H| + \sum_{e \in E_H} (|f_E(e)| - 1) = |V_H| - |E_H| + \sum_{e \in E_H} |f_E(e)| = \mathsf{cost}(f)$.
\end{proof}

\begin{lemma} \label{lem-PPM1}
Let $H$ be a $P$-good $k$-labeled graph and $f$ be a PPM from $H$ to $G_\cs$.
Then $\mathit{Im}(f)$ is also $P$-good.
In particular, $\mathsf{cost}(f) \geq \mathsf{opt}$.
\end{lemma}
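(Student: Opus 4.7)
The plan is to show that every cycle of $H$ lifts to a closed walk in $\mathit{Im}(f)$ of the same parity, which forces $P$-goodness to transfer from $H$ to $\mathit{Im}(f)$; the cost bound is then an immediate consequence of Lemma~\ref{lem-criterion} and Fact~\ref{fact-cost}.

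First I would fix an arbitrary cycle $\gamma$ in $H$ with edge sequence $(e_1,\dots,e_r)$. Since $f = (f_V, f_E)$ is a PPM, each $f_E(e_t)$ is a path in $G_\cs$ between $f_V(u_t)$ and $f_V(v_t)$ (where $e_t = (u_t,v_t)$) satisfying $\mathsf{parity}(f_E(e_t)) = \mathsf{lab}(e_t)$. Concatenating these paths in the cyclic order given by $\gamma$ yields a closed walk $W$ in $\mathit{Im}(f)$ with $\mathsf{parity}(W) = \bigoplus_{t=1}^r \mathsf{lab}(e_t) = \mathsf{parity}(\gamma)$. By a standard argument (repeatedly carving off the simple cycle between the first pair of repeated vertices in the remaining closed sub-walk) $W$ can be decomposed into an edge-disjoint union of simple cycles $C_1,\dots,C_s$ in $\mathit{Im}(f)$, and since parity is additive over this edge decomposition, $\mathsf{parity}(W) = \bigoplus_{l=1}^s \mathsf{parity}(C_l)$.

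Now for any $(i,j) \in P$, the $P$-goodness of $H$ gives a cycle $\gamma$ in $H$ with $\mathsf{parity}_i(\gamma) \neq \mathsf{parity}_j(\gamma)$. Applying the above to this $\gamma$, the resulting XOR identity on parities forces at least one $C_l$ in the decomposition to satisfy $\mathsf{parity}_i(C_l) \neq \mathsf{parity}_j(C_l)$. Hence $i$ and $j$ lie in different parts of $\varPhi_{\mathit{Im}(f)}$, so $\mathit{Im}(f)$ is $P$-good.

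For the cost bound, let $\cs' = V(\mathit{Im}(f)) \subseteq \cs$. Since $\mathit{Im}(f)$ is a subgraph of $G_\cs[\cs']$ and $P$-goodness is monotone under supergraphs, $G_\cs[\cs']$ is $P$-good. By Lemma~\ref{lem-criterion}, $\cs'$ is a $P$-separator, so $|\cs'| \geq \mathsf{opt}$. Combining with Fact~\ref{fact-cost} gives $\mathsf{cost}(f) \geq |V(\mathit{Im}(f))| = |\cs'| \geq \mathsf{opt}$. The only mildly delicate step is the edge-disjoint cycle decomposition of $W$, and even that is standard; the rest is bookkeeping with parity XORs.
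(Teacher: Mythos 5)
Your proof is correct and follows essentially the same route as the paper's: lift a parity-distinguishing cycle from $H$ to $\mathit{Im}(f)$ via the PPM, observe that parity is preserved, and close with Lemma~\ref{lem-criterion} plus Fact~\ref{fact-cost}. The one place you go further than the paper is the closed-walk decomposition: the paper simply asserts that the image $\gamma'$ of the cycle $\gamma$ (obtained by replacing each vertex $v$ with $f_V(v)$ and each edge $e$ with the path $f_E(e)$) ``is a cycle in $\mathit{Im}(f)$,'' whereas in general the image is only a closed walk, since nothing forces the paths $f_E(e)$ to be internally disjoint or the vertices $f_V(v)$ to be distinct. Your explicit decomposition of that closed walk into edge-disjoint simple cycles, together with the XOR additivity of parity, closes this small gap and produces a genuine simple cycle witnessing $\mathsf{parity}_i \neq \mathsf{parity}_j$. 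This makes your version slightly more rigorous under the stricter reading where $\varGamma_G$ denotes only simple cycles; under the more permissive reading (closed walks), the paper's step is already fine and your decomposition is redundant but harmless. The cost-bound half of your argument matches the paper's verbatim in spirit: $V(\mathit{Im}(f))$ is a $P$-separator by monotonicity plus Lemma~\ref{lem-criterion}, and Fact~\ref{fact-cost} gives $\mathsf{cost}(f) \geq |V(\mathit{Im}(f))| \geq \mathsf{opt}$.
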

\begin{proof}
To see $\mathit{Im}(f)$ is $P$-good, what we want is that $i$ and $j$ belong to different parts of $\varPhi_{\mathit{Im}(f)}$ for all $(i,j) \in P$.
Consider a pair $(i,j) \in P$.
Since $H$ is $P$-good, there exists a cycle $\gamma$ in $H$ such that $\mathsf{parity}_i(\gamma) \neq \mathsf{parity}_j(\gamma)$.
Let $\gamma'$ be the image of $\gamma$ under $f$, which is a cycle in $\mathit{Im}(f)$ obtained by replacing each vertex $v$ of $\gamma$ with $f_V(v)$ and each edge $e$ of $\gamma$ with the path $f_E(e)$.
Because $f$ is a PPM, we have $\mathsf{parity}(\gamma') = \mathsf{parity}(\gamma)$.
Therefore, $\mathsf{parity}_i(\gamma') \neq \mathsf{parity}_j(\gamma')$.
It follows that $i$ and $j$ belong to different parts of $\varPhi_{\mathit{Im}(f)}$, and hence $\mathit{Im}(f)$ is $P$-good.
To see $\mathsf{cost}(f) \geq \mathsf{opt}$, let $\cs' \subseteq \cs$ be the vertex set $\mathit{Im}(f)$.
Then $\mathit{Im}(f)$ is a subgraph of $G_\cs[\cs']$, which implies $G_\cs[\cs']$ is also $P$-good.
By Lemma~\ref{lem-criterion}, $\cs'$ is a $P$-separator, i.e., $|\cs'| \geq \mathsf{opt}$.
Furthermore, by Fact~\ref{fact-cost}, we have $\mathsf{cost}(f) \geq |\cs'| \geq \mathsf{opt}$.
\end{proof}

\begin{lemma} \label{lem-PPM2}
There exists a $P$-good $k$-labeled graph $H^*$ with at most $4k$ vertices and $5k$ edges and a PPM $f^*$ from $H^*$ to $G_\cs$ such that $\mathsf{cost}(f^*) = \mathsf{opt}$.
\end{lemma}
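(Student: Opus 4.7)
The plan is to mimic the proof of Lemma~\ref{lem:bounded-connectors} in the parity-partition setting: start from an optimal $P$-separator $\cs^*$, let $G^* := G_\cs[\cs^*]$ (which is $P$-good by Lemma~\ref{lem-criterion}), pass to an inclusion-minimal $P$-good subgraph $G' \subseteq G^*$, and then contract every maximal degree-$2$ path of $G'$ into a single labeled edge to obtain $H^*$. The PPM $f^*$ will simply send each surviving vertex to itself and each contracted edge back to its chain, viewed as a path in $G_\cs$.

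First I would show that $G'$ has minimum degree at least two and that its set of ``non-tree'' edges has size at most $k-1$. A pendant or isolated vertex of $G'$ participates in no cycle, so deleting it preserves $\varPhi_{G'}$, contradicting minimality. Fix a spanning tree $T_j$ of each component $C_j$ of $G'$ and let $E_0$ be the non-tree edges (self-loops being non-tree by convention); by Lemma~\ref{lem-components} and Lemma~\ref{lem-spanningtree} we have $\varPhi_{G'} = \bigodot_{e \in E_0}\varPhi(\gamma_e)$. Applying Fact~\ref{fact-represent} to this family of partitions of $[k]$ yields $T \subseteq E_0$ with $|T| < k$ whose $\odot$-product is already $\varPhi_{G'}$. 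If $|E_0| \ge k$, pick $e^* \in E_0 \setminus T$: then $\varPhi_{G' - e^*} \succeq \varPhi_{G'}$ because deleting edges only coarsens the cycle-based product, and $\varPhi_{G' - e^*} \preceq \varPhi_{G'}$ because the non-tree edges of $G' - e^*$ still contain $T$. Equality makes $G' - e^*$ a smaller $P$-good subgraph, contradicting minimality, so $|E_0| \le k-1$.

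Next I would define $V_c \subseteq V(G')$ exactly as in Lemma~\ref{lem:bounded-connectors}: all vertices of degree $\ge 3$ in $G'$, plus the two endpoints of the unique non-tree edge of each cycle-only component. Every leaf of some $T_j$ must be incident to an edge of $E_0$, so the number of leaves, and hence of tree vertices of degree $\ge 3$, is at most $2|E_0|$; together with the vertices incident to $E_0$ this gives $|V_c| \le 4|E_0| \le 4k$. Grouping the edges of $G'$ into maximal chains with endpoints in $V_c$, at most $|E_0| \le k$ chains meet $E_0$ and the remaining chains lie inside the forest $\bigcup_j T_j$, contributing at most $|V_c|$ further chains, for a total $|K| \le 5k$. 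Define $H^*$ on vertex set $V_c$ with one edge per chain $K_t$ labeled by $\mathsf{parity}(K_t)$, and set $f_V^*$ to be the inclusion $V_c \hookrightarrow \cs$ and $f_E^*(K_t) = K_t$ viewed as a path in $G_\cs$. Cycles of $H^*$ correspond bijectively and parity-preservingly to cycles of $G'$, so $\varPhi_{H^*} = \varPhi_{G'}$ and $H^*$ is $P$-good. A two-way count $|V(G')| = |V_c| + \sum_t(|K_t|-1) = |V_c| - |K| + \sum_t |K_t| = \mathsf{cost}(f^*)$, together with the observation that $\mathit{Im}(f^*) = G'$ is $P$-good so that Lemma~\ref{lem-PPM1} applies, forces $\mathsf{cost}(f^*) = |V(G')| = \mathsf{opt}$ (the upper bound follows from $|V(G')| \le |V(G^*)| = \mathsf{opt}$).

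The main obstacle I anticipate is the two-sided refinement argument in the first step: after deleting $e^* \in E_0 \setminus T$ one must combine the ``subgraph gives coarser partition'' direction with the ``$T$-witnessed product still realises $\varPhi_{G'}$'' direction to conclude $\varPhi_{G' - e^*} = \varPhi_{G'}$, and both directions rely on the somewhat slippery algebra of the $\odot$ operation. Secondary care is needed for degenerate components---simple-cycle components with no degree-$\ge 3$ vertex must be seeded into $V_c$ by hand, and self-loops, parallel edges, and closed chains have to be tracked consistently as edges of $H^*$---but these are handled precisely as in the proof of Lemma~\ref{lem:bounded-connectors}.
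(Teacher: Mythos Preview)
Your proposal is correct and follows essentially the same approach as the paper's own proof: take a minimal $P$-good subgraph of $G_\cs[\cs_{\mathrm{opt}}]$, use Fact~\ref{fact-represent} with Lemmas~\ref{lem-components} and~\ref{lem-spanningtree} to bound the non-tree edges by $k-1$, mark connector vertices and count chains exactly as in Lemma~\ref{lem:bounded-connectors}, then path-contract to obtain $H^*$ and the tautological PPM $f^*$. The only cosmetic differences are that the paper deletes all of $E_0\setminus E_0'$ at once rather than one edge at a time, and marks a single endpoint per cycle-only component rather than both; your claim that cycles of $H^*$ and $G'$ correspond ``bijectively'' is slightly loose (closed walks that backtrack into a chain do not arise this way), but the parity partition is determined by the fundamental cycles $\gamma_e$, which are simple and do lift to $H^*$, so the conclusion $\varPhi_{H^*}=\varPhi_{G'}$ stands.
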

\begin{proof}
Let $\cs_\text{opt} \subseteq \cs$ be a $P$-separator of the minimum size.
By Lemma~\ref{lem-criterion}, the induced subgraph $G_\cs[\cs_\text{opt}]$ is $P$-good.
Let $G$ be a \textit{minimal} $P$-good subgraph of $G_\cs[\cs_\text{opt}]$, that is, no proper subgraph of $G$ is $P$-good.
Note that $G$ does not have degree-0 and degree 1 vertices, simply because deleting a degree-0 or degree-1 vertex (and its adjacent edge) from $G$ does not change $\varPhi_G$.
Suppose $G$ has $r$ connected components $C_1,\dots,C_r$.
We fix a spanning tree $T_t$ of $C_t$ for each $t \in [r]$.
Let $E_0$ be the set of non-tree edges of $G$, i.e., the edges not in $T_1,\dots,T_r$.
We mark all vertices of $G$ with degree at least 3.
Furthermore, for each component $C_t$ that has no vertex with degree at least 3 (which should be a simple cycle because $G$ does not have degree-1 vertices), we mark a vertex of $C_t$ that is adjacent to the (only) non-tree edge of $C_t$.
We notice that all unmarked vertices of $G$ are of degree 2 and each component $C_t$ of $G$ has at least one marked vertex.
Therefore, $G$ consists of the marked vertices and a set $K$ of \textit{chains} (i.e., paths consisting of degree-2 vertices) connecting marked vertices.
See (the left and middle figures of) Figure~\ref{fig-contraction} for an illustration of the marked vertices and chains.

We claim that $|E_0| < k$, the number of marked vertices in $G$ is bounded by $4k$, and $|K| \leq 5k$.
For each $e = (u,v) \in E_0$, let $\gamma_e$ be the (simple) cycle consists of $e$ and the (unique) simple path between $u$ and $v$ in $T_t$, where $t \in [r]$ is the index such that $C_t$ contains $u$ and $v$.
By Lemma~\ref{lem-components} and~\ref{lem-spanningtree}, we have $\varPhi_{G} = \bigodot_{t=1}^r \varPhi_{C_t} = \bigodot_{e \in E_0} \varPhi(\gamma_e)$.
By Fact~\ref{fact-represent}, there exists $E_0' \subseteq E_0$ with $|E_0'| < k$ such that $\bigodot_{e \in E_0'} \varPhi(\gamma_e) = \bigodot_{e \in E_0} \varPhi(\gamma_e)$.
Let $G'$ be the subgraph of $G$ obtained by removing all edges in $E_0 \backslash E_0'$.
Using Lemma~\ref{lem-components} and~\ref{lem-spanningtree} again, we deduce that 
\begin{equation} \label{eq-forest}
    \varPhi_{G'} = \bigodot_{e \in E_0'} \varPhi(\gamma_e) = \bigodot_{e \in E_0} \varPhi(\gamma_e) = \varPhi_G.
\end{equation}
Therefore, $G'$ is also $P$-good.
It follows that $G' = G$, since no proper subgraph of $G$ is $P$-good.
This further implies $E_0' = E_0$ and $|E_0| < k$.
Next, we consider the number of vertices in $G$ with degree at least 3.
Since $G$ does not have degree-1 vertices, any leaf of the trees $T_1,\dots,T_r$ must be adjacent to some edge in $E_0$.
Since $|E_0| < k$, the number of leaves of $T_1,\dots,T_r$ is at most $2k$, and hence there are at most $2k$ nodes in $T_1,\dots,T_r$ whose degree is at least 3.
Now observe that a marked vertex $v$ of $G$ is either adjacent to some edge in $E_0$ or of degree at least 3 in the tree $T_t$, where $C_t$ is the component containing $v$.
Therefore, there can be at most $4k$ marked vertices in $G$.
Finally, we bound $|K|$, the number of chains.
Note that each edge of $G$ belongs to exactly one chain in $K$.
Therefore, the number of chains containing at least one edge in $E_0$ is at most $k$, because $|E_0| < k$.
All the other chains, i.e., the chains that do not have any edge in $E_0$, are contained in the trees $T_1,\dots,T_r$.
It follows that these chains do not form any cycle, and thus their number is less than the number of marked vertices in $G$ (which is at most $4k$).
Thus, $G$ has at most $5k$ chains, i.e., $|K| \leq 5k$.

\begin{figure}[htbp]
    \centering
    \includegraphics[height=5.5cm]{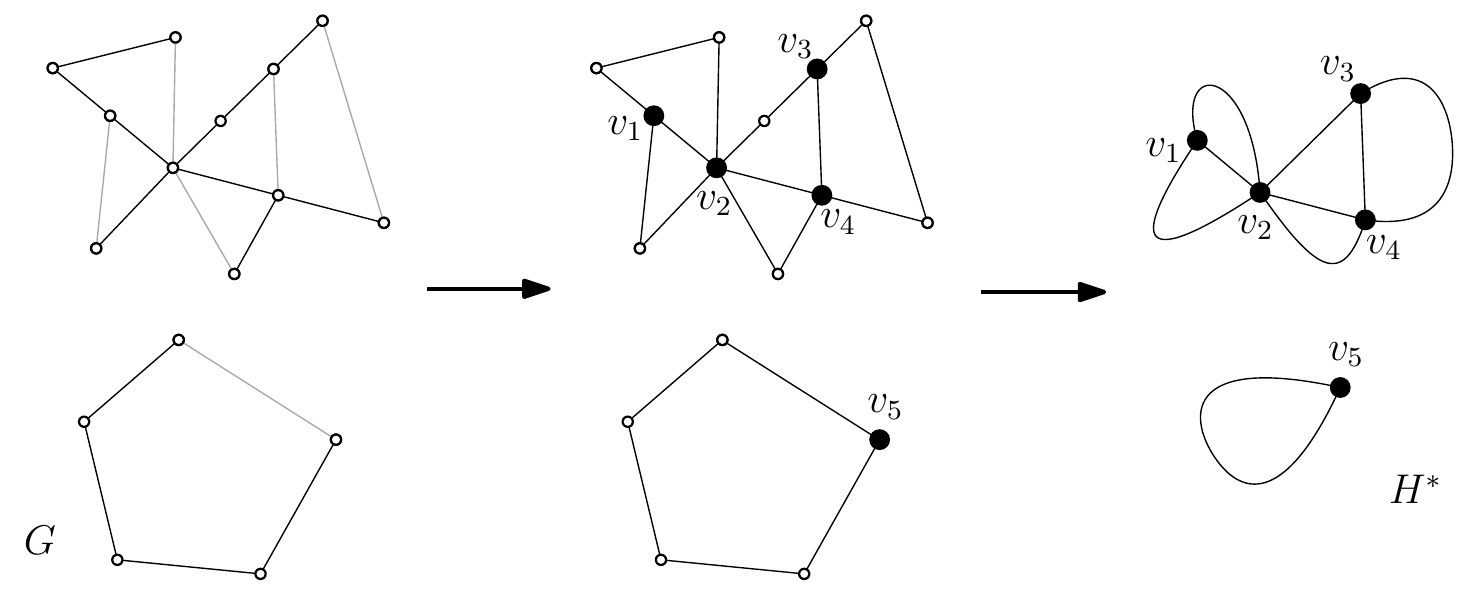}
    \caption{An illustration of the marked vertices in $G$ and the resulting graph $H^*$ by path-contraction. The left figure shows the graph $G$ consisting of two connected components where the black edges are tree edges and the grey edges are non-tree edges in $E_0$. The middle figure shows the marked vertices in $G$ (and the chains in $K$ connecting the marked vertices). The right figure shows the graph $H^*$ obtained by path-contraction.}
    \label{fig-contraction}
\end{figure}

The desired $k$-labeled graph $H^*$ is defined via a path-contraction procedure on $G$ as follows.
The vertices of $H^*$ are one-to-one corresponding to the marked vertices of $G$.
The edges of $H^*$ are one-to-one corresponding to the chains in $K$: for each chain connecting two marked vertices $u$ and $v$, we have an edge in $H^*$ connecting the two vertices of $H$ corresponding to $u$ and $v$.
The label of each edge $e$ of $H^*$ is defined as $\mathsf{lab}(e) = \mathsf{parity}(\pi_e)$, where $\pi_e$ is the chain in $C$ corresponding to $e$.
See Figure~\ref{fig-contraction} for an illustration of how to obtain $H^*$ via path-contraction.
Since there are at most $4k$ marked vertices in $G$ and $|K| \leq 5k$, $H^*$ has at most $4k$ vertices and $5k$ edges.
Next, we define the PPM $f^* = (f_V^*,f_E^*)$ from $H^*$ to $G_\cs$.
The function $f_V^*$ simply maps each vertex of $H^*$ to its corresponding marked vertex in $G$ (which is a vertex of $G_\cs$), and the function $f_E^*$ simply maps each edge of $H^*$ to its corresponding chain in $K$ (which is a path in $G_\cs$).
The fact that $f^*$ is a PPM directly follows from the construction of $H^*$.
Furthermore, we observe that $\mathsf{cost}(f^*)$ is equal to the number of vertices in $G$, because the chains in $K$ are ``interior-disjoint'' in the sense that two chains can only intersect at their endpoints.
Therefore, $\mathsf{cost}(f^*) = |\cs_\text{opt}| = \mathsf{opt}$.
Finally, we show that $H^*$ is $P$-good.
It suffices to show $\varPhi_{H^*} = \varPhi_G$.
Consider two elements $i,j \in [k]$ belong to the same part of $\varPhi_G$.
We have $\mathsf{parity}_i(\gamma) = \mathsf{parity}_j(\gamma)$ for any cycle $\gamma$ in $G$.
It follows that $\mathsf{parity}_i(\gamma^*) = \mathsf{parity}_j(\gamma^*)$ for any cycle $\gamma^*$ in $H^*$, because the image of $\gamma^*$ under $f^*$ is a cycle $\gamma$ in $G$ satisfying $\mathsf{parity}(\gamma) = \mathsf{parity}(\gamma^*)$.
Thus, $i$ and $j$ belong to the same part of $\varPhi_{H^*}$.
Next consider two elements $i,j \in [k]$ belong to different parts of $\varPhi_G$.
By Equation~\ref{eq-forest}, there exists some edge $e \in E_0$ such that $i$ and $j$ belong to different parts of $\varPhi(\gamma_e)$, i.e., $\mathsf{parity}_i(\gamma_e) \neq \mathsf{parity}_j(\gamma_e)$.
Since $\gamma_e$ is a simple cycle in $G$, it corresponds to a simple cycle in $H^*$, i.e., there is a simple cycle $\gamma^*$ in $H^*$ whose image under $f^*$ is $\gamma_e$.
%We have $\mathsf{parity}_i(\gamma) \neq \mathsf{parity}_j(\gamma)$ for some cycle $\gamma$ in $G$.
%Without loss of generality, we may assume that $\gamma$ is a \textit{simple} cycle\footnote{Indeed, the cycle $\gamma$ can be obtained by ``concatnating'' several simple cycles $\gamma_1,\dots,\gamma_m$ in $G$, and $\mathsf{parity}(\gamma) = \bigodot_{t=1}^m \mathsf{parity}(\gamma_t)$.
%We must have $\mathsf{parity}_i(\gamma_t) \neq \mathsf{parity}_j(\gamma_t)$ for some $t \in [m]$, because $\mathsf{parity}_i(\gamma) \neq \mathsf{parity}_j(\gamma)$.
%This implies that we can find a simple cycle $\gamma$ in $G$ satisfies $\mathsf{parity}_i(\gamma) \neq \mathsf{parity}_j(\gamma)$.}.
%Note that a simple cycle in $G$ corresponds to a simple cycle in $H^*$; in other words, there is a simple cycle $\gamma^*$ in $H^*$ whose image under $f^*$ is $\gamma$.
Because $f^*$ is a PPM, we have $\mathsf{parity}(\gamma^*) = \mathsf{parity}(\gamma_e)$.
It then follows that $\mathsf{parity}_i(\gamma^*) \neq \mathsf{parity}_j(\gamma^*)$ and hence $i,j$ belong to different parts of $\varPhi_{H^*}$.
Therefore, $\varPhi_{H^*} = \varPhi_G$ and $H^*$ is $P$-good.
\end{proof}

The above lemma already gives us an algorithm that runs in $2^{O(k^2)} n^{O(k)}$ time.
First, we guess the $k$-labeled graph $H^*$ in Lemma~\ref{lem-PPM2}.
Since $H^*$ has at most $4k$ vertices and $5k$ edges, the number of possible graph structures of $H^*$ is $k^{O(k)}$ and the number of possible labeling of the edges of $H^*$ is bounded by $(2^k)^{5k}$.
Therefore, there can be $2^{O(k^2)}$ possibilities for $H^*$.
We enumerate all possible $H^*$, and for every $H^*$ that is $P$-good, we compute a PPM from $H^*$ to $G_\cs$ with the minimum cost; later we will show how to do this in $n^{O(k)}$ time.
%and then compute a PPM $f^*$ from $H^*$ to $G_\cs$ with the minimum cost as follows.
Among all these PPMs, we take the one with the minimum cost, say $f^*$.
By Lemma~\ref{lem-PPM1} and~\ref{lem-PPM2}, we know that $\mathit{Im}(f^*)$ is $P$-good and $\mathsf{cost}(f^*) = \mathsf{opt}$.
To find an optimal solution, let $\cs' \subseteq \cs$ be the set of vertices of $\mathit{Im}(f^*)$.
Since $\mathit{Im}(f^*)$ is a subgraph of $G_\cs[\cs']$ and $\mathit{Im}(f^*)$ is $P$-good, we know that $G_\cs[\cs']$ is also $P$-good and hence $\cs'$ is a $P$-separator.
Furthermore, Fact~\ref{fact-cost} implies that $|\cs'| \leq \mathsf{cost}(f^*) = \mathsf{opt}$.
Therefore, $\cs'$ is an optimal solution for the problem instance.
The entire algorithm takes $2^{O(k^2)} n^{O(k)}$ time.

Now we discuss the missing piece of the above algorithm, how to compute a PPM from $H^*$ to $G_\cs$ with the minimum cost in $n^{O(k)}$ time, given a $k$-labeled graph $H^*=(V_{H^*},E_{H^*})$ with at most $4k$ vertices and $5k$ edges.
For all $u,v \in \cs$ and $l \in \{0,1\}^k$, let $\pi_{u,v,l}$ be the shortest path (i.e., the path with fewest edges) between $u$ and $v$ whose parity is $l$.
All these paths can be computed in $2^{O(k)} n^3$ time using Floyd's algorithm.
Suppose $f^* = (f_V^*,f_E^*)$ is the PPM from $H^*$ to $G_\cs$ we want to compute.
Recall that $\mathsf{cost}(f^*) = |V_{H^*}| - |E_{H^*}| + \sum_{e^* \in E_{H^*}} |f_E^*(e^*)|$.
The terms $|V_{H^*}|$ and $|E_{H^*}|$ only depend on $H^*$ itself.
Therefore, we want to choose $f^*$ that minimizes $\sum_{e^* \in E_{H^*}} |f_E^*(e^*)|$.
We simply enumerate all possibilities of $f_V^*$.
Since $H^*$ has at most $4k$ vertices, there are at most $n^{4k}$ possible $f_V^*$ to be considered.
Once $f_V^*$ is determined, the endpoints of the paths $f_E^*(e^*)$ are also determined.
This allows us to minimize $|f_E^*(e^*)|$ for each $e^* \in E_{H^*}$ independently.
Let $e^* = (u^*,v^*) \in E_{H^*}$.
Since $f^*$ is a PPM, $f_E^*(e^*)$ must be a path connecting $u = f_V^*(u)$ and $v = f_V^*(v)$ whose parity is $l = \mathsf{lab}(e^*)$.
By the definition of $\pi_{u,v,l}$, it follows that $|f_E^*(e^*)| \geq |\pi_{u,v,l}|$ and thus setting $f_E^*(e^*) = \pi_{u,v,l}$ will minimize $|f_E^*(e^*)|$.
After trying all possible $f_V^*$, we can finally find the optimal PPM $f^*$ in $n^{O(k)}$ time.

\subsection{\boldmath Improving the running time to $2^{O(p)} n^{O(k)}$} \label{sec-2^p}
To further improve the running time of the above algorithm to $2^{O(p)} n^{O(k)}$ requires nontrivial efforts.
Without loss of generality, in this section, we assume $k \leq n$.
Indeed, if $k > n$, the problem can be solved in $2^{O(k)}$ time by enumerating every subset $\cs' \subseteq \cs$ and checking if $\cs'$ is a $P$-separator (which can be done in polynomial time by first computing $\varPhi_{\cs'}$ using Lemma~\ref{lem-components} and~\ref{lem-spanningtree} and then applying the criterion of Lemma~\ref{lem-criterion}).

As stated before, there are $2^{O(k^2)}$ possibilities for $H^*$.
Thus, in order to improve the factor $2^{O(k^2)}$ to $2^{O(p)}$, we have to avoid enumerating all possible $H^*$.
Instead, we only enumerate the graph structure of $H^*$ (but not the labels of its edges).
There are $k^{O(k)}$ possible graph structures to be considered, because $H^*$ has at most $4k$ vertices and $5k$ edges.
For each possible graph structure, we want to label the edges to make $H^*$ $P$-good and then find a PPM from $H^*$ (with that labeling) to $G_\cs$ such that the cost of the PPM is minimized.
Formally, consider a graph structure $H^* = (V_{H^*},E_{H^*})$ of $H^*$.
A \textit{labeling-PPM pair} for $H^*$ refers to a pair $(\mathsf{lab},f^*)$ where $\mathsf{lab}: E_{H^*} \rightarrow \{0,1\}^k$ is a labeling for $H^*$ and $f^* = (f_V^*,f_E^*)$ is a PPM from $H^*$ to $G_\cs$ (with respect to the labeling $\mathsf{lab}$).
Our task is to find a labeling-PPM pair $(\mathsf{lab},f^*)$ for $H^*$ with the minimum $\mathsf{cost}(f^*)$ such that $H^*$ is $P$-good with respect to the labeling $\mathsf{lab}$.
%For each possible graph structure $H^* = (V_{H^*},E_{H^*})$ of $H^*$, we want to find a labeling $\mathsf{lab}: E_{H^*} \rightarrow \{0,1\}^k$ and a PPM $f^* = (f_V^*,f_E^*)$ from $H^*$ to $G_\cs$ (with respect to the labeling $\mathsf{lab}$) such that $\mathsf{cost}(f^*)$ is minimized under the constraint that $H^*$ is $P$-good.

Let $C_1,\dots,C_r$ be the connected components of $H^*$, and $T_1,\dots,T_r$ be spanning trees of $C_1,\dots,C_r$, respectively.
Let $E_0 \subseteq E_{H^*}$ be the set of edges that are not in $T_1,\dots,T_r$.
For each $e \in E_0$, denote by $\gamma_e$ the cycle in $H^*$ consisting of the edge $e$ and the (unique) simple path between the two endpoints of $e$ in $T_t$, where $t \in [r]$ is the index such that $C_t$ contains $e$.
By Lemma~\ref{lem-components} and~\ref{lem-spanningtree}, we have $\varPhi_{H^*} = \bigodot_{t=1}^r \varPhi_{C_t} = \bigodot_{e \in E_0} \varPhi(\gamma_e)$.
Therefore, a labeling makes $H^*$ $P$-good iff for every $(i,j) \in P$ there exists an edge $e \in E_0$ such that $\mathsf{parity}_i(\gamma_e) \neq \mathsf{parity}_j(\gamma_e)$ with respect to that labeling.
We say a labeling $\mathsf{lab}: E_{H^*} \rightarrow \{0,1\}^k$ \textit{respects} a function $\xi:P \rightarrow E_0$ if for all $(i,j) \in P$, we have $\mathsf{parity}_i(\gamma_e) \neq \mathsf{parity}_j(\gamma_e)$ where $e = \xi(i,j)$ and $\mathsf{parity}$ is calculated with respect to the labeling $\mathsf{lab}$.
Then we immediately have the following fact.
\begin{fact} \label{fact-respect}
A labeling makes $H^*$ $P$-good iff it respects some function $\xi:P \rightarrow E_0$.
\end{fact}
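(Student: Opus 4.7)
The plan is to unpack both directions directly from the identity $\varPhi_{H^*} = \bigodot_{e \in E_0} \varPhi(\gamma_e)$, which is already established just before the fact via Lemma~\ref{lem-components} and Lemma~\ref{lem-spanningtree}. The essential point is that, by definition of $\odot$, two indices $i,j \in [k]$ lie in the same part of $\bigodot_{e \in E_0} \varPhi(\gamma_e)$ if and only if they lie in the same part of $\varPhi(\gamma_e)$ for every $e \in E_0$; equivalently, $i$ and $j$ lie in different parts of $\varPhi_{H^*}$ if and only if there exists at least one $e \in E_0$ with $\mathsf{parity}_i(\gamma_e) \neq \mathsf{parity}_j(\gamma_e)$. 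So the whole statement reduces to rewriting ``$P$-good'' in this pointwise form.

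For the ``only if'' direction I would fix a labeling for which $H^*$ is $P$-good and, for each pair $(i,j) \in P$, use the observation above to select an edge $e_{i,j} \in E_0$ such that $\mathsf{parity}_i(\gamma_{e_{i,j}}) \neq \mathsf{parity}_j(\gamma_{e_{i,j}})$. Setting $\xi(i,j) = e_{i,j}$ then defines a function $\xi : P \rightarrow E_0$ that the labeling respects by construction. For the ``if'' direction, suppose the labeling respects some $\xi : P \rightarrow E_0$. Then for every $(i,j) \in P$ the edge $e = \xi(i,j)$ witnesses $\mathsf{parity}_i(\gamma_e) \neq \mathsf{parity}_j(\gamma_e)$, so $i$ and $j$ lie in different parts of $\varPhi(\gamma_e)$ and hence in different parts of $\bigodot_{e' \in E_0} \varPhi(\gamma_{e'}) = \varPhi_{H^*}$, which is exactly the definition of $H^*$ being $P$-good.

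There is no real obstacle here; the content of the fact is essentially a restatement of the definitions plus the spanning-tree characterization of $\varPhi_{H^*}$. The only thing to be slightly careful about is the direction of the implication in the $\odot$ step, namely that ``different parts in $\bigodot$'' is witnessed by at least one component partition separating them, whereas ``same part in $\bigodot$'' requires all of them to agree; this asymmetry is exactly what lets us turn the existential statement defining $P$-goodness into the choice function $\xi$.
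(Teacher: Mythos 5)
Your proof is correct and takes essentially the same route as the paper, which states the fact as immediate after reformulating $P$-goodness via the identity $\varPhi_{H^*} = \bigodot_{e \in E_0} \varPhi(\gamma_e)$; your writeup simply spells out the two directions (the choice function for ``only if'' and the direct witness for ``if'') that the paper leaves implicit.
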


%Then a labeling makes $H^*$ $P$-good iff it respects some function $\xi:P \rightarrow E_0$.
Our first observation is that for any function $\xi:P \rightarrow E_0$, one can efficiently find the ``optimal'' labeling-PPM pair $(\mathsf{lab},f^*)$ for $H^*$ satisfying the condition that $\mathsf{lab}$ respects $\xi$.
\begin{lemma} \label{lem-respect}
Given $\xi:P \rightarrow E_0$, one can compute in $2^{O(p)} n^{O(k)}$ time a labeling-PPM pair $(\mathsf{lab},f^*)$ for $H^*$ which minimizes $\mathsf{cost}(f^*)$ subject to the condition that $\mathsf{lab}$ respects $\xi$.
\end{lemma}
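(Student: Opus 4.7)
The plan is to split the computation into (i) a global preprocessing of shortest parity-paths in $G_\cs$, (ii) an $n^{O(k)}$ enumeration of vertex maps $f_V^*$, and (iii) for each such map, a $2^{O(p)}$-time dynamic program that jointly commits the labels and closes out the $p$ XOR constraints one active cycle at a time. In preprocessing, for each pair $(u,v) \in \cs \times \cs$ and each label $l \in \{0,1\}^k$ I compute $\pi_{u,v,l}$, a shortest $u$--$v$ walk in $G_\cs$ with $\mathsf{parity} = l$, using Floyd--Warshall on the product graph $\cs \times \{0,1\}^k$; this takes $2^{O(k)} n^{O(1)}$ time. Any minimum-cost labeling-PPM pair may assume $f_E^*(e^*) = \pi_{f_V^*(u^*),f_V^*(v^*),\mathsf{lab}(e^*)}$, so setting $d(e^*, l) := |\pi_{f_V^*(u^*),f_V^*(v^*),l}|$ reduces $\mathsf{cost}(f^*)$ to $|V_{H^*}| - |E_{H^*}| + \sum_{e^* \in E_{H^*}} d(e^*, \mathsf{lab}(e^*))$. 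I then enumerate $f_V^*: V_{H^*} \to \cs$ in $n^{|V_{H^*}|} \leq n^{4k}$ time.

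For a fixed $f_V^*$, the task becomes picking $\mathsf{lab}(e^*) \in \{0,1\}^k$ for each $e^* \in E_{H^*}$ so as to minimize $\sum_{e^*} d(e^*, \mathsf{lab}(e^*))$, subject to $\mathsf{parity}_i(\gamma_{\xi(i,j)}) \neq \mathsf{parity}_j(\gamma_{\xi(i,j)})$ for every $(i,j) \in P$. These constraints involve only the $\leq 2p$ bits in $I_P := \bigcup_{(i,j) \in P}\{i,j\}$, so bits outside $I_P$ can be chosen freely per edge; collapsing them gives a reduced per-edge cost $c'(e^*, l') := \min\{d(e^*, l) : l|_{I_P} = l'\}$ defined on $l' \in \{0,1\}^{|I_P|}$. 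What remains is a min-cost problem with $\leq 5k$ variables and $p$ XOR constraints, each supported on a cycle $\gamma_e$ for $e \in E_0^* := \xi(P) \subseteq E_0$, and $|E_0^*| \leq p$.

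I would solve this residual problem by dynamic programming over a rooted spanning forest of $H^*$ traversed in DFS post-order. At each vertex, the DP state records, for every non-tree edge $e \in E_0^*$ that is \emph{active} (exactly one endpoint already processed) and every bit $b$ constrained for $\gamma_e$ (i.e., $b \in \{i,j\}$ for some $(i,j) \in \xi^{-1}(e)$), the running XOR of $\mathsf{lab}_b$ over the already-processed tree edges of $\gamma_e$. Since every $(i,j) \in P$ contributes exactly two such (bit, cycle) coordinates in total, the state carries at most $2p$ bits, giving $2^{O(p)}$ possible states; a tree-edge transition guesses the $I_P$-bits of the edge ($2^{O(p)}$ choices), pays $c'$, and updates the partial XORs for all cycles that cross the edge, while a cycle closure guesses the $I_P$-bits of its non-tree edge and verifies the completed XOR against the $\xi$-target. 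Tree edges outside every active cycle and non-tree edges outside $E_0^*$ simply take the $\arg\min$ of $c'$.

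The anticipated main obstacle is arguing that the DP state stays $2^{O(p)}$ uniformly over all possible $H^*$ and all DFS orderings. The decisive structural facts are that only the $|E_0^*| \leq p$ non-tree edges in $\xi(P)$ ever produce state, and that each pair in $P$ contributes at most two constrained bits to exactly one cycle, so the total number of active coordinates never exceeds $2p$ regardless of how many tree edges are shared by the cycles $\gamma_e$. Combining the $2^{O(k)} n^{O(1)}$ preprocessing, the $n^{O(k)}$ enumeration of $f_V^*$, and a $2^{O(p)} \cdot \mathrm{poly}(k)$ DP per enumeration yields the claimed $2^{O(p)} n^{O(k)}$ total running time.
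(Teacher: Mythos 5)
Your proposal matches the paper's own proof in all essential steps: precompute shortest fixed-parity paths $\pi_{u,v,l}$ via Floyd--Warshall on the product graph $\cs \times \{0,1\}^k$, enumerate the $n^{O(k)}$ vertex maps $f_V^*$ (so that the optimal $f_E^*$ is forced to be $\pi_{\cdot,\cdot,\mathsf{lab}(e^*)}$), and then solve the residual labeling problem with a $2^{O(p)}$ dynamic program over the XOR constraints $\mathsf{parity}_i(\gamma_{\xi(i,j)}) \oplus \mathsf{parity}_j(\gamma_{\xi(i,j)}) = 1$. The one place you deviate is the inner DP: you build a rooted spanning forest, traverse it in DFS post-order, and track per-cycle ``active'' bookkeeping, whereas the paper processes the $m = O(k)$ edges of $E_{H^*}$ in an \emph{arbitrary} linear order with state $\phi : P \rightarrow \{0,1\}$ recording the running value of $\sum_{t \le t'} \delta(t,\xi(i,j))\cdot(\mathsf{lab}_i(e_t) \oplus \mathsf{lab}_j(e_t))$ for each $(i,j) \in P$. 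This dissolves the obstacle you flag at the end: there is no issue with shared tree edges or DFS orderings to argue about, because each constraint is an unordered XOR over a fixed edge set and can be accumulated incrementally in any order; the state is exactly $p$ bits (not $2p$ --- only the XOR $\mathsf{lab}_i \oplus \mathsf{lab}_j$ is needed, not the two bits separately), so the tree structure of $H^*$ plays no role here. Your explicit collapse onto the $\leq 2p$ relevant bit positions $I_P$ via the reduced cost $c'(e^*,\cdot)$ is a nice clarification of a step the paper glosses over; the paper's claimed $2^p \cdot p^{O(1)}$ per-state transition time implicitly relies on exactly such a per-edge, per-$\phi_l$ precomputation.
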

\begin{proof}
%Let $\delta(\cdot,\cdot)$ be an indicator defined as $\delta(s,t) = 1$ if $e_s$ is an edge of the cycle $\gamma_{e_t}$ and $\delta(s,t) = 0$ otherwise, for all $s \in [M]$ and $t \in [m]$.
Suppose $f^* = (f_V^*,f_E^*)$ is the PPM we want to compute.
We enumerate all possibilities of $f_V^*:V_{H^*} \rightarrow \mathcal{S}$.
Since $|V_{H^*}| \leq 4k$, there are $n^{O(k)}$ different $f_V^*$ to be considered.
Fixing a function $f_V^*$, we want to determine the labeling $\mathsf{lab}$ and the function $f_E^*$ such that \textbf{(i)} $\mathsf{lab}$ respects $\xi$, \textbf{(ii)} $f^*$ is a PPM with respect to the labeling $\mathsf{lab}$, and \textbf{(iii)} $\mathsf{cost}(f^*)$ is minimized.
For an edge $e^* = (u^*,v^*) \in E_{H^*}$ and a label $l \in \{0,1\}^k$, we denote by $\mathsf{len}(e^*,l) = |\pi_{u,v,l}|$, where $u = f_V^*(u^*)$, $v = f_V^*(v^*)$.
As argued before, for a fixed labeling $\mathsf{lab}$, an optimal function $f_E^*$ is the one that maps each edge $e^* = (u^*,v^*) \in E_{H^*}$ to the path $\pi_{u,v,l}$, where $u = f_V^*(u^*)$, $v = f_V^*(v^*)$, $l = \mathsf{lab}(e^*)$; with this choice of $f_E^*$, we have $\mathsf{cost}(f^*) = |V_{H^*}| - |E_{H^*}| + \sum_{e^* \in E_{H^*}} \mathsf{len}(e^*,\mathsf{lab}(e))$.
Therefore, our actual task is to find a labeling $\mathsf{lab}$ that respects $\xi$ and minimizes $\sum_{e^* \in E_{H^*}} \mathsf{len}(e^*,\mathsf{lab}(e^*))$.
%We achieve this using dynamic programming.
Suppose $E_{H^*} = \{e_1,\dots,e_m\}$ where $m = O(k)$.
%Without loss of generality, assume $E_0 = \{e_1,\dots,e_m\}$.
Let $\delta: [m] \times E_0 \rightarrow \{0,1\}$ be an indicator defined as $\delta(t,e) = 1$ if $e_t$ is an edge of the cycle $\gamma_e$ and $\delta(t,e) = 0$ otherwise.
For a labeling $\mathsf{lab}: E_{H^*} \rightarrow \{0,1\}^k$, we have $\mathsf{parity}(\gamma_e) = \sum_{t=1}^m \delta(t,e) \cdot \mathsf{lab}(e_t)$ for any $e \in E_0$.
Therefore, a labeling $\mathsf{lab}$ respects $\xi$ iff $\sum_{t=1}^m \delta(t,\xi(i,j)) \cdot \mathsf{lab}_i(e_t) \neq \sum_{t=1}^m \delta(t,\xi(i,j)) \cdot \mathsf{lab}_j(e_t)$ for all $(i,j) \in P$, or equivalently, $\sum_{t=1}^m \delta(t,\xi(i,j)) \cdot (\mathsf{lab}_i(e_t) \oplus \mathsf{lab}_j(e_t)) = 1$ for all $(i,j) \in P$.
So our task is to find a labeling $\mathsf{lab}$ which minimizes $\sum_{t=1}^m \mathsf{len}(e_t,\mathsf{lab}(e_t))$ subject to $\sum_{t=1}^m \delta(t,\xi(i,j)) \cdot (\mathsf{lab}_i(e_t) \oplus \mathsf{lab}_j(e_t)) = 1$ for all $(i,j) \in P$.

Now consider the following problem: for a pair $(t',\phi)$ where $t' \in [m]$ is an index and $\phi:P \rightarrow \{0,1\}$ is a function, compute a ``partial'' labeling $\mathsf{lab}:\{e_1,\dots,e_{t'}\} \rightarrow \{0,1\}^k$ such that $\sum_{t=1}^{t'} \mathsf{len}(e_t,\mathsf{lab}(e_t))$ is minimized subject to the condition $\sum_{t=1}^{t'} \delta(t,\xi(i,j)) \cdot (\mathsf{lab}_i(e_t) \oplus \mathsf{lab}_j(e_t)) = \phi(i,j)$ for all $(i,j) \in P$.
We want to solve the problem for all pairs $(t',\phi)$.
This can be achieved using dynamic programming as follows.
For a label $l \in \{0,1\}^k$, we denote by $\phi_l:P \rightarrow \{0,1\}$ the function which maps $(i,j) \in P$ to 0 (resp., 1) if the $i$-th bit and the $j$-th bit of $l$ is the same (resp., different).
We consider the index $t'$ from $1$ to $m$.
Suppose now the problems for all pairs with index $t'-1$ have been solved.
To solve for a pair $(t',\phi)$, we enumerate the labeling $\mathsf{lab}(e_{t'})$ for $e_{t'}$.
Fixing $\mathsf{lab}(e_{t'}) = l$, the remaining problem becomes to determine $\mathsf{lab}:\{e_1,\dots,e_{t'-1}\} \rightarrow \{0,1\}^k$ that minimizes $\sum_{t=1}^{t'-1} \mathsf{len}(e_t,\mathsf{lab}(e_t))$ subject to the condition $\sum_{t=1}^{t'-1} \delta(t,\xi(i,j)) \cdot (\mathsf{lab}_i(e_t) \oplus \mathsf{lab}_j(e_t)) = \phi(i,j) \odot \phi_l(i,j)$ for all $(i,j) \in P$, which is exactly the problem for the pair $(t'-1,\phi \odot \phi_l)$.
Thus, provided that we already know the solution for the problem for all pairs with index $t'-1$, we can solve the problem for $(t',\phi)$ in $2^p \cdot p^{O(1)}$ time.
Since there are $2^p m$ pairs $(t',\phi)$ to be considered and $m = O(k)$, the problem for all pairs can be solved in $2^{O(p)}$ time.

Now we see that for a fixed $f_V^*$, one can compute in $2^{O(p)}$ time the optimal $\mathsf{lab}$ and $f_E^*$.
Since there are $n^{O(k)}$ possible $f_V^*$ to be considered, the entire algorithm takes $2^{O(p)} n^{O(k)}$ time, which completes the proof.
\end{proof}

The above lemma directly gives us a $k^{O(p)} n^{O(k)}$-time algorithm to compute the desired labeling-PPM pair.
By Fact~\ref{fact-respect}, it suffices to compute a labeling-PPM pair $(\mathsf{lab},f^*)$ for $H^*$ with the minimum $\mathsf{cost}(f^*)$ such that $\mathsf{lab}$ respects some function $\xi:P \rightarrow E_0$.
Note that the number of different functions $\xi: P \rightarrow E_0$ is at most $(5k)^p$ because $|P| = p$ and $|E_0| \leq 5k$.
We simply enumerate all these functions, and for each function $\xi: P \rightarrow E_0$, we use Lemma~\ref{lem-respect} to compute in $2^{O(p)} n^{O(k)}$ time a labeling-PPM pair $(\mathsf{lab},f^*)$ for $H^*$ with the minimum $\mathsf{cost}(f^*)$ such that $\mathsf{lab}$ respects $\xi$.
Among all the labeling-PPM pairs are computed, we then pick the pair $(\mathsf{lab},f^*)$ with the minimum $\mathsf{cost}(f^*)$.

To compute the desired labeling-PPM pair more efficiently, we observe that in fact, we do not need to try all functions $\xi: P \rightarrow E_0$.
If a family $\varXi$ of functions $\xi:P \rightarrow E_0$ satisfies that any labeling making $H^*$ $P$-good respects some $\xi \in \varXi$, then trying the functions in $\varXi$ is already sufficient.
We show the existence of such a family $\varXi$ of size $k^{O(k)}$.

\begin{lemma} \label{lem-representative}
There exists a family $\varXi$ of $k^{O(k)}$ functions $\xi:P \rightarrow E_0$ such that any labeling making $H^*$ $P$-good respects some $\xi \in \varXi$.
Furthermore, $\varXi$ can be computed in $k^{O(k)}$ time.
\end{lemma}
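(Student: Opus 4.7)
The plan is to build $\varXi$ by enumerating ``annotated refinement chains'' of partitions of $[k]$ and reading off a candidate function from each. The core structural observation is that for any labeling $\mathsf{lab}$ making $H^*$ $P$-good, greedily scanning $E_0$ (adding an edge $e$ to a running set $T$ only when $\varPhi(\gamma_e)$ strictly refines the current meet) yields an ordered set $T = (e_1, \ldots, e_s)$ with $s < k$ such that $\bigodot_{t \le s} \varPhi(\gamma_{e_t}) = \varPhi_{H^*}$ and the intermediate meets $\varPhi^{(0)} = \{[k]\}$, $\varPhi^{(t)} = \varPhi^{(t-1)} \odot \varPhi(\gamma_{e_t})$ form a strict refinement chain (greedy is complete because once $\varPhi(\gamma_e)$ contains the running meet it contains every later meet). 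I would then define $\xi_\mathsf{lab}(i,j) = e_{t^*}$, where $t^*$ is the smallest index with $\varPhi^{(t^*)}$ separating $i$ from $j$, and note that since $\varPhi^{(t^*-1)}$ fails to separate them while $\varPhi^{(t^*)} = \varPhi^{(t^*-1)} \odot \varPhi(\gamma_{e_{t^*}})$ does, the factor $\varPhi(\gamma_{e_{t^*}})$ itself must separate them, so $\xi_\mathsf{lab}$ respects $\mathsf{lab}$.

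To construct $\varXi$ without touching any labeling, I would enumerate every triple $(\varPhi, \mathcal{C}, \sigma)$ in which $\varPhi$ is a partition of $[k]$ that separates all pairs of $P$, $\mathcal{C}$ is an abstract strict refinement chain $\{[k]\} = \varPhi^{(0)} \succ \varPhi^{(1)} \succ \cdots \succ \varPhi^{(s)} = \varPhi$ with $s < k$, and $\sigma : \{1, \ldots, s\} \to E_0$ assigns a distinct edge to each step. From each triple I apply the same first-separator rule to extract a candidate function and throw it into $\varXi$. Coverage of every $P$-good labeling is immediate because the greedy triple extracted from $\mathsf{lab}$ above is one of the enumerated triples.

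For the counting I would read each chain backwards as a sequence of coarsenings from $\varPhi$ to $\{[k]\}$ and invoke Fact~\ref{fact-coarse}: a coarsening step that reduces the number of parts by $d$ contributes a factor of $k^{O(d)}$, so multiplying over the steps makes the exponent telescope to $\sum d = |\varPhi| - 1 \le k - 1$, giving $k^{O(k)}$ chains per terminal partition and size profile. Summing over the $B_k \le k^{O(k)}$ terminal partitions, the at most $2^k$ size profiles, and the at most $|E_0|^s \le k^{O(k)}$ edge labelings leaves the total at $k^{O(k)}$, and the whole enumeration can be carried out in $k^{O(k)}$ time. The hard part will be tightening this telescoping count so that the successive applications of Fact~\ref{fact-coarse} genuinely yield a single $k^{O(k)}$ bound (rather than something like $k^{\Theta(k^2)}$) once summed over chain shapes and terminal partitions; after that, the remaining ingredients---well-definedness of $t^*$, the one-line meet verification that $\xi_\mathsf{lab}$ respects $\mathsf{lab}$, and the bookkeeping to actually list all triples---are routine.
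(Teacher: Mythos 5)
Your proposal follows essentially the same strategy as the paper: enumerate refinement chains of partitions of $[k]$, read off a candidate $\xi$ from each chain by a first-separator rule, and bound the count via Fact~\ref{fact-coarse}. The only departure is cosmetic: the paper keeps the full non-strict length-$m$ sequence $\varPhi_t = \bigodot_{s\le t}\varPhi(\gamma_{e_s})$ (so position $t$ is directly tied to edge $e_t$ and no assignment $\sigma$ is needed) and enumerates non-decreasing size profiles $(z_1,\dots,z_m)\in[k]^m$, of which there are $\binom{m+k-1}{k-1}=k^{O(k)}$ when $m=O(k)$, whereas you greedily compress to a strict chain of length $<k$ and carry an injective edge assignment. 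The concern you flag about the telescoping does not materialize: the exponent in Fact~\ref{fact-coarse} has a universal constant (the proof yields at most $\binom{z+1-i}{2}$ choices at each of $d$ steps, hence at most $z^{2d}$), so the product over a chain is $\prod_i k^{O(d_i)} = k^{O(\sum_i d_i)} = k^{O(k)}$, and multiplying by the at most $2^{k-1}$ strict size profiles, at most $B_k \le k^{O(k)}$ terminal partitions, and at most $|E_0|^{s} \le (5k)^{k}$ injective edge assignments keeps the total at $k^{O(k)}$. The remaining steps you label routine — greedy completeness (once $\varPhi(\gamma_e)$ is refined by the running meet it stays so for all later meets), well-definedness of $t^*$, and the check that $\varPhi(\gamma_{e_{t^*}})$ separates $i$ from $j$ because $\varPhi^{(t^*-1)}$ does not while $\varPhi^{(t^*)} = \varPhi^{(t^*-1)}\odot\varPhi(\gamma_{e_{t^*}})$ does — are indeed correct as sketched.
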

\begin{proof}
%Suppose $E_0 = \{e_1,\dots,e_m\}$ where $m = O(k)$.
%Consider all sequences $(\varPhi_1,\dots,\varPhi_m)$ of partitions of $[k]$ satisfying $\varPhi_1 \succeq \dots \succeq \varPhi_m$.
As the first step of our proof, we establish a bound on the number of sequences of ``finer and finer'' partitions of $[k]$.
Let $m \geq 1$ be an integer.
An $m$-sequence $(\varPhi_1,\dots,\varPhi_m)$ of partitions of $[k]$ is \textit{finer and finer} if $\varPhi_1 \succeq \dots \succeq \varPhi_m$.
%Consider all sequences $(\varPhi_1,\dots,\varPhi_m)$ of partitions of $[k]$ satisfying $\varPhi_1 \succeq \dots \succeq \varPhi_m$.
We show that the total number of finer and finer $m$-sequences is bounded by $(m+k)^{O(k)}$.
To this end, we first observe that the number of non-decreasing sequences $(z_1,\dots,z_m)$ of integers in $[k]$ is $\binom{m+k-1}{k-1} = (m+k)^{O(k)}$.
Therefore, it suffices to show that for any non-decreasing sequence $(z_1,\dots,z_m)$ of integers in $[k]$, the number of finer and finer $m$-sequences $(\varPhi_1,\dots,\varPhi_m)$ satisfying $|\varPhi_i| = z_i$ for all $i \in [m]$ is bounded by $(m+k)^{O(k)}$.
Fix a non-decreasing sequence $(z_1,\dots,z_m)$ of integers in $[k]$.
For convenience, define $\varPhi_{m+1} = \{\{1\},\dots,\{k\}\}$ as finest partition of $[k]$ and let $z_{m+1} = |\varPhi_{m+1}| = k$.
Then we must have $\varPhi_m \succeq \varPhi_{m+1}$.
By applying Fact~\ref{fact-coarse}, for a fixed $\varPhi_{i+1}$ with $|\varPhi_{i+1}| = z_{i+1}$, the number of partitions $\varPhi_i \succeq \varPhi_{i+1}$ with $|\varPhi_i| = z_i$ is $z_{i+1}^{O(d_{i+1})}$ where $d_{i+1} = z_{i+1} - z_i$.
Therefore, by a simple induction argument we see that for an index $t \in [m]$, the number of the possibilities of the subsequence $(\varPhi_t,\dots,\varPhi_m)$ is bounded by $\prod_{i=t}^m z_{i+1}^{O(d_{i+1})} = k^{O(k-z_t)}$.
In particular, the number of finer and finer $m$-sequences $(\varPhi_1,\dots,\varPhi_m)$ satisfying $|\varPhi_i| = z_i$ for all $i \in [m]$ is bounded by $k^{O(k)}$.
Furthermore, we observe that these sequences can be computed in $O(m) + k^{O(k)}$ time by repeatedly using Fact~\ref{fact-coarse}.
Indeed, by Fact~\ref{fact-coarse}, for a fixed subsequence $(\varPhi_{t+1},\dots,\varPhi_m)$, one can compute in $k^{O(d_{t+1})}$ time all $\varPhi_t$ such that $|\varPhi_t| = z_t$ and $\varPhi_t \succeq \varPhi_{t+1}$ time, where $d_{t+1} = z_{t+1} - z_t$.
Therefore, knowing all $k^{O(k-z_{t+1})}$ possible subsequences $(\varPhi_{t+1},\dots,\varPhi_m)$, one can compute all possible subsequences $(\varPhi_t,\dots,\varPhi_m)$ in $k^{O(k-z_t)}$ time.
In particular, all finer and finer $m$-sequences $(\varPhi_1,\dots,\varPhi_m)$ satisfying $|\varPhi_i| = z_i$ for all $i \in [m]$ can be computed in $O(m) + k^{O(k)}$ time.
The $(m+k)^{O(k)}$ non-decreasing sequences $(z_1,\dots,z_m)$ of integers in $[k]$ can be easily enumerated in $(m+k)^{O(k)}$ time, which implies that all finer and finer $m$-sequences of partitions of $[k]$ can be computed in $(m+k)^{O(k)}$ time.

With the above result, we are now ready to prove the lemma.
Suppose $E_0 = \{e_1,\dots,e_m\}$ where $m = O(k)$.
We construct a family $\varXi$ of functions $\xi:P \rightarrow E_0$ as follows.
For every finer and finer $m$-sequence $(\varPhi_1,\dots,\varPhi_m)$ of partitions of $[k]$ satisfying that $i$ and $j$ belong to different parts in $\varPhi_m$ for all $(i,j) \in P$, we include in $\varXi$ a corresponding function $\xi:P \rightarrow E_0$ defined by setting $\xi(i,j) = e_t$ where $t \in [m]$ is the smallest index such that $i$ and $j$ belong to different parts in $\varPhi_t$.
By the above result, we have $|\varXi| = k^{O(k)}$ and $\varXi$ can be computed in $k^{O(k)}$ time.
It suffices to prove that $\varXi$ satisfies the desired property.
Let $\mathsf{lab}:E_{H^*} \rightarrow \{0,1\}^k$ be a labeling that makes $H^*$ $P$-good.
Recall that we have $\varPhi_{H^*} = \bigodot_{t=1}^m \varPhi(\gamma_{e_t})$.
Now we define a finer and finer $m$-sequence $(\varPhi_1,\dots,\varPhi_m)$ of partitions of $[k]$ by setting $\varPhi_t = \bigodot_{s=1}^t \varPhi(\gamma_{e_s})$ for all $t \in [m]$.
Then we have $\varPhi_m = \varPhi_{H^*}$.
Since $H^*$ is $P$-good, we know that $i$ and $j$ belong to different parts in $\varPhi_m$ for all $(i,j) \in P$.
Let $\xi \in \varXi$ be the function corresponding to the sequence $(\varPhi_1,\dots,\varPhi_m)$.
We shall show that $\mathsf{lab}$ respects $\xi$.
Consider a pair $(i,j) \in P$ and suppose $\xi(i,j) = e_t$ for some $t \in [m]$.
We want to verify that $\mathsf{parity}_i(\gamma_{e_t}) \neq \mathsf{parity}_j(\gamma_{e_t})$.
If $t = 1$, then $i$ and $j$ belong to different parts in $\varPhi_1 = \varPhi(\gamma_{e_1}) = \varPhi(\gamma_{e_t})$, i.e., $\mathsf{parity}_i(\gamma_{e_t}) \neq \mathsf{parity}_j(\gamma_{e_t})$.
If $t > 1$, then $i$ and $j$ belong to different parts in $\varPhi_t$ but belong to the same parts in $\varPhi_{t-1}$, which implies that $i$ and $j$ belong to different parts in $\varPhi(\gamma_{e_t})$, i.e., $\mathsf{parity}_i(\gamma_{e_t}) \neq \mathsf{parity}_j(\gamma_{e_t})$.
This completes the proof.
\end{proof}

With the above lemma in hand, we simply construct the family $\varXi$ in $k^{O(k)}$ time, and only try the functions in $\varXi$.
This improves the running time to $2^{O(p)} k^{O(k)} n^{O(k)}$, which is $2^{O(p)} n^{O(k)}$ because $k \leq n$ by our assumption.
\begin{theorem}
\textnormal{\sc Generalized Point-Separation} for connected obstacles in the plane can be solved in $2^{O(p)} n^{O(k)}$ time, where $n$ is the number of obstacles, $k$ is the number of points, and $p$ is the number of point-pairs to be separated.
\end{theorem}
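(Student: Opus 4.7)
The plan is to assemble the machinery already built---Lemma~\ref{lem-PPM1}, Lemma~\ref{lem-PPM2}, Lemma~\ref{lem-respect}, and Lemma~\ref{lem-representative}---and to avoid the $2^{O(k^{2})}$ overhead of the straightforward enumeration by decoupling three choices: the graph structure of the witness $H^{*}$, an ``obstruction witness'' $\xi: P \to E_{0}$, and the actual edge labeling of $H^{*}$. First I would invoke Lemma~\ref{lem:graph-construction} to build $G_\cs$ in $2^{O(k)} n^{O(1)}$ time. Then by Lemma~\ref{lem-PPM2} and Lemma~\ref{lem-PPM1}, finding an optimal $P$-separator reduces to finding, over all $P$-good $k$-labeled graphs $H^{*}$ on at most $4k$ vertices and $5k$ edges and all PPMs $f^{*}$ from $H^{*}$ to $G_\cs$, one pair $(H^{*}, f^{*})$ minimizing $\mathsf{cost}(f^{*})$; Fact~\ref{fact-cost} then guarantees that the vertex set of $\mathit{Im}(f^{*})$ is an optimal $P$-separator of size $\mathsf{cost}(f^{*})$.

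Second, I enumerate only the \emph{graph structures} of $H^{*}$, of which there are $k^{O(k)}$. For each fixed structure I fix spanning trees of its connected components and let $E_{0}$ be the resulting set of non-tree edges; by Fact~\ref{fact-respect}, searching over $P$-good labelings becomes searching over labelings that respect some function $\xi: P \to E_{0}$. A naive enumeration of all $|E_{0}|^{p} = k^{O(p)}$ such functions would only yield a $k^{O(p)} n^{O(k)}$ bound, so instead I invoke Lemma~\ref{lem-representative} to produce in $k^{O(k)}$ time a covering family $\varXi$ of only $k^{O(k)}$ functions with the property that every $P$-good labeling respects some $\xi \in \varXi$. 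For each graph structure and each $\xi \in \varXi$, I call Lemma~\ref{lem-respect} once to obtain, in $2^{O(p)} n^{O(k)}$ time, the minimum-cost labeling-PPM pair that respects $\xi$, and I return the best overall pair.

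The total running time is $k^{O(k)} \cdot k^{O(k)} \cdot 2^{O(p)} n^{O(k)} = 2^{O(p)} n^{O(k)}$, where both $k^{O(k)}$ factors are absorbed into $n^{O(k)}$ under the assumption $k \leq n$ made at the start of Section~\ref{sec-2^p} (the complementary case $k > n$ is handled by the $2^{O(k)}$ brute force already noted there, which first computes $\varPhi_{\cs'}$ via Lemma~\ref{lem-components} and Lemma~\ref{lem-spanningtree} and then applies Lemma~\ref{lem-criterion} to each guessed subset).

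The main obstacle, handled inside Lemma~\ref{lem-representative}, is compressing the set of witness functions $\xi$ from the naive $k^{O(p)}$ bound down to $k^{O(k)}$. The idea is to encode each relevant $\xi$ by a finer-and-finer chain $\varPhi_{1} \succeq \cdots \succeq \varPhi_{m}$ of partitions of $[k]$, obtained by cumulatively combining the tree-cycle partitions $\varPhi(\gamma_{e_{t}})$, and then to count such chains via Fact~\ref{fact-coarse}. Any $P$-good labeling canonically induces such a chain, and the smallest index at which a queried pair $i,j$ ends up in different parts of $\varPhi_{t}$ yields the witness $\xi(i,j)$ required by that labeling. A secondary technical step lives inside Lemma~\ref{lem-respect}: after guessing $f_{V}^{*}$ in $n^{O(k)}$ ways, the residual labeling problem is solved by a left-to-right dynamic program over the edges of $H^{*}$ whose state is a function $\phi: P \to \{0,1\}$ recording which pairs have so far been separated, which contributes the $2^{O(p)}$ factor and keeps the per-structure cost within budget.
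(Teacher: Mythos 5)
Your proposal matches the paper's proof essentially exactly: it follows the same reduction via Lemmas~\ref{lem-PPM1}, \ref{lem-PPM2}, and Fact~\ref{fact-cost}, the same enumeration of graph structures, the same use of Fact~\ref{fact-respect} and Lemma~\ref{lem-representative} to replace enumeration of all $k^{O(p)}$ witness functions with a $k^{O(k)}$-sized covering family, and the same invocation of Lemma~\ref{lem-respect} per $(H^*, \xi)$ pair, with both $k^{O(k)}$ factors absorbed into $n^{O(k)}$ under the $k \le n$ assumption. The two technical sketches you give for the key sub-lemmas (finer-and-finer chains of partitions counted via Fact~\ref{fact-coarse}, and the DP over edges of $H^*$ with state $\phi : P \to \{0,1\}$) also track the paper's internal arguments faithfully.
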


\begin{corollary}
\textnormal{\sc Point-Separation} for connected obstacles in the plane can be solved in $2^{O(k^2)} n^{O(k)}$ time, where $n$ is the number of obstacles and $k$ is the number of points.
\end{corollary}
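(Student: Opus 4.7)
The plan is to derive this as an immediate consequence of the preceding theorem on \gptsep{}. In the \pointsep{} problem, we are given a set $A$ of $k$ points and are required to find a minimum subset $\cs_r \subseteq \cs$ that separates every pair of points in $A$. This is precisely the instance of \gptsep{} in which $P$ consists of all $\binom{k}{2}$ unordered pairs of points from $A$.

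First, I would observe that for this instance we have $p = |P| = \binom{k}{2} = O(k^2)$. Then I would invoke the theorem stating that \gptsep{} can be solved in $2^{O(p)} n^{O(k)}$ time. Substituting $p = O(k^2)$ yields a running time of $2^{O(k^2)} n^{O(k)}$, which is exactly the bound claimed.

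There is essentially no obstacle here: the corollary is a direct specialization of the more general result, and the only thing to verify is the arithmetic substitution of $p$ in terms of $k$. I would phrase the proof as a single sentence reducing to the theorem, possibly noting that \pointsep{} is just the special case of \gptsep{} where $P$ contains all pairs of points in $A$, so the bound follows immediately from the fact that $2^{O(\binom{k}{2})} = 2^{O(k^2)}$.
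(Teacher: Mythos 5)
Your proposal is correct and matches the paper's intended derivation: the corollary is stated immediately after the theorem for \gptsep{} precisely because \pointsep{} is the special case with $P$ equal to all $\binom{k}{2}$ pairs, giving $p = O(k^2)$ and hence $2^{O(k^2)} n^{O(k)}$ time. Nothing further is needed.
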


%\section{\boldmath An Even Faster Algorithm for Pseudodisks}
%\label{sec:pseudodisks-alg}
\section{An Improved Algorithm for Pseudo-disk Obstacles}
\label{sec:pseudodisks-alg}
In this section, we study \gptsep{} for \textit{pseudo-disk} obstacles and obtain an improved algorithm.
To this end, the key observation is the following analog of Lemma~\ref{lem-criterion} for pseudo-disk obstacles.
\begin{lemma} \label{lem-diskplanar}
Suppose $\cs$ consists of pseudo-disk obstacles.
Then a subset $\cs' \subseteq \cs$ is a $P$-separator iff there is a subgraph of the induced subgraph $G_\cs[\cs']$ that is planar and $P$-good.
\end{lemma}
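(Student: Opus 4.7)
The plan is to leverage the proof of Lemma~\ref{lem-criterion}, but only use edges that correspond to break points on the boundary of $U = \bigcup_{S \in \cs'} S$, so that planarity falls out from Fact~\ref{fact-planar}. The reverse direction ($\Leftarrow$) is immediate from Lemma~\ref{lem-criterion} together with the observation that any supergraph of a $P$-good graph is itself $P$-good: the existence of a planar $P$-good subgraph of $G_\cs[\cs']$ forces $G_\cs[\cs']$ to be $P$-good, so $\cs'$ is a $P$-separator.

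For the forward direction ($\Rightarrow$), first I would fix, for every obstacle $S \in \cs'$ and every break point $x$ on $\partial U \cap \partial S$, a plane curve $\tau_S^x$ inside $S$ from $\mathsf{ref}(S)$ to $x$ (which exists because $S$ is connected). Then I would define a subgraph $G' \subseteq G_\cs[\cs']$ consisting of (a) every self-loop on an obstacle $S \in \cs'$ that covers some $a_i$, and (b) for each break point $x$ on $\partial U$ that is an intersection of $\partial S$ and $\partial S'$ with $S,S' \in \cs'$ distinct, an edge $(S,S')$ of $G_\cs$ whose label is the parity of the concatenated curve $\tau_S^x \cdot \tau_{S'}^x$. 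Such an edge exists in $G_\cs$ by construction, since this concatenation is a plane curve inside $S \cup S'$ from $\mathsf{ref}(S)$ to $\mathsf{ref}(S')$.

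Next I would verify that $G'$ is $P$-good by replaying the ``only if'' direction of Lemma~\ref{lem-criterion}. For a pair $(i,j) \in P$ with $\{a_i,a_j\} \cap U \neq \emptyset$, a self-loop in $G'$ already certifies the parity difference. Otherwise, the simple cycle $\hat{\gamma}$ in the planar embedding of $\partial U$ separating the faces containing $a_i$ and $a_j$ consists of arcs $\sigma_1,\dots,\sigma_r$ joined at break points $x_1,\dots,x_r$ on $\partial U$, and by our choice of $\tau_{S_t}^{x_t}$ the corresponding edges all lie in $G'$; the parity computation from Lemma~\ref{lem-criterion} then shows that this cycle in $G'$ has different $i$-th and $j$-th parities. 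Planarity of $G'$ would then follow from Fact~\ref{fact-planar}: the underlying simple graph $(\cs', \{(S,S') : S,S' \text{ contribute to } U\})$ is planar, there are at most two parallel edges between any pair of vertices (since pseudo-disk boundaries meet in at most two points, giving at most two break points per pair), and a second parallel edge can be routed in a thin strip next to the first while self-loops can be drawn inside any incident face without creating crossings.

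The main obstacle is checking that restricting attention to break-point edges does not lose the ability to build the separating cycles needed in Lemma~\ref{lem-criterion}. This reduces to the observation that the separating cycle $\hat{\gamma}$ is entirely contained in $\partial U$, so its connection points are automatically break points on $\partial U$; consequently every edge called for in the parity argument of Lemma~\ref{lem-criterion} is available in $G'$, and the rest of that argument transfers verbatim.
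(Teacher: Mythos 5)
Your overall plan matches the paper's: use Fact~\ref{fact-planar} to get planarity, and show that the specific separating cycle built in the ``only if'' direction of Lemma~\ref{lem-criterion} already uses only break-point edges. The paper, however, simply keeps in its subgraph \emph{every} edge of $G_\cs[\cs']$ between a pair of obstacles that contributes to $U$ (together with all self-loops), observes that this multigraph is obtained from the planar simple graph of Fact~\ref{fact-planar} by adding parallel edges and self-loops (which never destroys planarity), and notes that the cycle $\gamma$ constructed in Lemma~\ref{lem-criterion} happens to use only such edges, so it already lives in the subgraph.

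Your variant keeps only one designated edge per break point, labeled by the parity of $\tau_S^x \cdot \tau_{S'}^x$, and this introduces a gap in the claim that ``every edge called for in the parity argument of Lemma~\ref{lem-criterion} is available in $G'$'' and that ``the rest of that argument transfers verbatim.'' The edge labels in Lemma~\ref{lem-criterion}'s construction are parities of the curves $\tau_t' = \tau_{t-1} \cdot \sigma_{t-1} \cdot \tau_t$, which pass through both $x_{t-1}$ and $x_t$ and traverse the arc $\sigma_{t-1}$ along $\partial U$; your edge for the break point $x_t$ carries the parity of $\tau_{S_{t-1}}^{x_t} \cdot \tau_{S_t}^{x_t}$, a curve passing only through $x_t$ and avoiding the arcs entirely. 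These two curves can have different crossing parities with the reference curves, so they generally name different parallel edges of $G_\cs$, and the cycle produced by Lemma~\ref{lem-criterion}'s computation need not lie in your $G'$. Your conclusion is still salvageable, but it requires a separate (short) argument that you did not supply: the closed curve obtained by concatenating $\tau_{S_{t-1}}^{x_t} \cdot (\tau_{S_t}^{x_t})^{-1}$ around the cycle is obtained from the separating cycle $\hat\gamma$ by replacing each arc $\sigma_t$ with the detour $x_t \to \mathsf{ref}(S_t) \to x_{t+1}$ inside the simply connected closed region $\overline{S_t}$, so the two closed curves are homotopic inside $\overline{U}$, and since $a_i,a_j$ lie in out-faces this homotopy does not change the mod-$2$ crossing count with $\pi_{i,j}$. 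Either add this homotopy step, or sidestep the issue by doing what the paper does: include all parallel edges between contributing pairs (planarity is preserved for any number of parallel edges, so the bound of two is not needed) and then the cycle from Lemma~\ref{lem-criterion} sits in the subgraph without modification.
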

\begin{proof}
The ``if'' part follows immediately from Lemma~\ref{lem-criterion}.
So it suffices to show the ``only if'' part.
Let $\cs' \subseteq \cs$ be a $P$-separator and $U = \bigcup_{S \in \mathcal{S}'} S$.
Recall that two obstacles $S,S' \in \mathcal{S}'$ \textit{contribute} to $U$ if an intersection point of the boundaries of $S$ and $S'$ is a break point on the boundary of $U$ (see Section~\ref{sec:prelims}).
By Fact~\ref{fact-planar}, the graph $G' = (\mathcal{S}',E)$ where $E = \{(S,S'): S,S' \in \cs' \text{ contribute to } U\}$ is planar.
We define a subgraph $G$ of the induced subgraph $G_\cs[\cs']$ as follows.
The vertex set of $G$ is $\cs'$.
For each edge $e = (S,S')$ of $G_\cs[\cs']$, if $S,S'$ contribute to $U$ or $S = S'$, then we include $e$ in $G$, otherwise we discard it.
We observe that $G'$ is planar.
Indeed, $G$ can be obtained from $G'$ by adding parallel edges and self-loops.
Since $G'$ is planar and adding parallel edges and self-loops does not change planarity, $G$ is also planar.
It now suffices to prove that $G$ is $P$-good.
Consider a pair $(i,j) \in P$ and we want to show the existence of a cycle $\gamma$ in $G$ such that $\mathsf{parity}_i(\gamma) \neq \mathsf{parity}_j(\gamma)$.
In the proof of Lemma~\ref{lem-criterion}, we constructed a cycle $\gamma$ in $G_\cs[\cs']$ satisfying $\mathsf{parity}_i(\gamma) \neq \mathsf{parity}_j(\gamma)$.
In that construction, the cycle $\gamma$ also satisfies the following property: for each pair $(S,S')$ of two consecutive vertices in $\gamma$, there are two \textit{adjacent} arcs of the boundary of $U$ contributed by $S$ and $S'$ respectively, which implies that $S,S'$ contribute to $U$.
Therefore, $\gamma$ is also a cycle in $G$.
It follows that $G$ is $P$-good, completing the proof.
\end{proof}

With the above lemma in hand, we are now ready to prove an analog of Lemma~\ref{lem-PPM2} for pseudo-disk obstacles.
The only difference is that here we can require $H^*$ to be planar.
\begin{lemma} \label{lem-diskplanarPPM}
Suppose $\cs$ is a set of pseudo-disk obstacles.
Then there exists a $P$-good $k$-labeled planar graph $H^*$ with at most $4k$ vertices and $5k$ edges and a PPM $f^*$ from $H^*$ to $G_\cs$ such that $\mathsf{cost}(f^*) = \mathsf{opt}$.
\end{lemma}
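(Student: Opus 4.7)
The plan is to mimic the proof of Lemma~\ref{lem-PPM2}, but to use Lemma~\ref{lem-diskplanar} at the outset so that planarity is built in from the start and preserved throughout the construction. Let $\cs_{\text{opt}} \subseteq \cs$ be a $P$-separator of minimum size. Instead of starting from the induced subgraph $G_\cs[\cs_{\text{opt}}]$, I would invoke Lemma~\ref{lem-diskplanar} to obtain a subgraph $G^\circ$ of $G_\cs[\cs_{\text{opt}}]$ which is simultaneously \emph{planar} and $P$-good. This is the only place where the pseudo-disk hypothesis is used; from here on the argument is purely combinatorial.

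Next, I would take a minimal $P$-good subgraph $G$ of $G^\circ$, obtained by greedily deleting vertices and edges while maintaining $P$-goodness. Since planarity is closed under taking subgraphs, $G$ is still planar. Exactly as in the proof of Lemma~\ref{lem-PPM2}, minimality implies $G$ has no vertex of degree $0$ or $1$; picking spanning trees of the connected components and letting $E_0$ be the set of non-tree edges, Fact~\ref{fact-represent} combined with Lemma~\ref{lem-components} and Lemma~\ref{lem-spanningtree} forces $|E_0| < k$, which in turn gives at most $4k$ marked vertices and at most $5k$ chains in the decomposition of $G$ into marked vertices and chains of degree-$2$ vertices.

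I then define $H^*$ by the same path-contraction procedure: the vertices of $H^*$ are in bijection with the marked vertices of $G$, the edges of $H^*$ are in bijection with the chains in $K$, and for each chain $\pi_e$ the label $\mathsf{lab}(e)$ is set to $\mathsf{parity}(\pi_e)$. The PPM $f^* = (f_V^*, f_E^*)$ maps each vertex to the corresponding marked vertex of $G$ and each edge to the corresponding chain, and the cost equality $\mathsf{cost}(f^*) = |\cs_{\text{opt}}| = \mathsf{opt}$ as well as the $P$-goodness of $H^*$ follow verbatim from the proof of Lemma~\ref{lem-PPM2}. The crucial new observation is that path-contraction preserves planarity: suppressing a degree-$2$ vertex (i.e., replacing a path $u\!-\!w\!-\!v$ with a single edge $u\!-\!v$) is the combinatorial inverse of subdivision, so a planar embedding of $G$ induces a planar embedding of $H^*$ after all chains are contracted and parallel edges/self-loops are kept.

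The main potential obstacle is making sure that the planar structure survives all three operations performed on $G^\circ$: taking a minimal $P$-good subgraph, suppressing the degree-$2$ chain vertices, and allowing parallel edges and self-loops to appear in $H^*$. Each of these is either a classical planarity-preserving operation (subgraph, edge contraction of a degree-$2$ path) or does not affect planarity in the multi-graph sense (adding parallel edges/self-loops to an already planar embedding), so the verification is routine; no new quantitative estimates beyond those of Lemma~\ref{lem-PPM2} are needed. The resulting $H^*$ is therefore a planar $k$-labeled graph with at most $4k$ vertices and $5k$ edges, $P$-good, together with a PPM $f^*$ of cost $\mathsf{opt}$, as required.
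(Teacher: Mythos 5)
Your proof is correct and takes essentially the same approach as the paper: invoke Lemma~\ref{lem-diskplanar} to obtain a planar $P$-good subgraph, take a minimal $P$-good subgraph of it (still planar since subgraphs preserve planarity), and observe that the path-contraction of Lemma~\ref{lem-PPM2} preserves planarity.
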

\begin{proof}
Recall that in the proof of Lemma~\ref{lem-PPM2}, we first took a minimal $P$-good subgraph $G$ of the induced subgraph $G_\cs[\cs']$, and then obtained $H^*$ by applying a path-contraction procedure on $G$.
The choice of $G$ is arbitrary as long as it is a minimal $P$-good subgraph of $G_\cs[\cs']$.
Furthermore, if $G$ is planar, then the resulting $H^*$ is also planar because the path-contraction procedure preserves planarity.
Therefore, it suffices to show that $G_\cs[\cs']$ has a minimal $P$-good subgraph that is planar.
By Lemma~\ref{lem-diskplanar}, there exists a $P$-good subgraph of $G_\cs[\cs']$ that is planar.
Since subgraphs of a planar graph are also planar, there exists a minimal $P$-good subgraph of $G_\cs[\cs']$ that is planar, which completes the proof.
\end{proof}

Now we explain how the planarity of $H^*$ in Lemma~\ref{lem-diskplanarPPM} helps us solve the problem more efficiently.
Recall how our algorithm in Section~\ref{sec-2^p} works.
We first enumerate the graph structure $H^* = (V_{H^*},E_{H^*})$ of $H^*$.
For a fixed graph structure, let $C_1,\dots,C_r$ be the connected components of $H^*$, and $T_1,\dots,T_r$ be spanning trees of $C_1,\dots,C_r$, respectively.
Let $E_0 \subseteq E_{H^*}$ be the set of edges that are not in $T_1,\dots,T_r$.
We then create the family $\varXi$ of functions $\xi: P \rightarrow E_0$ in Lemma~\ref{lem-representative}.
For each $\xi \in \varXi$, we use the algorithm of Lemma~\ref{lem-respect} to efficiently compute the ``optimal'' labeling-PPM pair $(\mathsf{lab},f^*)$ for $H^*$ satisfying the condition that $\mathsf{lab}$ respects $\xi$.
%After trying all possible graph structures of $H^*$ and all functions in $\varXi$ for each graph structure, the best PPM we find is 
Here we apply the same framework, but replace Lemma~\ref{lem-respect} with an improved algorithm which works for the case that $H^*$ is planar.
The key ingredient of this improved algorithm is the planar separator theorem, which allows us to solve the problem of Lemma~\ref{lem-respect} more efficiently using divide-and-conquer when $H^*$ is planar.
\begin{lemma} \label{lem-planarrespect}
Suppose $H^*$ is planar.
Given $\xi:P \rightarrow E_0$, one can compute in $2^{O(p)} n^{O(\sqrt{k})}$ time a labeling-PPM pair $(\mathsf{lab},f^*)$ for $H^*$ which minimizes $\mathsf{cost}(f^*)$ subject to the condition that $\mathsf{lab}$ respects $\xi$.
\end{lemma}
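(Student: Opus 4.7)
The plan is to follow the algorithm of Lemma~\ref{lem-respect} but replace its brute enumeration of $f_V^*$ (which cost $n^{O(k)}$) by a divide-and-conquer using the planar separator theorem, exploiting the assumed planarity of $H^*$. Since $H^*$ has $O(k)$ vertices, there is a balanced separator $S \subseteq V_{H^*}$ of size $O(\sqrt{k})$ whose removal partitions $V_{H^*} \setminus S$ into sets $V_A, V_B$ of size at most $\tfrac{2}{3}|V_{H^*}|$ each with no edges between them. I would enumerate $f_V^*|_S$ in $n^{O(\sqrt{k})}$ ways, and for each such guess, solve the A-side and B-side subproblems independently and then glue.

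Fix a guess of $f_V^*|_S$. The edges of $H^*$ partition as $E_A \cup E_B \cup E_S \cup E_{AS} \cup E_{BS}$, and because $f_V^*|_S$ is fixed, the cost of each edge in $E_{AS}$ depends only on its $V_A$-endpoint; hence $E_A \cup E_{AS}$ can be handled entirely within the A-subproblem and symmetrically for B. The constraint that $\mathsf{lab}$ respects $\xi$ decomposes additively over edges: for each $(i,j) \in P$ it reads $\sum_{e_t \in \gamma_{\xi(i,j)}} (\mathsf{lab}_i(e_t) \oplus \mathsf{lab}_j(e_t)) = 1$ over $\mathbb{F}_2$. For the A-side I would therefore compute a table $\tau_A : \{0,1\}^P \to \mathbb{Z}_{\geq 0} \cup \{\infty\}$, where $\tau_A(\phi)$ is the minimum cost of completing $f_V^*$ on $V_A$ and choosing $\mathsf{lab}$ on $E_A \cup E_{AS}$ so that the partial parity sum contributed by these edges equals $\phi$. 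Compute $\tau_B$ analogously, and a table $\tau_S$ for $E_S$ (a direct $2^{O(p)}$ enumeration since $f_V^*|_S$ is already fixed). Combine by XOR convolution,
\[ \tau(\phi) \;=\; \min_{\phi_A \oplus \phi_B \oplus \phi_S = \phi}\bigl(\tau_A(\phi_A) + \tau_B(\phi_B) + \tau_S(\phi_S)\bigr), \]
which takes $2^{O(p)}$ time, and take $\tau(\mathbf{1})$ as the value for this guess of $f_V^*|_S$.

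The A- and B-subproblems have exactly the same form (compute a $2^p$-sized table on a planar subgraph with a boundary of pre-fixed $f_V^*$ values), so the algorithm recurses. Writing $T(k')$ for the running time on a subproblem with $k'$ vertices, the recurrence is
\[ T(k') \;\leq\; n^{O(\sqrt{k'})} \cdot \bigl(2\,T(\tfrac{2}{3}k') + 2^{O(p)}\bigr), \]
which unrolls to $T(k) = 2^{O(p)}\, n^{O(\sqrt{k})}$, since $\sum_{i \geq 0} \sqrt{(2/3)^i k} = O(\sqrt{k})$ by geometric series and the per-level $2^{O(p)}$ and $2^{i}$ factors are absorbed by the top-level $n^{O(\sqrt{k})}$.

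The main technical point is that the DP state size stays at $2^p$ throughout the recursion, which works precisely because the respects-$\xi$ constraint decomposes additively over any edge partition; this is exactly the structural property already used by the sequential DP of Lemma~\ref{lem-respect}, and the planar-separator recursion simply converts that sequential DP into a tree-structured one. Planarity of the subproblems is automatic since every subgraph of a planar graph is planar, so the separator theorem applies at every level of recursion.
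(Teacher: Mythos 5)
Your proposal takes essentially the same route as the paper's proof: use the planar separator theorem to split $H^*$ into two pieces plus an $O(\sqrt{k})$-size separator, enumerate the $f_V^*$-image of the separator in $n^{O(\sqrt{k})}$ ways, maintain a $2^p$-sized DP table indexed by partial parity vectors $\phi : P \to \{0,1\}$ in each subproblem, combine the tables by XOR convolution in $2^{O(p)}$ time, and recurse --- yielding the recurrence that unrolls to $2^{O(p)} n^{O(\sqrt{k})}$. The only cosmetic difference is your five-way edge partition with a separate table $\tau_S$ for separator-internal edges, whereas the paper folds those edges into one of the two halves (using $E_{H_1} = E(H[V_1 \cup X])$ and $E_{H_2} = E_H \setminus E_{H_1}$), but this does not change the argument.
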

\begin{proof}
As in the proof of Lemma~\ref{lem-respect}, suppose $E_{H^*} = \{e_1,\dots,e_m\}$ where $m = O(k)$.
%Without loss of generality, assume $E_0 = \{e_1,\dots,e_m\}$.
Let $\delta: [m] \times E_0 \rightarrow \{0,1\}$ be an indicator defined as $\delta(t,e) = 1$ if $e_t$ is an edge of the cycle $\gamma_e$ and $\delta(t,e) = 0$ otherwise.
Consider a triple $(H,V',f_V')$, where $H = (V_H,E_H)$ is a subgraph of $H^*$, $V' \subseteq V_H$ is a subset of the vertex set of $H$, and $f_V': V' \rightarrow \mathcal{S}$ is a mapping.
For such a triple, we define a corresponding problem: for every function $\phi: P \rightarrow \{0,1\}$, computing a labeling-PPM pair $(\mathsf{lab},f)$ for $H$ (i.e., $\mathsf{lab}: E_H \rightarrow \{0,1\}^k$ is a labeling for the edges of $H$ and $f$ is a PPM from $H$ to $G_\mathcal{S}$ with respect to the labeling $\mathsf{lab}$) that minimizes $\mathsf{cost}(f)$ subject to \textbf{(i)} $f$ is compatible with $f_V'$, i.e., $f$ maps every $v \in V'$ to $f_V'(v)$ and \textbf{(ii)} $\sum_{e_t \in E_H} \delta(t,\xi(i,j)) \cdot (\mathsf{lab}_i(e_t) \oplus \mathsf{lab}_j(e_t)) = \phi(i,j)$.

We show how to solve the problem instance $(H,V',f_V')$ efficiently using divide-and-conquer.
Let $c$ be a sufficiently large constant.
If $|V_H| \leq c$, we simply solve the instance using brute-force in $O(1)$ time.
Assume $|V_H| > c$.
Since $H^*$ is planar, $H$ is also planar.
Thus, by the planar separator theorem, we can find in $|V_H|^{O(1)}$ time a partition of $V_H$ into three sets $V_1,V_2,X$ such that \textbf{(i)} there is no edge in $E_H$ between $V_1$ and $V_2$, \textbf{(ii)} $|X| \leq 3\sqrt{|V_H|}$, and \textbf{(iii)} $|V_1| \leq \frac{2}{3} |V_H|$ and $|V_2| \leq \frac{2}{3} |V_H|$.
We define two subgraphs $H_1$ and $H_2$ of $H$ as follows.
The graph $H_1 = (V_{H_1},E_{H_1})$ is the induced subgraph $H[V_1 \cup X]$, and the graph $H_2 = (V_{H_2},E_{H_2})$ is defined as $V_{H_2} = V_2 \cup X$ and $E_{H_2} = E_H \backslash E_{H_1}$.
Observe that $H_1$ and $H_2$ cover all the vertices and edges of $H$.
In addition, $H_1$ and $H_2$ share the common vertices in $X$ and do not share any common edges.
Let $V_1' = (X \cup V') \cap V_{H_1}$ and $V_2' = (X \cup V') \cap V_{H_2}$.
We enumerate all functions $g : X \rightarrow \cs$ that compatible with $f_V'$, i.e., $g(v) = f_V'(v)$ for all $v \in X \cap V'$.
The number of such functions is $n^{O(\sqrt{|V_H|})}$ because $|X| = O(\sqrt{|V_H|})$.
For a fixed function $g : X \rightarrow \mathcal{S}$, let $g': X \cup V' \rightarrow \cs$ be the function obtained by gluing $g$ and $f_V'$, i.e., $g'(v) = g(v)$ on $X$ and $g'(v) = f_V'(v)$ on $V'$.
We then recursively solve the two problem instances $\mathsf{Prob}_{g,1} = (H_1,V_1',g_1')$ and $\mathsf{Prob}_{g,2} = (H_2,V_2',g_2')$ where $g_1'$ (resp., $g_2'$) is the function obtained by restricting $g'$ to $V_1'$ (resp., $V_2'$).
After all functions $g : X \rightarrow \cs$ are considered, we collect all the solutions for the problem instances $\mathsf{Prob}_{g,1}$ and $\mathsf{Prob}_{g,2}$.

We are going to use these solutions to obtain the solution for the problem instance $(H,V',f_V')$.
Recall that for every function $\phi: P \rightarrow \{0,1\}$, we want to compute a labeling-PPM pair $(\mathsf{lab},f)$ for $H$ that minimizes $\mathsf{cost}(f)$ subject to \textbf{(i)} $f$ is compatible with $f_V'$ and \textbf{(ii)} $\sum_{e_t \in E_H} \delta(t,\xi(i,j)) \cdot (\mathsf{lab}_i(e_t) \oplus \mathsf{lab}_j(e_t)) = \phi(i,j)$.
We first guess how the desired PPM $f$ maps the vertices in $X$, which can be described as a function $g: X \rightarrow \mathcal{S}$.
There are in total $n^{O(\sqrt{|V_H|})}$ guesses we need to make.
Now suppose our guess for $g$ is correct.
As before, we define $g': X \cup V' \rightarrow \cs$ as the function obtained by gluing $g$ and $f_V'$.
Note that $f$ is compatible with $g'$.
Let $(\mathsf{lab}_1,f_1)$ and $(\mathsf{lab}_2,f_2)$ denote labeling-PPM pairs for $H_1$ and $H_2$, respectively, obtained by restricting $(\mathsf{lab},f)$ to $H_1$ and $H_2$.
Define $\phi_1: P \rightarrow \{0,1\}$ as $\phi_1(i,j) = \sum_{e_t \in E_{H_1}} \delta(t,\xi(i,j)) \cdot (\mathsf{lab}_i(e_t) \oplus \mathsf{lab}_j(e_t))$ and $\phi_2: P \rightarrow \{0,1\}$ as $\phi_2(i,j) = \sum_{e_t \in E_{H_2}} \delta(t,\xi(i,j)) \cdot (\mathsf{lab}_i(e_t) \oplus \mathsf{lab}_j(e_t))$.
We observe that $f_1$ and $f_2$ must be compatible with $g_1'$ and $g_2'$, respectively, where $g_1'$ (resp., $g_2'$) is the function obtained by restricting $g'$ to $V_1'$ (resp., $V_2'$), because $f$ is compatible with $g'$.
Also, we have $\phi = \phi_1 \oplus \phi_2$ and $\mathsf{cost}(f) = \mathsf{cost}(f_1) + \mathsf{cost}(f_2) - |X|$, because $V_{H_1} \cap V_{H_2} = X$ and $\{E_{H_1},E_{H_2}\}$ is a partition of $E_H$.
On the other hand, as long as $f_1$ and $f_2$ are compatible with $g_1'$ and $g_2'$ respectively and $\phi = \phi_1 \oplus \phi_2$, we can always glue the two labeling-PPM pairs $(\mathsf{lab}_1,f_1)$ and $(\mathsf{lab}_2,f_2)$ to obtain a labeling-PPM pair $(\mathsf{lab},f)$ for $H$ satisfying $\mathsf{cost}(f) = \mathsf{cost}(f_1) + \mathsf{cost}(f_2) - |X|$ such that \textbf{(i)} $f$ is compatible with $g'$ and \textbf{(ii)} $\sum_{e_t \in E_H} \delta(t,\xi(i,j)) \cdot (\mathsf{lab}_i(e_t) \oplus \mathsf{lab}_j(e_t)) = \phi(i,j)$.
%Therefore, by the optimality of $\mathsf{cost}(f)$, the labeling-PPM pair $(\mathsf{lab}_1,f_1)$ (resp., $(\mathsf{lab}_2,f_2)$) must minimize $\mathsf{cost}(f_1)$ (resp., $\mathsf{cost}(f_2)$) subject to \textbf{(i)} $f_1$ is compatible with $g_1'$ (resp., $f_2$ is compatible with $g_2'$) and \textbf{(ii)} $\sum_{e_t \in E_{H_1}} \delta(t,\xi(i,j)) \cdot (\mathsf{lab}_i(e_t) \oplus \mathsf{lab}_j(e_t)) = \phi_1(i,j)$ (resp., $\sum_{e_t \in E_{H_2}} \delta(t,\xi(i,j)) \cdot (\mathsf{lab}_i(e_t) \oplus \mathsf{lab}_j(e_t)) = \phi_2(i,j)$).
%So $(\mathsf{lab}_1,f_1)$ and $(\mathsf{lab}_2,f_2)$ are in the solutions of the problem instances $\mathsf{Prob}_{g,1}$ and $\mathsf{Prob}_{g,2}$, respectively, which we have solved.
Therefore, we can solve the problem as follows.
We simply guess the functions $\phi_1$ and $\phi_2$ satisfying $\phi_1 \oplus \phi_2 = \phi$.
There are in total $2^p$ guesses we need to make.
Suppose our guess is correct.
We retrieve the solution $(\mathsf{lab}_1,f_1)$ of the problem instance $\mathsf{Prob}_{g,1}$ for the function $\phi_1$ and the solution $(\mathsf{lab}_2,f_2)$ of the problem instance $\mathsf{Prob}_{g,2}$ for the function $\phi_2$, which have already been computed.
We know that $(\mathsf{lab}_1,f_1)$ (resp., $(\mathsf{lab}_2,f_2)$) minimizes $\mathsf{cost}(f_1)$ (resp., $\mathsf{cost}(f_2)$) subject to \textbf{(i)} $f_1$ is compatible with $g_1'$ (resp., $f_2$ is compatible with $g_2'$) and \textbf{(ii)} $\sum_{e_t \in E_{H_1}} \delta(t,\xi(i,j)) \cdot (\mathsf{lab}_i(e_t) \oplus \mathsf{lab}_j(e_t)) = \phi_1(i,j)$ (resp., $\sum_{e_t \in E_{H_2}} \delta(t,\xi(i,j)) \cdot (\mathsf{lab}_i(e_t) \oplus \mathsf{lab}_j(e_t)) = \phi_2(i,j)$).
By gluing $(\mathsf{lab}_1,f_1)$ and $(\mathsf{lab}_2,f_2)$, we obtain a labeling-PPM pair $(\mathsf{lab},f)$ for $H$, which is what we want because of the optimality of $(\mathsf{lab}_1,f_1)$ and $(\mathsf{lab}_2,f_2)$ and the fact that our guesses for $g$ and $\phi_1,\phi_2$ are all correct.

Finally, we analyze the running time of the above algorithm.
Let $T(h)$ denote the time cost for solving a problem instance $(H,V',f_V')$ with $|V_H| = h$.
We have $T(h) = O(1)$ for $h \leq c$, because we use brute-force for the case $h \leq c$.
Suppose $h > c$.
In this case, we have recursive calls on the subgraphs $H_1$ and $H_2$ of $H$.
Note that $H_1 = |V_1| + |X| \leq \frac{2}{3}h + 3\sqrt{h} \leq \frac{3}{4}h$, because $h>c$ and $c$ is sufficiently large.
Similarly, we have $H_2 \leq \frac{3}{4}h$.
The number of recursive calls is $n^{O(\sqrt{h})}$.
Besides the recursive calls, all work can be done in $2^{O(p)} n^{O(\sqrt{k})}$ time.
Therefore, we have the recurrence $T(h) = n^{O(\sqrt{h})} \cdot T(\frac{3}{4} h) + 2^{O(p)} n^{O(\sqrt{h})}$, which solves to $T(h) = 2^{O(p)} n^{O(\sqrt{h})}$.
To solve the problem of the lemma, the initial call is for the problem instance $(H^*,\empty,\mathsf{null})$, which takes $2^{O(p)} n^{O(\sqrt{k})}$ time since $|V_{H^*}| = O(k)$.
\end{proof}

Replacing Lemma~\ref{lem-respect} with Lemma~\ref{lem-planarrespect}, we can apply the algorithm in Section~\ref{sec-2^p} to solve the generalized point-separation problem in $2^{O(p)} k^{O(k)} n^{O(\sqrt{k})}$ time.
\begin{theorem}
\textnormal{\sc Generalized Point-Separation} for pseudo-disk obstacles in the plane can be solved in $2^{O(p)} k^{O(k)} n^{O(\sqrt{k})}$ time, where $n$ is the number of obstacles, $k$ is the number of points, and $p$ is the number of point-pairs to be separated.
\end{theorem}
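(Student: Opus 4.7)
The plan is to run the algorithm of Section~\ref{sec-2^p} verbatim, with exactly two pseudo-disk-specific substitutions. First, when enumerating candidate witness graphs $H^*$ I restrict attention to \emph{planar} structures on at most $4k$ vertices and $5k$ edges; this restriction is legal because Lemma~\ref{lem-diskplanarPPM} guarantees that for pseudo-disk obstacles some optimal witness $H^*$ is planar. Second, after fixing such a planar $H^*$, choosing spanning trees of its components to expose the non-tree set $E_0$, and constructing the family $\varXi$ of $k^{O(k)}$ functions $\xi: P \to E_0$ furnished by Lemma~\ref{lem-representative}, I invoke Lemma~\ref{lem-planarrespect} in place of Lemma~\ref{lem-respect} to compute, for each $\xi \in \varXi$, an optimal labeling-PPM pair $(\mathsf{lab},f^*)$ for $H^*$ whose labeling respects $\xi$. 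I then return the image vertex set of the globally best pair.

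Correctness follows exactly as in Section~\ref{sec-2^p}. By Fact~\ref{fact-respect}, every labeling that makes $H^*$ be $P$-good respects some $\xi \in \varXi$, so when both the planar structure of $H^*$ and $\xi \in \varXi$ are correctly guessed, Lemma~\ref{lem-planarrespect} returns a labeling-PPM pair of cost $\mathsf{opt}$, using Lemma~\ref{lem-diskplanarPPM} to guarantee a planar optimal $H^*$. Applying Lemma~\ref{lem-PPM1} together with Fact~\ref{fact-cost} to the best returned pair $f^*$ converts its image into a $P$-separator of size at most $\mathsf{cost}(f^*) = \mathsf{opt}$, which is therefore optimal.

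For the running time, the number of planar graph structures on $\le 4k$ vertices and $\le 5k$ edges is $k^{O(k)}$, and $|\varXi| = k^{O(k)}$ by Lemma~\ref{lem-representative}; each call to Lemma~\ref{lem-planarrespect} costs $2^{O(p)} n^{O(\sqrt{k})}$. Multiplying yields $k^{O(k)} \cdot k^{O(k)} \cdot 2^{O(p)} n^{O(\sqrt{k})} = 2^{O(p)} k^{O(k)} n^{O(\sqrt{k})}$, matching the claimed bound (invoking the $k \le n$ assumption from Section~\ref{sec-2^p} whenever convenient).

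The only substantive obstacle has already been discharged inside Lemma~\ref{lem-planarrespect}: the planar-separator divide-and-conquer on $H^*$ must simultaneously carry a guessed vertex map on the separator $X$ \emph{and} a $\{0,1\}^{|P|}$ signature recording the partial XOR contribution toward each pair's constraint $\sum_{e_t} \delta(t,\xi(i,j))\cdot(\mathsf{lab}_i(e_t)\oplus \mathsf{lab}_j(e_t)) = \phi(i,j)$, and these pieces must glue consistently across $X$. It is precisely the need to guess how $\phi$ splits into $\phi_1 \oplus \phi_2$ across the separator that introduces the $2^{O(p)}$ factor, while the separator size $O(\sqrt{k})$ produces the $n^{O(\sqrt{k})}$ factor via the standard recurrence $T(h) = n^{O(\sqrt{h})} T(\tfrac{3}{4}h) + 2^{O(p)} n^{O(\sqrt{h})}$. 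Given that lemma, the present theorem is essentially a bookkeeping corollary.
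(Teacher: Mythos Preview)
Your proposal is correct and follows essentially the same approach as the paper: the paper's proof is the single sentence ``Replacing Lemma~\ref{lem-respect} with Lemma~\ref{lem-planarrespect}, we can apply the algorithm in Section~\ref{sec-2^p} to solve the generalized point-separation problem in $2^{O(p)} k^{O(k)} n^{O(\sqrt{k})}$ time,'' and your write-up simply unpacks this with the same ingredients (planar $H^*$ via Lemma~\ref{lem-diskplanarPPM}, the family $\varXi$ from Lemma~\ref{lem-representative}, and the separator-based Lemma~\ref{lem-planarrespect}).
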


\begin{corollary}
\textnormal{\sc Point-Separation} for pseudo-disk obstacles in the plane can be solved in $2^{O(k^2)} n^{O(\sqrt{k})}$ time, where $n$ is the number of obstacles and $k$ is the number of points.
\end{corollary}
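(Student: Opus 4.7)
The plan is to obtain the corollary as a direct specialization of the preceding theorem. \textsc{Point-Separation} is exactly the instance of \textsc{Generalized Point-Separation} in which the set $P$ of pairs to be separated consists of \emph{all} pairs of points of $A$. Therefore, on an input with $k$ points, the number of pairs is $p = \binom{k}{2} = O(k^2)$, and the obstacles are assumed to be pseudo-disks, so we are in the hypothesis of the theorem.

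Next I would simply substitute $p = O(k^2)$ into the running time bound $2^{O(p)} k^{O(k)} n^{O(\sqrt{k})}$ from the theorem. The first factor becomes $2^{O(k^2)}$, and the second factor satisfies $k^{O(k)} = 2^{O(k \log k)} = 2^{O(k^2)}$, so the two leading factors collapse into a single $2^{O(k^2)}$. The $n^{O(\sqrt{k})}$ factor is preserved verbatim, since it depends only on $k$ and $n$. This gives the claimed bound of $2^{O(k^2)} n^{O(\sqrt{k})}$.

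There is essentially no obstacle here beyond bookkeeping: the corollary is a reduction of parameters, not a new argument. The only thing worth stating explicitly in the write-up is that \textsc{Point-Separation} really is the ``all pairs'' instance of \textsc{Generalized Point-Separation}, so that applying the theorem is legitimate, and that $k \log k = O(k^2)$ so that the $k^{O(k)}$ factor is absorbed.
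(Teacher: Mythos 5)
Your proposal is correct and takes exactly the approach the paper intends: \textsc{Point-Separation} is the all-pairs instance of \textsc{Generalized Point-Separation}, so $p = \binom{k}{2} = O(k^2)$, and substituting into $2^{O(p)} k^{O(k)} n^{O(\sqrt{k})}$ while absorbing $k^{O(k)} = 2^{O(k\log k)} \subseteq 2^{O(k^2)}$ gives the stated bound. Nothing is missing.
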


%%% Hardness of Point-Separation
\label{sec:hardness}

\section{\textsf{ETH}-Hardness of Points-Separation}
\label{sec:eth-hardness}
In the previous sections, we gave an $f(k) \cdot n^{O(k)}$-time algorithm for $k$-\pointsep with general (connected) obstacles
and an $f(k) \cdot n^{O(\sqrt{k})}$-time algorithm with pseudo-disk obstacles.
In this section, we show that assuming Exponential Time Hypothesis (ETH), both of our algorithms are almost tight and significant improvement is unlikely. We begin by describing our reduction for general obstacles. 

\subsection{Hardness for General Obstacles}
	We give a reduction from \textsc{Partitioned Subgraph Isomorphism} (PSI) problem which is defined as follows.
Recall that in the \textsc{Subgraph Isomorphism} problem, we are given two
graphs $G$ and $H$ and we want to find an injective mapping $\psi: V(G) \rightarrow V(H)$ such that if $(u, v) \in E(G)$, 
then $(\psi(u), \psi(v)) \in E(H)$. In the \textsc{Partitioned Subgraph Isomorphism} problem, we want to find a \emph{colorful mapping} 
of $G$ into $H$. Formally, we are given undirected graphs $H$ and $G$ where $G$ has maximum degree $3$, and a \emph{coloring} 
function $\col : V(H) \rightarrow V(G)$ that partitions vertices of $H$ into $|V(G)|$ classes. 
We say that an injective mapping $\psi: V(G) \rightarrow V(H)$ is a \emph{colorful mapping} of $G$ into $H$, 
if for every $v \in V(G)$, $\col(\psi(v)) = v$, and for every $(u, v) \in E(G)$, we have $(\psi(u), \psi(v)) \in E(H)$.
Then in the \textsc{Partitioned Subgraph Isomporphism}, we want to find if there exists a colorful mapping
of $G$ into $H$. 

We will use the following well-known result of Marx~\cite{marx2007can} relevant to our reduction.
\begin{theorem}{\textnormal{\cite[Corollary~6.3]{marx2007can}}}
\label{thm:marx}
Unless \textsf{ETH} fails, \textsc{PSI} cannot be solved in $f(k) n^{o(k/\log k)}$ time for
any function $f$ where $k = |E(G)|$ and $n= |V(H)|$.
\end{theorem}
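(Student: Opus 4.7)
The plan is to reduce from 3-\textsc{Sat} and use ETH directly. By the Sparsification Lemma of Impagliazzo--Paturi--Zane, ETH implies that 3-\textsc{Sat} with $n$ variables and $m = O(n)$ clauses requires $2^{\Omega(n)}$ time. The goal is to build, from such a formula $\varphi$, a PSI instance where the pattern graph $G$ has $k = \Theta(n/\log n)$ edges and maximum degree 3 and the host graph $H$ has $|V(H)| = n^{O(1)}$ vertices, such that a colorful homomorphism exists iff $\varphi$ is satisfiable. If PSI admitted an $f(k)\cdot |V(H)|^{o(k/\log k)}$ algorithm, plugging in $k = \Theta(n/\log n)$ would give a $2^{o(n)}$ algorithm for 3-\textsc{Sat}, contradicting ETH.

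The main technique is a \emph{block packing}: partition the variables of $\varphi$ into $\Theta(n/\log n)$ blocks of $\Theta(\log n)$ variables each, and let the vertices of $G$ correspond to blocks. For each block we create a color class in $H$ containing one vertex per truth assignment to that block, so each color class has size $2^{\Theta(\log n)} = n^{O(1)}$. Edges of $G$ then play two roles: enforcing that shared variables get consistent assignments across blocks, and witnessing that clauses are satisfied. To keep the number of edges as low as $n/\log n$, the pattern graph must be chosen so that the small set of edges collectively ``covers'' every clause; this is typically achieved by routing each clause through a designated edge in a constant-degree expander-like structure over the blocks, so each edge only has to carry $O(\log n)$ clauses' worth of information, all of which fit into the $n^{O(1)}$-sized edge relation of $H$.

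The most delicate step will be guaranteeing both (a) that every clause is covered by some edge of $G$ without blowing up the edge count, and (b) that the resulting $G$ has maximum degree at most 3. For (b), the standard trick is to start with a bounded-degree reduction where degree can be, say, a constant $d$, and then replace each high-degree vertex with a small tree gadget and add synchronization color classes in $H$ to force the gadget to behave as a single vertex; this preserves the edge count up to a constant factor, hence preserves the $k/\log k$ barrier. Verifying carefully that the resulting PSI instance has a colorful solution precisely when $\varphi$ is satisfiable, and that the parameter arithmetic goes through, would complete the proof of the theorem as stated.
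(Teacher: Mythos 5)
The theorem is cited verbatim from Marx~\cite{marx2007can} (Corollary~6.3) and the paper offers no proof of its own, so there is no in-paper argument to compare against; what follows assesses your reconstruction on its own merits.

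Your high-level plan---sparsify 3-\textsc{Sat} via ETH, pack $\Theta(\log n)$ variables into each block so that a color class of $H$ lists the $n^{O(1)}$ truth assignments to that block, and wire the $\Theta(n/\log n)$ blocks by a constant-degree pattern graph $G$---is genuinely in the spirit of Marx's argument, and the target parameter $k = |E(G)| = \Theta(n/\log n)$ is the right one. However, the proposal omits the single step that makes Corollary~6.3 a theorem rather than a folklore observation. A clause of $\varphi$ may involve variables lying in three distinct blocks, while an edge of $G$ only relates two color classes, so the clause cannot be checked locally; one must propagate block assignments along paths of $G$ into a common vertex or edge. Doing this for all $O(n)$ clauses simultaneously while keeping each color class of size $n^{O(1)}$ requires a low-congestion routing or embedding into a $3$-regular expander, guaranteeing that each vertex and edge of $G$ is asked to carry only $O(\log n)$ imported bits. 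Proving that such a routing exists is the technical heart of Marx's paper, and your proposal substitutes the remark that this is ``typically achieved'' by an ``expander-like structure.'' That names the right idea but is not a proof, and without it the extended states can overflow $n^{O(1)}$ and the reduction breaks.

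Two secondary issues. First, the parameter arithmetic is oversimplified: because the hypothetical algorithm carries an arbitrary factor $f(k)$, one cannot simply plug in $k = \Theta(n/\log n)$; a standard parameter-trimming step (choose $k$ as the minimum of $\Theta(n/\log n)$ and a slowly growing function of $n$ ensuring $f(k) \le n$, which still tends to infinity) is needed before the contradiction with ETH closes. Second, the degree-$3$ reduction via tree gadgets is asserted but not checked---one must verify that the new synchronization color classes stay of polynomial size and do not introduce spurious colorful embeddings. Filling these gaps would amount to reproducing Marx's proof in full, which is why the paper---reasonably---treats the result as a black box citation rather than re-deriving it.
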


\paragraph*{Our Construction.}
Given an instance of PSI as graphs $G, H$ and coloring $\col: V(H) \rightarrow V(G)$,
we want to construct an instance of \pointsep, namely a set of 
obstacles $\cs$ and a set of points $A$ such that all point pairs in $A$ are separated. 
For the ease of exposition, we will first discuss a reduction from PSI to an instance $(\cs, A, P)$ of
\gptsep where the set $P$ of \emph{request pairs} is specified. Later we extend the construction to
show that the same bounds also hold for \pointsep.

The set of obstacles $\cs$ used in our construction mainly consists of
an obstacle $S_{pq}$ for every edge $(u_p, u_q) \in E(H)$. In addition, we also use an additional auxiliary obstacle
denoted by $S_0$. All the obstacles and request pairs will be contained in a rectangle $\bbox$ with 
bottom-left corner $(0, 0)$ and top-right corner $(z, 3)$, where $z$ is the total number of request pair \emph{groups}.
Each group can have at most two request pairs.
We split the rectangle $\bbox$ into $z$ blocks, each of width one. 
The $r$-th block $B_r$ is bounded by the vertical lines $x={r-1}$ and $x=r$, contains the $r$-th
request pair group. Initially all obstacles are horizontal line segments of length $z$ occupying the part of $x$-axis 
from $x=0$ to $x=z$ and coincident to the bottom side of $\bbox$.
Moreover, let $\ell_1, \ell_2$ be two horizontal line segments coincident with $y=1$ and $y=2$ respectively
and starting from $x=0$ (left boundary of $\bbox$) and ending at $x=z$ (right boundary of $\bbox$).
These line segments will serve as \emph{guardrails} for obstacle growth. Specifically obstacles can
only grow vertically at $x=r$ (for some integer $r$) or horizontally along the lines $\ell_1, \ell_2$.
(See also Figure~\ref{fig:hardness-example}.)

\begin{figure}[htb!]
	\centering
	\includegraphics{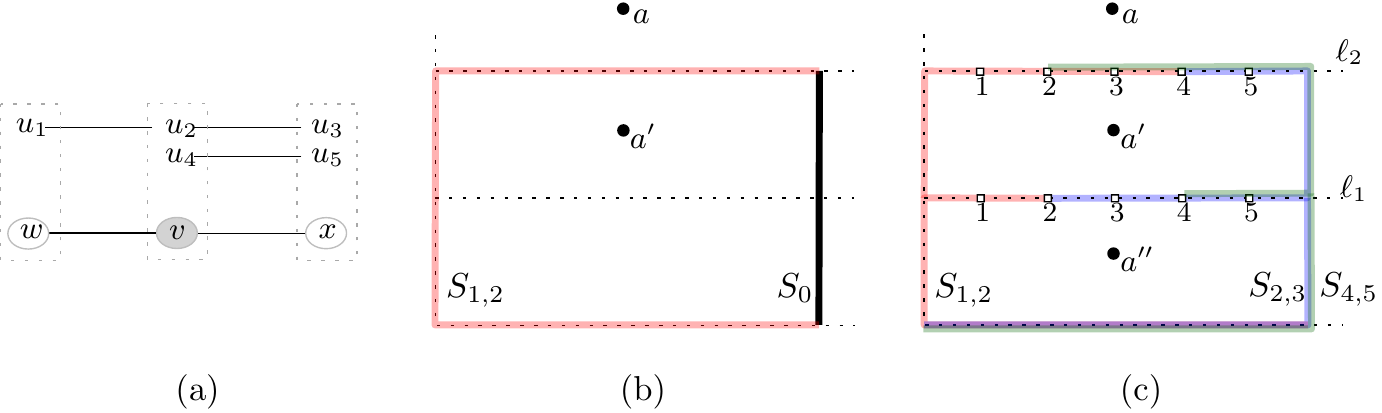}
	\caption{An example construction. (a) Graphs $G$ and $H$ with the $\col : V(H) \rightarrow V(G)$ shown by
					dotted boxes around nodes. (b)~ Block with \tone request pair for edge $(v, w) \in E(G)$.~  (c)~ Block
					with \ttwo request pair group for vertex $v$ and its adjacent edges $(v, w), (v, x) \in E(G)$. 
					Obstacles $S_{1,2}$ and $S_{2,3}$ separate both pairs $(a, a')$ and $(a', a'')$ 
					whereas $S_{1,2}$ and $S_{4,5}$ does not.}
	\label{fig:hardness-example}
\end{figure}

The $r$-th request pair group is contained in block $B_r$
and may consist of points $a_r, a'_r, a_r''$ where
$a_r = (r-\frac{1}{2}, \frac{5}{2})$, 
$a_r' = (r-\frac{1}{2}, \frac{3}{2})$
and $a_r'' = (r-\frac{1}{2}, \frac{1}{2})$.
We have two types of groups: \emph{\tone request pair group} 
consisting of \emph{one} request pair $(a_r, a_r')$ and
\emph{\ttwo request pair group} 
consisting of \emph{two} request pairs $p_r = (a_r, a_r')$ and $p_r' = (a_r', a_r'')$.
Depending on the type of the group, we will now \emph{grow} the obstacles in a systematic 
manner so that they interact in the neighborhood of request pairs. 
\begin{enumerate}
	\item \emph{\tone request pair group}~ For every edge $e_i = (v, w) \in E(G)$, we add a request pair $p_r = (a_r, a_r')$ 
			to $P$. Next we grow the obstacles around $p_r$ as follows.
			(See also Figure~\ref{fig:hardness-example}b.)
			\begin{itemize}
				\item Extend the auxiliary obstacle $S_0$ vertically along $x=r$ until $y=2$.
				\item For every $(u_p, u_q) \in E(H)$ such that $(\col(u_p), \col(u_q)) = e_i$, extend
							the obstacle $S_{pq}$ vertically along $x={r-1}$ until $y=2$ and then rightwards
							along $\ell_2$ until it touches $S_0$.
			\end{itemize}
				Observe that to separate \tone request pair $p_r$, we must select $S_0$ and one obstacle corresponding to
				an edge of $H$.

	\item \emph{\ttwo request pair group} For a vertex $v \in V(G)$ and pair of edges 
			$e_i, e_j \in E(G)$ adjacent to $v$ with $i < j$, we add two request pairs $p_r = (a_r, a_r')$
			and $p_r' = (a_r', a_r'')$ to $P$. In order to grow the obstacles, consider the unit length intervals along 
			lines $\ell_1, \ell_2$ contained in $B_r$. We subdivide these intervals by adding $n$ markers each 
			separated by a small distance $\eps = \frac{1}{n+1}$. Here $n=|V(H)|$. 
		  We will use these markers to define the precise boundary of obstacles in block $B_r$. 
			(See also Figure~\ref{fig:hardness-example}c.)
	\begin{itemize}
		\item Let $e_i = (v, w)$ and $S_{pq} = (u_p, u_q)$ be an obstacle such that $(\col(u_p), \col(u_q)) = e_i$.
					Without loss of generality, assume that $\col(u_p) = v$ and $\col(u_p) = w$.
					First we extend $S_{pq}$ along the \emph{left} boundary of $B_r$ along $x=r-1$ until $y=2$.
					Then we connect $S_{pq}$ to marker $p$ along line $\ell_1$ and to marker $n - p + 1$ along line $\ell_2$, moving 
					from \emph{left to right}.
		\item Similarly, let $e_j = (v, x)$ and $S_{gh} = (u_g, u_h)$ be an obstacle such that $(\col(u_g), \col(u_h)) = e_j$.
					Without loss of generality, assume that $\col(u_g) = v$ and $\col(u_h) = x$.
					We extend $S_{gh}$ along the \emph{right} boundary of $B_r$ along $x=r$ until $y=2$.
					Then we connect $S_{gh}$ to marker $g$ along line $\ell_1$ and to marker $n - g + 1$ along line $\ell_2$,
					moving from \emph{right to left}.
	\end{itemize}
	Observe that to separate both \ttwo request pairs $p_r$ and $p_r'$, we must select two obstacles corresponding to edges of $H$.
\end{enumerate}

It is easy to verify that all the obstacles are simple and connected. Observe that since each vertex has
maximum degree $3$, the total number of request pairs added is $z \leq |E(G)| + 2 \cdot 3 |V(G)| = O(k)$ where $k = V(G)$.
The total number of obstacles $|\cs| = |E(H)| + 1 = O(n^2)$ where $n = |V(H)|$.
\begin{observation}
	\label{obs:numObstacles} For the \gptsep instance $(\cs, A, P)$ constructed above, 
	we have $|\cs| = O(n^2)$, $|A| = O(k)$ and $|P| = O(k)$.
\end{observation}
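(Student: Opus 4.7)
The plan is to prove the observation by straightforward bookkeeping on the three quantities, exploiting two facts from the construction: that there is exactly one obstacle per edge of $H$ (plus $S_0$), and that $G$ has maximum degree $3$, so $|E(G)| \leq \tfrac{3}{2}|V(G)| = O(k)$.

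First I would bound $|\cs|$. By construction $\cs = \{S_0\} \cup \{S_{pq} : (u_p,u_q) \in E(H)\}$, so $|\cs| = 1 + |E(H)| \leq 1 + \binom{n}{2} = O(n^2)$, where $n = |V(H)|$. This part is immediate and needs no further argument.

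Next I would count the request pair groups. Each edge $e_i \in E(G)$ contributes exactly one \tone group, giving $|E(G)|$ such groups. Each \ttwo group is indexed by a vertex $v \in V(G)$ together with an unordered pair of edges $\{e_i,e_j\}$ incident to $v$; since $G$ has maximum degree $3$, each vertex contributes at most $\binom{3}{2}=3$ such pairs, hence at most $3|V(G)|$ \ttwo groups in total. Combining, the total number $z$ of groups is bounded by $|E(G)| + 3|V(G)| \leq \tfrac{3}{2}|V(G)| + 3|V(G)| = O(k)$. From this $|P|$ follows immediately: each \tone group contributes one pair and each \ttwo group contributes two pairs, so $|P| \leq |E(G)| + 2 \cdot 3|V(G)| = O(k)$. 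Similarly $|A|$ follows since each block $B_r$ introduces at most three new points $a_r, a_r', a_r''$, and points in different blocks are distinct (they have different $x$-coordinates), so $|A| \leq 3z = O(k)$.

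There is no real obstacle here; the observation is a counting summary of the construction, whose only non-trivial ingredient is the maximum-degree-$3$ assumption on $G$ inherited from the \textsc{Partitioned Subgraph Isomorphism} instance. I would simply write the three bounds in sequence, citing the construction's description of $\cs$, the two group types, and the placement of the points $a_r, a_r', a_r''$ one column per block, and invoke the degree-$3$ bound to convert $|E(G)|$-terms into $O(|V(G)|) = O(k)$.
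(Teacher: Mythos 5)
Your proof is correct and follows essentially the same counting argument the paper sketches in the paragraph preceding the observation: one obstacle per edge of $H$ plus $S_0$ gives $|\cs|=|E(H)|+1=O(n^2)$, and the maximum-degree-$3$ assumption on $G$ bounds the number of \ttwo{} groups per vertex by $\binom{3}{2}=3$, giving $z = O(|V(G)|) = O(k)$ groups and hence $O(k)$ points and pairs. Your write-up is in fact slightly more careful than the paper's, which conflates the number of groups $z$ with the number of request pairs in the relevant sentence, but the reasoning and bounds are the same.
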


We prove the following lemma which will be useful later.
\begin{lemma}
	\label{lemma:blockSep} 
	Let $p_r = (a_r, a_r')$ and  $p_r' = (a_r', a_r'')$ be a \ttwo request pair group corresponding to vertex $v$ and its two 
	adjacent edges $e_i = (v, w)$ and $e_j = (v, x)$ such that $i < j$. 
	Then two obstacles $S_{pq}$ defined by the edge $(u_p, u_q)$ and  $S_{gh}$ defined by $(u_g, u_h)$
	separate both $p_r$ and $p_r'$ if and only if $p = g$ and 
	$\col(u_p) = \col(u_g) = v$, $\col(u_q) = w$, $\col(u_h) = x$.
\end{lemma}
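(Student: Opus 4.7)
The strategy is to reduce the topological separation statement to a combinatorial condition about which sub-intervals of $\ell_1 \cap B_r$ and $\ell_2 \cap B_r$ are covered by the two obstacles, and then carry out a short case analysis. First I would unpack the construction inside $B_r$: every obstacle contains the common bottom segment of $\bbox$, while $S_{pq}$ contributes further structure in $B_r$ only when $(\col(u_p),\col(u_q)) \in \{e_i,e_j\}$, and similarly for $S_{gh}$. Under the color hypothesis of the lemma, the contribution of $S_{pq}$ is the vertical at $x=r-1$ from $y=0$ to $y=2$ together with the $\ell_1$-segment from $(r-1,1)$ to $(r-1+p\eps,1)$ and the $\ell_2$-segment from $(r-1,2)$ to $(r-1+(n-p+1)\eps,2)$; symmetrically $S_{gh}$ contributes the vertical at $x=r$ and the $\ell_1,\ell_2$ sub-segments reaching markers $g$ and $n-g+1$ from the right wall.

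For the forward direction, assume the four color equalities and $p=g$. Since the $\ell_1$-reach of $S_{pq}$ from the left is $r-1+p\eps$ while $S_{gh}$ covers $\ell_1 \cap B_r$ from $r-1+g\eps$ rightward, the equality $p=g$ makes the two sub-segments meet and jointly cover $\ell_1 \cap B_r$; using $n-p+1 = n-g+1$, the same holds on $\ell_2$. Consequently $S_{pq} \cup S_{gh}$ contains the boundary of the axis-aligned unit square $[r-1,r]\times[1,2]$, which is a simple closed curve whose interior contains $a_r'$ and whose exterior contains $a_r$ and $a_r''$. By Fact~\ref{fact-separate} both pairs $(a_r,a_r')$ and $(a_r',a_r'')$ are separated.

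For the reverse direction I plan a case split on how the hypothesis might fail. If either $(\col(u_p),\col(u_q))$ or $(\col(u_g),\col(u_h))$ is not in $\{e_i,e_j\}$, then the corresponding obstacle is just the bottom segment inside $B_r$; the opposite side of $B_r$ is then free of coverage on $\ell_2$ (or $\ell_1$), and a short curve inside $B_r$ witnesses that at least one of the two pairs is unseparated. If both obstacles correspond to the same element of $\{e_i,e_j\}$, then only one vertical side of $B_r$ is present and the $\ell_2$-cover leaves an interval of length at least $\eps$ uncovered near the opposite wall, through which a descent curve joins $a_r$ to $a_r'$. Finally, if the colors match the lemma but $p \neq g$, the case $p > g$ leaves an open gap $\bigl(r-1+(n-p+1)\eps,\, r-1+(n-g+1)\eps\bigr)$ on $\ell_2$ and allows a curve to descend from $a_r$ into the middle strip to meet $a_r'$, while $p < g$ opens a symmetric gap on $\ell_1$ through which a curve joins $a_r'$ to $a_r''$.

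The main subtlety I anticipate is ensuring that neither direction is undermined by paths detouring through blocks other than $B_r$. In the forward direction this is handled cleanly because the rectangular closed curve built above lies entirely in $S_{pq} \cup S_{gh}$; Fact~\ref{fact-separate} then guarantees that any curve going from inside to outside of the square, regardless of how it routes through other blocks, crosses the obstacle set an odd number of times. In the reverse direction, each escape path I construct is chosen to stay strictly inside $B_r$ and within a horizontal slab of height one (either $\{1\le y\le 2\}$ or $\{0\le y\le 1\}$ depending on which pair we attack), so the behaviour of $S_{pq}$ and $S_{gh}$ elsewhere is irrelevant. With those two observations in place, the proof reduces to the purely local combinatorial analysis sketched above.
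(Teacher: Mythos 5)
Your proposal is correct and follows essentially the same approach as the paper: both arguments reduce the question to whether $S_{pq}$ and $S_{gh}$ cover $\ell_1 \cap B_r$ and $\ell_2 \cap B_r$ from opposite sides and close up at a common marker, with the paper running the argument directly (separation forces overlap on both channels, which forces opposite-side approach, which forces the color equalities and $p\ge g$ and $p\le g$) while you run it in the contrapositive via a case analysis. Your extra care about escape paths staying strictly inside $B_r$ and about detours through other blocks is a sound addition but does not change the substance of the argument.
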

\begin{proof}
	The reverse direction is easy to verify. Specifically, if 
	$\col(u_p) = \col(u_g) = v$, $\col(u_q) = w$, $\col(u_h) = x$
	then the obstacles $S_{pq}$ and $S_{gh}$ are respectively coincident with left and right 
	boundary of block $B_r$. Moreover, since $p = g$, both the obstacles overlap precisely at 
	marker $p$ along $\ell_1$ and $n-p+1$ along $\ell_2$, forming a closed 
	curve containing only point $a_r' = (r-1, \frac{3}{2})$. Therefore, both the pairs $p_r$
	and $p_r'$ are separated.

	For the other direction, from the way obstacles $S_{pq}$ and $S_{gh}$ interact in block $B_r$:
	they may overlap along $\ell_1$ or $\ell_2$ or both or none.
	If the obstacles overlap only along $\ell_1$, they cannot
	separate pair $p_r$. Similarly, if they overlap only along $\ell_2$, they 
	cannot separate the pair $p_r'$. Since both pairs are separated, obstacles 
	$S_{pq}$ and $S_{gh}$ must overlap along both $\ell_1$, $\ell_2$ and 
	form a closed curve containing point $a_r'$. 
	This can only happen if $S_{pq}, S_{gh}$ overlap in block $B_r$ approaching  $\ell_1, \ell_2$
	from opposite sides. Without loss of generality, we can assume that $S_{pq}$ is coincident with left boundary 
	of $B_r$ and $S_{gh}$ is coincident with the right boundary of $B_r$. 
	This can happen only if $\col(u_p) = \col(u_g) = v$, $\col(u_q) = w$, $\col(u_h) = x$.
	It remains to show that $p = g$. Observe that since $S_{pq}, S_{gh}$ overlap on $\ell_1$, we must have that
	marker $p$ is to the right of marker $g$. That is $p \geq g$. Similarly, since $S_{pq}, S_{gh}$ overlap on $\ell_2$,
	we have $n - p + 1 \geq n - g + 1$ which gives $p \leq g$. Combining these, we get $p = g$.
\end{proof}

	We now prove the following lemma that establishes the correctness of our reduction.
\begin{lemma}
  \label{lemma:correctness}
	Given an instance of PSI as graphs $G, H$ and coloring, $\col: V(H) \rightarrow V(G)$, there exists
	a colorful mapping $\psi: V(G) \rightarrow V(H)$ if and only
	if the point pairs $P$ can be separated by a set of $m = |E(G)| + 1$ obstacles $\cs^* \subseteq \cs$.
\end{lemma}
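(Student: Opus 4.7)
The plan is to establish the two directions of the equivalence by essentially reading the separator off the colorful mapping (and vice versa), with Lemma~\ref{lemma:blockSep} doing the heavy lifting for the \ttwo blocks. Let $k' = |E(G)|$. The target set $\cs^*$ will always contain the auxiliary obstacle $S_0$ together with one ``edge obstacle'' $S_{pq}$ per edge of $G$, giving exactly $m = k'+1$ obstacles.

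For the forward direction, given a colorful mapping $\psi: V(G) \to V(H)$, I define $\cs^* = \{S_0\} \cup \{S_{\psi(v)\psi(w)} : (v,w) \in E(G)\}$. Each chosen edge obstacle is well-defined because $\psi$ being colorful guarantees $(\psi(v),\psi(w)) \in E(H)$, and the uniqueness of the color classes guarantees $|\cs^*| = k'+1$. I then check separation block by block. For a \tone block $B_r$ associated with $e_i = (v,w)$, the obstacle $S_0$ climbs up at $x = r$ while $S_{\psi(v)\psi(w)}$ climbs up at $x = r-1$ and traverses $\ell_2$ to meet $S_0$, enclosing $a_r$ but not $a_r'$. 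For a \ttwo block associated with vertex $v$ and edges $e_i = (v,w), e_j=(v,x)$, the two selected obstacles $S_{\psi(v)\psi(w)}$ and $S_{\psi(v)\psi(x)}$ share the first coordinate $p = g = \psi(v)$ in the indexing, so by the reverse direction of Lemma~\ref{lemma:blockSep} they separate both $p_r$ and $p_r'$.

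For the backward direction, suppose $\cs^* \subseteq \cs$ of size $k'+1$ separates every pair in $P$. I first argue that $S_0$ must lie in $\cs^*$: every \tone block forces one of the two separating obstacles to climb at $x = r$, and by construction only $S_0$ does so in the \tone setting (edge obstacles only climb at $x = r-1$ for \tone blocks). Next, a counting argument matches the remaining $k'$ obstacles to the $k'$ edges of $G$: each \tone block requires at least one edge obstacle $S_{pq}$ with $(\col(u_p),\col(u_q)) = e_i$, while each \ttwo block requires two edge obstacles by Lemma~\ref{lemma:blockSep}; since a vertex $v \in V(G)$ of degree $d$ contributes one \tone block per incident edge and $\binom{d}{2}$ \ttwo blocks, and since $G$ has maximum degree $3$, a careful charging shows that exactly one obstacle per $E(G)$-edge must be used, and that all edges incident to a common vertex $v$ must share their first index. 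This shared first index is exactly what defines $\psi(v)$.

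The main obstacle will be formalizing this consistency step cleanly. The key leverage is Lemma~\ref{lemma:blockSep}: in every \ttwo block corresponding to vertex $v$ and edges $e_i = (v,w), e_j = (v,x)$, the two chosen obstacles $S_{pq}$ and $S_{gh}$ must satisfy $p = g$ and $\col(u_p) = \col(u_g) = v$. This forces the ``$v$-side'' index of the obstacle chosen for $e_i$ to equal the ``$v$-side'' index chosen for $e_j$; propagating this across all pairs of incident edges at $v$ (using that $v$ has at most three incident edges, so at most three \ttwo blocks) shows that all edge obstacles incident to $v$ share a common first index, which I define to be $\psi(v)$. Colorfulness of the resulting $\psi$ is immediate from the color constraints provided by Lemma~\ref{lemma:blockSep}, and injectivity follows because the color classes $\col^{-1}(v)$ are pairwise disjoint. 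Combining this with Observation~\ref{obs:numObstacles} and Theorem~\ref{thm:marx} will then yield the desired ETH lower bound in the subsequent argument.
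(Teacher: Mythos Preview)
Your proposal is correct and follows essentially the same approach as the paper's proof: both directions use $S_0$ plus one edge obstacle per $E(G)$-edge, with Lemma~\ref{lemma:blockSep} handling the \ttwo{} blocks, and the backward direction uses a counting argument on the \tone{} blocks (which require $S_0$ and one color-constrained edge obstacle each) to pin down a bijection between edge obstacles and $E(G)$, then invokes Lemma~\ref{lemma:blockSep} on the \ttwo{} blocks to force the ``$v$-side'' indices to agree and thereby define $\psi$. The only minor difference is that your counting paragraph brings in the \ttwo{} blocks and a degree-based charging; the paper gets the same conclusion more directly by observing that the $|E(G)|$ \tone{} blocks already require $|E(G)|$ pairwise distinct edge obstacles (distinct because they lie in disjoint color classes), which together with $S_0$ exhausts the budget of $|E(G)|+1$.
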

\begin{proof}
	$(\Rightarrow)$~ Given a colorful mapping $\psi$ we construct the set of obstacles $\cs^*$ as follows.
	For every edge  $e = (v, w) \in E(G)$, include the obstacle $(\psi(v), \psi(w))$ to $\cs^*$ -- such an 
	obstacle always exists because $(\psi(v), \psi(w)) \in E(H)$. Next, we add $S_0$ to $\cs^*$.
	It is easy to verify that $\cs^*$ separates the \tone request pairs. 
	For a \ttwo request pair group $p_r, p_r'$ at vertex $v$ and edges $e_i = (v, w)$, $e_j = (v, x)$, 
	let $u_p = \psi(v), u_q = \psi(w)$ and $u_h = \psi(x)$. Since $\psi$ is a colorful mapping,
	we have $\col(u_p) = \col(\psi(v)) = v$. Similarly, $\col(u_q) = w$ and $\col(u_h) = x$.
	Therefore, it follows from Lemma~\ref{lemma:blockSep} that $\cs^*$ separates request
	pairs $p_r, p_r'$, for all $1 \leq r \leq z$.

	$(\Leftarrow)$~ Given a set $\cs^*$ of $m$ obstacles that separates all request pairs,
	we will first construct an injective function $\cm: E(G) \rightarrow E(H)$ that uniquely
	maps every edge of $G$ to an edge of $H$. Consider the set $P_1$ of \tone request pairs.
	Since $\cs^*$ separates $P_1$, it must include $S_0$ and a unique obstacle $S_{pq} = (u_p, u_q)$
	for every edge $e_i = (v, w) \in E(G)$ such that $(\col(u_p), \col(u_q)) = e_i$. 
	The uniqueness of $S_{pq}$ follows from the fact that there are $|E(G)|$ \tone request pairs
	and $|\cs^*| = |E(G)| + 1$ obstacles. We assign $\cm(e_i) = (u_p, u_q)$. 

	Next, we build a colorful mapping $\psi$ that is consistent with the mapping $\cm$ of edges. 
	For this, we use the fact that $\cs^*$ also separates \ttwo request pair groups. 
	Consider the \ttwo request pair group corresponding to vertex $v \in V(G)$ and edges 
	$e_i = (v, w)$ and $e_j = (v, x)$ with $i < j$. We apply Lemma~\ref{lemma:blockSep} over this
	group with obstacles defined by edges $(u_p, u_q) = \cm(e_i)$, and $(u_g, u_h) = \cm(e_j)$.
	This gives $u_p = u_g$ and $\col(u_p) = v$.
	Since this holds for every pair of edges $e_i, e_j$ adjacent to vertex $v$, 
	we can assign $\psi(v) = u_p$, which also satisfies $\col(\psi(v)) = \col(u_p) = v$
	required for a colorful mapping. 
	Repeating this for every $v$ gives the complete mapping $\psi : V(G) \rightarrow V(H)$.
	It remains to show that if $(v, w) \in E(G)$, then $(\psi(v), \psi(w)) \in E(H)$.
	To see this, observe that for every $e_i = (v, w) \in E(G)$ the edge $(u_p, u_q) = \cm(e_i)$
	exists in $E(H)$, or else we would not be able to separate the \tone request pair for $e_i$.
	From the way we assign $\psi(v)$, it follows that $\psi(v) = u_p$ and $\psi(w) = u_q$.
	Therefore, $(\psi(u_p), \psi(u_q)) \in E(H)$.
\end{proof}

	We will now extend the above construction $(\cs, A, P)$ to
	the special case when $P$ consists of all pairs of points in $A$. We do this by adding $z$ special 
	obstacles called \emph{barriers} (one for each block $B_r$) and one \emph{master} point $a_0 = (0, 4)$ that lies to the
	outside of rectangle $\bbox$ enclosing all obstacles.
	Each barrier $S_r$ around block $B_r$ is an inverted U-shaped obstacle that is coincident with the
	left, top and bottom boundaries of $B_r$. More precisely, obstacle $S_r$ consists of three segments:
	a vertical segment from $(r-1, 0)$ to $(r-1, 3)$, a horizontal segment from $(r-1, 3)$ to $(r, 3)$
	and then a vertical segment from $(r, 3)$ to $(r, 0)$. 
	
	Let $\cs_b$ be the set of all barrier obstacles added above, we prove the following lemma.
	\begin{lemma}
		\label{lemma:extension}
		There exists a solution with $|E(G)|+1$ obstacles for the \gptsep instance $(\cs, A, P)$ constructed before 
		if and only if there exists a solution with $|E(G)|+1+|\cs_b|$  obstacles for the
		\pointsep instance $(\cs \cup \cs_b,~ A \cup a_0)$.
	\end{lemma}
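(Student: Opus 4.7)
My plan is to establish the two directions of the iff separately, exploiting that the barriers $\cs_b$ are geometrically ``orthogonal'' to the pair structure inside each block.

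For the $(\Rightarrow)$ direction I take $\cs^{**} := \cs^* \cup \cs_b$ and verify, case by case, that $\cs^{**}$ separates every pair in $A \cup \{a_0\}$. The pairs in $P$ are already separated by $\cs^*$. Every other pair is handled by the geometric fact that each block $B_r$ is enclosed by the inverted-U barrier $S_r$ together with the bottom strip $y=0$, $x \in [0,z]$, which is part of every non-barrier obstacle; since $|\cs^*| = |E(G)|+1 \geq 1$, this bottom strip is always present in $\cs^{**}$, so the resulting closed curves partition the plane into the $z$ block interiors plus the exterior region containing $a_0$. For the remaining within-block case $(a_r, a_r'')$ in a \ttwo block, the $\ell_1$- and $\ell_2$-overlaps forced by $\cs^*$ further subdivide $B_r$ into the three nested sub-regions used in Lemma~\ref{lemma:blockSep}, placing $a_r$, $a_r'$ and $a_r''$ in distinct components.

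For the $(\Leftarrow)$ direction, the first step is to show $\cs_b \subseteq \cs^{**}$. For each $r$, I consider the $L$-shaped curve from $a_r = (r-1/2, 5/2)$ straight up to $(r-1/2, 4)$ and then horizontally leftward to $a_0 = (0, 4)$. The horizontal leg lies above $\bbox$ and meets no obstacle; on the vertical leg, every non-barrier obstacle is confined to $y \leq 2$, and the only barrier whose top segment passes through $x = r-1/2$ is $S_r$ (barrier $S_{r'}$ has its top horizontal at $y=3$ over $x \in [r'-1, r']$, which covers $r-1/2$ only when $r' = r$). Thus, if $S_r \notin \cs^{**}$, this curve witnesses $a_r$ and $a_0$ in the same component of $\mathbb{R}^2 \setminus \bigcup \cs^{**}$, contradicting that $\cs^{**}$ separates $(a_r, a_0)$. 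Hence every barrier lies in $\cs^{**}$, and $\cs^{**}$ contains exactly $|E(G)|+1$ non-barrier obstacles.

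The main step of $(\Leftarrow)$ is then to extract from these $|E(G)|+1$ non-barriers a colorful mapping $\psi: V(G) \to V(H)$, after which the forward direction of Lemma~\ref{lemma:correctness} yields a \gptsep solution of size $|E(G)|+1$. In each \tone block $B_{r_i}$ for edge $e_i \in E(G)$, separating $(a_{r_i}, a_{r_i}')$ requires a horizontal on $\ell_2$ across $B_{r_i}$; barriers contribute nothing at $y=2$, so this horizontal must come from some $S_{pq} \in \cs^{**}$ with $(\col(u_p), \col(u_q)) = e_i$. In each \ttwo block for a vertex $v$ and edges $e_i, e_j$, separating both pairs forces the chosen $S_{pq}$ and $S_{gh}$ to overlap on both $\ell_1$ and $\ell_2$, which by Lemma~\ref{lemma:blockSep} means $u_p = u_g$; setting $\psi(v)$ equal to this common vertex of $H$ makes $\psi$ well-defined and colorful, and the edge condition follows from the \tone selections. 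The main obstacle I expect is verifying that $\psi(v)$ is consistent across all $\binom{\deg(v)}{2}$ \ttwo blocks associated with a single vertex $v$: one must argue that even if $\cs^{**}$ contains more than one $S_{pq}$ for some edge of $G$, the \ttwo overlap constraints applied simultaneously at $v$ still force a single vertex of $H$ to serve as $\psi(v)$. This consistency should drop out of a pigeonhole argument using $|\cs^{**} \setminus \cs_b| = |E(G)|+1$ together with repeated applications of Lemma~\ref{lemma:blockSep}.
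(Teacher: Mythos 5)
Both directions of your $(\Rightarrow)$ argument match the paper in spirit, just with more geometric detail (your observation that the $y=0$ strip of the non-barrier obstacles closes the bottom of each block is a nice way to make ``barriers enclose each block'' explicit).

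For $(\Leftarrow)$, however, you take a much longer route than is needed. The lemma is purely a transfer lemma between the \gptsep{} instance and the \pointsep{} instance; once you establish $\cs_b \subseteq \cs^{**}$ (which you do, carefully), all that remains is to observe that barriers cannot contribute to separating any within-block request pair in $P$ — the barrier $S_r$ only occupies the left/top/right boundary of $B_r$ up to $y=3$ and never touches the $\ell_1$- or $\ell_2$-channels where the within-block separations must happen. Therefore the $|E(G)|+1$ non-barrier obstacles in $\cs^{**}$ already separate every pair in $P$ and hence are themselves a \gptsep{} solution. That is the paper's argument and it is a two-line step. Instead, you attempt to re-extract a colorful mapping $\psi$ directly from $\cs^{**}$ and then re-enter Lemma~\ref{lemma:correctness} through its $(\Rightarrow)$ direction; this re-derives the entire $(\Leftarrow)$ direction of Lemma~\ref{lemma:correctness} inside the proof of Lemma~\ref{lemma:extension}. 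Your own flagged worry — consistency of $\psi(v)$ across the $\binom{\deg v}{2}$ \ttwo{} blocks — is precisely the content that Lemma~\ref{lemma:correctness} already handles via the pigeonhole count ($|E(G)|$ \tone{} blocks versus $|E(G)|+1$ non-barrier obstacles, forcing a \emph{unique} $S_{pq}$ per edge of $G$), so leaving it as ``should drop out of a pigeonhole argument'' is an actual gap in the proposal as written. (Also minor: in the \tone{} step you should note that $S_0$ too is forced into $\cs^{**}$, since the $\ell_2$ wall in each \tone{} block is formed jointly by $S_0$ and $S_{pq}$.) The fix is simple: after establishing $\cs_b \subseteq \cs^{**}$, just argue directly that $\cs^{**}\setminus\cs_b$ separates $P$; the colorful-mapping machinery belongs to Lemma~\ref{lemma:correctness} and should not be rebuilt here.
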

	\begin{proof}
		$(\Rightarrow)$ Add all the barriers $\cs_b$ to the solution for \gptsep. 
		All points that lie in the
		same block are already separated. Any pair of points that lie in different blocks are separated due to
		the barrier obstacles $\cs_b$, which also separate every point in $A$ from the master point $a_0$.

		$(\Leftarrow)$
			The only way to separate the master point $a_0$ from point $a_r$ in block $B_r$ is to select the 
			corresponding barrier $S_r$. Therefore, every solution must select all obstacles in $\cs_b$. 
			Since the set $\cs_b$ does not separate any within-block request pair, the remaining 
			set of $|E(G)|+1$ non-barrier obstacles must separate all request pairs in $P$.
	\end{proof}

Using Lemma~\ref{lemma:extension} along with Lemma~\ref{lemma:correctness}, 
Observation~\ref{obs:numObstacles} and applying Theorem~\ref{thm:marx}, we
obtain the following result for \pointsep.
\begin{theorem}
	\label{thm:eth-hardness}
	Unless \textsf{ETH} fails, a \pointsep instance $(\cs, A)$ cannot be solved in $f(k) n^{o(k/\log k)}$ time
	where $n = |\cs|$ and $k = |A|$.
\end{theorem}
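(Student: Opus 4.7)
My plan is to simply assemble the machinery constructed earlier in the section. Start from an arbitrary \textsc{PSI} instance $(G,H,\mathsf{col})$ with $G$ of maximum degree $3$, and let $k = |V(G)|$ (equivalently $|E(G)| = O(k)$ since $G$ has bounded degree) and $n = |V(H)|$. Apply the construction that produces the \gptsep instance $(\cs, A, P)$, then extend it via Lemma~\ref{lemma:extension} by adjoining the barrier obstacles $\cs_b$ (one per block) and the master point $a_0$, obtaining a \pointsep instance $(\cs \cup \cs_b,\, A \cup \{a_0\})$. The whole construction is clearly polynomial-time.

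Next I would chain the correctness statements. By Lemma~\ref{lemma:correctness}, $G$ has a colorful embedding into $H$ iff $(\cs,A,P)$ admits a \gptsep solution of size $|E(G)|+1$; by Lemma~\ref{lemma:extension}, this is equivalent to $(\cs \cup \cs_b,\, A \cup \{a_0\})$ admitting a \pointsep solution of size $|E(G)| + 1 + |\cs_b|$. So an exact algorithm for \pointsep on this instance decides \textsc{PSI}. For the parameters: Observation~\ref{obs:numObstacles} gives $|A| = O(k)$ and $|\cs| = O(n^2)$; the barriers $\cs_b$ add only $O(k)$ more obstacles and one extra point, so the \pointsep instance has $k' := |A \cup \{a_0\}| = O(k)$ and $n' := |\cs \cup \cs_b| = O(n^2)$.

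Finally I would derive the contradiction. Suppose there existed an algorithm solving \pointsep in time $f(k') \cdot (n')^{o(k'/\log k')}$. Plugging in $k' = O(k)$ and $n' = O(n^2)$, the running time on our constructed instance becomes
\[
f(O(k)) \cdot (n^2)^{o(k/\log k)} \;=\; g(k) \cdot n^{o(k/\log k)}
\]
for some function $g$. Together with the polynomial-time reduction, this yields an algorithm deciding \textsc{PSI} in $g(k) \cdot n^{o(k/\log k)}$ time, contradicting Theorem~\ref{thm:marx} under \textsf{ETH}. The main (very minor) subtlety to handle carefully is verifying that squaring $n$ does not spoil the $o(k/\log k)$ exponent, which is immediate since the constant factor from the $n^2$ is absorbed into the little-$o$ expression; beyond this bookkeeping, the proof is a straightforward composition of the preceding lemmas.
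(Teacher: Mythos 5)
Your proof is correct and follows exactly the route the paper's one-line proof compresses: chain Lemma~\ref{lemma:correctness} with Lemma~\ref{lemma:extension} and Observation~\ref{obs:numObstacles}, then invoke Theorem~\ref{thm:marx}, noting that $|A| = O(k)$ and $|\cs| = O(n^2)$ with the quadratic blow-up absorbed into the little-$o$ exponent. The only unstated (standard) point is that one may take $G$ to have no isolated vertices so $|V(G)| = \Theta(|E(G)|)$, justifying your swap of the \textsc{PSI} parameter from $|E(G)|$ to $|V(G)|$.
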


\subsection{Hardness for Pseudodisk Obstacles}
For the case of pseudodisk obstacles, we will give a reduction from \textsc{Planar Multiway Cut} problem:
given an undirected \emph{planar} graph $G$ with a subset of $k$ vertices specified as terminals, the task is to
find a set of edges having minimum total weight whose deletion pairwise separates the $k$ terminal
vertices from each other. We will use another result by Marx~\cite{marx2012tight} which showed that unless
ETH fails, \textsc{Planar Multiway Cut} cannot be solved in $f(k) \cdot n^{o(\sqrt{k})}$ time. The 
result also holds when each edge has unit weight, which is the case we will reduce from.

\paragraph*{Our Construction} We first fix an embedding of the planar graph $G$ and consider its dual graph $G^*$.
Then we create an instance $(\cs, A)$ of \pointsep as follows. (See also Figure~\ref{fig:hardness-pseudodisks-example}.)
\begin{figure}[htb!]
	\centering
	\includegraphics{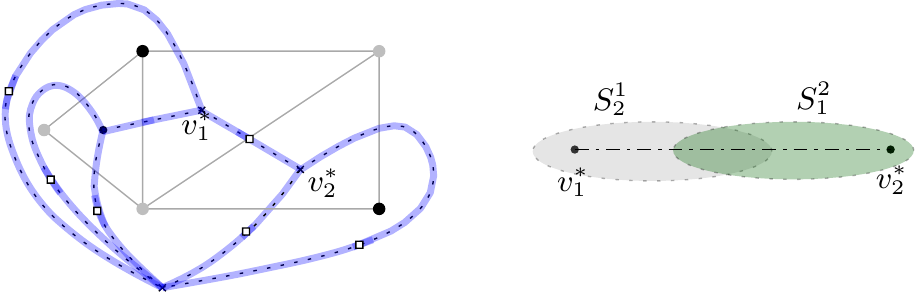}
	\caption{An example construction with pseudodisks. (a) The primal graph $G$ and dual graph $G^*$ are shown. The obstacles $\cs$
				move along the dual edges and overlap at the square \emph{markers}. The terminals of $G^*$ which form the point set
				$A$ are shown in bold. ~(b) An illustration of how the two obstacles for the dual edge $(v_1^*, v_2^*)$ 
				overlap is shown enlarged for clarity.}
	\label{fig:hardness-pseudodisks-example}
\end{figure}

\begin{itemize}
	\item \emph{Adding obstacles.~} For every edge $e_{ij}^* = (v_i^*, v_j^*) \in E(G^*)$, 
		we add two obstacles $S^i_j, S^j_i$ such 
		that $S^i_j$ encloses the dual vertex $v^*_i$ and extends halfway along $e_{ij}^*$.
		Similarly, $S^j_i$ encloses the dual vertex $v^*_j$ and extends halfway along $e_{ij}^*$ until it meets obstacle $S^i_j$. 
	\item \emph{Adding points.~} For each terminal $t_i$, which is a vertex of the primal graph $G$, add a point $a_i$ with
				same coordinates as that of $t_i$ in the embedding.
\end{itemize}

Observe that any pair of obstacles either overlap at their source vertex or at the middle of an edge, but not at both places. 
Therefore, no pair of obstacles intersect more than once and the construction can be realized with only pseudodisk obstacles.
The following lemma establishes the correctness of our reduction.
\begin{lemma}
	There exists a solution to \textsc{Planar Multiway Cut} with $m$ edges if and only if the \pointsep instance constructed above
	has a solution of size $2m$.
\end{lemma}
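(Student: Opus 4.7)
The plan is to prove both directions by exploiting the planar correspondence between primal edge cuts and dual cycles of half-obstacles enclosing terminal components. In a planar graph, for any connected subgraph $C$ of $G - F$, the primal edges leaving $C$ correspond to an edge-disjoint union of cycles in the dual $G^{*}$ that bound $C$ as a region in the plane; covering these dual cycles with the obstacle pairs gives a closed planar barrier.

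\emph{Forward direction.} Given a multiway cut $F$ of size $m$, I set $\mathcal{S}^{*} = \bigcup_{e \in F} \{S_e, S_e'\}$, where $\{S_e, S_e'\}$ denotes the two half-obstacles sitting on the dual edge $e^{*}$; this yields $|\mathcal{S}^{*}| = 2m$. For any terminal pair $(a_i, a_j)$, let $C$ be the component of $G - F$ containing $t_i$. Every edge leaving $C$ lies in $F$, and the corresponding dual edges — now \emph{fully} covered by the paired half-obstacles in $\mathcal{S}^{*}$ — bound a closed planar region that contains $t_i$ but not $t_j$, so $a_i$ and $a_j$ are separated by $\mathcal{S}^{*}$.

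\emph{Backward direction.} Given a $P$-separator $\mathcal{S}^{*}$ of size at most $2m$, define $F^{*} = \{e \in E(G) : \{S_e, S_e'\} \subseteq \mathcal{S}^{*}\}$. Since each edge in $F^{*}$ contributes two distinct obstacles, $|F^{*}| \leq m$. I claim $F^{*}$ is a multiway cut. Suppose instead that some path $\pi = (v_0 = t_i, v_1, \ldots, v_r = t_j)$ in $G - F^{*}$ connects two terminals. I build a plane curve $\gamma$ from $t_i$ to $t_j$ avoiding $\bigcup_{S \in \mathcal{S}^{*}} S$, contradicting separation. At each primal vertex $v_\ell$ no obstacle is present, because every obstacle is confined to a thin neighborhood of its dual edge and its endpoint dual vertex, none of which meets the primal vertices. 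For each primal edge $e_\ell = (v_\ell, v_{\ell+1})$ on $\pi$, the condition $e_\ell \notin F^{*}$ guarantees that at most one of the two half-obstacles on $e_\ell^{*}$ lies in $\mathcal{S}^{*}$. Since each half-obstacle covers only a neighborhood of \emph{its} half of $e_\ell^{*}$ (with at most a negligible overhang past the midpoint), I route $\gamma$ to cross $e_\ell$ slightly on the side of the midpoint \emph{opposite} to the selected half-obstacle (or arbitrarily, if neither half is selected). Splicing these local crossings together yields a curve $\gamma$ disjoint from $\mathcal{S}^{*}$, the desired contradiction.

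\emph{Main obstacle.} The chief technical point is the local routing in the backward direction: one must verify that the chosen crossing point on the unobstructed side of each $e_\ell^{*}$-midpoint is not accidentally contained in some \emph{other} obstacle of $\mathcal{S}^{*}$. This follows because the remaining half-obstacles incident to the relevant dual vertex extend along \emph{different} dual edges, which only meet $e_\ell^{*}$ at the dual vertex itself and are therefore disjoint from the escape corridor near the midpoint — provided the construction realizes each half-obstacle as a sufficiently thin tube around its half dual edge, which is always possible and preserves the pseudodisk property (two obstacle boundaries meet at most twice, either near a common dual vertex or near a shared midpoint, but never both).
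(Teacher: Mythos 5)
Your proof is correct and follows the same overall strategy as the paper: the forward direction uses the planar cut--cycle duality to convert a multiway cut into a family of dual cycles, replaces each dual edge by its obstacle pair, and observes that these closed barriers separate the terminals; the backward direction recovers a multiway cut from the sibling pairs that both appear in the obstacle solution.

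The one genuine difference is in how the backward direction is organized. The paper takes an exclusion-wise minimal solution $\cs'$ and argues that every obstacle must have its sibling present (a lone half-obstacle is a dead end for any separating curve lying in $\bigcup \cs'$), so $\cs'$ decomposes into $|\cs'|/2$ sibling pairs, whose dual edges traced by the separating curves form the cut. You instead skip minimality entirely: you define $F^*$ as the set of primal edges whose both half-obstacles were selected (so $|F^*| \le m$ by a simple counting argument that wastes any unpaired obstacles), and then verify directly, by contradiction, that $F^*$ is a multiway cut --- routing a terminal-connecting plane curve through the dual-edge midpoint gaps whenever an edge of a surviving path is missing a sibling, and checking that the escape corridor near each midpoint is clear of every other obstacle. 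Your variant is a bit more self-contained (no minimality preprocessing) at the cost of slightly more local geometric bookkeeping near the dual-edge midpoints, which you handle carefully. Both are valid.
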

\begin{proof}
	For the forward direction, consider any pair of terminals $t_x, t_y$ -- since they are separated by the cut edges $E_c$, there must be
	a cycle in the dual graph separating $t_x, t_y$ and only consisting of dual of cut edges $E_c^*$. Repeating this for every pair of
	terminals gives a family of separating cycles consisting only of edges $E_c^*$. 
	It is easy to verify that replacing each dual edge $e^*_{ij}$ with its obstacle pair $S^i_j, S^j_i$ will also separate every 
	point pair corresponding to the terminals.

	For the other direction, given a solution $\cs'$ for \pointsep, we can draw curves in the plane that separate every point pair and lie in 
	the union of $\cs'$. We can assume that the solution is exclusion-wise minimal, so every time we arrive inside an obstacle at 
	vertex $v_i^*$, we must continue along an edge $e^*_{ij}$ where we must transfer to the other sibling obstacle $S^j_i$ for $e^*_{ij}$.
	Using these dual edges, we can construct a solution to \textsc{Planar Multiway Cut} of cost $|\cs'|/2$.
\end{proof}

Since \textsc{Planar Multiway Cut} cannot be solved in $f(k) n^{o(\sqrt{k})}$ time assuming ETH , we obtain the following result.
\begin{theorem}
	\label{thm:eth-hardness-pseudodisks}
	Unless \textsf{ETH} fails, a \pointsep instance $(\cs, A)$ with pseudodisk obstacles 
	cannot be solved in $f(k) n^{o(\sqrt{k})}$ time where $n = |\cs|$ and $k = |A|$.
\end{theorem}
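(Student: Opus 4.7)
The plan is to derive Theorem~\ref{thm:eth-hardness-pseudodisks} as a direct consequence of the reduction from \textsc{Planar Multiway Cut} described above, together with Marx's ETH-based lower bound for that problem. First I would verify the parameter tracking of the reduction: given a planar graph $G = (V,E)$ with $k$ designated terminals, the construction produces an instance $(\cs, A)$ of \pointsep{} where $|A| = k$ (one point per terminal) and $|\cs| = 2|E|$, since each dual edge contributes exactly two pseudo-disk obstacles. In particular, if $|V| + |E| = N$, then the reduced instance has $n := |\cs| = O(N)$ and $k' := |A| = k$, and the reduction itself runs in polynomial time.

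Next I would invoke the correctness lemma proved just above: \textsc{Planar Multiway Cut} on $G$ admits a solution of size $m$ if and only if the \pointsep{} instance $(\cs, A)$ admits a solution of size $2m$. Hence any algorithm that solves \pointsep{} on pseudo-disk inputs in time $f(k')\, n^{o(\sqrt{k'})}$ can be used to compute the optimal multiway cut on $G$ by running the reduction, invoking the hypothetical algorithm, and returning half the obtained value. The total time is bounded by $\poly(N) + f(k)\,O(N)^{o(\sqrt{k})} = g(k)\, N^{o(\sqrt{k})}$ for some computable $g$.

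To conclude, I would appeal to Marx's result~\cite{marx2012tight} stating that, unless ETH fails, \textsc{Planar Multiway Cut} with unit edge weights cannot be solved in $f(k)\, N^{o(\sqrt{k})}$ time in terms of the number of terminals $k$. The hypothetical algorithm above achieves exactly this running time, so its existence contradicts ETH, yielding the claimed lower bound. I do not anticipate any real obstacle here: the main subtleties (faithfulness of the reduction, pseudo-disk realizability of the obstacles, and linear blow-up in the instance size) have already been handled in the construction and its correctness lemma, so this final theorem is essentially a one-line consequence, mirroring the wrap-up of Theorem~\ref{thm:eth-hardness} via Marx's earlier bound (Theorem~\ref{thm:marx}).
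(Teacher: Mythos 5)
Your proposal is correct and follows essentially the same route as the paper: verify the parameter tracking of the reduction from \textsc{Planar Multiway Cut} ($|A|$ equals the number of terminals $k$, $|\cs|$ is linear in the instance size), invoke the correctness lemma relating solution sizes by a factor of two, and appeal to Marx's $f(k)\,n^{o(\sqrt{k})}$ lower bound~\cite{marx2012tight}. The paper states this conclusion even more tersely, but the logic is identical.
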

It is not difficult to see that the above construction can also be realized using only unit disks.
In particular, we can replace each pseudodisks with a \emph{chain of unit disks} and achieve the same result.

\section{Hardness of Approximation}
\label{sec:apx-hardness}
	We will now switch our focus from exact algorithms to approximation algorithms for
	\pointsep with obstacles $\cs$ and input points $A$.
	Gibson et al.~\cite{constApxPseudodisks} gave a constant factor approximation algorithm for \pointsep 
	when obstacles are pseudodisks. However, not much is known for more general obstacle shapes, other than
	a factor $O(|A|)$-approximation that readily follows from the natural extension of their algorithm for pseudodisks. 
	In this section, we show that assuming the so-called \dvrconj conjecture, 
	\pointsep is significantly harder to approximate for general obstacle shapes.
	In particular, we show that assuming \dvrconj, it is not possible to approximate
	\pointsep within a factor $|A|^{1/2-\eps}$ or $|\cs|^{3-2\sqrt{2}-\eps}$ for any $\eps > 0$.
	
	We begin by first stating \dvrconj{}, a well-known complexity-theoretic assumption 
	about the hardness for the densest $k$-subgraph problems.
\begin{Conjecture}[\dvrconj~\cite{chlamtavc2017minimizing}]\label{conj:dvr}
	For all $0 < \alpha,\beta < 1$ with $\beta < \alpha - \eps$ for sufficiently small $\eps > 0$, 
	and function $k : \mathbb{N} \rightarrow  \mathbb{N}$ so that $k(n)$ grows polynomially 
	with $n$, $(k(n))^{1+\beta} \leq n^{(1+\alpha)/2}$, there does not exist an algorithm {\sf ALG} that takes as input an $n$-vertex graph $G$, 
	runs in polynomial time, and outputs either {\sf dense} or {\sf sparse}, such that:
	\begin{itemize}\setlength\itemsep{-.7mm}
	\item For every graph $G$ that contains an induced subgraph on $k = k(n)$ vertices and $k^{1+\beta}$ edges, 
				${\sf ALG}(G)$ outputs {\sf dense} with high probability.
	\item If $G$ is drawn from $G(n, p)$ with $p = n^{\alpha - 1}$ then ${\sf ALG}(G)$ outputs {\sf sparse} with high probability.
	\end{itemize}
\end{Conjecture}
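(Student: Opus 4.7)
The plan must begin by acknowledging an important feature of the final statement: \dvrconj{} is not a theorem with a known proof but a widely-believed hardness assumption in average-case complexity, invoked in the paper as a black-box hypothesis for downstream inapproximability results. Any "proof proposal" is therefore outlining what a hypothetical proof would have to contain, rather than an attack expected to succeed.

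Were I to attempt a proof, the first step would be a reduction from an independently-assumed hard average-case problem. The natural candidate is Planted Clique in $G(n,1/2)$, or its subgraph-density variants: given an input drawn either from $G(n,p)$ with $p = n^{\alpha-1}$ or from $G(n,p)$ with a planted $k(n)$-vertex subgraph carrying $k^{1+\beta}$ edges, I would transform it into a planted-clique instance while preserving the distribution of the random side and mapping the planted dense piece to a planted clique of appropriate size. The parameter constraint $(k(n))^{1+\beta} \le n^{(1+\alpha)/2}$, together with $\beta < \alpha - \eps$, pins the problem into the log-density regime of Bhaskara, Charikar, Chlamta\v{c}, Feige, and Vijayaraghavan, where the best known algorithms and the best known Sum-of-Squares integrality gaps line up; a careful accounting of densities would have to show that the reduction preserves the planted/random distinction across this entire parameter window.

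The second step would be a computational lower bound in the target model. The strongest rigorous tool currently available is an SoS lower bound, following the Barak-Hopkins-Kelner-Kothari-Moitra-Potechin framework, showing that for $(k,n,p)$ in the conjectured regime, degree-$n^{\Omega(1)}$ pseudo-distributions cannot distinguish the planted and random ensembles. A complementary statistical-query lower bound or a low-degree-polynomial lower bound, in the style of Hopkins-Steurer, would rule out further algorithmic classes. Together these would rule out all algorithms we presently know how to analyze against planted dense subgraph instances.

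The main obstacle, and the reason I do not expect this plan to actually close, is the fundamental barrier between conditional and unconditional average-case hardness: we have no technique for proving lower bounds against arbitrary polynomial-time algorithms on random graph inputs, and every known rigorous result in this regime is either conditional on another unproven assumption (Planted Clique, refuting random 3-SAT, one-way functions) or restricted to a specific algorithmic model (SoS, spectral, statistical-query, low-degree). Accordingly, the honest proposal is that \dvrconj{} is stated as a conjecture precisely because no proof is within reach of current techniques, and the paper uses it as an assumption rather than deriving it; any future proof would necessarily have to either breach this barrier or reduce from a still-stronger average-case or cryptographic hypothesis.
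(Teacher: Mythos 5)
You are correct that this is not a provable statement: \dvrconj{} is a conjecture, and the paper (like the prior works it cites) states it as an unproven complexity-theoretic assumption borrowed from Chlamt\'a\v{c} et al.\ and used as a black box to derive conditional inapproximability bounds for \pointsep{}. The paper contains no proof of this statement for you to be compared against, so your recognition of that fact, together with your explanation of why any unconditional proof is beyond current techniques, is the right answer; the additional discussion of planted-clique reductions, the log-density regime, and SoS/low-degree lower bounds is accurate context but goes beyond anything the paper attempts or needs.
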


	The conjecture was originally stated in~\cite{chlamtavc2017minimizing} but the formalization of the conjecture
	as stated above is borrowed from~\cite{mcp2021}. In order to obtain hardness guarantees for our problem using 
	Conjecture~\ref{conj:dvr}, we will describe a reduction that given a graph $G$ constructs
	an instance of \pointsep. Then we show that the images of dense instances under this reduction 
	will have (with high probability) optimum {\em at most} $x^*_d$, whereas the images of random 
	instances will have optimum at least $x^*_r$, where $x^*_r$ is much bigger than $x^*_d$. 
	Let $\rho = x^*_r/ x^*_d $ be the 
	\emph{distinguishing ratio} of the reduction, then an approximation algorithm for \pointsep with 
	ratio smaller than $\rho$ can now (with high probability) distinguish between the images of dense 
	and random instances, thereby refuting Conjecture~\ref{conj:dvr}. This gives us the following lemma.

	\begin{lemma}\label{lemma:distinguishing-ratio}
	If there exists a reduction with distinguishing ratio $\rho$, then, assuming \dvrconj, 
	there is no polynomial time approximation algorithm for \pointsep with approximation ratio less than $\rho$.
	\end{lemma}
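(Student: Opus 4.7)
The plan is the standard gap-preserving reduction argument. Suppose, for the sake of contradiction, that there exists a polynomial time approximation algorithm $\mathcal{A}$ for \pointsep whose approximation ratio is some $\rho' < \rho$. We will use $\mathcal{A}$ together with the hypothesized reduction to build an algorithm $\mathsf{ALG}$ that refutes \dvrconj, i.e.\ that reliably distinguishes dense from random instances in polynomial time.

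Given an input graph $G$ on $n$ vertices, the algorithm $\mathsf{ALG}$ proceeds as follows. First, apply the hypothesized reduction to $G$ to obtain a \pointsep instance $(\cs,A)$; by definition of a reduction this takes polynomial time. Second, run $\mathcal{A}$ on $(\cs,A)$ to obtain a feasible separating set of size $N$. Third, output \textsf{dense} if $N < x^*_r$, and \textsf{sparse} otherwise (where $x^*_d$ and $x^*_r$ are the threshold values guaranteed by the definition of distinguishing ratio $\rho = x^*_r/x^*_d$).

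To verify correctness, consider the two cases of \dvrconj. If $G$ is a dense instance, then with high probability the constructed \pointsep instance has optimum at most $x^*_d$, so $\mathcal{A}$ returns a solution of size at most $\rho' \cdot x^*_d < \rho \cdot x^*_d = x^*_r$, and hence $\mathsf{ALG}$ outputs \textsf{dense}. If instead $G$ is drawn from $G(n,p)$ with $p = n^{\alpha-1}$, then with high probability the optimum of the constructed instance is at least $x^*_r$, and in particular any feasible solution returned by $\mathcal{A}$ has size at least $x^*_r$, so $\mathsf{ALG}$ outputs \textsf{sparse}. Since $\mathcal{A}$ and the reduction both run in polynomial time, so does $\mathsf{ALG}$, contradicting \dvrconj.

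There is essentially no main technical obstacle here: the lemma is a direct packaging of the gap-preserving reduction paradigm, and the real work lies in \emph{constructing} a reduction with a good distinguishing ratio $\rho$ (which will occur in subsequent subsections, not in this lemma). The only minor subtlety to be careful about is that the approximation algorithm $\mathcal{A}$ needs only to produce a solution whose \emph{size} is within a factor $\rho'$ of the optimum, and we compare its returned size (not its value of the optimum, which we do not know) against the threshold $x^*_r$; the strict inequality $\rho' < \rho$ guarantees that the two regimes are cleanly separated.
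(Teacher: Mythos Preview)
Your proposal is correct and matches the paper's approach: the paper does not give a separate formal proof of this lemma but rather sketches exactly this gap-preserving argument in the paragraph immediately preceding the lemma statement, noting that a sub-$\rho$ approximation would distinguish dense from random images and refute Conjecture~\ref{conj:dvr}. Your write-up simply spells out the threshold test and the two cases in more detail than the paper does.
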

	
	Our construction is inspired from a similar construction using \dvrconj for the related \textsc{Min-color Path}
	problem from~\cite{mcp2021}. Specifically, we borrow the idea of \emph{partitioning} the edges of graph $G = (V, E)$
	into $z$ groups $E_1, E_2, \dots, E_z$, by assigning every edge to one of the groups with probability $1/z$ independent
	of other edges. We have the following lemma.
	\begin{lemma}[Lemma~7.3~\cite{mcp2021}]
		\label{lemma:groupsPartition}
		For any graph $G=(V, E)$, there exists a partitioning of edges into $z = \frac{q} {2 \ln n}$ groups such that
		for any set $E^* \subseteq E$ of $q$ edges, every group $E_i \in \{E_1, E_2, \dots, E_z\}$
		contains an edge from $E^*$.
	\end{lemma}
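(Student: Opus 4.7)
The plan is to prove the lemma by the probabilistic method, using the random partitioning already described in the surrounding text: assign each edge of $E$ independently and uniformly to one of the $z = q/(2\ln n)$ groups.

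First I would fix an arbitrary $E^* \subseteq E$ with $|E^*| = q$ and an arbitrary group index $i \in [z]$, and analyze the bad event $E_i \cap E^* = \emptyset$. Since each edge of $E^*$ independently avoids group $i$ with probability $1 - 1/z$, this event has probability $(1 - 1/z)^q$; using $1 - x \le e^{-x}$, I would upper-bound this by $e^{-q/z} = e^{-2\ln n} = 1/n^2$. A union bound over the $z$ groups then shows that, for a fixed $E^*$, the partition fails with probability at most $z/n^2$.

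The main obstacle is upgrading this from a per-$E^*$ guarantee to a simultaneous guarantee over all $E^* \subseteq E$ with $|E^*|=q$: a naive union bound over the $\binom{|E|}{q}$ such sets is far too lossy. My plan is to bypass this by reorganizing the failure event: observe that the partition is bad for some $E^*$ of size $q$ if and only if some group $E_i$ leaves at least $q$ edges outside itself, i.e.\ $|E \setminus E_i| \ge q$. This reformulation reduces the question to controlling the lower tail of the $z$ random variables $|E_i|$, each of which is a Binomial with mean $|E|/z$, rather than enumerating subsets of edges; applying Chernoff concentration to each $|E_i|$ and then a union bound over only the $z$ groups should keep the total failure probability strictly below $1$. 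The delicate step, and the one I expect to dominate the analysis, is calibrating the Chernoff exponent against the choice $z = q/(2\ln n)$ so that this last union bound closes — the factor of $2$ in the denominator of $z$ is inserted precisely to make the single-group exponent $e^{-q/z}$ equal $1/n^2$.

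Once existence of a good partition is established, a standard conditional-expectations argument on the Chernoff bound (or on the sum of pessimistic estimators for the per-group bad events) converts the probabilistic proof into a deterministic polynomial-time construction of the partition asserted in the lemma.
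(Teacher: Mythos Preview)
The paper does not prove this lemma; it is quoted verbatim from~\cite{mcp2021}, so there is no in-paper argument to compare against. That said, your proposal contains a real error worth flagging.

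Your reformulation is correct: the random partition fails for \emph{some} $E^*$ of size $q$ if and only if some group $E_i$ satisfies $|E\setminus E_i|\ge q$. But this equivalence actually shows that the lemma, read literally with a universal quantifier over $E^*$, is false whenever $|E|\ge 2q$ and $z\ge 2$. Indeed, since the $E_i$ partition $E$, the smallest group has at most $|E|/z$ edges, hence its complement has at least $|E|(1-1/z)\ge |E|/2\ge q$ edges. The bad event therefore occurs with probability~$1$, and no Chernoff bound can rescue a probability-zero target. Your plan to ``calibrate the Chernoff exponent'' cannot close because the event you want has zero mass, not merely small mass.

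What the application in Section~\ref{sec:apx-hardness} actually requires---and what the cited lemma is surely intended to assert---is the per-$E^*$ guarantee: for any \emph{fixed} $E^*$ with $|E^*|=q$, the random partition hits $E^*$ in every group with probability at least $1 - z/n^2$. Your first three steps already establish exactly this, and that is where the argument should stop. The reduction is randomized (the preceding text explicitly describes assigning edges to groups at random), and the conclusions in Lemma~\ref{lemma:dense-random-values} are only claimed with high probability, so a w.h.p.\ guarantee for the unknown dense edge set $E^*$ suffices. Drop the attempt to upgrade to a simultaneous guarantee over all $E^*$; it is both unnecessary and impossible.
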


	We will also need the following bound on the size of a subgraph of $G(n, p)$.
	\begin{lemma}[Lemma~7.2~\cite{mcp2021}]
		\label{lemma:random-subgraph}
		Let $G$ be drawn from $G(n, p)$. Then, with high probability, every 
		subgraph of $G$ with $q = n^{\Omega(1)}$ edges contains $\tilde{\Omega}(\min\{q, \sqrt{(q/p)}\})$ vertices. 
		Here $\tilde{\Omega}$ ignores logarithmic factors.
	\end{lemma}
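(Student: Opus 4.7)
\textbf{Proof proposal for Lemma~\ref{lemma:random-subgraph}.}

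The plan is a standard first-moment / union-bound argument combined with a Chernoff upper tail for binomial random variables. Let $v^{*} = c \cdot \min\{q,\sqrt{q/p}\}/\log^{2} n$ for a small absolute constant $c>0$ to be chosen. I will show that with high probability, no vertex subset $V' \subseteq V(G)$ of size at most $v^{*}$ induces $\geq q$ edges of $G$; equivalently, every $q$-edge subgraph of $G$ uses more than $v^{*}$ vertices, which gives the desired $\tilde{\Omega}(\min\{q,\sqrt{q/p}\})$ lower bound.

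For a fixed $V' \subseteq V$ with $|V'|=v$, the number of edges $e_{G}(V')$ of $G$ inside $V'$ is distributed as $\mathrm{Bin}(\binom{v}{2},p)$, with mean $\mu = \binom{v}{2}p \leq v^{2}p/2$. First I would invoke the Chernoff upper tail $\Pr[\mathrm{Bin}(N,p) \geq q] \leq (e\mu/q)^{q}$. For $v \leq \sqrt{q/p}/\log^{2} n$ we get $\mu \leq q/(2\log^{4} n)$, and so $\Pr[e_{G}(V') \geq q] \leq (e/(2\log^{4} n))^{q}$. The next step is a union bound over the at most $\binom{n}{v} \leq n^{v}$ subsets of size $v$: the total probability is bounded by
\[
n^{v} \cdot \bigl(e/(2\log^{4} n)\bigr)^{q} \;=\; \exp\bigl(v\ln n - \Omega(q \log\log n)\bigr).
\]
For this to be $o(1/n)$ I need $v\ln n = o(q\log\log n)$, which the additional cap $v \leq q/\log n$ cleanly supplies (and which is exactly where the $q$ alternative in the $\min$ enters). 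Summing over the at most $n$ values $v \in [\lceil\sqrt{2q}\rceil,\, v^{*}]$ only loses a polynomial factor, and the assumption $q = n^{\Omega(1)}$ makes the exponential-in-$q$ savings dominate all $\mathrm{poly}(n)$ overheads, giving the with-high-probability conclusion.

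The main obstacle is not conceptual but bookkeeping: one has to set the two polylog caps so that both the Chernoff contribution $(e\mu/q)^{q}$ and the union-bound contribution $n^{v}$ are simultaneously tamed at $v = v^{*}$. Concretely, in the regime $p \geq 1/q$ the binding constraint is $v \lesssim \sqrt{q/p}/\mathrm{polylog}(n)$ (Chernoff-driven), whereas for very sparse $G$, i.e.\ $p < 1/q$, the binding constraint is $v \lesssim q/\log n$ (union-bound-driven, reflecting the near-forest structure of $G(n,p)$ in this regime). Choosing $v^{*}$ to be the minimum of the two, with a common $\log^{O(1)} n$ slack, makes the analysis above go through uniformly. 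Finally, since the statement allows $\tilde{\Omega}$ to hide $\log^{O(1)} n$ factors, there is no need to optimize the exponent of $\log n$ in $v^{*}$; any polylog slack suffices.
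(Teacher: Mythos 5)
The paper does not prove this lemma; it is cited verbatim as Lemma~7.2 of Kumar et al.\ (SODA'21), so there is no in-paper proof to compare against. That said, your argument---a first-moment union bound over all vertex sets of size $v\le v^*$, the Chernoff upper tail $\Pr[\mathrm{Bin}(N,p)\ge q]\le(e\mu/q)^q$ applied to $e_G(V')\sim\mathrm{Bin}(\binom{v}{2},p)$, and the observation that the two terms in $\min\{q,\sqrt{q/p}\}$ correspond precisely to which of the Chernoff factor or the $n^v$ overhead is binding---is the standard and correct proof of a statement of this form, and is almost certainly how the cited lemma is proved. The only bookkeeping worth making explicit when writing it up cleanly: the hypothesis $q=n^{\Omega(1)}$ is what guarantees $\exp(-\Omega(q\log\log n))$ swamps both the $n^v\le n^{v^*}$ overhead (since $v^*\ln n = O(q/\log n) = o(q\log\log n)$) and the extra polynomial factor from summing over $O(n)$ values of $v$, and the range of $v$ to sum over starts at $\lceil\sqrt{2q}\rceil$ since smaller sets cannot contain $q$ edges at all.
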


	\paragraph*{Our Construction} Given a graph $G=(V, E)$ and fixed $\alpha, \beta$ and function $ k : \mathbb{N} \rightarrow  \mathbb{N}$
	satisfying conditions of Conjecture~\ref{conj:dvr}, we will construct an instance of \pointsep as follows.
	\begin{enumerate}
		\item Fix $q = k^{1+\beta}$ and $z = \frac{q}{2 \ln n}$. Using Lemma~\ref{lemma:groupsPartition}, partition the
		set of edges of $G$ into $z$ groups $\{E_1, E_2, \dots, E_z\}$
		\item Similar to the hardness construction in Section~\ref{sec:eth-hardness}, all the request pairs and obstacles are
					contained in an enclosing rectangle $\bbox$ with bottom left corner $(0, 0)$ and top-right corner $(z, 4)$.

		\item For every $v_i \in V$, add an obstacle $S_i$ to $\cs$.
					Initially, all obstacles are horizontal line segments occupy the part of $x$-axis from $x=0$ to $x=z$.

		\item Define two set of horizontal lines $\ell_1^h : y = 1 ~+~ \frac{h}{(|E| + 1)}$ 
					and $\ell_2^h : y = 3 ~+~ \frac{h}{(|E| + 1)}$ to be a horizontal line that will serve as
					\emph{guardrails} for obstacle growth corresponding to edge $e_h \in E$. Here $1 \leq h \leq |E|$.
					We will refer to the group $\ell_1^h$,  $\ell_2^h$ lines as \emph{$\ell_1$-channel} and 
					\emph{$\ell_2$-channel} respectively.

		\item For each group $E_r$, define a \emph{request pair block} $B_r$ which is a unit-width 
					sub-rectangle of $\bbox$ bounded by vertical sides
					$x={r-1}$ and $x=r$.
					Let $\mid_r = (r-\frac{1}{2})$ and add the pair of points $a_r = (\mid_r, \frac{5}{2})$ and $a_r' = (\mid_r, \frac{1}{2})$ 
					to $A$. These points will be contained in block $B_r$. 
					
					Now for every edge $e_h = (v_i, v_j)\in E_r$ with $i < j$, we grow the obstacles along $\ell_1, \ell_2$-channels as follows.
					(See also Figure~\ref{fig:apx-hardness-example}.)
					\begin{itemize}
						\item Grow the obstacle $S_i$ corresponding to vertex $v_i$ vertically along left boundary $x= r-1$ of $B_r$ until $y=4$.
									Similarly grow $S_j$ along right boundary $x=r$ of $B_r$ until $y=4$.
						\item Moving along the horizontal line $\ell_1^h$ from \emph{left to right}, extend obstacle $S_i$ from
									$x=r-1$ to $x= \mid_r$. Repeat the same for $\ell_2^h$.
						\item Similarly, moving along the horizontal line $\ell_1^h$ from \emph{right to left}, extend obstacle $S_j$ from
						      $x=r$ to $x= \mid_r$. Repeat the same for $\ell_2^h$.
					\end{itemize}
	\end{enumerate}

	\begin{figure}[htb!]
	\centering
	\includegraphics{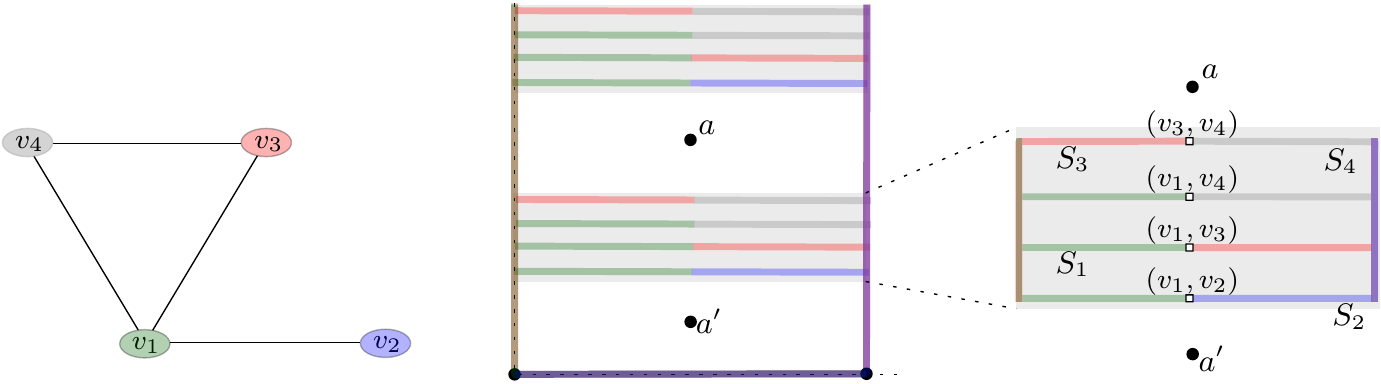}
	\caption{An group of edges $E_1$ and the resulting \pointsep request pair block $B_1$. The $\ell_1$-channel
					is shown enlarged in the rightmost figure.
					As an example, observe that point pair $(a, a')$ is separated if obstacles 
				$S_1, S_2$ are selected (because $(v_1, v_2) \in E_1$) 
				but not separated if obstacles $S_2, S_3$ are selected (because $(v_2, v_3) \not\in E_1$).} 
		\label{fig:apx-hardness-example}
	\end{figure}

	\begin{lemma}
		\label{lemma:edge-barrier}
		Let $\cs^* \subseteq \cs$ be a solution to the \pointsep instance $(\cs, A)$ constructed above.
		Then all point pairs in $A$ are separated if and only if 
		for every request pair block $B_r$, there exists two obstacles $S_i, S_j \in \cs^*$ such 
		that $(v_i, v_j)$ is an edge assigned to group $E_r$.
	\end{lemma}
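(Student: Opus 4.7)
The plan is to prove each direction by a local analysis within individual request-pair blocks. For the backward direction, suppose $e_h = (v_i, v_j) \in E_r$ has both $S_i, S_j \in \cs^*$. Within block $B_r$, the union $S_i \cup S_j$ traces out the boundary of the axis-aligned rectangle $R_{r,h} = [r-1, r] \times [\ell_1^h, \ell_2^h]$: the vertical sides come from $S_i$ (at $x = r-1$) and $S_j$ (at $x = r$), both extending from $y = 0$ to $y = 4$, while the horizontal top and bottom come from the two halves of the $\ell_2^h$- and $\ell_1^h$-stubs, which meet at the midpoint $\mid_r$. Since $\ell_1^h \in (1, 2)$ and $\ell_2^h \in (3, 4)$, the simple closed curve $\partial R_{r,h}$ contains $a_r = (\mid_r, 5/2)$ in its interior and $a_r' = (\mid_r, 1/2)$ in its exterior, so Fact~\ref{fact-separate} separates this pair. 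The same vertical walls, together with the common $x$-axis base of all obstacles, partition the interior of $\bbox$ into per-block pieces, and one verifies directly that the points of $A$ lie in pairwise distinct bounded components: each $a_r$ inside its enclosure $R_{r,h}$, and each $a_r'$ inside the ``basement'' strip of $B_r$ bounded above by $\ell_1^h$, below by the $x$-axis, and on the sides by the walls at $x=r-1$ and $x=r$.

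For the forward direction I will prove the contrapositive: if some block $B_r$ contains no edge of $E_r$ with both endpoints in $\cs^*$, then the pair $(a_r, a_r')$ is not separated by $\cs^*$. I will explicitly construct a curve from $a_r$ to $a_r'$ inside the open block $(r-1, r) \times (0, 4)$ that is disjoint from $\bigcup_{S \in \cs^*} S$. By the construction of the instance, the only parts of any $S \in \cs^*$ meeting this open block are horizontal stubs along the heights $\ell_1^h$ and $\ell_2^h$ for edges $e_h \in E_r$ whose corresponding endpoint of $S$ is selected. Because the descent from $a_r$ at $y = 5/2$ to $a_r'$ at $y = 1/2$ never crosses any $\ell_2^h \in (3, 4)$, the only potential barriers to be navigated lie at the heights $\ell_1^h \in (1, 2)$.

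The key observation driving the construction is that at each barrier height $\ell_1^h$ with $e_h = (v_i, v_j) \in E_r$, the set of $x$-values blocked within $B_r$ is one of $\emptyset$, $[r-1, \mid_r]$, $[\mid_r, r]$, or $[r-1, r]$, according to which of $\{S_i, S_j\}$ appear in $\cs^*$. The failure hypothesis precisely rules out the last case, so every barrier at height $\ell_1^h$ leaves at least one of the two open halves $(r-1, \mid_r)$ or $(\mid_r, r)$ unblocked. I will therefore process the finitely many barriers from top to bottom: between consecutive barrier heights the open strip contains no obstacles and the curve may shift horizontally freely, after which the next barrier is crossed through an open half. Concatenating these pieces yields an $a_r$-to-$a_r'$ path avoiding every obstacle of $\cs^*$.

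The main subtlety to address is ruling out any ``global'' barrier assembled from obstacles extending beyond $B_r$. This is straightforward because every $S \in \cs^*$ restricted to the open block $(r-1, r) \times (0, 4)$ consists only of the horizontal stubs analyzed above: its $x$-axis base lies on $y = 0$, strictly below $a_r'$, and its vertical growth is confined to the block-boundary lines $x = r-1$ and $x = r$, which the constructed curve does not touch. Hence the purely local analysis within $B_r$ suffices for both directions.
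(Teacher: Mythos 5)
Your proof is correct and follows the same approach as the paper: the forward direction constructs the same zigzag curve through the open half at each $\ell_1^h$ level (you travel top-to-bottom where the paper travels bottom-to-top, which is immaterial), and the backward direction uses the same enclosing rectangle $R_{r,h}$ together with the observation that all points of $A$ then lie in pairwise distinct bounded cells. Your write-up is somewhat more explicit than the paper's terse statement about separating cross-block pairs and about why only the $\ell_1$-channel stubs matter, but these are the same observations the paper makes implicitly.
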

	\begin{proof}
			For the forward direction, suppose we start moving vertically in block $B_r$ 
			along $x = \mid_r$ starting from $a_r'$ towards $a_r$.
			Before we reach point $a_r$, we must cross the lines $\ell^1_h$ for all $h$ such that $ e_h \in E_r$.
			Whenever we arrive at $\ell^1_h$ which is the guardrail corresponding to edge $e_h = (v_i, v_j)$, 
			if either $S_i \not\in \cs^*$ or $S_j \not\in \cs^*$, then we can \emph{cross over} $\ell^1_h$ without 
			intersecting an obstacle in $\cs^*$ by shifting infinitesimally to the left (or right) from $x=\mid_r$.
			Since $\cs^*$ separates $a_r, a_r'$, there must be some $e_h = (v_i, v_j)$ with $i < j$, 
			such that both $S_i, S_j \in \cs^*$.

			For the other direction, if obstacles $S_i, S_j \in \cs^*$ such that $(v_i, v_j) \in E_r$, 
			then the union of $S_i, S_j$ forms a closed curve enclosing both $a$ and $a'$ and therefore 
			separates $a, a'$ from each other as well as from other points in $A$.
	\end{proof}

Using the discussion preceding Lemma~\ref{lemma:distinguishing-ratio}, 
we can obtain a lowerbound for the distinguishing ratio $\rho $ of the above reduction as follows.
\begin{lemma}
	\label{lemma:dense-random-values}
		Let $(\cs, A)$ be the resulting \pointsep instance obtained by applying the above reduction to a graph $G$.
		Then we have distinguishing ratio:
		\begin{enumerate}
				\item $\rho ~~\geq~~ \min\left\{k^\beta,~~ \sqrt{k^{\beta - 1} \cdot n^{1 - \alpha}}\right\}$ in terms of $n, k$
				\item $\rho ~~\geq~~ \frac{\min\left\{q,~~ \sqrt{q \cdot n^{1 - \alpha}}\right\}}{q^{1/(\beta+1)}}$ in terms of $n, q$.
		\end{enumerate}
\end{lemma}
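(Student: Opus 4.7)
The plan is to compute the two key quantities separately: an upper bound on the optimum in the dense case, which I call $x^*_d$, and a lower bound on the optimum in the random case, which I call $x^*_r$. Then $\rho = x^*_r/x^*_d$ and both forms of the bound follow from algebraic substitution using the identity $q = k^{1+\beta}$.

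For the dense case, I would first argue that $x^*_d \leq k$. Suppose $G$ contains an induced subgraph on a vertex set $V^\star$ of size $k$ with $k^{1+\beta} = q$ edges. Let $\cs^\star = \{S_i : v_i \in V^\star\}$, a set of $k$ obstacles. By Lemma~\ref{lemma:groupsPartition} applied to the $q$-edge induced subgraph on $V^\star$, every group $E_r$ contains at least one edge $e_h = (v_i,v_j)$ with both endpoints in $V^\star$, so both $S_i,S_j \in \cs^\star$. By Lemma~\ref{lemma:edge-barrier}, $\cs^\star$ separates all pairs in $A$, hence $x^*_d \leq k$.

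For the random case, I would bound $x^*_r$ from below. Let $\cs^\star \subseteq \cs$ be any feasible solution and $V^\star = \{v_i : S_i \in \cs^\star\}$. By Lemma~\ref{lemma:edge-barrier}, for each block $B_r$ the induced subgraph $G[V^\star]$ contains at least one edge from $E_r$, so $G[V^\star]$ has at least $z = q/(2\ln n)$ edges in total. Since $q = k^{1+\beta}$ grows polynomially in $n$ (by the conditions of Conjecture~\ref{conj:dvr}), $z = n^{\Omega(1)}$, so Lemma~\ref{lemma:random-subgraph} applies with high probability, giving
\[
|V^\star| \;=\; \widetilde{\Omega}\!\bigl(\min\{z,\sqrt{z/p}\}\bigr) \;=\; \widetilde{\Omega}\!\bigl(\min\{q,\sqrt{q \cdot n^{1-\alpha}}\}\bigr),
\]
where I used $p = n^{\alpha-1}$ and absorbed the $\ln n$ factor into $\widetilde{\Omega}$. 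Hence $x^*_r = \widetilde{\Omega}\bigl(\min\{q,\sqrt{q \cdot n^{1-\alpha}}\}\bigr)$.

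Combining the two bounds gives $\rho \geq \widetilde{\Omega}\bigl(\min\{q,\sqrt{q\cdot n^{1-\alpha}}\}/k\bigr)$. For the first stated form, divide through by $k$: $q/k = k^{1+\beta}/k = k^\beta$ and $\sqrt{q\cdot n^{1-\alpha}}/k = \sqrt{k^{\beta-1}\cdot n^{1-\alpha}}$, yielding $\rho \geq \min\{k^\beta,\sqrt{k^{\beta-1}\cdot n^{1-\alpha}}\}$. For the second stated form, substitute $k = q^{1/(\beta+1)}$ directly. The main technical obstacle is verifying that $z = n^{\Omega(1)}$ so that Lemma~\ref{lemma:random-subgraph} is applicable, which follows because $k(n)$ grows polynomially by the hypothesis of Conjecture~\ref{conj:dvr} and $q = k^{1+\beta}$; everything else is bookkeeping, with the polylogarithmic slack absorbed into the $\widetilde{\Omega}$ and, in the statement, suppressed as in the final bound.
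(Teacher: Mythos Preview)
Your proposal is correct and follows essentially the same approach as the paper: bound $x_d^* \le k$ via Lemma~\ref{lemma:groupsPartition} and Lemma~\ref{lemma:edge-barrier} in the dense case, bound $x_r^*$ below via Lemma~\ref{lemma:edge-barrier} and Lemma~\ref{lemma:random-subgraph} in the random case, then take the ratio and substitute $q = k^{1+\beta}$. You are in fact slightly more careful than the paper in explicitly checking the hypothesis $z = n^{\Omega(1)}$ needed to invoke Lemma~\ref{lemma:random-subgraph}.
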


\begin{proof}
	We have the following two cases for the instance $(\cs, A)$ depending on graph $G$.
	\begin{itemize}
	\item \emph{$G$ contains a subgraph on $k$ vertices and $q=k^{\beta+1}$ edges}.
		Let $E^*$ be the set of these edges. Using Lemma~\ref{lemma:groupsPartition}, it follows that 
		every group $E_r$ contains an edge $e_h \in E^*$. Using the obstacles corresponding to vertices in $E^*$
		and applying Lemma~\ref{lemma:edge-barrier},
		we obtain a set of at most $k$ obstacles that separate the request pair $(a_r, a_r')$ in every block $B_r$.
		Therefore, the number of obstacles used in this case $x_d^* \leq k$.
	
	\item \emph{$G$ is drawn from $G(n, p)$ with $p = n^{\alpha - 1}$}.
		From Lemma~\ref{lemma:edge-barrier}, it follows that to separate $(a_r, a_r')$ 
		in any block $B_r$, any solution
		must select both obstacles corresponding to at least one edge in $B_r$. 
		Choosing one edge from each block, we obtain a subgraph of $G$ with $z$ edges.
		Applying Lemma~\ref{lemma:random-subgraph} on this subgraph and observing 
		that $z = \tilde{\Omega}(q)$ gives the number of obstacles used in this case 
		$x_r^* \geq \tilde{\Omega}(\min\{q, \sqrt{(q/p)}\})$.
	\end{itemize}
		Taking the ratio of solution sizes in both cases and substituting the values $p = n^{\alpha - 1}$ and $q=k^{\beta+1}$ , we obtain:
		\begin{align*}
			\rho ~=~ \frac{x_r^*}{x_d^*} ~\geq~ \frac{\min\left\{k^{\beta+1},~~ \sqrt{\left(\frac{k^{\beta+1}}{n^{\alpha - 1}}\right)}\right\}}{k} ~=~
							\min\left\{k^\beta,~~ \sqrt{k^{\beta - 1} \cdot n^{1 - \alpha}}\right\}
		\end{align*}
		Similarly, in terms of $q, n$, we obtain the following:
		\begin{align*}
			\rho ~=~ \frac{x_r^*}{x_d^*} ~\geq~ \frac{\min\left\{q,~~ \sqrt{q \cdot n^{1 - \alpha}}\right\}}{k} ~=~ 
			\frac{\min\left\{q,~~ \sqrt{q \cdot n^{1 - \alpha}}\right\}}{q^{1/(\beta+1)}}
		\end{align*}

\end{proof}

We will now fix the choice of parameters $\alpha, \beta, k$ such that they satisfy 
the requirements of Conjecture~\ref{conj:dvr} and obtain a bound on the distinguishing
ratio in terms of number of obstacles $|\cs| = n$ and number of points $|A| = 2z = \tilde{\Theta}(q)$.
The parameters are carefully chosen so that the maximize the distinguishing ratio and therefore
obtain best possible lowerbound on the hardness of approximation.

\begin{lemma}
	Assuming \dvrconj, a \pointsep instance $(\cs, A)$ cannot be approximated to a factor better
	than $|A|^{1/2 - \eps}$ in polynomial time, for any $\eps > 0$.
\end{lemma}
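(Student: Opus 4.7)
The plan is to feed the reduction developed in the preceding pages into Lemma~\ref{lemma:distinguishing-ratio}, after carefully selecting the parameters $\alpha, \beta$ and the function $k(n)$ so that the distinguishing-ratio bound from Lemma~\ref{lemma:dense-random-values}, re-expressed in terms of $|A|$, becomes $|A|^{1/2-\eps}$. Since $|A| = 2z = \tilde\Theta(q) = \tilde\Theta(k^{1+\beta})$, the target $\rho \geq |A|^{1/2-\eps}$ unpacks to $\rho \geq \tilde\Theta(k^{(1+\beta)(1/2-\eps)})$, so I would need both quantities inside $\min\{k^\beta, \sqrt{k^{\beta-1} n^{1-\alpha}}\}$ to beat this target.

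First I would handle the first term: $k^\beta \geq k^{(1+\beta)(1/2-\eps)}$ is equivalent to $\beta(1/2+\eps) \geq 1/2-\eps$, i.e., $\beta \geq (1-2\eps)/(1+2\eps) \approx 1-4\eps$. This dictates that $\beta$ be pushed close to $1$, so I would pick $\beta = 1-c\eps$ for a suitably small constant $c$ and then $\alpha = \beta + \eps'$ for an even tinier $\eps'>0$ (to keep the DvR requirement $\beta < \alpha - \eps'$). With $\alpha, \beta$ pushed to near $1$, the conjecture's feasibility condition $k^{1+\beta} \leq n^{(1+\alpha)/2}$ caps $k$ at roughly $n^{(1+\alpha)/(2(1+\beta))} \approx n^{1/2}$.

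Next I would handle the second term, which is where the real numerical balancing sits. A direct algebraic rearrangement shows that $\sqrt{k^{\beta-1} n^{1-\alpha}} \geq k^{(1+\beta)(1/2-\eps)}$ reduces to $n^{(1-\alpha)/2} \geq k^{1 - (1+\beta)\eps}$. Since $1-\alpha$ is on the order of $\eps$, this forces $k(n)$ to be at most about $n^{(1-\alpha)/(2 - O(\eps))}$, a stricter polynomial bound than the one the conjecture imposes. I would therefore set $k(n) = n^{\gamma}$ for a sufficiently small positive $\gamma = \gamma(\eps)$ that simultaneously respects this tighter bound and the DvR feasibility bound; note $k(n)$ still grows polynomially in $n$, as required.

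The main obstacle I foresee is precisely this parameter juggling: the DvR feasibility inequality and the tightness of the distinguishing ratio push $\alpha, \beta$ in opposing directions, and one has to verify that for every fixed $\eps > 0$ a non-empty window of admissible parameters survives. Once the parameter choice is pinned down, Lemma~\ref{lemma:dense-random-values} delivers $\rho \geq |A|^{1/2-\eps}$, and Lemma~\ref{lemma:distinguishing-ratio} converts this into the claimed inapproximability under \dvrconj. A small bookkeeping check at the end is that the $\tilde\Theta$ (logarithmic) slack between $|A|$ and $q$ can be absorbed into $\eps$ by shrinking $c$ slightly, so no loss in the exponent results.
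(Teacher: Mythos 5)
Your proposal is correct and follows essentially the same route as the paper: plug the reduction into Lemma~\ref{lemma:distinguishing-ratio} via Lemma~\ref{lemma:dense-random-values}, then tune $\alpha,\beta,k$ so that $\rho\geq|A|^{1/2-\eps}$. The one place your bookkeeping differs is that the paper works with the second form of the ratio (in $n,q$) and makes the specific choice $q = n^{1-\alpha}$, which causes the two branches of the $\min$ to coincide at $q$, so only a single algebraic identity $q^{\beta/(\beta+1)} = q^{1/2-\eps'}$ needs checking; you instead work with the first form (in $n,k$), impose both branch inequalities separately, and deduce the constraint on $k$ from the second branch. These are the same underlying calculation, and your parameter window is nonempty exactly as the paper's concrete choices $\alpha = 1-\eps$, $\beta = 1-2\eps$, $q = n^\eps$, $k = n^{\eps/(2-2\eps)}$ illustrate; your worry about the DvR feasibility bound and the distinguishing-ratio bound pulling in opposite directions is slightly overcautious, since both constraints ask for $k$ not to be too large and thus are compatible. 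Your remark about absorbing the $\tilde\Theta$ slack into $\eps$ is a reasonable piece of bookkeeping that the paper leaves implicit.
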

\begin{proof}
	Let $\alpha = 1 - \eps$ and $\beta = \alpha - \eps$ and $q = n^{1-\alpha} = n^\eps$.
	Since, $k^{\beta + 1} = q$, we have $ k^{\beta + 1} =  n^\eps < n^{(1+\alpha)/2}$.
	Therefore the parameters $\alpha, \beta, k$ satisfy the conditions of Conjecture~\ref{conj:dvr}.
	Since $q = n^{1-\alpha}$, we have $\min\{q,~~ \sqrt{q \cdot n^{1-\alpha}}\} = q$.
	Substituting this to the equation
	for $\rho$ in terms of $n, q$ from Lemma~\ref{lemma:dense-random-values}, we obtain:
	\begin{align*}
		\rho ~~\geq~~ \frac{q}{q^{1/(\beta+1)}} ~~=~~ q^{\frac{\beta}{\beta+1}} ~~=~~ q^{\frac{(1 - 2\eps)}{2- 2\eps}} 
		~~=~~ q^{\frac{1 - \eps}{2- 2\eps} - \frac{\eps}{2- 2\eps}} ~~=~~ q^{1/2 - \eps'}
	\end{align*}
	where $\eps' = \frac{\eps}{2-2\eps}$. Since $|A| = \tilde{\Theta}(q)$, applying 
	Lemma~\ref{lemma:distinguishing-ratio}, we achieve the claimed bound.
\end{proof}

\begin{lemma}
	Assuming \dvrconj, a \pointsep instance $(\cs, A)$ cannot be approximated to a factor better
	than $|\cs|^{3-2\sqrt{2}-\eps}$ in polynomial time, for any $\eps > 0$.
\end{lemma}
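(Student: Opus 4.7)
The plan is to reuse the reduction of the previous lemma verbatim and just re-optimize the choice of the parameters $\alpha,\beta$ and $k$, this time measuring the distinguishing ratio as a function of $|\cs|=n$ instead of $|A|$. All the heavy lifting has already been done: Lemma~\ref{lemma:dense-random-values} gives
\[
\rho \;\geq\; \min\bigl\{k^{\beta},\ \sqrt{k^{\beta-1}\cdot n^{1-\alpha}}\,\bigr\},
\]
and Lemma~\ref{lemma:distinguishing-ratio} turns any such lower bound into a hardness-of-approximation statement. So the task reduces to a one-variable calculus problem under the constraints of Conjecture~\ref{conj:dvr}.

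First I would parameterize $k=n^{\gamma}$ for some $\gamma>0$, so that the DvR constraint $k^{1+\beta}\leq n^{(1+\alpha)/2}$ becomes $\gamma(1+\beta)\leq (1+\alpha)/2$. To avoid throwing away any of the lower bound, I would equalize the two arguments of the $\min$ above, which forces $k^{\beta+1}=n^{1-\alpha}$, i.e.\ $\gamma(\beta+1)=1-\alpha$. This choice is consistent with the DvR constraint precisely when $1-\alpha\leq(1+\alpha)/2$, i.e.\ $\alpha\geq 1/3$, which I would flag as the feasibility restriction on $\alpha$. Under this balancing, the lower bound simplifies to $\rho\geq n^{\gamma\beta}$.

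Next, since the only remaining constraint coming from the conjecture is $\beta<\alpha-\eps$, I would push $\beta$ as close to $\alpha$ as possible, yielding (up to an arbitrarily small additive loss in the exponent)
\[
\gamma\beta \;=\; \frac{\beta(1-\alpha)}{1+\beta} \;\longrightarrow\; \frac{\alpha(1-\alpha)}{1+\alpha}
\qquad\text{as }\eps\to 0.
\]
The final step is to maximize $f(\alpha)=\alpha(1-\alpha)/(1+\alpha)$ over $\alpha\in[1/3,1)$. Differentiating gives $f'(\alpha)=(1-2\alpha-\alpha^{2})/(1+\alpha)^{2}$, whose unique positive root is $\alpha^{*}=\sqrt{2}-1$ (which indeed lies in $[1/3,1)$). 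Plugging in, a short calculation gives $f(\alpha^{*})=(3\sqrt{2}-4)/\sqrt{2}=3-2\sqrt{2}$. Applying Lemma~\ref{lemma:distinguishing-ratio} with this parameter choice therefore rules out, under \dvrconj, any polynomial-time $|\cs|^{3-2\sqrt{2}-\eps}$-approximation, since $|\cs|=n$ by construction.

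I do not expect any real obstacle: the reduction and its analysis are already in place, and the only thing to verify is that the optimizing triple $(\alpha,\beta,\gamma)=(\sqrt{2}-1,\,\sqrt{2}-1-\eps,\,(2-\sqrt{2})/\sqrt{2})$ satisfies all three hypotheses of Conjecture~\ref{conj:dvr} ($0<\alpha,\beta<1$; $\beta<\alpha-\eps$; and $k(n)=n^{\gamma}$ growing polynomially with $k^{1+\beta}\leq n^{(1+\alpha)/2}$), which is immediate from the balancing equation and the inequality $\alpha^{*}\geq 1/3$. The slightly delicate bit is bookkeeping the vanishing $\eps$ losses so that the final exponent comes out to $3-2\sqrt{2}-\eps'$ for an arbitrary $\eps'>0$, but this is purely mechanical.
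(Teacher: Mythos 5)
Your proposal is correct and, after unwinding the algebra, selects exactly the same parameter triple as the paper (your $\gamma = (2-\sqrt{2})/\sqrt{2}$ simplifies to $\sqrt{2}-1$, matching the paper's $k = n^{\sqrt{2}-1}$); the only difference is expository: you derive the optimizer by balancing the two branches of the $\min$ and maximizing $\alpha(1-\alpha)/(1+\alpha)$, whereas the paper states $\alpha=\sqrt{2}-1$, $k=n^{\sqrt{2}-1}$ up front and verifies the DvR constraints directly. Both rest on the identical reduction and on Lemmas~\ref{lemma:dense-random-values} and~\ref{lemma:distinguishing-ratio}.
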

\begin{proof}
	For this case, we set both $\alpha = \sqrt{2} - 1$ and $k = n^{\sqrt{2} - 1}$.
	With $\beta = \alpha - \eps$, we have $k^{\beta+1} \leq k^{\alpha+1} = n^{2 - \sqrt{2}} < n^{(1+\alpha)/2}$
	which satisfies the requirements of Conjecture~\ref{conj:dvr}.

	Therefore, we have:
	\begin{align*}
		k^{\beta} ~&=~ n^{(\sqrt{2} - 1) \cdot (\sqrt{2} - 1 - \eps)} = n^{3-2\sqrt{2} - \eps'}  &&\text{for some $\eps' > 0$} \\
		\sqrt{k^{\beta - 1} \cdot n^{1 - \alpha}} ~&=~ \left(n^{(\sqrt{2} - 1)(\beta -1)} \cdot n^{(\sqrt{2} - 1)\sqrt{2}} \right)^{1/2} 
					~=~  n^{\frac{(\sqrt{2} - 1)(2\sqrt{2} - 2 - \eps)}{2}} \\
					~&=~  n^{3-2\sqrt{2} - \eps''}  &&\text{for some $\eps'' > 0$}
	\end{align*}
	Substituting this to the equation
	for $\rho$ in terms of $n, k$ from Lemma~\ref{lemma:dense-random-values} and applying Lemma~\ref{lemma:distinguishing-ratio},
	we achieve the claimed bound.
\end{proof}

We conclude with the main result for this section.
\begin{theorem}
	\label{thm:apx-hardness}
	Assuming \dvrconj~\cite{chlamtavc2017minimizing}, one cannot approximate \pointsep within ratio
	$n^{3-2\sqrt{2} - \eps}$ or $m^{1/2 - \eps}$ in polynomial time, for any $\eps > 0$, 
	where $n$ is the number of obstacles and $m$ is the number of points.
\end{theorem}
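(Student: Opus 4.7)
The plan is to observe that \cref{thm:apx-hardness} is essentially a restatement combining the two immediately preceding lemmas, so the main work is already done: it suffices to package the bounds in terms of $n=|\cs|$ and $m=|A|$, pulling them from the reduction through \cref{lemma:distinguishing-ratio}. The only substantive step is checking that the parameter choices used in the two lemmas are legitimate under Conjecture~\ref{conj:dvr} and that they realize the stated exponents.

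First I would recall the pipeline that has just been built: the reduction produces, from an input graph $G$ on $n$ vertices, a \pointsep instance $(\cs,A)$ with $|\cs|=n$ obstacles and $|A|=2z=\tilde\Theta(q)$ points; the gap between \textsf{dense} and \textsf{sparse} instances is captured by the distinguishing ratio $\rho$ lower-bounded in \cref{lemma:dense-random-values}; and \cref{lemma:distinguishing-ratio} converts $\rho$ directly into a polynomial-time approximation lower bound under \dvrconj. Thus to prove \cref{thm:apx-hardness} I would simply apply the two lemmas above it (which already instantiate $\alpha,\beta,k$ suitably) and translate their conclusions into the $n$- and $m$-parameterizations: the first yields ratio $|A|^{1/2-\eps}$, and since $|A|=\tilde\Theta(q)$ this is the $m^{1/2-\eps}$ bound; the second yields $|\cs|^{3-2\sqrt2-\eps}$, which is the $n^{3-2\sqrt2-\eps}$ bound.

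The only place where one should pause is checking the admissibility of the parameters chosen for the two bounds. For the $m^{1/2-\eps}$ bound, the earlier lemma sets $\alpha=1-\eps$, $\beta=\alpha-\eps$, and $q=n^{\eps}$; one should verify $k^{1+\beta}=q\le n^{(1+\alpha)/2}$ (immediate for small $\eps$) and $\beta<\alpha-\eps$ (by definition), after which the algebra already carried out gives $\rho\ge q^{1/2-\eps'}$. For the $n^{3-2\sqrt2-\eps}$ bound, one takes $\alpha=\sqrt 2-1$, $k=n^{\sqrt 2-1}$, and $\beta=\alpha-\eps$; here $k^{1+\beta}\le k^{1+\alpha}=n^{2-\sqrt 2}<n^{(1+\alpha)/2}=n^{\sqrt 2/2}$, which is exactly the inequality $2-\sqrt 2<\sqrt 2/2$ and is true. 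The two cases of the $\min$ in \cref{lemma:dense-random-values} then produce the same exponent $3-2\sqrt 2$ up to an arbitrarily small slack, and applying \cref{lemma:distinguishing-ratio} finishes the proof.

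I do not expect any real obstacle: the reduction, the distinguishing-ratio lemma, and the two parameter-specific lemmas do all the heavy lifting. The mildly delicate point — the one place where a careful reader might want a line of computation rather than a reference — is the verification that both choices of $(\alpha,\beta,k)$ simultaneously satisfy the feasibility constraint $k^{1+\beta}\le n^{(1+\alpha)/2}$ of Conjecture~\ref{conj:dvr} and make the $\min$ in \cref{lemma:dense-random-values} attain the larger of the two terms in the relevant regime; once this is spelled out, the theorem follows by direct substitution.
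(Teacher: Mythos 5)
Your proposal matches the paper's implicit proof exactly: the theorem is stated as the conclusion of the two immediately preceding lemmas (giving the $|A|^{1/2-\eps}$ and $|\cs|^{3-2\sqrt{2}-\eps}$ bounds via Lemma~\ref{lemma:distinguishing-ratio} and Lemma~\ref{lemma:dense-random-values}), and the paper provides no further argument beyond combining them. Your additional verification that the two parameter choices satisfy the feasibility constraint $k^{1+\beta}\le n^{(1+\alpha)/2}$ of Conjecture~\ref{conj:dvr} (i.e.\ $n^{\eps}<n^{1-\eps/2}$ in the first case and $2-\sqrt{2}<\sqrt{2}/2$ in the second) is correct and is the only detail worth spelling out.
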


%% Please use bibtex, 

\bibliography{refs}

%\appendix

\end{document}